  \newcounter{mycounter}
\DeclareFontFamily{U}{bbold}{}\DeclareFontShape{U}{bbold}{m}{n}{%
  <4.25>bbold5<5>bbold5<6>bbold6<7>bbold7<8>bbold8<8.5>bbold9<9.5>bbold10<10>bbold10<12>bbold12}{}
\theoremstyle{plain}
\newtheorem{proposition}[theorem]{Proposition}
\newtheorem*{theorem*}{Theorem}
\newtheorem*{lemma*}{Lemma}
\newtheorem*{claim}{Claim}
\theoremstyle{definition}
\newtheorem{assumptions}[theorem]{Assumptions}
\newtheorem{notation}[theorem]{Notation}
\newtheorem{construction}[theorem]{Construction}
\newtheorem{rem}[theorem]{Remark}
\def\D{\DCat}
\def\C{\Cat}
\def\A{\mathds{A}}
\def\V{\VCat}
\def\epsilon{\varepsilon}
\renewcommand{\rho}{\varrho}
\def\ol{\overline}
\def\o{\cdot}
\def\S{\mathscr{S}}
\newcommand{\Quo}{\mathsf{Quo}}
\newcommand{\takeout}[1]{\empty}
\DeclareMathOperator{\Lim}{Lim}
\renewcommand{\phi}{\varphi}
\newcommand{\xra}{\xrightarrow}
\newcommand{\Ra}{\Rightarrow}
\newcommand{\Lra}{\Leftrightarrow}
\newcommand{\seq}{\subseteq}
\newcommand{\T}{\mathcal{T}}
\renewcommand{\P}{\mathscr{P}}
\newcommand\epidownarrow{\mathrel{\rotatebox[origin=c]{90}{$\twoheadleftarrow$}}}
\newcommand{\Stone}{\mathbf{Stone}}
\newcommand{\Priest}{\mathbf{Priest}}
\newcommand{\Alg}{\mathbf{Alg}\,}
\newcommand{\Set}{\mathbf{Set}}
\newcommand{\E}{\mathcal{E}}
\newcommand{\Pos}{\mathbf{Pos}}
\newcommand{\BA}{\mathbf{BA}}
\newcommand{\DL}{\mathbf{DL}_{01}}
\newcommand{\ACat}{\mathscr{A}}
\newcommand{\Cat}{\mathscr{C}}
\newcommand{\DCat}{\mathscr{D}}
\newcommand{\hatD}{\widehat\D}
\newcommand{\MT}{\mathbf{T}}
\newcommand{\MR}{\mathbf{R}}
\newcommand{\hatT}{{\widehat\MT}_\F}
\newcommand{\hatt}{\hat{T}_\F}
\newcommand{\hateta}{{\hat\eta}^\F}
\newcommand{\hatmu}{{\hat\mu}^\F}
\newcommand{\id}{\mathit{id}}
\newcommand{\Id}{\mathsf{Id}}
\newcommand{\U}{\mathds{U}}
\newcommand{\FSigma}{{\mathbb{\Sigma}}}
\newcommand{\FGamma}{{\mathbb{\Gamma}}}
\newcommand{\FDelta}{{\mathbb{\Delta}}}
\newcommand{\VCat}{\mathscr{V}}
\newcommand{\CatMon}{\mathbf{Mon}}
\newcommand{\Mon}[1]{\mathbf{Mon}(#1)}
\newcommand{\FAlg}[1]{\mathbf{Alg}_{f}\,#1}
\newcommand{\Vect}{\mathbf{Vec}}
\newcommand{\JSL}{{\mathbf{JSL}_0}}
\newcommand{\Rec}{{\mathsf{Rec}}}
\newcommand{\under}[1]{{|#1|}}
\newcommand{\epito}{\twoheadrightarrow}
\newcommand{\monoto}{\rightarrowtail}
\newcommand{\br}[1]{{#1^+}}
\newcommand{\M}{\mathcal{M}}
\newcommand{\F}{\mathcal{F}}
\newcommand{\B}{\mathscr{B}}
\newcommand{\Ind}{\mathsf{Ind}\,}
\newcommand{\Pro}{\mathsf{Pro}\,}
\newcommand{\K}{\mathscr{K}}
\newcommand{\Field}{K}
\newcommand{\Nat}{\mathbb{N}}
\newcommand{\one}{\mathds{1}}
\newcommand{\To}{\Rightarrow}
\mathchardef\ordinarycolon\mathcode`\:
\newcommand{\defeq}{:=}
\mathchardef\hyph="2D
\newcommand{\dash}{\mathord{-}}
\title{Eilenberg Theorems for Free}
\titlerunning{Eilenberg Theorems For Free}
\author[1]{Henning Urbat\footnote{Supported by Deutsche Forschungsgemeinschaft (DFG) under project AD 187/2-1}$^,$}
\author[1]{Ji\v{r}\'{i} Ad\'{a}mek$^{*,}$}
\author[1]{Liang-Ting Chen}
\author[2]{Stefan~Milius\footnote{Supported by Deutsche Forschungsgemeinschaft (DFG) under project MI~717/5-1}$^,$}
\affil[1]{Institut f\"{u}r Theoretische Informatik \\
  Technische Universit\"{a}t Braunschweig, Germany}
\affil[2]{Lehrstuhl f\"{u}r Theoretische Informatik\\
  Friedrich-Alexander Universit\"{a}t Erlangen-N\"{u}rnberg, Germany}
\authorrunning{H.~Urbat, J.~Ad\'{a}mek, L.-T.~Chen, S.~Milius}
\subjclass{F.4.3 Formal Languages}
\keywords{Eilenberg's theorem, variety of languages, pseudovariety, monad, 
duality}
\begin{document}
\maketitle
\begin{abstract}
 Eilenberg-type correspondences, relating varieties of languages (e.g.\ of 
 finite words, infinite words, or trees) to pseudovarieties of finite algebras, 
 form the backbone of algebraic language theory. Numerous such correspondences 
 are known in the literature. We demonstrate that they all arise from the same recipe: one models languages and the algebras recognizing them by 
 monads on an algebraic category, and applies a Stone-type duality. Our main 
 contribution is a variety theorem that covers e.g.~Wilke's and Pin's 
 work on $\infty$-languages,  the variety theorem for cost functions of 
 Daviaud, Kuperberg, and Pin, and unifies the two previous categorical 
 approaches of  Boja\'nczyk and of Ad\'amek et al. In addition we derive a number of new results, including an extension of the local variety theorem of Gehrke, 
 Grigorieff, and Pin from finite to infinite words.
\end{abstract}

\section{Introduction}

Algebraic language theory investigates the behaviors of finite machines
by relating them to finite algebraic structures. This has 
proved 
extremely
fruitful. For example, regular languages (the behaviors of finite automata) can be described  purely algebraically as the 
languages recognized by finite 
monoids, and the decidability of 
star-freeness rests on Sch\"utzenberger's theorem~\cite{sch65}: a regular 
language is star-free iff it is recognized by a finite aperiodic 
monoid. To organize and study correspondence results of this kind systematically, Eilenberg \cite{eilenberg76} introduced the concepts of a \emph{variety of regular languages} and a \emph{pseudovariety of monoids}. By a pseudovariety of monoids is meant a class of finite monoids closed under homomorphic images, submonoids and finite products. A variety of regular languages associates to every finite alphabet $\Sigma$ a set $W_\Sigma$ of regular languages over $\Sigma$ such that
\begin{itemize}
\item $W_\Sigma$ is closed under finite union, finite intersection and complement;
\item $W_\Sigma$ is closed under derivatives: for any language $L\in W_\Sigma$ and $y\in\Sigma^*$, the languages
$y^{-1}L = \{\,x\in\Sigma^*: yx\in L\,\}$ and $Ly^{-1} = \{\,x\in\Sigma^*: xy\in L\,\}$ lie in $W_\Sigma$.
\item for any language $L\in W_\Sigma$ and any monoid homomorphism $g\colon \Delta^*\to \Sigma^*$, the preimage
$g^{-1}L = \{\,x\in\Delta^*: g(x)\in L \,\} \seq \Delta^*$
lies in $W_\Delta$.
\end{itemize}
 Eilenberg's celebrated \emph{variety 
theorem}~\cite{eilenberg76} asserts that varieties of regular languages and pseudovarieties of monoids are in bijective correspondence. This together with Reiterman's 
theorem~\cite{Reiterman1982}, 
stating that pseudovarieties of monoids can be specified by profinite 
equations, establishes a firm connection between automata, languages, and 
algebras.

Over the past four decades, Eilenberg's theorem inspired a lot of research and has been extended into two orthogonal directions. On the one hand, several authors established Eilenberg-type correspondences for classes of regular languages with modified closure properties, e.g. dropping closure under complement or intersection \cite{pin95,polak01,straubing02,ggp08}. On the other hand, variety theorems were discovered for machine behaviors beyond  finite words, including weighted languages over a 
field \cite{reu80}, infinite words \cite{wilke91,pin98}, words on linear orderings 
\cite{Bedon1998,Bedon2005}, ranked trees \cite{almeida90}, binary trees 
\cite{salehi07}, and cost functions \cite{pin16}. This plethora 
of structurally similar results (often with similar proofs) has raised interest in 
category-theoretic approaches which allow to derive all the 
above results as special instances of only \emph{one} general variety theorem. Two first steps in this direction, corresponding to the two orthogonal extensions of Eilenberg's theorem, were achieved by Boja\'nczyk~\cite{boj15} and in our previous work \cite{ammu14,ammu15,amu15, cu15}. Boja\'nczyk extends the 
notion of language recognition by monoids to algebras for a \emph{monad} on 
sorted sets and proves a generic Eilenberg theorem, improving earlier results of Almeida \cite{almeida90} and Salehi and Steinby \cite{sal04} on algebras over a finitary signature. In contrast, in our previous work one keeps monoids as the recognizing structures, but considers them in categories $\D$ 
of (ordered) algebras such as posets, semilattices, and vector spaces. This led us to a uniform presentation of five Eilenberg theorems for languages of finite words 
\cite{eilenberg76,pin95,polak01,reu80,straubing02}.

To obtain the desired general Eilenberg theorem, a unification of the two 
approaches is required. On the one hand, one needs to take the step from sets 
to more general categories $\D$ to capture the proper notion of 
language recognition; e.g. for the treatment of weighted 
languages~\cite{reu80} one needs to work over the category of vector spaces. 
On the other hand, to deal with machine 
behaviors beyond finite words, one has to replace monoids by other algebraic 
structures. The main contribution of this paper is a variety theorem that
achieves 
the 
desired unification, and in addition directly encompasses many Eilenberg-type
correspondences that follow from neither of the previous generic results, including 
the 
work in
\cite{Bedon1998,Bedon2005,wilke91,pin98,salehi07,pin16}. 

To motivate our approach let us start with the observation that, traditionally, Eilenberg-type correspondences are established in essentially the same four steps:
\begin{enumerate}
\item Identify an algebraic theory T such that the languages in mind are the 
ones recognized by finite T-algebras. For example, for 
regular languages one takes the theory of monoids.
\item Describe the syntactic T-algebras, i.e.~the minimal algebraic recognizers of languages.
\item Infer the form of the derivatives under which varieties of languages are closed. 
\item Establish the Eilenberg correspondence between varieties of languages and 
pseudovarieties of algebras by relating the languages of a variety to their corresponding syntactic algebras.
\end{enumerate}
The key insight provided by this paper is that all four steps can be simplified or even  completely
automatized.

\smallskip
For Step~1, putting a common roof over Boja\'nczyk's and our own previous 
work, we consider an algebraic category $\D$ and algebras for
a \emph{monad} $\MT$ on $\D^S$, the category of $S$-sorted $\D$-algebras for 
a 
 finite set $S$ of sorts. This is illustrated by the following diagram:
\footnotesize{
\[
\xymatrix@=8pt{
& *+[F]{\txt<8pc>{Finite $\MT$-algebras\\ in $\D^S$\\ (this paper)}} \ar@{-}[dl] \ar@{-}[dr] &\\
*+[F]{\txt<8pc>{Finite $\MT$-algebras \\ in $\Set^S$ \\ (Boja\'nczyk \cite{boj15})}} \ar@{-}[dr] & & *+[F]{\txt<8pc>{Finite monoids\\ in $\D$  \\ (Ad\'amek et. al. \cite{ammu14,ammu15}}} \ar@{-}[dl]\\
& *+[F]{\txt<8pc>{Finite monoids\\ in $\Set$ \\ (Classical Eilenberg)}} &
}
\]
}
\normalsize
 For example, to capture regular languages one
 takes the free-monoid monad $\MT\Sigma = \Sigma^*$ on $\Set$. For regular $\infty$-languages one takes the monad $\MT(\Sigma,\Gamma)=(\Sigma^+,\Sigma^\omega+\Sigma^*\times\Gamma)$ on $\Set^2$ representing $\omega$-semigroups. And for weighted languages over a finite field $\Field$, ones takes the monad  $\MT$ on the category of vector spaces representing associative algebras over $\Field$.

For Steps~2 and 3, we develop our main technical tool, the concept of a \emph{unary presentation} of a monad. Intuitively, a unary presentation expresses in categorical terms how to present $\MT$-algebras by  unary operations; for example, a monoid $M$ can be presented by the translation maps $x\mapsto yx$ and $x\mapsto xy$ for $y\in M$. It turns out that unary presentations are, in a precise sense, both necessary and sufficient for constructing syntactic algebras (Theorem \ref{thm:synalgconst}). This observation clarifies the role of syntactic algebras in classical work on Eilenberg-type theorems, as well as the nature of derivatives appearing in varieties of languages. In our paper, syntactic algebras are \emph{not} used for proving the variety theorem: we rely solely on the elementary algebraic concept of a unary presentation.

We emphasize that, in general, nontrivial work lies in 
finding a good unary presentation for a monad $\MT$. However, our work here shows that 
then Steps 3 and 4 are entirely generic: after choosing a unary 
presentation, we obtain a notion of \emph{variety of $\MT$-recognizable languages} (involving a notion of derivatives directly inferred from on the presentation) and the following

\smallskip
\noindent\textbf{Variety Theorem.} \emph{Varieties of $\MT$-recognizable languages are in bijective correspondence with pseudovarieties of $\MT$-algebras.}
\smallskip

\noindent The proof of the Variety Theorem has two main ingredients. The first 
one is duality: besides $\D$ we also consider a variety $\C$ 
that is dual to $\D$ on the level of finite algebras. Varieties of 
languages live in $\C$, while over $\D^S$ we form pseudovarieties of 
$\MT$-algebras. This is much inspired by the work of Gehrke, Grigorieff, and Pin \cite{ggp08} who interpreted the original Eilenberg theorem \cite{eilenberg76} in terms of Stone duality. In our categorical setting, this corresponds to choosing $\C =$ boolean algebras and $\D=\Set$, and the free-monoid monad $\MT\Sigma=\Sigma^*$ on $\Set$. 

The second ingredient is the \emph{profinite monad} $\widehat\MT$ associated to $\MT$, 
introduced in \cite{camu16}. It extends the classical construction of 
the free profinite monoid, and allows for the introduction of profinite topological 
methods to our setting.  For example, Pippenger's result~\cite{Pippenger1997} that the boolean 
algebra of regular languages dualizes to the free profinite monoid is shown to hold at the level of monads. Moreover, the profinite monad will lead us to a categorical Reiterman theorem (Theorem \ref{thm:reiterman}), asserting that pseudovarieties of $\MT$-algebras correspond to profinite equational theories. The proof of the Variety Theorem then boils down to the fact that 
\begin{enumerate}[(i)]
\item varieties of $\MT$-recognizable languages dualize to profinite equational theories, and 
\item by the Reiterman theorem, profinite equational theories correspond to pseudovarieties of $\MT$-algebras.
\end{enumerate}
Our Eilenberg-Reiterman theory for a monad establishes a conceptual and highly 
parametric, yet easily applicable framework for algebraic language theory. In fact, to establish an Eilenberg correspondence using our framework, the traditional four steps indicated above are replaced by the following simple three-step procedure:
\begin{enumerate}
\item Find a monad $\MT$ whose finite algebras recognize the desired languages.
\item Choose a unary presentation for $\MT$.
\item Spell out what a variety of $\MT$-recognizable languages and a pseudovariety of $\MT$-algebras is (by instantiating our general definitions), and invoke the Variety Theorem.
\end{enumerate}
To illustrate the strength of this approach, we will show that
roughly a dozen Eilenberg theorems in the literature emerge as special cases of our results. In addition, we will get several new correspondence results for free, e.g.\ an extension of the local 
variety 
theorem of \cite{ggp08} from finite words to infinite 
words and trees, and a variety theorem for $\omega$-regular languages.

Our paper is structured as follows. In Section \ref{sec:setting} we recall the construction of the profinite monad associated to a monad $\MT$. In Section \ref{sec:pseudovar} we investigate pseudovarieties of $\MT$-algebras and establish our categorical Reiterman theorem: pseudovarieties corrrespond to profinite theories. In Section \ref{sec:unpres} we introduce unary presentations for $\MT$-algebras, our key concept for defining the derivatives of a language. In Section \ref{sec:reclang} we set up our duality framework for algebraic language theory and demonstrate that some important topological characterizations of regular languages hold at the level of monads. In Section \ref{sec:varietythm} we present the main result of our paper, the variety theorem for varieties of $\MT$-recognizable languages and pseudovarieties of $\MT$-algebras. In Section 7 the variety theorem is generalized to reduced languages and reduced $\MT$-algebras, a useful variant in a many-sorted setting. Finally, in Section \ref{sec:applications}, we derive about twenty concrete Eilenberg correspondences as direct instances of our general results, including a number of correspondences that are not yet know in the literature.

Readers should be familiar with basic category theory, topology and universal algebra. A toolkit collecting some important definitions and facts is given in the Appendix.

\section{The Profinite Monad}\label{sec:setting}

In the section, we recall from \cite{camu16} the construction of the \emph{profinite monad} $\widehat\MT$ associated to a monad $\MT$. The profinite monad will play a key role for our categorical approach to Eilenberg-type theorems. 

Recall that for a finitary one-sorted signature $\Gamma$, a \emph{variety of algebras} is a class of $\Gamma$-algebras presented by equations $s=t$ between $\Gamma$-terms. A \emph{variety of ordered algebras} is a class of ordered $\Gamma$-algebras (i.e. $\Gamma$-algebras on a poset with monotone $\Gamma$-operations) presented by inequations $s\leq t$ between $\Gamma$-terms. Morphisms of (ordered) $\Gamma$-algebras are (monotone) $\Gamma$-homomorphisms. In our applications we will work with the following varieties:
\begin{table}[h]
\begin{tabularx}{\columnwidth}{lX}
\hline \vspace{-0.4cm}\\
$\Set$         & sets and functions \\
$\Pos$ & posets and monotone maps \\
$\BA$ &  boolean algebras and boolean homomorphisms\\
$\DL$ & distributive lattices with $0$ and $1$ and lattice homomorphisms preserving $0$ and $1$\\
$\JSL$ & join-semilattices with $0$ and semilattice homomorphisms preserving $0$\\
$\Vect_{\Field}$ & vector spaces over a finite field $\Field$ and linear maps\\
\hline
\end{tabularx}
\end{table}

\noindent We view $\Pos$ as a variety of ordered algebras over the empty signature, and the other categories above as varieties of algebras. All these varieties are \emph{locally finite}, i.e. every finitely generated algebra is finite.

\begin{notation}
For a category $\D$ and a set $S$ of sorts, we denote by $\D^S$ the $S$-fold product category of $\D$. Thus an object of $\D^S$ is a family $D=(D_s)_{s\in S}$ of objects in $\D$ and a morphism $f\colon D\to D'$ is a family $(f_s\colon D_s\to D'_s)_{s\in S}$ of morphisms in $\D$, with composition and identities taken sortwise. We often write $f$ for $f_s$ if the sort $s$ is clear from the context. If $\D$ is a variety of algebras or ordered algebras, we call a morphism $f$ of $\D^S$ \emph{surjective (injective, order-reflecting)} if this property holds sortwise, i.e. for every component $f_s$. 
\end{notation}
For the rest of this paper let us fix the following data:
\begin{enumerate}[(i)]
\item a variety $\D$ of algebras or ordered algebras;
\item a finite set $S$ of sorts;
\item a monad $\MT=(T,\eta,\mu)$ on $\D^S$;
\item a class $\F$ of finite $\MT$-algebras.
\end{enumerate}
We denote by $\FAlg{\MT}$ and $\Alg{\MT}$
 the categories of (finite) $\MT$-algebras and $\MT$-homomorphisms, and we view $\F$ as a full subcategory of $\FAlg{\MT}$.

\smallskip
\noindent The purpose of the monad $\MT$ is to represent formal languages and the possibly many-sorted algebras recognizing them. Languages are modeled as morphisms from free $\MT$-algebras into a suitable object of outputs in $\D^S$. The class $\F$ consists of those finite $\MT$-algebras that we use to recognize languages. In most applications, we will chose $\F =$ all finite $\MT$-algebras.

\begin{example}[Monoids]\label{ex:monads_monoids}
 Languages of finite words are represented by the free-monoid monad $\MT_\ast\Sigma=\Sigma^*$ on $\Set$ whose algebras are monoids. A language $L\seq\Sigma^*$ corresponds (via its characteristic function) to a function from $T_\ast\Sigma$ into the set $\{0,1\}$.
 \end{example}
 
 \begin{example}[$\omega$-semigroups]\label{ex:monads_omegasem}
Languages of finite and infinite words (\emph{$\infty$-languages} for short) are 
represented by the monad  
$\MT_\infty$ on $\Set^2$ associated to 
the 
algebraic theory 
of $\omega$-semigroups \cite{pinperrin04}. By an \emph{$\omega$-semigroup} is meant a 
two-sorted 
set $A=(A_+,A_\omega)$ 
equipped with a binary product $A_+\times A_+ \xra{\o} A_+$, a mixed 
binary product $A_+\times A_\omega \xra{\o} A_\omega$ and an 
$\omega$-ary 
product $A_+^\omega \xra{\pi} A_\omega$ satisfying
the following (mixed) associative laws:
\begin{enumerate}[(i)]
\item $(a \o b)\o c = a\o (b\o c)$ for all $a,b,c\in A_+$;
\item $(a \o b)\o z = a\o (b\o z)$ for all $a,b\in A_+$ and $z\in A_\omega$;
\item $\pi(a_0, a_1, a_2,\ldots) = a_0\o \pi(a_1,a_2,\ldots)$ for all $a_0,a_1,a_2,,\ldots\in A_+$;
\item $\pi(a_0,a_1,a_2,a_3,\ldots) = \pi(a_0\ldots a_{k_0-1}, a_{k_0}\ldots a_{k_1-1}, a_{k_1}\ldots a_{k_2-1}, \ldots)$ for all $a_0,a_1,a_2,\ldots\in A_+$ and all strictly increasing sequences $0<k_0<k_1<k_2<\ldots$ of natural numbers.
\end{enumerate}
A \emph{morphism} between two $\omega$-semigroups $A$ and $B$ is a two-sorted map $h\colon A\to B$ that preserves all finite and infinite products. The free $\omega$-semigroup generated by the two-sorted set
 $(\Sigma,\Gamma)$ is carried by $(\Sigma^+, \Sigma^\omega+\Sigma^*\times\Gamma)$, with products given by concatenation of words. Thus $T_\infty(\Sigma,\Gamma)=(\Sigma^+,\Sigma^\omega+\Sigma^*\times \Gamma)$ and, in particular, $T_\infty(\Sigma,\emptyset)=(\Sigma^+,\Sigma^\omega)$. An $\infty$-language $L\seq (\Sigma^+,\Sigma^\omega)$ thus corresponds to a two-sorted function from $T_\infty(\Sigma,\emptyset)$ to $(\{0,1\},\{0,1\})\in \Set^2$.
\end{example}

\begin{example}[$\Field$-algebras]\label{ex:monads_kalgs}
Weighted languages over a finite field 
$\Field$ are represented 
by the monad $\MT_\Field$ on $\Vect_\Field$ constructing free 
$\Field$-algebras (= associative algebras over the field $\Field$). For the vector space $\Field^\Sigma$ with finite basis $\Sigma$ we 
have 
$\MT_\Field 
(\Field^\Sigma) = 
\Field[\Sigma]$, the $\Field$-algebra of all polynomials $\sum_{i<n} k_iw_i$ with $k_i\in \Field$ and
$w_i\in\Sigma^*$.  Since $\Field[\Sigma]$ has basis $\Sigma^*$, a weighted language $L\colon \Sigma^*\to \Field$ corresponds to a linear map from $T_\Field(\Field^\Sigma)$ to $\Field$.
\end{example}

\begin{assumptions}\label{ass:dt}
From now on, we assume that 
\begin{enumerate}[(i)]
\item the variety $\D$ is locally finite;
\item all epimorphisms in $\D$ are surjective;
\item $T$ preserves epimorphisms;
\item $\F$ is closed under quotients, subalgebras and finite products.
\end{enumerate}
\end{assumptions}

\begin{rem}
\begin{enumerate}\label{rem:tpressurj}
\item The variety $\D$ has the factorization system of surjective morphisms and injective (resp. order-reflecting) morphisms, cf. \ref{app:factsystem}. It extends sortwise to $\D^S$. \emph{Quotients} and \emph{subobjects} is $\D$ and $\D^S$ are taken in this factorization system.
\item  Since
$T$ 
preserves surjections, the factorization system of $\D^S$ lifts to the category $\Alg{\MT}$: every $\MT$-homomorphism factorizes into a 
 surjective homomorphism followed by an injective (resp. 
order-reflecting) one. 
\emph{Quotients} and \emph{subalgebras} in $\Alg{\MT}$ are taken in this
factorization system.
\item The \emph{homomorphism theorem} holds for $\MT$-algebras: let $e\colon A\epito B$ and $h\colon A\to C$ be $\MT$-homomorphisms with $e$ surjective, and suppose that, for every sort $s$ and any two elements $a,a'\in \under{A_s}$,
\[ e(a) = e(a') \quad\text{implies}\quad h(a)=h(a'),\]
respectively (in the ordered case)
\[ e(a) \leq e(a') \quad\text{implies}\quad h(a)\leq h(a').\]  
Then there exists a unique $\MT$-homomorphism $\ol h\colon B\to C$ with $h=\ol h \o e$. Indeed, the homomorphism theorem for the $S$-sorted variety $\D^S$ (see \ref{app:homtheorem}) yields a morphism $\ol h$ in $\D^S$ with $h=\ol h \o e$, and since $T$ preserves surjections, $\ol h$ is a $\MT$-homomorphism. 
\end{enumerate}
\end{rem}

\begin{notation}
For a variety $\D$ of algebras, let $\Stone(\D)$ be the category of \emph{Stone $\D$-algebras}; its objects are $\D$-algebras equipped with a Stone topology and continuous $\D$-operations, and its morphisms are continuous $\D$-morphisms. Similarly, for a variety $\D$ of ordered algebras, let $\Priest(\D)$ be the category of \emph{Priestley $\D$-algebras}; its objects are $\D$-algebras with a Priestley topology and continuous monotone $\D$-operations, and its morphisms are continuous monotone $\D$-morphisms. We denote by $\hatD$ the full subcategory of $\Stone(\D)$ (resp. $\Priest(\D)$) on \emph{profinite $\D$-algebras}, i.e. inverse limits of algebras in $\D_f$. Here $\D_f$ is viewed as a full subcategory of $\hatD$ by equipping objects of $\D_f$ with the discrete topology.
\end{notation}

\begin{example}
For the varieties $\D =\Set$, $\Pos$, $\JSL$ and $\Vect_\Field$ that we encounter in our applications, every Stone (resp. Priestley) $\D$-algebra is profinite: we have 
\begin{enumerate}
\item $\widehat{\Set}=\Stone$ (Stone spaces);
\item $\widehat{\Pos} = \Priest$ (Priestley spaces);
\item
$\widehat{\JSL} = \Stone(\JSL)$ (Stone semilattices);
\item  $\widehat{\Vect_\Field} = 
\Stone(\Vect_\Field)$ (Stone vector spaces).
\end{enumerate}
See e.g. Johnstone \cite{Johnstone1982}.
\end{example}
In the following we establish some important properties of the category $\hatD$. Recall from \ref{app:procomp} the concept of a \emph{pro-completion} (= free completion under cofiltered limits) of a category.
\begin{lemma}\label{lem:hatD-procomp}
$\hatD$ is the pro-completion of $\D_f$.
\end{lemma}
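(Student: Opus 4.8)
The plan is to invoke the standard recognition criterion for pro-completions (dual to the familiar characterisation of $\mathrm{Ind}$-completions): a category $\mathcal A$ with cofiltered limits is canonically equivalent to $\ProC{\mathcal C}$, for a full subcategory $\mathcal C$, provided that (a) every object of $\mathcal A$ is a cofiltered limit of a diagram in $\mathcal C$, and (b) every object of $\mathcal C$ is \emph{finitely copresentable} in $\mathcal A$, i.e.\ $\mathcal A(\dash, C)$ sends cofiltered limits in $\mathcal A$ to filtered colimits in $\Set$. (This is the content of the appendix, \ref{app:procomp}.) I would apply this with $\mathcal A = \hatD$ and $\mathcal C = \D_f$, the latter embedded into $\hatD$ by equipping finite algebras with the discrete topology; this embedding is full and faithful, since a continuous homomorphism between discrete algebras is just a homomorphism.

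Before checking (a) and (b) I would verify that $\hatD$ has cofiltered limits, computed as in the ambient category. The category $\Stone(\D)$ (resp.\ $\Priest(\D)$) is complete: a limit is carried by the limit of the underlying $\D$-algebras equipped with the limit topology, which is again Stone (resp.\ Priestley), compact Hausdorff spaces and Priestley spaces being closed under limits in $\Top$. Moreover $\hatD$ is closed under cofiltered limits in $\Stone(\D)$: a cofiltered limit of cofiltered limits of finite algebras can be reorganised into a single cofiltered limit of finite algebras by a routine reindexing of the index categories (see also \cite{camu16}). Property (a) is then immediate, being essentially the definition of $\hatD$.

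The substance of the proof is property (b): for a finite $\D$-algebra $D$ and a cofiltered limit $A = \lim_i A_i$ in $\hatD$, the comparison map $\colim_i \hatD(A_i, D) \to \hatD(A, D)$ is a bijection. I would first normalise by replacing each $A_i$ by the (closed, hence profinite) subalgebra $\mathrm{im}(A \to A_i)$; this leaves the limit unchanged and makes all projections $p_i\colon A \to A_i$ and all connecting maps surjective. For surjectivity of the comparison map, take a continuous homomorphism $f\colon A \to D$; its fibres form a finite clopen partition of the underlying space of $A$, and since the clopens of a cofiltered limit of Stone (resp.\ Priestley) spaces are precisely those pulled back from some stage --- equivalently, finite spaces are finitely copresentable in $\Stone$ (resp.\ in the category of Priestley spaces) --- this partition is pulled back along some $p_i$, so $f = g\circ p_i$ with $g\colon A_i \to D$ continuous; and $g$ is automatically a homomorphism because $p_i$ is surjective and $f$ is one (a standard diagram chase, as in Remark \ref{rem:tpressurj}). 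For injectivity, if $g_j\circ p_j = g_k\circ p_k$, choose $l$ above $j$ and $k$; then $g_j\circ p_{jl}$ and $g_k\circ p_{kl}$ agree after composition with the surjection $p_l$, hence are equal, so $g_j$ and $g_k$ become identified in the colimit.

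The step I expect to be the real obstacle is the topological fact feeding into (b): that the clopens of a cofiltered limit of Stone (resp.\ Priestley) spaces are exactly the pulled-back clopens. This is a compactness argument --- a clopen subset is open, hence a union of basic clopens pulled back from finitely many stages, and is closed, hence compact, so that union is finite and can be pulled back from a single stage; in the ordered case one additionally notes that the partial order on the limit is the limit of the partial orders, so the pulled-back set may be taken to be a clopen up-set when required. Everything else --- completeness of $\Stone(\D)$ and $\Priest(\D)$, the reindexing in (a), and the homomorphism lifts in (b) --- is routine bookkeeping with cofiltered diagrams and the homomorphism theorem. Once (a) and (b) are in place, the recognition criterion yields a canonical equivalence $\hatD \simeq \ProC{\D_f}$ extending the inclusion $\D_f \hookto \hatD$.
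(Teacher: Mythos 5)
Your proposal is correct and follows essentially the same route as the paper: both invoke the recognition criterion of \ref{app:procomp} and reduce everything to finite copresentability of finite algebras, which in turn rests on finite copresentability of finite discrete (resp.\ finite Priestley) spaces in $\Stone$ (resp.\ $\Priest$) plus the observation that a continuous factorization through a \emph{surjective} projection is automatically a $\D$-homomorphism. The only difference is organisational: you normalise the diagram to one with surjective projections up front (which also lets you skip the paper's separate reduction from arbitrary profinite $X_i$ to finite ones), whereas the paper keeps the original diagram and instead precomposes with a connecting map whose image equals $p_i[X]$ (Lemma \ref{lem:connecting-map-image}) before checking the homomorphism property elementwise.
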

A proof is sketched in \cite[Remark VI.2.4]{Johnstone1982} for the unordered case, and the argument is similar for the ordered case. For convenience, we present a complete proof for the latter.

\begin{proof}
Suppose that $\D$ is a variety of ordered algebras. By \cite[Prop. 1]{ban72}, the category $\hatD$ is complete (with 
limits formed on the level of $\Set$) and, by definition, every object of 
$\hatD$ is a cofiltered limit of objects in $\D_f$. Thus, by 
\ref{app:procomp}, it only remains to show that every object $D\in \D_f$ is 
finitely copresentable in $\hatD$: given a cofiltered limit cone $(p_i\colon X\to 
X_i)_{i\in I}$ in $\hatD$ and a morphism $f\colon X\to D$, we need to show that $f$ 
factors through the cone essentially uniquely. The uniqueness part is clear since, 
forgetting the $\D$-algebraic structure, $D$ is finitely copresentable in 
$\Priest$. The existence of a factorization is established as follows:
\begin{enumerate}[(1)]
\item We first consider the case where all $X_i$'s are finite. Since $(p_i)$ is a cofiltered 
limit cone in $\Priest$ and $D$ is a finite poset with discrete topology (and 
thus finitely copresentable in $\Priest$), there exists an $i\in I$ and a 
monotone map $f'\colon X_i\to D$ with $f'\o p_i = f$. Choose $j\in I$ and a 
connecting morphism $g\colon X_j\to X_i$ with $g[X_j]=p_i[X]$, see Lemma 
\ref{lem:connecting-map-image}. We claim that the composite $h = f'\o g$ is a 
morphism of $\hatD$, i.e. preserves all $\D$-operations. Indeed, given an 
$n$-ary operation symbol $\sigma$ in the signature of $\D$ and $x_1,\ldots, 
x_n\in X_j$, choose $x_k'\in X$ with $g(x_k) = p_i(x_k')$ for each $k=1,\ldots,n$. 
Then
\begin{align*}
h(\sigma(x_1,\ldots, x_n)) &= f'(g(\sigma(x_1,\ldots, x_n))) & (\text{$h=f'g$})\\
&= f'(\sigma(g(x_1),\ldots, g(x_n))) & (\text{$g$ morphism of $\hatD$})\\
&= f'(\sigma(p_i(x_1'),\ldots,p_i(x_n'))) & (\text{$g(x_k')=p_i(x_k')$})\\
&= f'(p_i(\sigma(x_1',\ldots, x_n'))) & (\text{$p_i$ morphism of $\hatD$})\\
&= f(\sigma(x_1',\ldots, x_n')) & (\text{$f'p_i=f$})\\
&= \sigma(f(x_1'),\ldots, f(x_n')) & (\text{$f$ morphism of $\hatD$})\\
&= \sigma(f'\o p_i(x_1'),\ldots, f'\o p_i(x_n')) & (\text{$f=f'p_i$})\\
&= \sigma(f'\o g(x_1),\ldots, f'\o g(x_n)) & (\text{$p_i(x_k')=g(x_k)$})\\
&= \sigma(h(x_1),\ldots, h(x_n)) & (\text{$h=f'g$})
\end{align*}
Thus $h$ is a morphism in $\hatD$. Moreover, we have $f= f'\o p_i = f'\o g \o p_j = h\o p_j$, i.e. $f$ factors through $p_j$.
\item Now let the $X_i$'s be arbitrary. We may assume that $I$ is a codirected 
poset, see \ref{app:cofiltered_limits}. The connecting morphism for $i\leq j$ is denoted by $g_{ij}\colon X_i\to 
X_j$. Form the codirected poset
\[ J = \{\,(i,e): i\in I \text{ and } e\colon X_i\epito A_e \text{ is a surjective morphism in $\hatD$ with $A_e\in \D_f$ } \,\} \]
ordered by
\[ (i,e) \leq (j, q) \quad\text{iff}\quad i\leq j \text{ and } q\o g_{ij} = 
g\o e \text{ for some } g\colon A_e\to A_q. \]
Note that $g$ is necessarily unique because $e$ is surjective. A routine verification shows that the diagram $Q\colon J\to \hatD$ given by 
\[ (i,e)\mapsto A_e \quad\text{and}\quad ((i,e)\leq (j,q)) \mapsto g \]
has the limit cone
\[ (e\o p_i\colon X \to A_e)_{(i,e)\in J}.\]
By (1), there exists an $(i,e)\in J$ and a morphism $f'\colon A_e\to D$ with $f'\o e\o p_i = f$.  This shows that $f$ factors through $p_i$. \endproof
\end{enumerate}
\end{proof}

\begin{rem}\label{rem:hatdlfcp}
 From the above lemma and
\ref{app:procomp}, it follows the category $\hatD$ is 
locally 
finitely copresentable, and its finitely copresentable objects are the objects 
of $\D_f$. Since the set $S$ of sorts is assumed to be finite, this implies that the product
category $\D^S$ is also locally finitely copresentable and its finitely 
copresentable objects
are the objects of $\D_f^S$.
\end{rem}

\begin{notation}
We have the two forgetful functors 
\[V\colon \hatD\to \D\quad \text{and}\quad V^S\colon \hatD^S\to \D^S.\]  We usually drop the superscript and write $V$ for $V^S$. Moreover, we write $D$ for $VD$ if $D$ is finite, and likewise $h$ for $Vh$ if $h$ is a morphism between finite objects.
\end{notation}

\begin{rem}\label{rem:vrightadjoint}
The forgetful functor $V\colon \hatD \to \D$ is a right adjoint, see \cite[Proposition 2.8]{camu16}. Consequently, it preserves limits and in particular monomorphisms. Since limits in 
$\hatD^S$ and $\D^S$ are computed sortwise, the same holds for the $S$-sorted 
forgetful functor $V\colon \hatD^S\to\D^S$.
\end{rem}

\begin{lemma}\label{lem:episurj}
All epimorphisms in $\hatD$ are surjective.
\end{lemma}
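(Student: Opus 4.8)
The plan is to reduce the claim to the characterization of $\hatD$ as a pro-completion established in Lemma~\ref{lem:hatD-procomp}, together with the fact that the finitely copresentable objects of $\hatD$ are precisely the objects of $\D_f$ (Remark~\ref{rem:hatdlfcp}). The key observation is that an epimorphism $e\colon X\to Y$ in $\hatD$ is a regular epimorphism whenever $\hatD$ has enough structure, and in a pro-completion of a locally finite variety every object is an inverse limit of finite quotients. So the strategy is: first reduce to the case of finite objects by a density/continuity argument, and then verify the statement for morphisms between finite $\D$-algebras, where epimorphisms in $\hatD$ coincide with epimorphisms in $\D_f$, which by Assumptions~\ref{ass:dt}(ii) are surjective.

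Concretely, I would proceed as follows. Let $e\colon X\to Y$ be an epimorphism in $\hatD$. Take the (surjective, injective-or-order-reflecting) factorization $e = (X \xepito{q} Z \xmonoto{m} Y)$ of the underlying $\D^S$-morphism; since $V\colon\hatD\to\D$ preserves and reflects the relevant structure and $q$ is a quotient in $\hatD$ (the factorization system lifts, as recorded in Remark~\ref{rem:tpressurj}), this is a factorization inside $\hatD$ with $m$ a monomorphism there (using Remark~\ref{rem:vrightadjoint}: $V$ preserves monos). Then $m$ is an epimorphism in $\hatD$ because $e$ is; it suffices to show every mono-epi in $\hatD$ is an isomorphism, equivalently surjective. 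Write $Z$ as a cofiltered limit of finite quotients $z_i\colon Z\epito Z_i$ with $Z_i\in\D_f$, using Lemma~\ref{lem:hatD-procomp} and the construction of $J$ as in its proof. For each $i$, consider the composite $z_i\o m\colon Y\to Z_i$; since $Z_i$ is finite hence finitely copresentable, and since $m$ is epi, I would show that $z_i\o m$ must factor as $z_i \o m = \bar m_i$ for a morphism $Y\to Z_i$ detecting the image of $m$, and that the two copies of $Y$ glued along $m$ in the pushout $Y +_X Y$ force these factorizations to be compatible. Passing to the limit yields a morphism $Y\to Z$ splitting $m$ on the nose; combined with $m$ being mono, $m$ is an isomorphism, hence surjective, hence $e = m\o q$ is surjective.

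An alternative, cleaner route avoids pushouts: since $\hatD$ is locally finitely copresentable (Remark~\ref{rem:hatdlfcp}), a morphism $m$ is a monomorphism iff it is a cofiltered limit of monomorphisms between finitely copresentable objects, i.e. between objects of $\D_f$; dually one characterizes epimorphisms via the dual of local presentability. A morphism between finite $\D$-algebras that is both mono and epi in $\hatD$ is, after forgetting, both injective (or order-reflecting) and surjective in $\D_f$ — here one uses that $\hatD$ contains $\D_f$ as a full subcategory and that epis in $\D$ are surjective by Assumptions~\ref{ass:dt}(ii) — hence an isomorphism. Being a limit of isomorphisms, $m$ itself is an isomorphism, so surjective, and the factorization gives that $e$ is surjective.

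The main obstacle I expect is the ``detecting the image'' step: given an epimorphism $m\colon Y\rightarrowtail Z$ in $\hatD$ which is also monic, one must rule out a proper closed $\D$-subalgebra of $Z$ through which $m$ factors. The danger is that $Z$ is profinite but not finite, so one cannot argue elementwise; instead one has to exploit finite copresentability of the $Z_i$'s to reduce the closed-subalgebra question to finitely many finite stages, where the ordinary algebraic fact ``a monomorphism into a finite $\D$-algebra that is epi is surjective'' applies. Making the passage to the limit precise — in particular checking that the splittings of $m$ at the finite stages form a cone and that the topology behaves — is where the real work lies, though it is essentially the same bookkeeping already carried out in the proof of Lemma~\ref{lem:hatD-procomp}.
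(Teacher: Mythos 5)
Your finite-stage ingredients are essentially the right ones and agree with the paper: the crux at the bottom is that an epimorphism in $\D_f$ is an epimorphism in $\D$ and hence surjective by Assumption~\ref{ass:dt}(ii). But note that even this step is not free — $\D_f\hookrightarrow\D$ being full does not make epis in $\D_f$ into epis in $\D$; the paper proves it by writing the test object as a directed union of finite subobjects (using local finiteness of $\D$), and you only assert it.

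The genuine gap is the bridge from the finite case to a general epimorphism, which is the actual content of the lemma, and neither of your two routes closes it. In the main route, after factoring $e=m\o q$ you try to split the mono--epi $m\colon Z\monoto Y$ by extending the finite quotients $z_i\colon Z\epito Z_i$ along $m$ to morphisms $Y\to Z_i$. Epimorphy of $m$ gives you \emph{uniqueness} of such extensions, never \emph{existence}, and the cokernel-pair remark ($Y+_X Y$) does not produce them: a $\D$-morphism from a closed subalgebra $m[Z]$ into a finite algebra need not extend to all of $Y$. You are also approximating the wrong object — to rule out $m[Z]\subsetneq Y$ one must separate a point of $Y$ from the closed image using finite quotients of the \emph{codomain} $Y$, not of $Z$. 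In the ``cleaner route'', the claim that a mono--epi in $\hatD$ is a cofiltered limit of mono--epis between objects of $\D_f$ is unjustified: writing a morphism as a cofiltered limit in $\hatD^\to$ of morphisms $f_i\colon A_i\to B_i$ between finitely copresentable objects gives no control on whether the $f_i$ are monic or epic, so ``being a limit of isomorphisms'' is not available. The paper's fix is exactly the missing device: take the (epi, strong mono) factorizations $a_i=a_i''\o a_i'$, $b_i=b_i''\o b_i'$ of the limit projections in $\hatD^\to$ and use diagonal fill-in to replace $f_i$ by $e_i\colon A_i'\to B_i'$; these $e_i$ are \emph{forced} to be epimorphisms between finite objects (because $e$ and $b_i'$ are epi), hence surjective by the finite case, and $((a_i',b_i')\colon e\to e_i)$ is still a cofiltered limit cone, so $e$ is surjective by Lemma~\ref{lem:surjections-between-cofiltered-diagrams}. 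Without some version of that factorization-and-fill-in step, your passage to the limit does not go through.
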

\begin{proof}
\begin{enumerate}[(1)]
\item First, observe that all monomorphisms in $\hatD$ are injective because 
 $V\colon\hatD\to\D$ preserves 
monomorphisms, see Remark \ref{rem:vrightadjoint}, and monomorphisms in the variety $\D$ are injective.
\item We show that every epimorphism $e\colon A\epito B$ in $\D_f$ is an epimorphism in $\D$ (and therefore surjective by Assumption \ref{ass:dt}(ii)). To this end, suppose that 
$f,g\colon B\to C$ are morphisms
  in $\D$ with $f\o e = g\o e$. Express $C$ as a directed union $(c_i\colon
  C_i\monoto C)_{i\in I}$ of finite subobjects, using that $\D$ is a locally
  finite variety. Since $B$ is finite and the union is directed, the morphisms
  $f$ and $g$ factor through some $c_i$, i.e. there exist morphisms $f',g'$ 
  with $f=c_i\o f'$ and $g=c_i\o
  g'$. Then
  \[
    c_i\o f'\o e = f \o e = g \o e = c_i \o g' \o e,
  \]
  and since $c_i$ is monic, it follows that $f'\o e = g' \o e$.
  Since $e$ is an epimorphism in $\D_f$ and $f',g'$ are morphisms in $\D_f$, this implies
  $f'=g'$ and therefore $f=g$.
\item We prove that every epimorphism $e\colon A\epito B$ in $\hatD$ with $B\in \D_f$ is surjective. To see this, factorize $e$ as $e=m\o q$ with $q$ surjective and $m$ injective/order-reflecting. The morphism $m$ has finite domain and is thus a morphism in $\D_f$.  Moreover, since $e=m\o q$ and $e$ is an epimorphism in $\hatD$, so is $m$. In particular, $m$ is an epimorphism in $\D_f$. Then part (2) shows that $m$ is surjective, which by the equation $e=m\o q$ implies that $e$ is surjective.
\item Now let $e\colon A\epito B$ be an arbitrary epimorphism in $\hatD$. By 
\ref{app:lfcp_cat}, one can express $e$ in the locally finitely copresentable 
category $\hatD^\to$ as a cofiltered limit $((a_i,b_i)\colon e\to f_i)_{i\in I}$ 
of morphisms $f_i\colon 
A_i\to B_i$ in $\D_f$. Take the (epi, strong mono) 
factorizations of $a_i$ and $b_i$, see \ref{app:factsystem}(c). Diagonal 
fill-in gives a morphism $e_i$ as in the 
diagram 
below:
\[
\xymatrix{
A \ar@/_2em/[dd]_{a_i} \ar@{->>}[r]^e \ar@{->>}[d]_{a_i'} & B 
\ar@/^2em/[dd]^{b_i} \ar@{->>}[d]^{b_i'}\\
A_i' \ar@{-->>}[r]^{e_i} \ar@{ >->}[d]_{a_i''} & B_i' \ar@{ >->}[d]^{b_i''}\\
A_i \ar[r]_{f_i} & B_i
}
\]
The objects 
$A_i'$ 
and $B_i'$ are finite by part (1) of the proof. Moreover, since $e$ and $b_i'$ 
are 
epimorphic, so is $e_i$, and thus part (2) shows that $e_i$ is 
surjective. Moreover, by part (3), also $a_i'$ is surjective. Finally, observe that $((a_i',b_i')\colon e\to e_i)_{i\in I}$ is a 
cofiltered limit cone in $\hatD^\to$ by \ref{app:factsystem}(a),(c). Since limits in $\hatD^\to$ are computed componentwise, 
Lemma~\ref{lem:surjections-between-cofiltered-diagrams} shows
  that~$e$ is surjective.\qedhere
\end{enumerate}
\end{proof}
In the algebraic theory of regular languages, topology enters the stage via the Stone space $\widehat{\Sigma^*}$ of \emph{profinite words} over an alphabet $\Sigma$. It is formed as the inverse (a.k.a. cofiltered) limit of all finite quotient monoids of the free monoid $\Sigma^*$. In \cite{camu16} we generalized this construction from monoids to algebras for a monad, leading to the concept of the profinite monad $\widehat\MT$ and, more generally, the pro-$\F$ monad $\hatT$ associated to the monad $\MT$. In the following, we review the construction of pro-$\F$ monads and their basic properties. Recall from \ref{app:codensity_monad} the concept of a \emph{codensity monad} of a functor.

\begin{definition}[see \cite{camu16}]
The \emph{pro-$\F$ monad} of $\MT$ is the codensity monad of the functor
\[K\colon \F\to \hatD^S,\quad (A,\alpha)\mapsto A.  \]
We denote this monad on $\hatD^S$ by $\hatT=(\hatt,\hateta,\hatmu)$. If $\F=$ all finite $\MT$-algebras, we write $\widehat{\MT}$ for $\hatT$ and call $\widehat{\MT}$ the \emph{profinite monad} of $\MT$.
\end{definition}

\begin{rem}[see \cite{camu16}]\label{rem:hattconst}
\begin{enumerate}
\item A more explicit construction of $\hatT$ can be described as follows. For any $\D\in\D^S_f$, let $\MT D = (TD,\mu_D)$ be the free $\MT$-algebra on $D$ (cf. \ref{app:algmonad}).  Form the comma category $(\MT D\downarrow \F)$ of all
$\MT$-homomorphisms $h\colon\MT D \to (A,\alpha)$ with $(A,\alpha)\in \F$, see
\ref{app:commacat}. Then the object $\hatt D$ is the limit of the cofiltered diagram 
\[
  (\MT D\downarrow \F) \to \hatD^S,\quad (h\colon\MT D \to (A,\alpha)) \to A.
\]
The cofilteredness of $(\MT D\downarrow \F)$ follows from our assumption that $\F$ is closed under subalgebras and finite products. We denote the limit projection associated to $h\colon\MT D \to (A,\alpha)$ by \[h^+\colon \hatt D \to A.\] In particular, for any $\MT$-algebra $(A,\alpha)\in \F$ we have the projection \[\alpha^+\colon \hatt A\to A\] because $\alpha\colon \MT A \to (A,\alpha)$ is a $\MT$-homomorphism by the associative law. The unit $\hateta_D$ and multiplication $\hatmu_D$ are uniquely determined by the commutative diagrams
 \begin{equation} \label{eq:hat-etasurj}
   \vcenter{
     \xymatrix@R-1em{
       {D} 
       \ar[r]^{\hateta_D}
       \ar[dr]_{h \eta_D}
       & 
       \hatt {D} \ar[d]^<<<{\br{h}} 
       &
       \hatt\hatt D \ar[l]_-{\hatmu_D}
       \ar[d]^<<<{\hatt \br{h}} 
       \\
       & 
       A
       &
       \hatt A 
       \ar@{->>}[l]^{\br{\alpha}}
     }
   }
\qquad \text{for all $h\colon \MT D\to (A,\alpha)$ in  $(\MT D\downarrow \F)$}.
\end{equation}
\item 
In the above explicit construction of $\hatt D)$ ($D\in\D_f^S$) as a cofiltered limit, it actually suffices to form $\hatt D$ as the limit of all \emph{quotients} of $\MT D$ with codomain in $\F$. More precisely, let $\Quo(\MT D,\F)$ be the full subcategory of $(\MT D\downarrow \F)$ on all surjective $\MT$-homomorphisms $e\colon \MT D\epito (A,\alpha)$ with codomain in $\F$. Then the inclusion functor $\Quo(\MT D,\F)
\hookrightarrow (\MT D \downarrow \F )$ is final (see
\ref{app:final_func}). Therefore $\hatt D$ is also the cofiltered limit of the smaller diagram
\[
  \Quo(\MT D,\F) \to \hatD^S,\quad (e\colon\MT D \epito (A,\alpha)) \to A,
\]
and the above limit cone $(\, h^+\colon \hatt D\to A\, )_{h\in (\MT D\downarrow F)}$ restricts to a limit cone \[(\, e^+\colon \hatt D\epito A\, )_{e\in \Quo(\MT D, \F)}.\] Note that the projections $e^+$ are surjective by Lemma \ref{lem:projection-is-surjective}.
\end{enumerate}
\end{rem}

\begin{example}
\begin{enumerate}
\item Let $\MT=\MT_\ast$ on $\Set$ and $\F=$ all finite monoids.
The monad $\widehat{\MT_*}$ on $\Stone$ assigns to each finite set (= finite discrete space)
$\Sigma$ the  
Stone space $\widehat{\Sigma^*}$ of profinite words, obtained as the inverse limit of all finite quotient monoids of the free monoid $\Sigma^*$.
\item Let $\MT=\MT_\Field$ on $\Vect_\Field$ and $\F=$ all finite vector spaces. The monad $\widehat{\MT_\Field}$ on
$\Stone(\Vect_\Field)$ assigns
to each finite vector space $\Field^\Sigma$ the Stone vector space obtained as 
the limit of all finite linear quotient spaces of $\Field[\Sigma]$.
\end{enumerate}
\end{example} 

\begin{rem}\label{rem:basic} We list some important technical properties of $\hatT$ that we will use throughout our paper. See \cite{camu16} for proofs.
\begin{enumerate}
\item \label{rem:iota} There is a natural transformation
  $\iota\colon TV \to V\hatt$ (where $V\colon \hatD^S\to \D^S$ is the forgetful functor) whose component
  $\iota_D\colon TVD \to V\hatt D$ for $D\in\D_f^S$ is uniquely determined by the commutative triangles
\[
\vcenter{
\xymatrix{
TVD \ar[dr]_{h} \ar[r]^{\iota_D} \ar[r] & V\hatt D \ar[d]^{Vh^+}\\
& A
}
}
\qquad \text{for all $h\colon \MT D \to (A,\alpha)$ in $(\MT D\downarrow \F)$}.
\]
The morphism $\iota_D$ is \emph{dense}, i.e. its image forms (sortwise) a dense subset of $\hatt D$ in $\hatD^S$. The density of $\iota_D$ implies two statements that we will use frequently:
\begin{enumerate}[(i)]
\item For any two morphisms $f,g\colon \hatt D\to A$ in $\hatD^S$ with $Vf\o \iota_D = Vg\o \iota_D$, one has $f=g$. Indeed, this follows from the fact that the topology of $A$ is Hausdorff.
\item For any surjective morphism $e\colon \hatt D \epito A$ in
    $\hatD^S$ with finite codomain, the restriction $Ve\o \iota_D\colon TD\epito
    A$ is also surjective, as this map is dense and the topology of $A$ is discrete.
\end{enumerate}
\item \label{rem:finite_talgs} If $(A,\alpha)\in\F$, then $(A, \alpha^+)$ is a $\hatT$-algebra: putting $h=\alpha$ in the commutative diagram \eqref{eq:hat-etasurj} gives the
  unit and associative law. Conversely, every finite $\hatT$-algebra $(B,\beta)$ restricts to the finite $\MT$-algebra $(B,V\beta\o \iota_B)$ in $\F$. This yields
  an isomorphism \[\F \cong \FAlg{\hatT}\] given by
  $(A,\alpha)\mapsto (A,\alpha^+)$ and $h\mapsto h$; its inverse is given by $(B,\beta)\mapsto (B,V\beta\o \iota_B)$ and $h\mapsto h$. We will therefore tacitly identify finite $\hatT$-algebras with $\MT$-algebras in $\F$.
\item \label{rem:hatthomrestrict} Every $\hatT$-homomorphism $h\colon \hatT D\to (B,\beta)$ with $D\in \D_f^S$ and finite codomain $(B,\beta)$ restricts to the $\MT$-homomorphism $Vh\o \iota_D\colon \MT D\to (B,V\beta\o \iota_B)$.
\item \label{rem:finitelycopres} Every finite $\hatT$-algebra is a finitely
          copresentable object (cf. \ref{app:fin_copres}) in $\Alg{\hatT}$.
\item \label{rem:factorization} $\hatt$ preserves 
    surjections. Therefore, in analogy to Remark \ref{rem:tpressurj}, the factorization system of surjective and injective/order-reflecting morphisms of $\hatD^S$ lifts to $\Alg{\hatT}$. \emph{Quotients} and \emph{subalgebras} of $\hatT$-algebras are taken w.r.t. this factorization system. Moreover, the homomorphism theorem holds for $\hatT$-algebras.
 \end{enumerate}
\end{rem}

\begin{lemma}\label{lem:gat}
Every $\MT$-homomorphism $g\colon \MT D'\to\MT D$ with $D,D'\in \D_f^S$ extends 
uniquely to a 
$\hatT$-homomorphism ${\hat g}\colon\hatT D'\to\hatT D$ such that the following 
diagrams commute for all $\MT$-homomorphisms $h\colon \MT D\to A$ with $A\in 
\F$:
\begin{equation}\label{eq:h_at}
  \vcenter{
    \xymatrix{
    T D' \ar[r]^g \ar[d]_{\iota_{D'}} & TD \ar[d]^{\iota_{D}}\\
    V\hatt D' \ar[r]_{V{\hat g}}  & V\hatt D 
    }
  }
  \qquad\qquad
  \vcenter{
    \xymatrix{
      \hatt D' \ar[r]^{\hat g} \ar[dr]_{\br{(hg)}} & \hatt D 
      \ar[d]^{\br{h}}\\
      & A
    }
  }
\end{equation}
\end{lemma}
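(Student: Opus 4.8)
The plan is to construct $\hat g$ from the description of $\hatt D$ in Remark~\ref{rem:hattconst} as the cofiltered limit, in $\hatD^S$, of the diagram $(\MT D\downarrow\F)\to\hatD^S$ sending $(h\colon\MT D\to(A,\alpha))$ to $A$, with limit cone $(\br h\colon\hatt D\to A)_h$. Given $g\colon\MT D'\to\MT D$ and an object $h\colon\MT D\to(A,\alpha)$ of $(\MT D\downarrow\F)$, the composite $h\o g\colon\MT D'\to(A,\alpha)$ is again an $\MT$-homomorphism with codomain in $\F$, hence an object of $(\MT D'\downarrow\F)$, so there is a limit projection $\br{(hg)}\colon\hatt D'\to A$. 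I would first verify that $\bigl(\br{(hg)}\bigr)_h$ is a cone on $\hatt D'$ over the diagram defining $\hatt D$: a connecting morphism of $(\MT D\downarrow\F)$ from $h_1$ to $h_2$ is a morphism $q\colon A_1\to A_2$ with $q\o h_1=h_2$, and then $q\o(h_1 g)=h_2 g$ shows $q$ to be a connecting morphism from $h_1 g$ to $h_2 g$, so $q\o\br{(h_1 g)}=\br{(h_2 g)}$ since $(\br{(h')})_{h'}$ is a cone. The universal property of the limit $\hatt D$ then yields a unique morphism $\hat g\colon\hatt D'\to\hatt D$ in $\hatD^S$ with $\br h\o\hat g=\br{(hg)}$ for all $h$, which is precisely the second diagram in \eqref{eq:h_at}.

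Next I would show that $\hat g$ is a $\hatT$-homomorphism $(\hatt D',\hatmu_{D'})\to(\hatt D,\hatmu_D)$. By the right-hand square of \eqref{eq:hat-etasurj}, every projection $\br h$ is itself a $\hatT$-homomorphism, i.e.\ $\br h\o\hatmu_D=\br\alpha\o\hatt\br h$ (and likewise $\br{(hg)}\o\hatmu_{D'}=\br\alpha\o\hatt\br{(hg)}$, applying \eqref{eq:hat-etasurj} with $h$ replaced by $hg$ and $D$ by $D'$). As $(\br h)_h$ is a limit cone, hence jointly monic, it is enough to compare $\br h\o\hat g\o\hatmu_{D'}$ with $\br h\o\hatmu_D\o\hatt\hat g$ for each $h$. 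The former equals $\br{(hg)}\o\hatmu_{D'}=\br\alpha\o\hatt\br{(hg)}$; the latter equals $\br\alpha\o\hatt\br h\o\hatt\hat g=\br\alpha\o\hatt(\br h\o\hat g)=\br\alpha\o\hatt\br{(hg)}$. These coincide, so $\hat g\o\hatmu_{D'}=\hatmu_D\o\hatt\hat g$.

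For the extension diagram (the left square of \eqref{eq:h_at}), I would use that the forgetful functor $V\colon\hatD^S\to\D^S$ is a right adjoint (Remark~\ref{rem:vrightadjoint}), hence preserves the cofiltered limit defining $\hatt D$, so that $(V\br h)_h$ is a jointly monic cone on $V\hatt D$. Post-composing $V\hat g\o\iota_{D'}$ and $\iota_D\o g$ with each $V\br h$ and using the characterization of $\iota$ from Remark~\ref{rem:basic} ($V\br h\o\iota_D=h$ and $V\br{(hg)}\o\iota_{D'}=hg$), we obtain $V\br h\o V\hat g\o\iota_{D'}=V\br{(hg)}\o\iota_{D'}=h\o g=V\br h\o\iota_D\o g$ for every $h$, and therefore $V\hat g\o\iota_{D'}=\iota_D\o g$. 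Uniqueness then comes for free from density: any $\hatT$-homomorphism $\hat g'$ satisfying $V\hat g'\o\iota_{D'}=\iota_D\o g$ agrees with $\hat g$ on the dense image of $\iota_{D'}$, and since $\hatt D$ is Hausdorff this forces $\hat g'=\hat g$ (Remark~\ref{rem:basic}).

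I do not foresee a genuine obstacle: the argument merely transports the three available universal properties — the defining squares \eqref{eq:hat-etasurj} of $\hatmu$, the characterization of $\iota$, and preservation of limits by $V$ — through the limit cone $(\br h)_h$. The only point that needs some care is bookkeeping the two readings of $g$, namely as an $\MT$-homomorphism $\MT D'\to\MT D$ and as its underlying morphism $TD'\to TD$ in $\D^S$, and, accordingly, tracking where the functor $V$ is applied.
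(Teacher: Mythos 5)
Your proposal is correct and follows essentially the same route as the paper: construct $\hat g$ from the universal property of the cofiltered limit defining $\hatt D$ via the compatible family $\bigl(\br{(hg)}\bigr)_h$, deduce the left-hand square by postcomposing with the jointly monic projections $V\br h$, and verify the $\hatT$-homomorphism property against the squares \eqref{eq:hat-etasurj} using joint monicity of $(\br h)_h$. The only cosmetic difference is that you check compatibility of the family by viewing each connecting morphism $q$ of $(\MT D\downarrow\F)$ as a connecting morphism from $h_1g$ to $h_2g$ in $(\MT D'\downarrow\F)$, whereas the paper checks it by precomposing with the dense map $\iota_{D'}$; both are valid.
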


\begin{proof}
The morphisms $(hg)^+$ form a compatible family over the diagram defining 
$\hatt D$, i.e.\ for all $\MT$-homomorphisms $k\colon A\to A'$ in $\F$ we 
have $(khg)^+ = k\o (hg)^+$. Indeed, this holds when precomposed with the 
dense map $\iota_{D'}$: 
\[
  V(khg)^+\cdot \iota_{D'} = khg = k \cdot V(hg)^+ \cdot \iota_{D'}. 
\]
Thus there exists a unique $\hat g\colon \hatt D' \to \hatt D$ with $(hg)^+ = h^+\o 
\hat g$ for all $h$, i.e. making the right-hand diagram of~\eqref{eq:h_at} commute. This
also implies that the left-hand diagram commutes, since it commutes when
postcomposed with every limit projection $Vh^+$ (cf. Remark \ref{rem:vrightadjoint}).
\[
  \xymatrix{
    TD'
    \ar[r]^-g
    \ar[d]_{\iota_{D'}}
    \ar `l[d] `[dd] [ddrr]_(.4){hg} 
    &
    TD
    \ar[d]_{\iota_D}
    \ar[rdd]^h
    \\
    V\hatt D'
    \ar[r]^-{V\hat g}
    \ar[rrd]_-{V(hg)^+}
    &
    V\hatt D
    \ar[rd]|(.4){Vh^+}
    \\
    &&
    A
    }
\]
It remains to show that $\hat g$ is a $\hatT$-homomorphism. To this end consider the following diagrams for all $h\colon \MT D\to (A,\alpha)$ with $(A,\alpha)\in \F$:
\[
\xymatrix{
\hatt\hatt D' \ar@/_10ex/[dd]_{\hatt (hg)^+} \ar[r]^{\hatmu_{D'}} \ar[d]_{\hatt\hat g} & \hatt D' \ar[d]^{\hat g} \ar@/^5ex/[dd]^{(hg)^+}\\
\hatt\hatt D \ar[r]_{\hatmu_D} \ar[d]_{\hatt h^+} & \hatt D \ar[d]^{h^+}\\
\hatt A \ar[r]_{\alpha^+} & A
}
\]
The outside and the lower square commute by \eqref{eq:hat-etasurj}, and the left-hand and right-hand parts by the definition of $\hat g$. Therefore the upper square commutes when postcomposed with the limit projections $h^+$. It follows that the upper square commutes, showing that $\hat g\colon \hatT D'\to \hatT D$ is a $\hatT$-homomorphism.
\end{proof}

\section{Pseudovarieties of 
\texorpdfstring{$\MT$}{T}-algebras}\label{sec:pseudovar}

In this section we study pseudovarieties of $\MT$-algebras, the 
algebraic half of any Eilenberg-type correspondence, and their connection to profinite $\hatT$-algebras. The main result is a categorical Reiterman theorem that describes pseudovarieties of $\F$-algebras in terms of \emph{pro-$\F$ theories}. This result will form the basis for our duality approach to Eilenberg-type theorems.

\subsection{Profinite $\hatT$-algebras}
In the last section, we introduced the category $\hatD$ of profinite $\D$-algebras. The concept of profiniteness can be extended to $\hatT$-algebras as follows.
 
\begin{definition}
A $\hatT$-algebra is called \emph{profinite} if it is a cofiltered limit of finite 
$\hatT$-algebras (= finite $\F$-algebras, see Remark \ref{rem:basic}.\ref{rem:finite_talgs}) in $\Alg{\hatT}$.
\end{definition}

\begin{rem}\label{rem:hattprofinite}
Every free $\hatT$-algebra $\hatT D$ with $D\in \D_f^S$ is profinite.  Indeed, 
since the forgetful functor from $\Alg{\hatT}$ to $\hatD^S$ reflects 
limits, see 
\ref{app:limits_of_talgs}, the 
right-hand square of \eqref{eq:hat-etasurj} shows that the $\hatT$-algebra $\hatT D$ 
is the cofiltered limit of the diagram
\[ (\MT D\downarrow \F)\to \Alg{\hatT},\quad (h\colon \MT D 
\to 
(A,\alpha)) \mapsto (A,\alpha^+), \]
with limit projections $h^+$.
\end{rem}

\begin{lemma}\label{lem:profinite_candiagram}
Every profinite $\hatT$-algebra $A$ is the cofiltered limit of its canonical diagram. That is, the cofiltered 
diagram
\[ (A\downarrow \FAlg{\hatT}) \to \Alg{\hatT}, \quad (h\colon A\to A')\mapsto A', 
\] 
has limit $A$ with limit projections $h\colon A\to A'$.
\end{lemma}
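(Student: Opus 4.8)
The plan is to reduce the claim to the defining universal property of the pro-completion $\hatD^S$, transported across the forgetful functor $U\colon \Alg{\hatT}\to\hatD^S$. By Remark~\ref{rem:hattprofinite} and the definition of profinite $\hatT$-algebra, we know $A$ is \emph{some} cofiltered limit $A = \limit_{i\in I} A_i$ of finite $\hatT$-algebras; the task is to show that the specific cofiltered diagram indexed by the comma category $(A\downarrow \FAlg{\hatT})$ also has $A$ as its limit, with the obvious projections. First I would record that the comma category $(A\downarrow \FAlg{\hatT})$ is indeed cofiltered: this uses that $\F\cong\FAlg{\hatT}$ (Remark~\ref{rem:basic}.\ref{rem:finite_talgs}) is closed under finite products and subalgebras, together with the homomorphism theorem for $\hatT$-algebras (Remark~\ref{rem:basic}.\ref{rem:factorization}) to equalize parallel pairs into a finite $\hatT$-algebra — the same argument already used for $(\MT D\downarrow\F)$ in Remark~\ref{rem:hattconst}.

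Next I would exhibit a canonical cone from $A$ over this diagram (the projections are just the morphisms $h$ themselves) and show it is limiting. The key observation is that the presentation $A = \limit_{i\in I} A_i$ gives, for each $i$, a morphism $p_i\colon A\to A_i$ into a finite $\hatT$-algebra, hence an object of $(A\downarrow\FAlg{\hatT})$; so the canonical diagram contains a cofinal copy of $I$. More precisely, I would argue that the inclusion of the diagram $\{p_i\colon A\to A_i\}_{i\in I}$ into the canonical diagram $(A\downarrow\FAlg{\hatT})$ is \emph{initial} (in the sense dual to final, cf.~\ref{app:final_func}): given any $h\colon A\to A'$ with $A'$ finite, since finite $\hatT$-algebras are finitely copresentable in $\Alg{\hatT}$ (Remark~\ref{rem:basic}.\ref{rem:finitelycopres}) and $A=\limit_I A_i$ is a cofiltered limit, $h$ factors as $h = g\o p_i$ through some $p_i$, and such factorizations are compatible up to the connecting morphisms of $I$. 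Initiality of this inclusion then yields that the two diagrams have the same limit, namely $A$.

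The one subtlety — and the step I expect to be the main obstacle — is verifying that the factorizations through the $p_i$ are sufficiently \emph{coherent} to make the inclusion a genuine initial functor, not merely cofinal on objects: one must check that for a morphism between two objects $h\colon A\to A'$ and $h'\colon A\to A''$ of the canonical diagram, the chosen factorizations can be routed through a common $p_i$ in a way that respects the triangle. This is handled by the essential uniqueness of factorizations of finitely copresentable objects through cofiltered limits (the uniqueness clause in \ref{app:fin_copres}), combined with codirectedness of $I$ (which we may assume by \ref{app:cofiltered_limits}) to merge any two indices. Once initiality is established, the conclusion is immediate: a cofiltered limit is unchanged under precomposition with an initial functor, so $A$ is the limit of its canonical diagram. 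I would also remark that the limit projections are exactly the morphisms $h$ by construction of the cone, completing the proof.
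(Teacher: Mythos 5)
Your proof is correct, and it rests on exactly the same two computations as the paper's: every $h\colon A\to A'$ with $A'$ finite factors through some $p_i$ because finite $\hatT$-algebras are finitely copresentable in $\Alg{\hatT}$ (Remark \ref{rem:basic}.\ref{rem:finitelycopres}), and parallel factorizations are merged by a connecting morphism of $I$ via the essential-uniqueness clause of \ref{app:fin_copres}. The difference is in the packaging. The paper first passes along the limit-reflecting forgetful functor $\hat U\colon\Alg{\hatT}\to\hatD^S$ and then verifies the two conditions of \ref{app:cof_lim_in_lfcp_cat} for the canonical cone directly (its condition (i) is checked against arbitrary finite objects $B\in\hatD_f^S$, not only finite $\hatT$-algebras). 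You instead stay inside $\Alg{\hatT}$, show that the functor $I\to(A\downarrow\FAlg{\hatT})$, $i\mapsto p_i$, satisfies the two conditions of \ref{app:final_func}, and invoke invariance of limits under restriction along such a functor; this buys a slightly cleaner statement at the cost of the small extra observation that the full canonical cone, restricting to the limit cone $(p_i)$, is itself limiting with the stated projections. One terminological nit: the functor you construct is \emph{final} in the paper's (limit-oriented) sense of \ref{app:final_func}, not dual to it, so calling it ``initial (in the sense dual to final)'' is a slip; and the coherence worry you flag in your last paragraph is precisely condition (ii) of \ref{app:final_func}, discharged, as you say, by essential uniqueness together with codirectedness of $I$.
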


\begin{proof}
Let $A$ be a profinite $\hatT$-algebra. Then, by definition, there exists a cofiltered limit cone $(p_i\colon A\to A_i)_{i\in I}$ in 
$\Alg{\hatT}$ with $A_i\in \FAlg{\hatT}$. Since the forgetful functor $\hat U\colon \Alg{\hatT} \to \hatD^S$ reflects 
limits, 
it suffices to show that the cofiltered cone $(h\colon \hat UA\to \hat UA')$ in 
$\hatD^S$ is a 
limit cone. 
To this end 
we verify the criterion of \ref{app:cof_lim_in_lfcp_cat}. 

For (i), let $f\colon \hat UA\to B$ be a morphism in $\hatD^S$ with
$B\in\hatD_f^S$. Since $\hat U$ preserves limits, we have the limit
cone $(p_i\colon \hat U A\to \hat U A_i)$ in $\hatD^S$. Moreover, since $B$
is finitely copresentable in $\hatD^S$, see Remark \ref{rem:hatdlfcp},
there exists an $i$ and morphism $f'\colon \hat UA_i\to B$ with
$f= f'\o p_i$.  This proves that $f$ factors through the cone
$(h\colon \hat U A\to \hat U A')$ via $h=p_i$ and $f'$, as desired.

For (ii), suppose that $h\colon A\to A'$ in $(A\downarrow \FAlg{\hatT})$ and 
$f',f''\colon \hat UA'\to B$ are given with 
$f'\o 
h = f''\o h$. Since $A'$ is finitely copresentable in 
$\Alg{\hatT}$ by Remark \ref{rem:basic}.\ref{rem:finitelycopres}, there 
exists 
an $i$ and a $\hatT$-homomorphism  $h'\colon A_i\to A'$ with $h=h'\o p_i$. Then 
$(f'\o h')\o  p_i = (f''\o h') \o p_i$. Thus, by (ii) applied to 
the cofiltered limit cone $(p_i\colon \hat U A\to \hat U A_i)$, we have a connecting morphism 
$a_{ji}\colon A_j\to A_i$ 
 with $f'\o h'\o  a_{ji} = f''\o h' \o a_{ji}$. Thus $h'\o 
a_{ji}\colon p_j\to h$ is a morphism in  $(A\downarrow \FAlg{\hatT})$ that 
merges $f'$ and $f''$, as desired.
\[
\xymatrix{
& & \hat UA \ar[dll]_{p_j} \ar[dl]|{p_i} \ar[d]^{h} \ar[dr]^f &\\
\hat U A_j \ar[r]_{a_{ji}} & \hat U A_i \ar[r]_{h'} & \hat U A' 
\ar@<0.5ex>[r]^{f'} 
\ar@<-0.5ex>[r]_{f''} & B 
}
\vspace*{-25pt}
\]
\end{proof}

\begin{notation}
Let $(A\epidownarrow 
\FAlg{\hatT})$ be the full subcategory of $(A\downarrow 
\FAlg{\hatT})$ on all surjective $\hatT$-homomorphisms $e\colon A\epito A'$ with 
finite codomain.
\end{notation}

\begin{corollary}\label{cor:profinite_epi}
A $\hatT$-algebra $A$ is profinite iff $A$ is the limit of the cofiltered 
diagram
\[ (A\epidownarrow \FAlg{\hatT}) \to \Alg{\hatT}, \quad (e\colon A\epito 
A')\mapsto A', 
\] 
with limit projections $e\colon A\epito A'$.
\end{corollary}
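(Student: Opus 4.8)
The plan is to deduce the corollary from Lemma~\ref{lem:profinite_candiagram}, which already shows that every profinite $\hatT$-algebra $A$ is the limit of its \emph{full} canonical diagram over $(A\downarrow\FAlg{\hatT})$. First I would observe that the index category $(A\epidownarrow\FAlg{\hatT})$ is cofiltered for every $\hatT$-algebra $A$: any two of its objects $e_1\colon A\epito B_1$ and $e_2\colon A\epito B_2$ have a common refinement given by the surjective part of the factorization of $\langle e_1,e_2\rangle\colon A\to B_1\times B_2$, whose codomain is finite (a subobject of the finite algebra $B_1\times B_2$) and lies in $\F$ since $\F$ is closed under finite products and subalgebras. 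This makes the diagram in the statement cofiltered, and it settles the ``if'' direction at once: if $A$ is its limit, then $A$ is a cofiltered limit of finite $\hatT$-algebras, hence profinite by definition.

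For the ``only if'' direction I would show that the inclusion functor $J\colon (A\epidownarrow\FAlg{\hatT})\hookrightarrow(A\downarrow\FAlg{\hatT})$ is final in the sense of~\ref{app:final_func}; then the limit cone of Lemma~\ref{lem:profinite_candiagram} restricts to a limit cone over $(A\epidownarrow\FAlg{\hatT})$, whose projections are by construction the surjections $e$. To check finality, fix an object $h\colon A\to A'$ of $(A\downarrow\FAlg{\hatT})$ and factor it as $h=m\o e$ with $e\colon A\epito B$ surjective and $m\colon B\monoto A'$ injective/order-reflecting, using the factorization system on $\Alg{\hatT}$ from Remark~\ref{rem:basic}.\ref{rem:factorization}. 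Since $A'$ is finite and $\F$ is closed under subalgebras, $B$ is a finite $\hatT$-algebra; hence $e$ is an object of $(A\epidownarrow\FAlg{\hatT})$ and $m$ is a morphism $J(e)\to h$ in $(A\downarrow\FAlg{\hatT})$, so the comma category $(J\downarrow h)$ is non-empty.

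It then remains to see that $(J\downarrow h)$ is connected, which I would do by showing $(e,m)$ is weakly terminal in it. An arbitrary object of $(J\downarrow h)$ is a surjection $e_1\colon A\epito B_1$ with finite codomain together with a $\hatT$-homomorphism $t_1\colon B_1\to A'$ satisfying $t_1\o e_1=h$. In the commutative square $m\o e=h=t_1\o e_1$ the left-hand leg $e_1$ is surjective and the right-hand leg $m$ is injective/order-reflecting, so diagonal fill-in yields a $\hatT$-homomorphism $s\colon B_1\to B$ with $s\o e_1=e$ and $m\o s=t_1$, i.e.\ a morphism $(e_1,t_1)\to(e,m)$ in $(J\downarrow h)$. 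Thus every object of $(J\downarrow h)$ maps to $(e,m)$, so $(J\downarrow h)$ is connected and $J$ is final; combined with Lemma~\ref{lem:profinite_candiagram} this finishes the ``only if'' direction.

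I expect the only genuinely non-routine point to be the finality of $J$ --- specifically the connectedness of the comma categories $(J\downarrow h)$ --- which rests entirely on the orthogonality of surjections and injective/order-reflecting morphisms in $\Alg{\hatT}$ (Remark~\ref{rem:basic}.\ref{rem:factorization}); everything else is bookkeeping with the factorization system and the closure properties of $\F$, plus an appeal to Lemma~\ref{lem:profinite_candiagram}.
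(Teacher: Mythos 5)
Your proof is correct and follows essentially the same route as the paper, whose entire argument is that the inclusion $(A\epidownarrow \FAlg{\hatT})\hookrightarrow(A\downarrow \FAlg{\hatT})$ is final by the factorization system of Remark~\ref{rem:basic}.\ref{rem:factorization}, so that Lemma~\ref{lem:profinite_candiagram} immediately yields the claim. You merely spell out the finality check (and the easy ``if'' direction) that the paper leaves implicit.
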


\begin{proof}
  By Remark \ref{rem:basic}.\ref{rem:factorization}, the inclusion functor $(A\epidownarrow \FAlg{\hatT})\monoto(A\downarrow \FAlg{\hatT})$ is final. This immediately implies the claim, see \ref{app:final_func}.
\end{proof}

\subsection{Local pseudovarieties and the local Reiterman theorem}
In this subsection, we present a local version of a categorical Reiterman theorem. ``Local'' means that we fix a finite $S$-sorted set $\Sigma\in\Set_f^S$  and consider finite $\MT$-algebras and profinite $\hatT$-algebras equipped with a $\Sigma$-indexed set of generators. On the level of languages, this will correspond to considering languages over the alphabet $\Sigma$. 

\begin{notation}
For any $\Sigma\in\Set_f^S$, let $\FSigma\in \D_f^S$ denote the free object in $\D^S$ generated by $\Sigma$, taken w.r.t. the forgetful functor $\under{\dash}: \D^S\to \Set_f^S$. Note that $\FSigma$ is finite because $\D$ is locally finite. Recall that $\Quo(\MT\FSigma,\F)$ denotes the class of all quotients $e\colon \MT\FSigma\epito A$ in $\Alg{\MT}$ with codomain $A\in \F$. We view it as a poset ordered by $e\leq e'$ if $e$ factorizes through $e'$, i.e. $e=q\o e'$ for some $q$. As usual, two quotients $e$ and $e'$ are identified if $e\leq e'$ and $e'\leq e$, i.e. if they are connected by an isomorphism.
\end{notation} 

\begin{definition}\label{def:localpseudovariety} Let $\Sigma\in\Set_f^S$. By a \emph{$\Sigma$-generated $\F$-algebra} is meant an element of $\Quo(\MT\FSigma,\F)$. 
 A \emph{local pseudovariety of 
$\Sigma$-generated 
$\F$-algebras} is a nonempty class $\P\seq \Quo(\MT\FSigma,\F)$ of $\Sigma$-generated $\F$-algebras that forms an ideal of the poset $\Quo(\MT\FSigma,\F)$, i.e.
\begin{enumerate}[(i)]
\item $\P$ is downwards closed: $e_0\leq e_1$ and $e_1\in \P$ implies $e_0\in \P$;
\item $\P$ is directed: for any two elements $e_0,e_1\in \P$, there exists an $e\in \P$ with $e_0,e_1\leq e$.
\end{enumerate}
\end{definition}

\begin{rem}\label{rem:subdirectproduct}
Recall from \ref{app:subdirectproducts} the notion of a \emph{subdirect product} of two quotients.
Clearly condition (ii) in the above definition can be replaced by
\begin{itemize}
\item[(ii')]$\P$ is closed under subdirect products:  for any two elements $e_0,e_1\in \P$, the subdirect product $e$ of $e_0$ and $e_1$ lies in $\P$.
\end{itemize}
\end{rem}

\begin{definition}
A \emph{$\Sigma$-generated profinite $\hatT$-algebra} is a 
quotient
$\phi\colon\hatT\FSigma\epito P$ in
$\Alg{\hatT}$ whose codomain $P$ is a profinite $\hatT$-algebra. $\Sigma$-generated profinite $\hatT$-algebras are
ordered by
$\phi\leq \phi'$ iff $\phi$ factors through $\phi'$.
\end{definition}
Our goal in this subsection is to show that $\Sigma$-generated profinite $\hatT$-algebras and local pseudovarieties of $\Sigma$-generated $\F$-algebras are equivalent concepts (see Theorem \ref{thm:localreiterman}). To this end, we first explain how
  to translate a local pseudovariety into a $\Sigma$-generated
  profinite $\hatT$-algebra and vice versa.
\begin{construction}\label{rem:localvarconst} 
\begin{enumerate}
\item\label{rem:localvarconst1} To each local pseudovariety $\P$ of $\Sigma$-generated $\F$-algebras 
we associate a $\Sigma$-generated profinite $\hatT$-algebra \[\phi^\P\colon 
\hatT\FSigma\epito P^\P\] as follows. 
Viewed as full 
subcategory of the comma category $(\MT\FSigma\downarrow \F)$, the 
category $\P$ is cofiltered because $\P$ is directed w.r.t. $\leq$.
Let $P^\P$ be the cofiltered limit of the diagram
\[ \P \to \Alg{\hatT},\quad (e\colon \MT\FSigma\epito A) \mapsto A,\]
and denote the limit projections by \[e^*_\P\colon P^\P\epito A.\]
The projections are 
surjective by Lemma \ref{lem:projection-is-surjective}.
Thus $P^\P$ is a profinite $\hatT$-algebra. Moreover, the 
$\hatT$-homomorphisms $e^+\colon \hatT\FSigma\epito 
A$ 
(where $e$ ranges over all elements of $\P$) form a compatible family over the 
above diagram.  Hence there exists a unique $\hatT$-homomorphism 
$\phi^\P\colon \hatT\FSigma\epito P^\P$ with $e^+ = e_\P^*\o 
\phi^\P$ for 
all $e\in \P$.
\[
\xymatrix{
\hatT\FSigma \ar@{->>}[dr]_{e^+} \ar@{->>}[r]^{\phi^\P} & P^\P \ar@{->>}[d]^{e^*_\P}\\
& A
}
\]
 Note that $\phi^\P$ is surjective by Lemma 
\ref{lem:surjections-between-cofiltered-diagrams}. Thus $\phi^\P\colon 
\hatT\FSigma\epito P^\P$ is a $\Sigma$-generated profinite $\hatT$-algebra.
\item\label{rem:localvarconst2} Conversely, given a $\Sigma$-generated profinite $\hatT$-algebra 
$\phi\colon 
\hatT\FSigma\epito P$, define $\P^{\phi}$ to be the class of all 
$\Sigma$-generated $\F$-algebras of the form 
\[ e = (\xymatrix@1{\MT\FSigma \ar[r]^{\iota_\FSigma} & V\hatT\FSigma 
\ar@{->>}[r]^{V\phi} & VP \ar@{->>}[r]^{Ve'} & A}), \]
where $e'\colon P\epito A$ is a surjective $\hatT$-homomorphism with $A\in 
\FAlg{\hatT}\cong \F$. Note that any such morphism $e$ is indeed a surjective $\MT$-homomorphism by Remark \ref{rem:basic}.\ref{rem:iota} and \ref{rem:basic}.\ref{rem:hatthomrestrict}.
One easily verifies that $\P^{\phi}$ forms a local pseudovariety of 
$\Sigma$-generated $\F$-algebras. 
\end{enumerate}
\end{construction}

\begin{lemma}\label{lem:localpseudovar_localprofiniteth_1}
For any local pseudovariety $\P$ of $\Sigma$-generated $\MT$-algebras we have 
$\P = 
\P^\phi$ where $\phi := \phi^\P$.
\end{lemma}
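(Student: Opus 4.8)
The claim $\P = \P^\phi$ for $\phi = \phi^\P$ amounts to showing that the two constructions of Construction \ref{rem:localvarconst} are mutually inverse in this direction, so I would prove the two inclusions separately.

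For $\P \seq \P^\phi$: take any $e\colon \MT\FSigma\epito A$ in $\P$. By the defining property of $\phi^\P$ we have $e^+ = e_\P^*\o\phi^\P$, and $e_\P^*\colon P^\P\epito A$ is a surjective $\hatT$-homomorphism with finite codomain. Precomposing with the dense map $\iota_\FSigma$ and using Remark \ref{rem:basic}.\ref{rem:iota} (that $e^+$ restricts along $\iota_\FSigma$ to $e$, i.e.\ $Ve^+\o\iota_\FSigma = e$), we get
\[ e = Ve^+\o \iota_\FSigma = V e_\P^*\o V\phi^\P\o \iota_\FSigma, \]
which is exactly the form of an element of $\P^{\phi^\P}$ with $e' := e_\P^*$. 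Hence $e\in\P^\phi$.

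For $\P^\phi\seq\P$: take $e\in\P^\phi$, so $e = Ve'\o V\phi^\P\o\iota_\FSigma$ for some surjective $\hatT$-homomorphism $e'\colon P^\P\epito A$ with $A$ finite. I want to exhibit $e$ as (the restriction of) some $d^+$ with $d\in\P$, so that $e$ sits below $d$ in $\Quo(\MT\FSigma,\F)$ and downward closure of $\P$ finishes the argument. The point is that $P^\P$ is the cofiltered limit of the diagram $\P\to\Alg{\hatT}$, $(d\colon\MT\FSigma\epito B)\mapsto B$, with surjective projections $d_\P^*$; since $A$ is finite it is finitely copresentable in $\Alg{\hatT}$ (Remark \ref{rem:basic}.\ref{rem:finitelycopres}), so $e'$ factors as $e' = g\o d_\P^*$ for some $d\in\P$ and some $\hatT$-homomorphism $g\colon B\to A$. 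Then $e = Vg\o Vd_\P^*\o V\phi^\P\o\iota_\FSigma = Vg\o Vd^+\o\iota_\FSigma = Vg\o d$ (using $d_\P^*\o\phi^\P = d^+$ and $Vd^+\o\iota_\FSigma = d$). Since $d$ is surjective and $A$ is finite, $Vg\o d$ is a surjective $\MT$-homomorphism, i.e.\ a quotient, with $e\leq d$ in the poset $\Quo(\MT\FSigma,\F)$. As $\P$ is a local pseudovariety, it is downward closed, so $e\in\P$.

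\textbf{Main obstacle.} The only delicate point is the factorization step in the second inclusion: one must carefully invoke finite copresentability of $A$ in $\Alg{\hatT}$ against the \emph{cofiltered} limit cone $(d_\P^*\colon P^\P\epito B)_{d\in\P}$ — this requires knowing $\P$ is genuinely cofiltered as a diagram (which holds because $\P$ is directed, as noted in Construction \ref{rem:localvarconst}), and that the factoring morphism $g$ can be taken with domain again in $\P$ (it lands on some object $B$ of the diagram, which is by construction a $\Sigma$-generated $\F$-algebra $d\in\P$). After that, everything else is bookkeeping with the triangles $d_\P^*\o\phi^\P = d^+$ and the density identity $Vd^+\o\iota_\FSigma = d$ from Remark \ref{rem:basic}.
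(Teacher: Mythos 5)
Your proposal is correct and follows essentially the same route as the paper's own proof: the first inclusion is the identity $e = Ve_\P^*\o V\phi^\P\o\iota_\FSigma$ obtained from $e^+ = e_\P^*\o\phi^\P$ and the density identity, and the second inclusion uses finite copresentability of $A$ in $\Alg{\hatT}$ to factor $e'$ through the limit cone defining $P^\P$, followed by downward closure of $\P$. The delicate point you flag (cofilteredness of $\P$ and the factorization landing on an object of the diagram) is handled the same way in the paper.
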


\begin{proof}
$\P \seq 
\P^{\phi}$: Let $(e: \MT\FSigma\epito A)\in \P$. Then for the 
corresponding limit projection $e^*_\P: P^\P \epito A$ we have
\[ e = (\xymatrix{T\FSigma \ar[r]^{\iota_\FSigma} & V\hatt\FSigma 
\ar@{->>}[r]^{V\phi} & VP^\P \ar@{->>}[r]^{Ve^*_\P} & A}) 
\]
since $e^+ = e^*_\P \cdot \phi^\P$ by the definition of $\phi=\phi^\P$
and since $e^+ \cdot \iota_\FSigma = e$ by Remark
\ref{rem:basic}.\ref{rem:iota}. Therefore $e\in \P^{\phi}$ by the
definition of $\P^{\phi}$.

$\P^{\phi}\seq \P$: Let $(e: \MT\FSigma\epito A)\in 
\P^{\phi}$. Thus there exists a surjective $\hatT$-homomorphism 
$e': P^\P \epito A$ with $A\in \FAlg{\hatT}$ and
\[ e = (\xymatrix{T\FSigma \ar[r]^{\iota_\FSigma} & V\hatt\FSigma 
\ar@{->>}[r]^{V\phi} & VP^\P \ar@{->>}[r]^{Ve'} & A}). 
\]
Since $A$ is finitely copresentable in $\Alg{\hatT}$, see Remark 
\ref{rem:basic}.\ref{rem:finitelycopres}, the $\hatT$-homomorphism $e'$ 
factors through the the limit cone defining $P^\P$; that is, there 
exists an $h: 
\MT\FSigma\epito A'$ in $\P$ and a $\hatT$-homomorphism $s: A'\epito A$ with 
$e' = s\o h^*_\P$. Then the commutative diagram below shows that $e\leq h$. It follows that $e\in \P$, because $h\in \P$ and $\P$ is downwards closed.
\[
\xymatrix{
T\FSigma \ar[r]^-{\iota_\FSigma} \ar@{->>}@/_5em/[rrr]_h \ar@/^5em/[rrrr]^e & 
V\hatt\FSigma \ar@{->>}@/_2em/[rr]_{h^+} \ar@{->>}[r]^-{V\phi^\P} &  
VP^\P \ar@/^2em/[rr]^{Ve'} \ar@{->>}[r]^-{Vh_\P^*} & A' \ar@{->>}[r]^-s & 
A
}
\vspace*{-15pt}
\]
\end{proof}

\begin{lemma}\label{lem:localpseudovar_localprofiniteth_2}
 For each $\Sigma$-generated profinite $\hatT$-algebra $\phi: 
 \hatT\FSigma\epito 
P$ we have $\phi 
= \phi^\P$ where $\P := \P^\phi$.
\end{lemma}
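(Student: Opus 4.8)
The plan is to recognise the local pseudovariety $\P := \P^\phi$ as the index category of exactly the cofiltered diagram that computes $P$ itself, and then read off $\phi = \phi^\P$ from the resulting isomorphism of limits. Concretely, I would first show that the assignment
\[
F\colon (P\epidownarrow\FAlg{\hatT}) \longrightarrow \P, \qquad (e'\colon P\epito A)\ \longmapsto\ (Ve'\o V\phi\o\iota_\FSigma\colon \MT\FSigma\epito A),
\]
is an isomorphism of posets: $F(e')$ is a surjective $\MT$-homomorphism onto a finite $\F$-algebra by Remark~\ref{rem:basic}.\ref{rem:iota} and \ref{rem:basic}.\ref{rem:hatthomrestrict}; $F$ hits every element of $\P$ by the definition of $\P^\phi$ (Construction~\ref{rem:localvarconst}.\ref{rem:localvarconst2}); and $F$ is order-preserving since a morphism $q\colon A_2\to A_1$ witnessing $e'_1\le e'_2$ also witnesses $F(e'_1)\le F(e'_2)$ after postcomposition.

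The step that takes work is that $F$ \emph{reflects} the order. Suppose $F(e'_1) = q\o F(e'_2)$ for an $\MT$-homomorphism $q\colon A_2\to A_1$; since $A_1,A_2$ are finite, $q$ is also a $\hatT$-homomorphism (Remark~\ref{rem:basic}.\ref{rem:finite_talgs}). Then the $\hatT$-homomorphisms $e'_1\o\phi$ and $q\o e'_2\o\phi$ from $\hatT\FSigma$ into the finite algebra $A_1$ have the same precomposition with $\iota_\FSigma$, namely $F(e'_1)$, so they coincide by the density statement Remark~\ref{rem:basic}.\ref{rem:iota}(i); as $\phi$ is surjective and hence epic, this yields $e'_1 = q\o e'_2$, i.e.\ $e'_1\le e'_2$. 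Thus $F$ is an isomorphism of posets onto $\P$, and since it manifestly commutes with taking codomains and with the connecting morphisms (the same $q$ on both sides), it is an isomorphism between the cofiltered diagram $(P\epidownarrow\FAlg{\hatT})\to\Alg{\hatT}$ of Corollary~\ref{cor:profinite_epi} and the diagram $\P\to\Alg{\hatT}$, $(e\colon\MT\FSigma\epito A)\mapsto A$, defining $P^\P$.

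It then remains to transport $\phi$. Since $P$ is profinite, Corollary~\ref{cor:profinite_epi} gives that the first diagram has limit $P$ with projections $e'\colon P\epito A$, while by Construction~\ref{rem:localvarconst}.\ref{rem:localvarconst1} the second has limit $P^\P$ with projections $e^*_\P\colon P^\P\epito A$; hence $F$ induces an isomorphism $\theta\colon P\xrightarrow{\ \cong\ }P^\P$ with $e^*_\P\o\theta = e'$ for each $e = F(e')\in\P$. To see $\theta\o\phi = \phi^\P$, I would use that the $e^*_\P$ are jointly monic (being a limit cone) and compare composites: $e^*_\P\o(\theta\o\phi) = e'\o\phi$, while $e^*_\P\o\phi^\P = e^+$ by the defining property of $\phi^\P$. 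Finally $e'\o\phi = e^+$, because both are $\hatT$-homomorphisms into the finite algebra $A$ with $V(e'\o\phi)\o\iota_\FSigma = Ve'\o V\phi\o\iota_\FSigma = e = V(e^+)\o\iota_\FSigma$ (the last equation by Remark~\ref{rem:basic}.\ref{rem:iota}), so Remark~\ref{rem:basic}.\ref{rem:iota}(i) applies once more. Thus $\theta\o\phi = \phi^\P$, which is precisely the claim $\phi = \phi^\P$ for $\Sigma$-generated profinite $\hatT$-algebras.

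I expect the order-reflection in the second paragraph to be the only genuine obstacle: it is where an equality of quotients of $\MT\FSigma$ has to be upgraded to an equality of quotients of $\hatT\FSigma$, and this is exactly the place where density of $\iota_\FSigma$, the Hausdorffness of finite $\hatT$-algebras, and the fact that $\phi$ is a quotient (hence epic) are all needed. The remaining points---compatibility of $F$ with the diagram structure and the joint monomorphy of the limit cone---are routine.
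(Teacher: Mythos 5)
Your proposal is correct and follows essentially the same route as the paper: both define the comparison functor $F\colon (P\epidownarrow\FAlg{\hatT})\to\P$, show it is an isomorphism of the two cofiltered index categories, deduce an isomorphism of the limits $P\cong P^\P$, and verify $j\o\phi=\phi^\P$ via density of $\iota_\FSigma$ and the joint monomorphy of the limit projections. Your order-reflection argument is a more carefully spelled-out version of the paper's brief claim that $F$ is injective on objects because $\phi$ is surjective and $\iota_\FSigma$ is dense, but it is the same idea.
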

%
More precisely, the lemma states that $\phi$ and 
$\phi^\P$ are identical as quotients of $\hatT\FSigma$, 
i.e. there exists 
an 
isomorphism $j: 
          P \xra{\cong} P^\P$ with 
          $\phi^\P = j\o 
          \phi$.
%
\begin{proof}
Let $(P\epidownarrow 
\FAlg{\hatT})$ be the full subcategory of $(P\downarrow 
\FAlg{\hatT})$ on all surjective $\hatT$-homomorphisms $e': 
\hatT\FSigma\epito 
A$ with $A\in \FAlg{\hatT}$. Consider the functor
\[ F: (P\epidownarrow 
\FAlg{\hatT}) \to \P \]
that maps $e': P\epito 
A$ to the  $\Sigma$-generated finite $\MT$-algebra
\[ F(e') = (\xymatrix{T\FSigma \ar[r]^{\iota_\FSigma} & V\hatt\FSigma 
\ar@{->>}[r]^{V\phi} & VP \ar@{->>}[r]^{Ve'} & A}). \]
and acts as identity on morphisms. Note that $F(e')\in\P$ by the 
definition of $\P=\P^\phi$, so $F$ is well-defined. We claim that $F$ is 
an isomorphism. Indeed, $F$ is injective on objects because 
$\phi$ is surjective and $\iota_\FSigma$ is dense. The surjectivity on 
objects is the definition of $\P$. The 
bijectivity on morphisms is clear.

Next observe that $F$ commutes with the projection functors $\pi$ and $\pi'$:
\[
\xymatrix{
(P\epidownarrow 
\FAlg{\hatT}) \ar[rr]^F \ar[dr]_{\pi} && \P
\ar[dl]^{\pi'}\\
& \Alg{\hatT} &
}
\]
The limit of $\pi$ is $P$ by 
Corollary \ref{cor:profinite_epi}, and 
the limit of $\pi'$ is $P^{\P}$ by the definition of 
$P^{\P}$. Since 
$F$ is an isomorphism (in particular, a final functor) and limits are unique 
up to isomorphism, there is an 
isomorphism $j: 
P\xra{\cong} P^\P$ with $e' =  F(e')^*_{\P}\o 
j$ for all 
$e': 
P\epito A$ in $(P\epidownarrow 
\FAlg{\hatT})$. Thus in the diagram below the outside and all inner parts 
except, perhaps, for the upper inner triangle commute:
\[
\xymatrix{
\hatt\FSigma  \ar[d]^{\iota_\FSigma} \ar@{->>}@/_4em/[ddd]_{F(e')} & \\
V\hatt \FSigma \ar@{->>}@/^14em/[dd]^{VF(e')^+}
\ar@{->>}[d]_{V\phi} 
\ar@{->>}[dr]|{V\phi^{\P}} &\\
V P \ar[r]|<<<<<{Vj}  
\ar@{->>}[d]_{Ve'} & VP^{\P} 
\ar@{->>}[dl]|{VF(e')^*_{\P}}\\
A
}
\]
It follows that this triangle also commutes, as it 
commutes when precomposed with the dense map 
$\iota_\FSigma$ and postcomposed with the limit projections 
$VF(e')^*_{\P}$.
\end{proof}

\begin{theorem}[Local Reiterman Theorem]\label{thm:localreiterman}
For each $\Sigma\in\Set_f^S$, the maps
\[
\P \mapsto \phi^\P\quad\text{and}\quad \phi \mapsto \P^\phi
\]
give an isomorphism between the lattice of local pseudovarieties of 
$\Sigma$-generated $\F$-algebras (ordered by inclusion) and the lattice of $\Sigma$-generated profinite $\hatT$-algebras.
\end{theorem}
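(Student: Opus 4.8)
The plan is to establish that the two maps $\P \mapsto \phi^\P$ and $\phi \mapsto \P^\phi$ from Construction \ref{rem:localvarconst} are mutually inverse and order-preserving, so that they constitute an isomorphism between the two posets in question; one then only needs to observe that both posets are in fact lattices, whence an order isomorphism is automatically a lattice isomorphism.

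The bulk of the work is already done. Lemma \ref{lem:localpseudovar_localprofiniteth_1} gives $\P^{\phi^\P} = \P$ for every local pseudovariety $\P$, and Lemma \ref{lem:localpseudovar_localprofiniteth_2} gives $\phi^{\P^\phi} = \phi$ (as quotients of $\hatT\FSigma$) for every $\Sigma$-generated profinite $\hatT$-algebra $\phi$. Together these say precisely that the two maps are inverse bijections. It therefore remains to check monotonicity in both directions. For $\P \mapsto \phi^\P$: if $\P_0 \seq \P_1$, then the cofiltered diagram defining $P^{\P_0}$ is a subdiagram of that defining $P^{\P_1}$, and since the diagram over $\P_1$ is cofiltered with $\P_0$ a cofinal-up-to-directedness subclass --- more directly, every element of $\P_0$ is an element of $\P_1$, so the family of projections exhibiting $P^{\P_1}$ as a limit restricts to a compatible family over $\P_0$, yielding a canonical $\hatT$-homomorphism $P^{\P_1} \epito P^{\P_0}$ through which $\phi^{\P_1}$ factors, i.e.\ $\phi^{\P_0} \leq \phi^{\P_1}$. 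For the reverse map, if $\phi_0 \leq \phi_1$, say $\phi_0 = q \o \phi_1$ for some $\hatT$-homomorphism $q$, then inspecting the defining formula for $\P^\phi$ shows immediately that every $\Sigma$-generated $\F$-algebra of the form $Ve' \o V\phi_0 \o \iota_\FSigma$ is also of the form $V(e'q) \o V\phi_1 \o \iota_\FSigma$ (using that $q$ is surjective, being a factor of the surjection $\phi_0$, so $e'q$ is a surjection onto a finite algebra), hence $\P^{\phi_0} \seq \P^{\phi_1}$.

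Finally, to speak of \emph{lattices}: local pseudovarieties of $\Sigma$-generated $\F$-algebras, being ideals of the poset $\Quo(\MT\FSigma,\F)$, are closed under arbitrary intersections (an intersection of ideals is an ideal, once one checks directedness, which follows from closure under subdirect products as in Remark \ref{rem:subdirectproduct}) and hence form a complete lattice under inclusion; dually, $\Sigma$-generated profinite $\hatT$-algebras ordered by $\leq$ form a complete lattice, the meet of a family $(\phi_k)$ being the quotient $\hatT\FSigma \epito P$ obtained by factoring the joint map $\langle \phi_k \rangle \colon \hatT\FSigma \to \prod_k P_k$ through its image (which is again profinite, being a subalgebra of a product of profinite $\hatT$-algebras, closed under cofiltered limits). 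Since the bijection established above is monotone in both directions, it transports these lattice structures into one another, proving the theorem.

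The one step requiring genuine care --- and thus the main obstacle --- is \emph{not} the formal bookkeeping of monotonicity but the verification that the relevant posets really are complete lattices, in particular that $\Sigma$-generated profinite $\hatT$-algebras are closed under the meets just described: this rests on the fact that a subalgebra of a cofiltered-limit-closed class of $\hatT$-algebras with finite ``building blocks'' is again profinite, which in turn uses that $\hatT$ preserves surjections (Remark \ref{rem:basic}.\ref{rem:factorization}) together with the structure theory of $\hatD$ developed in Section \ref{sec:setting}. Everything else is a direct consequence of Lemmas \ref{lem:localpseudovar_localprofiniteth_1} and \ref{lem:localpseudovar_localprofiniteth_2} and the explicit descriptions in Construction \ref{rem:localvarconst}.
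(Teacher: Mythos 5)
Your proof is correct and follows essentially the same route as the paper's: mutual inverseness is delegated to Lemmas \ref{lem:localpseudovar_localprofiniteth_1} and \ref{lem:localpseudovar_localprofiniteth_2}, and order-preservation is checked exactly as in the paper (the induced comparison map $P^{\P'}\epito P^{\P}$ in one direction, and ``every quotient of $P$ is a quotient of $P'$'' in the other). The closing digression on completeness of the two lattices is not part of the paper's argument and is unnecessary for the stated isomorphism (an order isomorphism transports whatever lattice structure exists); note also that the construction you describe via the image of $\langle\phi_k\rangle$ is the \emph{join}, not the meet, in the quotient ordering used here.
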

\begin{proof} By Lemma
  \ref{lem:localpseudovar_localprofiniteth_1} and
  \ref{lem:localpseudovar_localprofiniteth_2} the maps
  $\P \mapsto \phi^\P$ and $\phi \mapsto \P^{\phi}$ are mutually
  inverse, and it only
  remains to prove that they are order-preserving. Given two $\Sigma$-generated profinite $\hatT$-algebras $\phi\colon \hatt\FSigma\epito P$ and $\phi'\colon \hatt\FSigma\epito P'$ with $\phi\leq \phi'$, we have
  $\P^{\phi}\seq \P^{\phi'}$ because every quotient of $P$ is
  also a quotient of $P'$. Given two local pseudovarieties
  $\P\seq \P'$, the morphisms $e^*_{\P'}\colon P^{\P'}\epito A$,
  where $e$ ranges over all $e\colon \MT\FSigma\epito A$ in $\P$, form a
  compatible family over the diagram defining $P^\P$. Indeed,
  for each morphism $h\colon e \to e'$ in $\P \subseteq \P'$
  (cf.~Construction~\ref{rem:localvarconst}.\ref{rem:localvarconst1}), the equation
  $h \cdot e^*_{\P'} = (e')^*_{\P'}$ holds for the limit
  projections. Hence there exists a unique morphism
  $q\colon P^{\P'}\to P^\P$ with $e^*_{\P'} = e^*_\P \o q$
  for all $e\in \P$. It follows that $q\o \phi^{\P'} = \phi^\P$,
  because this holds when postcomposed with the limit projections
  $e^*_\P$.
\[
\xymatrix{
\hatt\FSigma \ar@{->>}[d]_{\phi^{\P'}} 
\ar@{->>}[dr]^{\phi^\P} 
& \\
{P^{\P'}} \ar@{->>}[r]^{q} \ar@{->>}[d]_{e^*_{\P'}} & P^\P 
\ar@{->>}[dl]^{e^*_\P} \\
A
}
\]  
Indeed, the outside of the above diagram commutes because both sides
yield $e$ when we apply $V$ and precompose with the dense map
$\iota_\FSigma$. Therefore we have $\phi^{\P}\leq \phi^{\P'}$ as desired.
\end{proof}
Theorem
  \ref{thm:localreiterman} can be interpreted  in terms of {profinite (in-)equations}:

\begin{definition}
\begin{enumerate}
\item If $\D$ a variety of algebras, a \emph{profinite equation} over $\Sigma\in\Set_f^S$ is a pair of elements $u,v\in\under{\hatt\FSigma}_s$ in some sort $s$, denoted by $u=v$. We say that a
  $\Sigma$-generated  $\F$-algebra $e\colon\MT\FSigma\epito A$
  \emph{satisfies $u=v$} if $e^+(u)= e^+(v)$.
\item If $\D$ a variety of ordered algebras, a \emph{profinite inequation} over $\Sigma\in\Set_f^S$ is a pair of elements $u,v\in\under{\hatt\FSigma}_s$ in some sort $s$, denoted by $u\leq v$. We say that a
  $\Sigma$-generated $\F$-algebra $e\colon\MT\FSigma\epito A$
  \emph{satisfies $u \leq v$} if $e^+(u)\leq e^+(v)$.
\end{enumerate}
\end{definition}

\begin{notation}
For a set $E$ of profinite (in-)equations over $\Sigma$, let $\P[E]$ denote the 
class of all $\Sigma$-generated  $\F$-algebras satisfying all (in-)equations in 
$E$. Conversely, for a class $\P$ of $\Sigma$-generated $\F$-algebras, 
let $E[\P]$ be the set of all profinite (in-)equations over $\Sigma$ satisfied by all 
algebras in $\P$.
\end{notation}

\begin{corollary}[Equational Local Reiterman Theorem]
A class $\P$ of $\Sigma$-generated  $\F$-algebras forms a local pseudovariety iff $\P=\P[E]$ for some set $E$ of profinite equations (resp. profinite inequations) over $\Sigma$.
\end{corollary}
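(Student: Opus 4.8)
The plan is to derive the corollary directly from the Local Reiterman Theorem (Theorem~\ref{thm:localreiterman}) together with the homomorphism theorem for $\hatT$-algebras. Two facts about the canonical extension $e^+\colon\hatT\FSigma\to A$ of a $\Sigma$-generated $\F$-algebra $e\colon\MT\FSigma\epito A$ will be used repeatedly: (a)~$Ve^+\o\iota_\FSigma = e$ (Remark~\ref{rem:basic}.\ref{rem:iota}), which in particular forces $e^+$ to be surjective whenever $e$ is; and (b)~since $\iota_\FSigma$ has dense image and finite $\F$-algebras carry the discrete (hence Hausdorff) topology, $(q\o e)^+ = q\o e^+$ for every $\F$-morphism $q$ between finite algebras, and likewise $(\pi_i\o e)^+=\pi_i\o e^+$ for the projections of a finite subdirect product.

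First I would prove the ``if'' direction: assuming $\P=\P[E]$, I verify the defining properties of a local pseudovariety. \emph{Downward closure}: if $e_0=q\o e_1$ with $e_1\in\P[E]$, then for every $(u=v)\in E$ fact~(b) gives $e_0^+(u)=q(e_1^+(u))=q(e_1^+(v))=e_0^+(v)$, so $e_0\in\P[E]$; in the ordered case one uses instead that $q$ is monotone. \emph{Closure under subdirect products}: for $e_0,e_1\in\P[E]$ with subdirect product $e\colon\MT\FSigma\epito A\monoto A_0\times A_1$, fact~(b) yields $\pi_i(e^+(u))=e_i^+(u)=e_i^+(v)=\pi_i(e^+(v))$ for $i=0,1$, and since $(\pi_0,\pi_1)$ restricted to $A$ is injective (resp.\ order-reflecting) we get $e^+(u)=e^+(v)$ (resp.\ $e^+(u)\leq e^+(v)$), so $e\in\P[E]$. \emph{Nonemptiness} is immediate, since the terminal $\MT$-algebra lies in $\F$ (which is closed under products) and satisfies every profinite (in-)equation. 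By Remark~\ref{rem:subdirectproduct} this makes $\P[E]$ a local pseudovariety.

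For the ``only if'' direction, let $\P$ be a local pseudovariety and put $\phi:=\phi^\P\colon\hatT\FSigma\epito P^\P$, so $\P=\P^\phi$ by Theorem~\ref{thm:localreiterman}. I would take $E$ to be the set of \emph{kernel (in-)equations} of $\phi$: all $u=v$ with $\phi(u)=\phi(v)$ (resp.\ all $u\leq v$ with $\phi(u)\leq\phi(v)$). The inclusion $\P\seq\P[E]$ is easy: every $e\in\P^\phi$ has the form $e=Ve'\o V\phi\o\iota_\FSigma$ for a surjective $\hatT$-homomorphism $e'\colon P^\P\epito A$ with $A$ finite (Construction~\ref{rem:localvarconst}.\ref{rem:localvarconst2}), whence $e^+=e'\o\phi$ by fact~(a) and density, and so $\phi(u)=\phi(v)$ implies $e^+(u)=e'(\phi(u))=e'(\phi(v))=e^+(v)$ (resp.\ $\leq$, using monotonicity of $e'$). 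For $\P[E]\seq\P$, take $e\in\P[E]$; by fact~(a) the extension $e^+\colon\hatT\FSigma\epito A$ is surjective with finite codomain, and ``$e$ satisfies all of $E$'' says precisely that $\phi(x)=\phi(y)$ implies $e^+(x)=e^+(y)$ for all elements $x,y$ of $\hatT\FSigma$ in each sort (resp.\ the analogous implication with $\leq$). The homomorphism theorem for $\hatT$-algebras then supplies a surjective $\hatT$-homomorphism $\ol e\colon P^\P\epito A$ with $e^+=\ol e\o\phi$, so $e=Ve^+\o\iota_\FSigma=V\ol e\o V\phi\o\iota_\FSigma$ is an element of $\P^\phi=\P$ by Construction~\ref{rem:localvarconst}.\ref{rem:localvarconst2}.

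I expect the ``if'' direction to be entirely routine. The substantive point is the equivalence, in the ``only if'' direction, between a $\Sigma$-generated $\F$-algebra satisfying all kernel (in-)equations of $\phi$ and its extension factoring through $\phi$ -- this is exactly the homomorphism theorem, and it is the one place where the ordered case genuinely needs the order-reflecting (rather than merely injective) version of factorization. The only care needed beyond this is the systematic translation between a $\MT$-homomorphism $e$ and its $\hatT$-extension $e^+$ (surjectivity, and compatibility with composition and with subdirect products), all of which follows from density of $\iota_\FSigma$.
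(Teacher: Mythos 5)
Your proof is correct and follows essentially the same route as the paper: the ``only if'' direction rests on Theorem~\ref{thm:localreiterman} together with the homomorphism theorem for $\hatT$-algebras applied to $e^+$ and $\phi^\P$, exactly as in the paper's argument. The only cosmetic difference is that you take $E$ to be the kernel (in-)equations of $\phi^\P$ while the paper takes $E=E[\P]$ and first observes these two sets coincide (via the jointly order-reflecting limit projections $e^*_\P$); your ``if'' direction just spells out the verification the paper declares routine.
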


\begin{proof}
We consider the case where $\D$ is variety of ordered algebras; for the unordered case, replace all inequations by equations throughout the proof.

The ``if'' direction is a straightforward verification. For the ``only
if'' direction, suppose that $\P$ is a local pseudovariety of
$\Sigma$-generated $\F$-algebras, and let
$\phi^\P\colon \hatT\FSigma\epito P^\P$ be the corresponding
$\Sigma$-generated profinite $\hatT$-algebra. Let us first observe that a profinite inequation
$u\leq v$ lies in $E[\P]$ iff $\phi^\P(u)\leq \phi^\P(v)$. Indeed, for every quotient $e\colon T\FSigma\epito A$ in $\P$ we have the commutative triangle
\[
\xymatrix{
\hatt\FSigma \ar@{->>}[d]_{e^+}\ar@{->>}[r]^{\phi^\P} & P^\P \ar@{->>}[dl]^{e_\P^*}\\
A
}
\]
by the definition of $\phi^\P$. Therefore, $\phi^\P(u)\leq \phi^\P(v)$ implies $e^+(u)\leq e^+(v)$ for every $e\in\P$, i.e. the profinite inequation $u\leq v$ lies in $E[\P]$. Conversely, if $u\leq v$ is a profinite inequation in $E[\P]$, then
 $e_\P^*\o \phi^\P (u) = e^+(u) \leq e^+(v) = e_\P^*\o \phi^\P(v)$ for every $e\in \P$. This implies $\phi^\P(u)\leq \phi^\P (v)$ because the limit projections $e^*_\P$ ($e\in \P$) are jointly order-reflecting.

We claim that
$\P=\P[E[\P]]$. The inclusion $\seq$ is trivial. To prove $\supseteq$,
let $e\colon \MT\FSigma\epito A$ be an element of $\P[E[\P]]$, i.e. $e$
satisfies every profinite inequation that every algebra in $\P$ satisfies. By the
homomorphism theorem, see Remark \ref{rem:basic}.\ref{rem:factorization}, there
exists a (surjective) $\hatT$-homomorphism $h\colon P^\P \epito A$
with $e^+=h\o \phi^\P$. Indeed, by the argument above every pair $u,v$ with
$\phi^\P(u)\leq \phi^\P(v)$ forms a profinite inequation $u\leq v$
satisfied by $\P$. Thus $u\leq v$ is satisfied by $e$,
i.e. $e^+(u)\leq e^+(v)$.

Therefore $e=Vh\o V\phi^\P \o \iota_\FSigma$ lies in $\P^{(\phi^\P)}$ by the definition of $\P^{(\dash)}$. Since $\P^{(\phi^\P)}=\P$ by Theorem \ref{thm:localreiterman}, we thus have $e\in \P$, as required.
\end{proof}

\subsection{Pseudovarieties and Reiterman's theorem}

In this subsection we consider finite $\MT$-algebras that are not quotients of $\MT\FSigma$ for a \emph{fixed} alphabet $\Sigma$, as in the local case, but where the generators $\Sigma$ can be chosen from a subclass $\A\seq\Set_f^S$. On the language side, this corresponds to considering languages over alphabets in $\A$. For example, the classical Eilenberg theorem deals with regular languages $L\seq \Sigma^*$ over all possible finite alphabets $\Sigma$, and thus we will choose $\A=\Set_f$ in this case. However, as Example~\ref{ex:monads_omegasem} shows, in a many-sorted setting one may need to make a \emph{choice} of alphabets in $\Set_f^S$ for capture the desired languages; then $\A$ is chosen to be a proper subclass of $\Set_f^S$.

\begin{notation}
For the rest of our paper, let $\A\seq \Set_f^S$ be a fixed class of alphabets.
\end{notation}

\begin{definition}\label{def:agenerated}
A $\MT$-algebra $(A,\alpha)$ is called \emph{$\A$-generated} if there exists
a 
surjective $\MT$-homomorphism $e\colon \MT\FSigma\epito (A,\alpha)$ with
$\Sigma\in\A$.
\end{definition}
We emphasize that, in contrast to Definition \ref{def:localpseudovariety}, an $\A$-generated $\MT$-algebra $(A,\alpha)$ is not equipped with a \emph{fixed} surjective $\MT$-homomorphism $e\colon \MT\FSigma\epito (A,\alpha)$. Only the existence of $e$ is required.

\begin{example}\label{ex:agenerated_talgs}
Every 
finite $\MT$-algebra 
$(A,\alpha)$ 
is $\Set_f^S$-generated. Indeed, since $A$ is a finite (and thus a finitely generated) object of $\D^S$, there exists a surjective morphism
$e\colon\FSigma\epito A$ with $\Sigma\in \Set_f^S$, and therefore $(A,\alpha)$ is a 
quotient of $\MT\FSigma$ via $(\MT\FSigma\xra{Te} \MT A\xra{\alpha} 
(A,\alpha))$. Note that $Te$ is surjective by our assumption that $T$ preserves surjections, and $\alpha$ is surjective by the unit law for the $\MT$-algebra $(A,\alpha)$.
\end{example}

\begin{example}[$\omega$-semigroups]\label{ex:agenerated_omegasem} Let $\MT=\MT_\infty$ on $\Set^2$. As suggested by Example~\ref{ex:monads_omegasem}, we choose 
$\A=\{\,(\Sigma,\emptyset): \Sigma\in\Set_f\,\}$. 
A finite 
$\MT_\infty$-algebra (= finite $\omega$-semigroup) $A$ is 
$\A$-generated 
iff it is \emph{complete}, i.e.\
every element $a\in A_\omega$ can be expressed as an infinite product 
$a=\pi(a_0,a_1,\ldots)$ for some $a_i\in A_+$. Indeed, for
the ``only if'' direction, suppose that $A$ is $\A$-generated, i.e. there exists a surjective $\omega$-semigroup morphism $e\colon (\Sigma^+, 
\Sigma^\omega)\epito 
(A_+,A_\omega)$ for some $\Sigma\in\Set_f$. For any $a\in A_\omega$, choose 
$s_0s_1\ldots\in\Sigma^\omega$ with $a=e(s_0s_1\ldots)$. Then 
\[ a = e(s_0s_1\ldots) = e(\pi(s_0,s_1,\ldots)) = 
\pi(e(s_0),e(s_1),\ldots), \]
which shows that $A$ is complete. For the ``if'' 
direction, suppose that $A$ is complete. Let $\Sigma:= A_+\in\Set_f$, and 
extend the map 
$(\id,\emptyset)\colon (\Sigma,\emptyset)\to (A_+,A_\omega)$ 
to an 
$\omega$-semigroup morphism $e\colon (\Sigma^+, \Sigma^\omega)\to 
(A_+,A_\omega)$, using that $(\Sigma^+,\Sigma^\omega)$ is the free 
$\omega$-semigroup on $(\Sigma,\emptyset)$. Clearly the component $e\colon 
\Sigma^+\to A_+$ is surjective 
because $e(a)=a$ for all $a\in A_+$. To show that also the component $e\colon 
\Sigma^\omega 
\to A_\omega$ is surjective, let $a\in A_\omega$ and choose elements $a_i\in 
A_+$ with $a=\pi(a_0,a_1,\ldots)$, using the completeness of $A$.
It follows that
\[ a = \pi(a_0,a_1,\ldots) = \pi(e(a_0), e(a_1),\ldots) = 
e(\pi(a_0,a_1,\ldots)). \]
Thus $e$ is surjective, which proves that $A$ is $\A$-generated. 
\end{example}

\begin{definition}\label{def:pseudovar}
 A \emph{pseudovariety of $\F$-algebras} is a full subcategory $\V\seq \F$ such that all algebras in $\V$ are $\A$-generated, and $\V$ is closed under quotients and $\A$-generated subalgebras of finite 
products. That is,
\begin{enumerate}[(i)]
\item for every $A\in\V$ and every quotient $e\colon A\epito B$ in $\Alg{\MT}$, one has $B\in \V$;
\item for every finite family $A_i$ $(i\in I)$ of algebras in $\V$ and every $\A$-generated subalgebra $A\monoto\prod_{i\in I} A_i$, one has $A\in \V$.
\end{enumerate}
\end{definition}

\begin{rem}
In most applications, $\A$ is chosen such that finite products of $\A$-generated finite
$\MT$-algebras are $\A$-generated. In this case the definition of a 
pseudovariety 
simplifies: it is a class $\V\seq \F$ of $\A$-generated algebras closed under 
quotients, 
$\A$-generated subalgebras, and finite products. This holds, e.g., in the following two examples:
\end{rem}

\begin{example}\label{ex:pseudovar_monads}
Let $\A=\Set_f^S$ and $\F=\FAlg{\MT}$. Then every finite $\MT$-algebra is $\A$-generated by Example \ref{ex:agenerated_talgs}. Thus, instantiating Definition \ref{def:pseudovar}, a pseudovariety is
a 
class of finite $\MT$-algebras closed under quotients, subalgebras, and finite 
products. This concept was studied in \cite{camu16}. For 
the 
monad $\MT_*$ on $\Set$ we recover the original concept of Eilenberg: a 
class 
of finite monoids closed under quotients, submonoids, and finite products.
\end{example}

\begin{example}[$\omega$-semigroups]\label{ex:pseudovar_omegssem}
Let $\MT=\MT_\infty$ on $\Set^2$, and choose $\A=\{\,(\Sigma,\emptyset): \Sigma\in\Set_f\,\}$ and $\F=$ all finite $\omega$-semigroups. Then $\A$-generated $\omega$-semigroups are precisely complete $\omega$-semigroups by Example \ref{ex:agenerated_omegasem}, and clearly products of
complete $\omega$-semigroups are complete. Thus, instantiating Definition \ref{def:pseudovar}, a 
\emph{pseudovariety of $\omega$-semigroups} is a class of 
finite 
complete $\omega$-semigroups closed under quotients, complete 
$\omega$-subsemigroups, and
finite products. This concept 
is due to Wilke \cite{wilke91}; see also \cite{pinperrin04}.
\end{example}

\begin{definition}\label{def:profinitetheory}
A \emph{pro-$\F$ theory} is a family \[\rho = (\,\rho_\Sigma\colon \hatT\FSigma 
\epito P_\Sigma\,)_{\Sigma\in \A}\] such that (i) $\rho_\Sigma$ is a $\Sigma$-generated profinite
$\hatT$-algebra for every $\Sigma\in\A$ and (ii) for 
every 
$\MT$-homomorphism $g\colon \MT \FDelta\to \MT\FSigma$ with 
$\Sigma,\Delta\in\A$ there exists a  
$\hatT$-homomorphism $g_P\colon P_\Delta\to P_\Sigma$ with $\rho_\Sigma \o \hat g = 
g_P\o \rho_\Delta$.
\[ 
\xymatrix{
\hatT\FDelta \ar[r]^{\hat g} \ar@{->>}[d]_{\rho_\Delta} & \hatT\FSigma \ar@{->>}[d]^{\rho_\Sigma}\\
P_\Delta \ar@{-->}[r]_{g_P} & P_\Sigma
}
\]
Here $\hat g$ is the continuous extension of $g$, see Lemma \ref{lem:gat}. Pro-$\F$ theories are ordered by $\rho\leq \rho'$ iff 
$\rho_\Sigma\leq \rho_\Sigma'$ for each $\Sigma\in \A$.
\end{definition}
Using the local Reiterman theorem, pro-$\F$ theories can be equivalently described in terms of local pseudovarieties:

\begin{definition}\label{def:ftheory}
An \emph{$\F$-theory} is a family \[\T=(\,\T_\Sigma\,)_{\Sigma\in \A}\] such that (i) $\T_\Sigma$ is a local pseudovariety of $\Sigma$-generated $\F$-algebras for every $\Sigma\in \A$, and (ii) for every $\MT$-homomorphism $g\colon \MT\FDelta\to \MT\FSigma$ with $\Sigma,\Delta\in \A$ and every $e\colon \MT\FSigma\epito A$ in $\T_\Sigma$, there exists an $\ol e\colon \MT\FDelta\epito B$ in $\T_\Delta$ and a $\MT$-homomorphism $\ol g\colon B\to A$ with $\ol g \o \ol e = e\o g$.
\[ 
\xymatrix{
\MT\FDelta \ar[r]^{g} \ar@{->>}[d]_{\ol e} & \MT\FSigma \ar@{->>}[d]^{e}\\
B \ar@{-->}[r]_{\ol g} & A
}
\]
$\F$-theories are ordered by $\T\leq \T'$ iff $\T_\Sigma\seq \T_\Sigma'$ for all $\Sigma \in \A$. 
\end{definition}

\begin{rem}
Since the local pseudovariety $\T_\Delta$ is downwards closed, one can always choose $\ol e$ and $\ol g$ in the above definition as the factorization of $e\o g$ into a surjective and an injective/order-reflecting $\MT$-homomorphism.
\end{rem}

\begin{construction}
\begin{enumerate}
\item For any pro-$\F$ theory $\rho$, form the family \[\T^\rho = (\,\T_\Sigma^\rho\,)_{\Sigma\in\A}\] where $\T_\Sigma^\rho$ is the local pseudovariety of $\Sigma$-generated $\F$-algebras associated to $\rho_\Sigma$ (see Construction \ref{rem:localvarconst}.2).
\item For any $\F$-theory $\T$, form the family \[\rho^\T = (\,\rho_\Sigma^\T\colon \hatT\FSigma\epito P_\Sigma^\T\,)\] where $\rho_\Sigma^\T$ is the $\Sigma$-generated profinite $\hatT$-algebra associated to $\T_\Sigma$ (see Construction \ref{rem:localvarconst}.1). The limit projection associated to $e\colon \MT\FSigma\epito A$ in $\T_\Sigma$ is denoted by
\[ e_\T^*\colon P_\Sigma^\T\to A. \]
\end{enumerate}
\end{construction}

\begin{lemma}\label{lem:profiniteconcrete}
\begin{enumerate}
\item For any pro-$\F$ theory $\rho$, the family $\T^\rho$ forms an $\F$-theory.
\item For any $\F$-theory $\T$, the family $\rho^\T$ forms a pro-$\F$ theory.
\item The maps $\rho\mapsto \T^\rho$ and $\T\mapsto \rho^\T$ give an isomorphism between the poset of pro-$\F$ theories and the poset of $\F$-theories.
\end{enumerate}
\end{lemma}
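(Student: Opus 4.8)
The plan is to reduce everything to the Local Reiterman Theorem (Theorem~\ref{thm:localreiterman}) applied index-by-index. Observe that $\rho\mapsto\T^\rho$ and $\T\mapsto\rho^\T$ act on each $\Sigma\in\A$ exactly as the mutually inverse, order-preserving maps $\phi\mapsto\P^\phi$ and $\P\mapsto\phi^\P$ of Theorem~\ref{thm:localreiterman}; moreover condition~(i) of Definitions~\ref{def:profinitetheory} and~\ref{def:ftheory} is automatic from Construction~\ref{rem:localvarconst}. Hence parts~(1) and~(2) amount to verifying that the \emph{compatibility} clause~(ii), concerning an arbitrary $\MT$-homomorphism $g\colon\MT\FDelta\to\MT\FSigma$ with $\Sigma,\Delta\in\A$, is preserved in each direction, and part~(3) is then immediate: the maps are mutually inverse componentwise by Lemmas~\ref{lem:localpseudovar_localprofiniteth_1} and~\ref{lem:localpseudovar_localprofiniteth_2}, and order-preserving componentwise by Theorem~\ref{thm:localreiterman}.

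For part~(1), fix a pro-$\F$ theory $\rho$, a homomorphism $g\colon\MT\FDelta\to\MT\FSigma$, and the witnessing $\hatT$-homomorphism $g_P\colon P_\Delta\to P_\Sigma$ with $\rho_\Sigma\o\hat g=g_P\o\rho_\Delta$. Given $e\in\T_\Sigma^\rho=\P^{\rho_\Sigma}$, write $e=Ve'\o V\rho_\Sigma\o\iota_\FSigma$ for a surjective $\hatT$-homomorphism $e'\colon P_\Sigma\epito A$ (Construction~\ref{rem:localvarconst}.2). Using the left-hand square of~\eqref{eq:h_at} together with $\rho_\Sigma\o\hat g=g_P\o\rho_\Delta$ one computes $e\o g=V(e'\o g_P)\o V\rho_\Delta\o\iota_\FDelta$. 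Factor the $\hatT$-homomorphism $e'\o g_P\colon P_\Delta\to A$ as $m\o q$ with $q\colon P_\Delta\epito B$ surjective and $m\colon B\monoto A$ injective/order-reflecting (Remark~\ref{rem:basic}.\ref{rem:factorization}); then $B$ is finite, being a subalgebra of $A$, so $\ol e\defeq Vq\o V\rho_\Delta\o\iota_\FDelta$ is an element of $\P^{\rho_\Delta}=\T_\Delta^\rho$, and setting $\ol g\defeq m$ (viewed as a morphism of finite $\F$-algebras via Remark~\ref{rem:basic}.\ref{rem:finite_talgs}) we obtain $\ol g\o\ol e=V(m\o q)\o V\rho_\Delta\o\iota_\FDelta=V(e'\o g_P)\o V\rho_\Delta\o\iota_\FDelta=e\o g$. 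Thus $\ol e$ and $\ol g$ witness clause~(ii) for $e$, so $\T^\rho$ is an $\F$-theory.

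For part~(2), fix an $\F$-theory $\T$ and $g\colon\MT\FDelta\to\MT\FSigma$. I will produce the required $g_P\colon P_\Delta^\T\to P_\Sigma^\T$ from the homomorphism theorem for $\hatT$-algebras (Remark~\ref{rem:basic}.\ref{rem:factorization}) applied to the surjection $\rho_\Delta^\T\colon\hatT\FDelta\epito P_\Delta^\T$ and the $\hatT$-homomorphism $\rho_\Sigma^\T\o\hat g\colon\hatT\FDelta\to P_\Sigma^\T$; by that theorem it suffices to check that $\rho_\Delta^\T(a)=\rho_\Delta^\T(a')$ implies $\rho_\Sigma^\T\hat g(a)=\rho_\Sigma^\T\hat g(a')$ for all elements $a,a'$ of $\hatT\FDelta$ in each sort (and the analogous implication with $\leq$ in the ordered case, where the limit projections are moreover jointly order-reflecting). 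Since the limit projections $e_\T^*$ of $P_\Delta^\T$ are jointly injective and satisfy $e_\T^*\o\rho_\Delta^\T=e^+$, the hypothesis yields $e^+(a)=e^+(a')$ for every $e\colon\MT\FDelta\epito A$ in $\T_\Delta$. Now let $e\colon\MT\FSigma\epito A$ be any element of $\T_\Sigma$; clause~(ii) of Definition~\ref{def:ftheory} supplies $\ol e\colon\MT\FDelta\epito B$ in $\T_\Delta$ and $\ol g\colon B\to A$ with $\ol g\o\ol e=e\o g$, whence $(e\o g)^+=(\ol g\o\ol e)^+=\ol g\o\ol e^+$ by compatibility of the limit cone defining $\hatt\FDelta$, and since $\ol e^+(a)=\ol e^+(a')$ by the previous step we get $(e\o g)^+(a)=(e\o g)^+(a')$. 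By Lemma~\ref{lem:gat}, $(e\o g)^+=e^+\o\hat g=e_\T^*\o\rho_\Sigma^\T\o\hat g$, so $e_\T^*(\rho_\Sigma^\T\hat g(a))=e_\T^*(\rho_\Sigma^\T\hat g(a'))$ for all $e\in\T_\Sigma$, and joint injectivity of the projections of $P_\Sigma^\T$ gives $\rho_\Sigma^\T\hat g(a)=\rho_\Sigma^\T\hat g(a')$. Hence $g_P$ exists, and $\rho^\T$ is a pro-$\F$ theory.

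The main obstacle is part~(2): the $\F$-theory condition only asserts the \emph{existence} of witnesses $(\ol e,\ol g)$, not a canonical choice, so a direct attempt to build $g_P$ as the mediating map into the limit $P_\Sigma^\T$ runs into a tedious well-definedness check. Routing the construction through the homomorphism theorem for $\hatT$-algebras circumvents this, because the argument only ever invokes the equality $\ol e^+(a)=\ol e^+(a')$, which holds for \emph{every} $\ol e\in\T_\Delta$ irrespective of the chosen witnesses. Everything else reduces to diagram chases using~\eqref{eq:h_at}, the identification $\F\cong\FAlg{\hatT}$ of Remark~\ref{rem:basic}.\ref{rem:finite_talgs}, and the componentwise application of Theorem~\ref{thm:localreiterman} for part~(3).
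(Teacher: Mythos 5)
Your proof is correct, and parts (1) and (3) follow the paper's own argument essentially verbatim: the same factorization of $e'\o g_P$ through its image to produce $(\ol e,\ol g)$, and the same componentwise appeal to Theorem~\ref{thm:localreiterman}. The one genuine divergence is your construction of $g_P$ in part (2). The paper chooses, for each $e\in\T_\Sigma$, a witness $(\ol e, g_e)$ from the $\F$-theory condition and checks that the morphisms $g_e\o\ol e^*_\T$ form a compatible cone over the diagram defining $P_\Sigma^\T$, obtaining $g_P$ as the mediating morphism into that limit; you instead obtain $g_P$ from the homomorphism theorem for $\hatT$-algebras applied to the surjection $\rho_\Delta^\T$, verifying the kernel inclusion elementwise. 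Both routes are sound and of comparable length. Your version has the advantage you identify: the argument never depends on a particular choice of witnesses, since the key equality $\ol e^+(a)=\ol e^+(a')$ holds for every $\ol e\in\T_\Delta$, so the well-definedness/compatibility check that the paper dismisses as "readily verified" simply does not arise. The price is that you must invoke joint injectivity (and, in the ordered case, joint order-reflection) of the limit projections of $P_\Sigma^\T$ and the homomorphism theorem for profinite algebras; both are available (Remark~\ref{rem:basic}.\ref{rem:factorization} and the way limits are computed in $\hatD^S$), and the paper uses the same facts elsewhere, e.g.\ in the Equational Local Reiterman Theorem. One cosmetic remark: the joint injectivity of the projections of $P_\Delta^\T$ that you cite is not actually needed for the direction of the implication you use there — applying a single projection to the hypothesis $\rho_\Delta^\T(a)=\rho_\Delta^\T(a')$ already yields $\ol e^+(a)=\ol e^+(a')$.
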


\begin{proof}
\begin{enumerate}
\item Suppose that $\rho$ is a pro-$\F$ theory, and let $g\colon \MT\Delta\to\MT\FSigma$ be a $\MT$-homomorphism with $\Sigma,\Delta\in \A$ and $e\colon \MT\FSigma\epito A$ in $\T_\Sigma^\rho$. By the definition of $\T_\Sigma^\rho$, the quotient $e$ has the form
\[ e = (T\FSigma \xra{\iota_\FSigma} V\hatt\FSigma \xra{V\rho_\Sigma} VP_\Sigma \xra{Vq} A) \]
for some $\hatT$-homomorphism $q$. Factorize the $\hatT$-homomorphism $q\o g_P$ into a surjective homomorphism $\ol q$ followed by an injective/order-reflecting homomorphism $\ol g$. Then 
\[ \ol e := (T\FDelta \xra{\iota_\FDelta} V\hatt\FDelta \xra{V\rho_\Delta} VP_\Delta \xra{V\ol q} A)\]
lies in $\T_\Delta^\rho$ and $e\o g = \ol g \o \ol e$, as shown by the commutative diagram below:
\[
\xymatrix{
T\FDelta \ar@/_7ex/[ddd]_{\ol e} \ar[r]^g \ar[d]_{\iota_\FDelta} & T\FSigma \ar[d]^{\iota_\FSigma} \ar@/^7ex/[ddd]^e \\
V\hatt\FDelta \ar[r]_{V\hat g}\ar@{->>}[d]_{V\rho_\Delta} & V\hatt\FSigma \ar@{->>}[d]^{V\rho_\Sigma}\\
VP_\Delta \ar@{->>}[d]_{V\ol q} \ar[r]_{Vg_P} & VP_\Sigma \ar@{->>}[d]^{Vq} \\
B \ar[r]_{\ol g} & A
}
\]
This proves that $\T^\rho$ is an $\F$-theory.
\item Suppose that $\T$ is an $\F$-theory, and let $g\colon \MT\FDelta\to\MT\FSigma$ be a $\MT$-homomorphism with $\Sigma,\Delta\in \A$. We need to define $g_P\colon P^\T_\Delta\to P^\T_\Sigma$ with $\rho^\T_\Sigma\o \hat g = g_P\o \rho^\T_\Delta$. For any $e\colon \MT\FSigma\epito A$ in $\T_\Sigma$, choose $\ol e\colon \MT \epito A_e$ in $\T_\Delta$ and $g_e\colon A_e\to A$ with $e\o g = g_e\o \ol e$, using that $\T$ is an $\F$-theory. One readily verifies that the morphisms $g_e\o \ol e^*_\T$ (where $e$ ranges over $\T_\Sigma$) form a cone over the diagram defining the object $P_\Sigma^\T$. Therefore there exists a unique $g_P\colon P_\Delta^\T \to P_\Sigma^\T$ with $e^*_\T \o g_P = g_e\o \ol e^*_\T$ for all $e\in \T_\Sigma$. It follows that $\rho_\Sigma^\T \o \hat g = g_P \o \rho_\Delta$, since this holds when precomposed with the dense map $\iota_\Delta$ and postcomposed with the limit projections $e_\T^*$.
\[
\xymatrix{
T\FDelta \ar@/_10ex/[ddd]_{\ol e} \ar[r]^g \ar[d]_{\iota_\FDelta} & T\FSigma \ar[d]^{\iota_\FSigma} \ar@/^10ex/[ddd]^{e} \\
V\hatt\FDelta \ar[r]_{V\hat g}\ar@{->>}[d]_{V\rho^\T_\Delta} & V\hatt\FSigma \ar@{->>}[d]^{V\rho^\T_\Sigma}\\
VP^\T_\Delta \ar@{->>}[d]_{V\ol e_\T^*} \ar[r]_{Vg_P} & VP^\T_\Sigma \ar@{->>}[d]^{Ve_\T^*} \\
A_e \ar[r]_{g_e} & A
}
\]
Thus $\rho^\T$ is a pro-$\F$ theory.
\item By the local Reiterman theorem (Theorem \ref{thm:localreiterman}), the two maps $\rho\mapsto \T^\rho$ and $\T \mapsto \rho^\T$ give mutually inverse and order-preserving bijections between the two posets.\qedhere
\end{enumerate}
\end{proof}

\begin{rem} $\F$-theories generalize the \emph{varieties of filters of congruences} introduced by Almeida \cite{almeida90} for algebras over a finitary signature, and earlier by Th\'erien \cite{therien80} for monoids.
\end{rem}
Our goal in this subsection is to show that $\F$-theories (and thus also pro-$\F$ theories) correspond to pseudovarietes of $\F$-algebras. First an auxiliary result:

\begin{lemma}\label{lem:projective}
For every $\Sigma\in\Set_f^S$, the free $\MT$-algebra $\MT\FSigma$ is projective. That is, for any two $\MT$-homomorphisms $h\colon \MT\FSigma\to B$ and $e\colon A\epito B$ with $e$ surjective, there exists a $\MT$-homomorphism $g\colon \MT\FSigma\epito A$ with $h=e\o g$.
\end{lemma}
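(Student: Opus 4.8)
The plan is to reduce the statement to the fact that the finite $S$-sorted set $\Sigma$ is projective in $\Set^S$, which is immediate from the axiom of choice. The key observation is that $\MT\FSigma$ is the \emph{free $\MT$-algebra on the sorted set $\Sigma$}: it arises by applying $\MT$ to the free $\D^S$-object $\FSigma$ on $\Sigma$, so composing the free/forgetful adjunctions $\Set^S\rightleftarrows\D^S\rightleftarrows\Alg{\MT}$ yields, for every $\MT$-algebra $A$, a natural bijection between $\MT$-homomorphisms $\MT\FSigma\to A$ and functions $\Sigma\to\under{A}$, where $\under{A}$ denotes the underlying $S$-sorted set of $A$; concretely, this bijection sends a homomorphism $k$ to $\under{k}\circ u$, where $u\colon\Sigma\to\under{\MT\FSigma}$ is the universal map. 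In particular a homomorphism out of $\MT\FSigma$ is determined by its composite with $u$.

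First I would transpose the given homomorphism $h\colon\MT\FSigma\to B$ to a function $\bar h\colon\Sigma\to\under{B}$, i.e.\ $\bar h = \under{h}\circ u$. Next I would note that the surjectivity of $e\colon A\epito B$ as an $\MT$-homomorphism unwinds to surjectivity of the underlying function $\under{e}\colon\under{A}\to\under{B}$ in every sort: surjections of $\MT$-algebras are defined via the factorization system lifted sortwise from the (surjective, injective/order-reflecting) system on $\D$ (Remark~\ref{rem:tpressurj}, using that $T$ preserves surjections), and in a variety a morphism is surjective precisely when its underlying function is. Hence, applying the axiom of choice componentwise to the finite set $\Sigma$, I obtain a function $g_0\colon\Sigma\to\under{A}$ with $\under{e}\circ g_0=\bar h$. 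Transposing $g_0$ back across the bijection above produces an $\MT$-homomorphism $g\colon\MT\FSigma\to A$ with $\under{g}\circ u=g_0$.

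It then remains to verify $e\circ g=h$. Postcomposing with the underlying-set functor and precomposing with $u$, both homomorphisms give $\under{e}\circ\under{g}\circ u=\under{e}\circ g_0=\bar h=\under{h}\circ u$; since a homomorphism out of the free algebra $\MT\FSigma$ is determined by its composite with $u$, we conclude $e\circ g=h$, as required.

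I do not expect a genuine obstacle here: the argument is purely formal once choice is invoked. The only step needing a little care is the unwinding of ``surjective $\MT$-homomorphism'' into ``sortwise-surjective underlying function'', which rests on Assumption~\ref{ass:dt}(ii) and (iii) together with the lifting of the factorization system recorded in Remark~\ref{rem:tpressurj}; the rest is a routine adjunction chase.
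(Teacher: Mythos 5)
Your proof is correct and follows essentially the same route as the paper's: reduce the lifting problem to the generators via the free--forgetful adjunction, lift there using surjectivity, and extend back by freeness. The paper stops one adjunction level higher---it transposes $h$ along the unit $\eta$ to a $\D^S$-morphism $\FSigma\to B$ and invokes projectivity of the free object $\FSigma$ in $\D^S$ directly, rather than descending all the way to $\Set^S$---but the underlying idea is identical.
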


\begin{proof}
Consider the morphism $h' := h\o \eta\colon \FSigma\to B$ in $\D^S$. Since the free $\D^S$-object $\FSigma$ is projective, one can choose a morphism $g'\colon \FSigma\to A$ with $h'=e\o g'$. Let $g\colon \MT\FSigma\epito A$ be the unique extension of $g'$ to a $\MT$-homomorphism (i.e. with $g'=g\o \eta$), using that $\MT\FSigma$ is the free $\MT$-algebra on $\FSigma$. Then $h=g\o e$, as this holds when precomposed with the unit $\eta$.
\[
\xymatrix{
& \FSigma \ar[d]^\eta \ar@/_2ex/[ddl]_{g'} \ar@/^2ex/[ddr]^{h'} & \\
&\MT\FSigma \ar[dr]^h \ar[dl]_g &\\
A \ar@{->>}[rr]_e & & B
}
\]
\end{proof}

\begin{lemma}\label{lem:theory_to_pseudovar_property}
Let $\T$ be an $\F$-theory, and let $(A,\alpha)$ be an 
$\A$-generated algebra in $\F$. Then the following statements are equivalent:
\begin{enumerate}[(i)]
\item There exists a surjective $\MT$-homomorphism $e\colon \MT\FSigma\epito (A,\alpha)$ with $\Sigma\in \A$ and $e\in \T_\Sigma$.
\item Every $\MT$-homomorphism $h\colon \MT\FDelta\to (A,\alpha)$ with $\Delta\in \A$ factors through some $e\in \T_\Delta$.
\end{enumerate}
\end{lemma}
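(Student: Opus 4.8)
The plan is to prove the two implications separately, using projectivity of free $\MT$-algebras (Lemma \ref{lem:projective}) and the closure properties of an $\F$-theory. The nontrivial direction is (i)$\Rightarrow$(ii); (ii)$\Rightarrow$(i) is almost immediate.

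\emph{(ii)$\Rightarrow$(i).} Since $(A,\alpha)$ is $\A$-generated, there is a surjective $\MT$-homomorphism $e_0\colon \MT\FSigma\epito (A,\alpha)$ with $\Sigma\in\A$. Apply (ii) to $h=e_0$ (with $\Delta=\Sigma$): then $e_0$ factors as $e_0 = g\o e$ for some $e\colon \MT\FSigma\epito B$ in $\T_\Sigma$ and some $\MT$-homomorphism $g\colon B\to A$. Since $e_0$ is surjective, so is $g$; hence $e_0\leq e$ in the poset $\Quo(\MT\FSigma,\F)$. As $\T_\Sigma$ is a local pseudovariety, it is downwards closed, so $e_0\in\T_\Sigma$ — giving (i) with the witness $e_0$.

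\emph{(i)$\Rightarrow$(ii).} Fix $e\colon \MT\FSigma\epito (A,\alpha)$ with $\Sigma\in\A$ and $e\in\T_\Sigma$, and let $h\colon \MT\FDelta\to (A,\alpha)$ be arbitrary with $\Delta\in\A$. The idea is to push $e$ along a suitable $\MT$-homomorphism $\MT\FDelta\to\MT\FSigma$ obtained from projectivity, then invoke the second axiom of an $\F$-theory. Concretely, since $e$ is surjective and $\MT\FDelta$ is projective (Lemma \ref{lem:projective}), the homomorphism $h\colon \MT\FDelta\to A$ lifts through $e$ to some $\MT$-homomorphism $g\colon \MT\FDelta\to \MT\FSigma$ with $e\o g = h$. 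Now apply Definition \ref{def:ftheory}(ii) to this $g$ and to $e\in\T_\Sigma$: there exist $\ol e\colon \MT\FDelta\epito B$ in $\T_\Delta$ and a $\MT$-homomorphism $\ol g\colon B\to A$ with $\ol g\o \ol e = e\o g = h$. Thus $h$ factors through $\ol e\in\T_\Delta$, which is exactly (ii).

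\emph{Main obstacle.} The only slightly delicate point is matching the shape of the $\F$-theory axiom (Definition \ref{def:ftheory}(ii)) — which produces a square $\ol g\o\ol e = e\o g$ — to what is needed in (ii), namely a factorization of $h$ itself. This is resolved precisely because projectivity lets us realize $h$ as $e\o g$ for an honest $\MT$-homomorphism $g\colon \MT\FDelta\to\MT\FSigma$ between free algebras, so that $e\o g$ equals $h$ on the nose rather than merely up to a quotient. Once that identification is made, both implications are routine; no topological or profinite input is needed here, only Lemma \ref{lem:projective} and the downward closure and "compatibility with substitutions" clauses in the definitions of local pseudovariety and $\F$-theory.
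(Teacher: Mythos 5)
Your proof is correct and follows essentially the same route as the paper: (i)$\Rightarrow$(ii) via projectivity of $\MT\FDelta$ (Lemma \ref{lem:projective}) to write $h=e\o g$ and then the substitution axiom of an $\F$-theory, and (ii)$\Rightarrow$(i) via $\A$-generatedness plus downward closure of $\T_\Sigma$. The remark that $g$ is surjective in (ii)$\Rightarrow$(i) is harmless but not needed, since $e_0\leq e$ only requires the factorization.
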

\begin{proof}
(i)$\Ra$(ii) Given a $\MT$-homomorphism $h\colon \MT\FDelta\to A$ with $\Delta\in \A$, choose a 
$\MT$-homomorphism $g\colon \MT\FDelta\to \MT\FSigma$ with $h = e\o g$, see Lemma \ref{lem:projective}. Since $\T$ is an $\F$-theory, we have $e\o g = \ol g \o \ol e$ for some $\ol e\in \T_\Delta$ and some $\MT$-homomorphism $\ol g$. Therefore $h$ factors through $\ol e\in \T_\Delta$, which proves (ii).
\[
\xymatrix{
\MT\FDelta \ar[r]^g \ar[dr]^h \ar@{->>}[d]_{\ol e} & \MT\FSigma \ar@{->>}[d]^e \\
B \ar[r]_{\ol g} & A
}
\]
(ii)$\Ra$(i)  Since $A$ is $\A$-generated, there exists a surjective 
$\MT$-homomorphism $e\colon \MT\FSigma\epito A$ for some $\Sigma\in\A$. 
Thus, by (ii), $e$ factors through some quotient $q\in \T_\Sigma$, i.e. $e\leq q$. Since the local pseudovariety $\T_\Sigma$ is downwards closed, it follows that $e\in \T_\Sigma$. This proves (i).
\end{proof}

\begin{notation}
For any $\F$-theory $\T$, we denote by $\V^\T$ the class of all $\A$-generated algebras $(A,\alpha)\in \F$ satisfying the equivalent properties (i)/(ii) of Lemma \ref{lem:theory_to_pseudovar_property}.
\end{notation}

\begin{lemma}\label{lem:theory_to_pseudovar_welldef}
If $\T$ is an $\F$-theory, then $\V^\T$ is a pseudovariety of $\F$-algebras.
\end{lemma}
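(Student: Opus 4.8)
The plan is to verify that $\V^\T$ satisfies the three closure conditions from Definition~\ref{def:pseudovar}: all its algebras are $\A$-generated, it is closed under quotients, and it is closed under $\A$-generated subalgebras of finite products. The first of these is immediate, since $\V^\T$ is defined as a class of $\A$-generated algebras satisfying property~(i) of Lemma~\ref{lem:theory_to_pseudovar_property}. So the work lies in the remaining two conditions, and for both I would use the characterization~(ii) of Lemma~\ref{lem:theory_to_pseudovar_property} (``every $\MT$-homomorphism $h\colon \MT\FDelta\to A$ with $\Delta\in\A$ factors through some $e\in\T_\Delta$''), which is the most convenient form to manipulate under quotients and products.

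For closure under quotients, let $A\in\V^\T$ and let $q\colon A\epito B$ be a quotient in $\Alg{\MT}$; note $B$ is $\A$-generated as a quotient of an $\A$-generated algebra (compose a surjection $\MT\FSigma\epito A$ with $q$). Given any $\MT$-homomorphism $h\colon \MT\FDelta\to B$ with $\Delta\in\A$, use projectivity of $\MT\FDelta$ (Lemma~\ref{lem:projective}) to lift $h$ along $q$ to $h'\colon \MT\FDelta\to A$ with $h=q\o h'$. Since $A\in\V^\T$, property~(ii) gives a factorization $h'=\ol g\o\ol e$ with $\ol e\in\T_\Delta$; then $h=(q\o\ol g)\o\ol e$ factors through $\ol e\in\T_\Delta$, so $B$ satisfies~(ii) and hence $B\in\V^\T$.

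For closure under $\A$-generated subalgebras of finite products, let $A_i\in\V^\T$ for $i\in I$ finite, let $m\colon A\monoto\prod_{i\in I}A_i$ be a subalgebra with $A$ being $\A$-generated, and let $h\colon \MT\FDelta\to A$ with $\Delta\in\A$ be given. For each $i$, the composite $\pi_i\o m\o h\colon \MT\FDelta\to A_i$ factors through some $\ol e_i\in\T_\Delta$ by property~(ii) for $A_i$. Since $\T_\Delta$ is directed (indeed closed under subdirect products, Remark~\ref{rem:subdirectproduct}) and $I$ is finite, there is a single $\ol e\in\T_\Delta$ with $\ol e_i\leq\ol e$ for all $i$; because $\T_\Delta$ is an ideal, we may take $\ol e\colon \MT\FDelta\epito B$ to be any quotient above all the $\ol e_i$. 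Each $\pi_i\o m\o h$ then factors through $\ol e$, and by the homomorphism theorem (Remark~\ref{rem:basic}.\ref{rem:factorization}) the tuple $\langle \pi_i\o m\o h\rangle_{i\in I} = m\o h$ factors through $\ol e$; since $m$ is injective/order-reflecting, $h$ itself factors through $\ol e\in\T_\Delta$, so $A$ satisfies~(ii) and $A\in\V^\T$.

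The main obstacle is the product case: one must be careful that the factorizations obtained componentwise can be merged into a \emph{single} quotient in $\T_\Delta$, which is exactly what directedness of the local pseudovariety $\T_\Delta$ delivers for a finite index set, and that the merged factorization of the tuple descends to $A$ through the mono $m$. In the ordered setting one additionally checks that ``factors through'' is witnessed by a genuine $\MT$-homomorphism, which is precisely the content of the homomorphism theorem for $\MT$-algebras. No use of $\F$-theory condition~(ii) (the compatibility of the $\T_\Sigma$ with the maps $g$) is needed here; that condition was already absorbed into the equivalence of~(i) and~(ii) in Lemma~\ref{lem:theory_to_pseudovar_property}.
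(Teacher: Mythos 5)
Your proof is correct and, for the substantive case (closure under $\A$-generated subalgebras of finite products), follows the paper's argument exactly: factor each $\pi_i\o m\o h$ through some $e_i\in\T_\Delta$, use directedness of the local pseudovariety $\T_\Delta$ to find a single $e$ dominating all of them, and descend through the mono $m$ (your homomorphism-theorem step is precisely the diagonal fill-in the paper invokes). For closure under quotients you take a slightly different but equally valid route---lifting $h$ along $q$ via projectivity of $\MT\FDelta$ and using characterization~(ii) of Lemma~\ref{lem:theory_to_pseudovar_property}---whereas the paper simply composes the witnessing quotient $e\in\T_\Sigma$ with $q$ and appeals to downward closure of $\T_\Sigma$ via characterization~(i); both are one-step arguments and neither gains anything over the other.
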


\begin{proof}
The closure of $\V^\T$ under quotients follows from
Lemma~\ref{lem:theory_to_pseudovar_property}(i) and the fact that $\T_\Sigma$ is downwards closed. To show that $\V^\T$ is closed under $\A$-generated subalgebras of finite products, let $A_i$ ($i\in I$) be a finite family of objects in $\V^\T$ and let $m\colon 
A\monoto
\prod_{i} A_i$ be an $\A$-generated subalgebra of their product. Denote the product projections by
$\pi_i\colon \prod_{i} A_i\to A_i$. We show that every 
$\MT$-homomorphism 
$h\colon\MT\FDelta\to A$  with $\Delta\in \A$
factors through some element of $\T_\Delta$.  For each $i$, the $\MT$-homomorphism $\pi_i\o m\o h$ has codomain $A_i\in \V^\T$ and thus, by definition of $\V^\T$, factors through some element of $\T_\Delta$; that is, $\pi_i \o m\o h = h_i\o e_i$ for some $e_i\in \T_\Delta$ and some $\MT$-homomorphism $h_i$. Since $\T_\Delta$ is a local pseudovariety, there exists $e\in \T_\Delta$ with $e_i\leq e$ for all $i$, i.e. $e_i= g_i\o e$ for some $g_i$.
It follows that $h$ factors through $e$ by the diagonal fill-in
property:
\[  
\xymatrix@R+1em{
\MT\FDelta \ar@{->>}[r]^{e} \ar[d]_h & B  \ar[d]|{\langle 
h_i\o g_i\rangle}\ar@{-->}[ld] 
\ar[dr]^{h_i\o g_i} &&\\
A \ar@{ >->}[r]_m & \prod_i A_i \ar[r]_{\pi_i} & A_i
}
\]
By the definition of $\V^\T$, this shows $A\in \V^\T$.
\end{proof}

\begin{notation}
For any pseudovariety $\V$ of $\F$-algebras and $\Sigma\in \A$ we form the family
\[  \T^\V = (\,\T_\Sigma^\V\,)_{\Sigma\in \A}\]
where, for each $\Sigma\in \A$,
\[  \T_\Sigma^\V = \{\,e\colon \MT\FSigma\epito A \,:\, A\in \V\,\}.\]
\end{notation}

\begin{lemma}
For any pseudovariety $\V$ of $\F$-algebras, $\T^\V$ is an $\F$-theory.
\end{lemma}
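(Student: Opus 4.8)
The plan is to verify, for the family $\T^\V = (\T^\V_\Sigma)_{\Sigma\in\A}$, the two conditions of Definition~\ref{def:ftheory}, matching each condition to one of the closure properties of the pseudovariety $\V$ (Definition~\ref{def:pseudovar}).

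First I would check that each $\T^\V_\Sigma$, for $\Sigma\in\A$, is a local pseudovariety of $\Sigma$-generated $\F$-algebras in the sense of Definition~\ref{def:localpseudovariety}. Every member of $\T^\V_\Sigma$ is a quotient of $\MT\FSigma$ with codomain in $\V\seq\F$, hence an element of $\Quo(\MT\FSigma,\F)$. Nonemptiness holds because the trivial algebra $1$ is the empty product and is $\A$-generated, so $1\in\V$ and the unique quotient $\MT\FSigma\epito 1$ lies in $\T^\V_\Sigma$. Downward closure is immediate from closure of $\V$ under quotients: if $e_0\leq e_1$ with $e_1\colon\MT\FSigma\epito A_1$ in $\T^\V_\Sigma$, then $e_0=q\o e_1$ with $q$ surjective (because $e_0$ is), so the codomain of $e_0$ is a quotient of $A_1\in\V$ and hence in $\V$. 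For directedness I would use the reformulation via subdirect products (Remark~\ref{rem:subdirectproduct}): given $e_0\colon\MT\FSigma\epito A_0$ and $e_1\colon\MT\FSigma\epito A_1$ in $\T^\V_\Sigma$, factor $\langle e_0,e_1\rangle\colon\MT\FSigma\to A_0\times A_1$ as $\MT\FSigma\epito A\monoto A_0\times A_1$; then $A$ is an $\A$-generated (via $\MT\FSigma$ with $\Sigma\in\A$) subalgebra of a finite product of members of $\V$, hence $A\in\V$, so this subdirect product belongs to $\T^\V_\Sigma$.

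Next I would verify condition (ii) of Definition~\ref{def:ftheory}. Given a $\MT$-homomorphism $g\colon\MT\FDelta\to\MT\FSigma$ with $\Sigma,\Delta\in\A$ and $e\colon\MT\FSigma\epito A$ in $\T^\V_\Sigma$ (so $A\in\V$), I would factor the $\MT$-homomorphism $e\o g\colon\MT\FDelta\to A$ as $\ol g\o\ol e$ with $\ol e\colon\MT\FDelta\epito B$ surjective and $\ol g\colon B\monoto A$ injective/order-reflecting, using the factorization system on $\Alg{\MT}$ (Remark~\ref{rem:tpressurj}). Then $B$ is an $\A$-generated subalgebra of $A\in\V$ — regarded as the one-term product of itself — so closure of $\V$ under $\A$-generated subalgebras of finite products gives $B\in\V$, i.e.\ $\ol e\in\T^\V_\Delta$, and $\ol g\o\ol e=e\o g$ by construction. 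This is precisely the required data.

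I do not expect a real obstacle: the argument is essentially bookkeeping. The only point needing care is deriving each local-pseudovariety axiom from the correct closure property of $\V$ (downward closure from closure under quotients; directedness from closure under $\A$-generated subalgebras of finite products), and, in condition (ii), recognizing the image of $e\o g$ as an $\A$-generated subalgebra of a (one-term) product so that the subalgebra-closure property applies. A minor subtlety is the nonemptiness of $\T^\V_\Sigma$, which rests on the trivial algebra being $\A$-generated and thus lying in $\V$.
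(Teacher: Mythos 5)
Your proof is correct and follows essentially the same route as the paper's: downward closure of $\T^\V_\Sigma$ from closure of $\V$ under quotients, directedness via the subdirect product being an $\A$-generated subalgebra of a finite product, and condition (ii) of Definition~\ref{def:ftheory} by factorizing $e\o g$ and recognizing the image as an $\A$-generated subalgebra of $A$. You are in fact slightly more careful than the paper, which does not explicitly address nonemptiness of $\T^\V_\Sigma$ or the membership in $\Quo(\MT\FSigma,\F)$.
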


\begin{proof}
Clearly $\T_\Sigma^\V$ is a local pseudovariety of $\Sigma$-generated $\F$-algebras for every $\Sigma\in \A$: it is downwards closed because $\V$ closed under quotients, and it is closed under subdirect products because $\V$ is closed under $\A$-generated subalgebras of finite products.

To show that $\T^\V$ forms an $\F$-theory, let $g\colon \MT\FDelta\to\MT\FSigma$ be a $\MT$-homomorphism with $\Sigma,\Delta\in \A$ and $e\colon \MT\FSigma\epito A$ in $\T_\Sigma^\V$, Factorize $e\o g$ into a surjective $\MT$-homomorphism $\ol e\colon \MT\FDelta\epito B$ followed by an injective/order-reflecting $\MT$-homomorphism $\ol g\colon B\monoto A$:
\[
\xymatrix{
\MT\FDelta \ar[r]^g \ar@{->>}[d]_{\ol e} & \MT\FSigma \ar@{->>}[d]^e\\
B \ar@{>->}[r]_{\ol g} & A
}
\] 
Then $B$ is an $\A$-generated subalgebra of $A$. Since $A\in \V$ and $\V$ is closed under $\A$-generated subalgebras, it follows that $B\in\V$ and therefore $\ol e\in \T_\Delta^\V$. Thus $\T^\V$ is an $\F$-theory.
\end{proof}

\begin{lemma}\label{lem:theory_pseudovar_equivalence1}
For any pseudovariety $\V$ of $\F$-algebras we have $\V = \V^\T$ where $\T := \T^\V$.
\end{lemma}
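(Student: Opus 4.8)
The plan is to prove the two inclusions $\V\seq\V^\T$ and $\V^\T\seq\V$ separately, where throughout $\T=\T^\V$ (which is an $\F$-theory by the preceding lemma, so that $\V^\T$ is defined). Both directions are almost immediate once one unwinds the definitions of $\T^\V$, of $\V^\T$, and of the equivalent conditions in Lemma~\ref{lem:theory_to_pseudovar_property}; the only point requiring a moment's care is that elements of $\Quo(\MT\FSigma,\F)$ are quotients taken up to isomorphism.

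For $\V\seq\V^\T$, I would start with an algebra $(A,\alpha)\in\V$. Since $\V$ is a pseudovariety, $(A,\alpha)$ is $\A$-generated, so there is a surjective $\MT$-homomorphism $e\colon\MT\FSigma\epito(A,\alpha)$ with $\Sigma\in\A$. By the very definition of $\T^\V$, this quotient $e$ lies in $\T_\Sigma^\V$, because its codomain $(A,\alpha)$ belongs to $\V$. Hence $(A,\alpha)$ satisfies condition~(i) of Lemma~\ref{lem:theory_to_pseudovar_property} with respect to the $\F$-theory $\T=\T^\V$, which is exactly what is required for $(A,\alpha)\in\V^\T$.

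For the reverse inclusion $\V^\T\seq\V$, I would take $(A,\alpha)\in\V^\T$. By definition of $\V^\T$ it is $\A$-generated and satisfies condition~(i) of Lemma~\ref{lem:theory_to_pseudovar_property}: there is a surjective $\MT$-homomorphism $e\colon\MT\FSigma\epito(A,\alpha)$ with $\Sigma\in\A$ and $e\in\T_\Sigma^\V$. Now $\T_\Sigma^\V$ consists precisely of those quotients of $\MT\FSigma$ whose codomain lies in $\V$, so membership $e\in\T_\Sigma^\V$ forces the codomain of $e$, namely $(A,\alpha)$, to be isomorphic to an object of $\V$; since $\V$ is closed under quotients, and an isomorphism is in particular a quotient, $(A,\alpha)\in\V$.

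I do not expect any genuine obstacle here: the statement is a bookkeeping lemma that matches up the two translations $\V\mapsto\T^\V$ and $\T\mapsto\V^\T$, and everything reduces to reading off the definitions. The one subtlety worth spelling out is the identification of quotients up to isomorphism in $\Quo(\MT\FSigma,\F)$, which is what makes the equivalence ``$e\in\T_\Sigma^\V$'' $\Leftrightarrow$ ``the codomain of $e$ lies in $\V$'' legitimate; this relies only on $\V$ being replete, which is automatic from closure under quotients.
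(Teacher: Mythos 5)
Your proof is correct and follows essentially the same argument as the paper: both directions are obtained by unwinding the definitions of $\T^\V$ and $\V^\T$ together with Lemma~\ref{lem:theory_to_pseudovar_property}(i), exactly as the paper does. The extra remark about quotients being identified up to isomorphism is a harmless clarification the paper leaves implicit.
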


\begin{proof}
To show $\V\seq\V^\T$, let $A\in \V$. Since $A$ is $\A$-generated, there exists a surjective $\MT$-homomorphism $e\colon T\FSigma\epito A$ with $\Sigma\in \A$. Then $e\in \T_\Sigma$ by the definition of $\T_\Sigma = \T_\Sigma^\V$ and therefore $A\in \V^\T$ by the definition of $\V^\T$.

For the reverse inclusion $\V^\T\seq \V$, let $A\in \V^\T$. Then, for some $\Sigma\in\A$, there exists $e\in \T_\Sigma$ with codomain $A$. But by the definition of $\T_\Sigma = \T_\Sigma^\V$, this means that $A\in \V$. 
\end{proof}

\begin{lemma}\label{lem:theory_pseudovar_equivalence2}
For any $\F$-theory $\T$
we 
have $\T = \T^\V$, where $\V:=\V^\T$.
\end{lemma}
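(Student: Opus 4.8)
The plan is to fix $\Sigma\in\A$, set $\V:=\V^\T$, and prove the two inclusions $\T_\Sigma\seq\T_\Sigma^\V$ and $\T_\Sigma^\V\seq\T_\Sigma$ separately. Both directions reduce almost immediately to the equivalence (i)$\Leftrightarrow$(ii) of Lemma~\ref{lem:theory_to_pseudovar_property}, combined with the fact that each $\T_\Sigma$ is downwards closed (being a local pseudovariety). No new construction is needed; the real work has already been done in the preceding lemmas.

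For $\T_\Sigma\seq\T_\Sigma^\V$: given $e\colon\MT\FSigma\epito A$ in $\T_\Sigma$, the map $e$ is a surjective $\MT$-homomorphism out of $\MT\FSigma$ with $\Sigma\in\A$, so $A$ is $\A$-generated. Moreover $e$ itself witnesses condition~(i) of Lemma~\ref{lem:theory_to_pseudovar_property} for $(A,\alpha)$. Hence $(A,\alpha)\in\V^\T=\V$, and therefore $e\in\T_\Sigma^\V$ by the definition of $\T^\V$.

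For $\T_\Sigma^\V\seq\T_\Sigma$: given $e\colon\MT\FSigma\epito A$ in $\T_\Sigma^\V$, we have $A\in\V=\V^\T$. Applying condition~(ii) of Lemma~\ref{lem:theory_to_pseudovar_property} with $\Delta:=\Sigma$ and $h:=e$, we find that $e$ factors through some $q\in\T_\Sigma$, i.e. $e\leq q$ in $\Quo(\MT\FSigma,\F)$. Since $\T_\Sigma$ is a local pseudovariety, it is downwards closed, so $e\in\T_\Sigma$. As $\Sigma\in\A$ was arbitrary, this yields $\T_\Sigma=\T_\Sigma^\V$ for all $\Sigma\in\A$, i.e. $\T=\T^\V$.

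There is no genuine obstacle in this proof; the substance is entirely absorbed into Lemma~\ref{lem:theory_to_pseudovar_property} (which in turn rests on the projectivity of $\MT\FSigma$ from Lemma~\ref{lem:projective} and the defining square of an $\F$-theory). The only points requiring a word of care are that the codomain of an $e\in\T_\Sigma$ is legitimately an object of $\V^\T$ because it is $\A$-generated, and that downward closure of $\T_\Sigma$ is available by definition; both are immediate. Together with Lemma~\ref{lem:theory_pseudovar_equivalence1}, this completes the bijection between $\F$-theories and pseudovarieties of $\F$-algebras.
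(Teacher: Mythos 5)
Your proof is correct and follows essentially the same route as the paper's: the forward inclusion is identical, and the reverse inclusion rests on the same ingredients (the equivalence of Lemma~\ref{lem:theory_to_pseudovar_property}, projectivity of $\MT\FSigma$, and downward closure of $\T_\Sigma$). The only difference is cosmetic: where the paper re-derives the factorization of $e$ through an element of $\T_\Sigma$ by hand (choosing $g$ via Lemma~\ref{lem:projective} and then invoking the defining square of an $\F$-theory), you cite condition~(ii) of Lemma~\ref{lem:theory_to_pseudovar_property} directly with $\Delta=\Sigma$ and $h=e$, which is a cleaner packaging of the same argument.
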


\begin{proof}

 To show $\T\seq\T^\V$, let $e\colon \MT\FSigma\epito A$ in $\T_\Sigma$. Then $A\in \V^\T=\V$ by the definition of $\V^\T$. But then $e\in \T^\V$ be the definition of $\T^\V$.

For the reserve inclusion $\T^\V\seq \T$, let $e\colon \MT\FSigma\epito A$ in $\T_\Sigma^\V$. By the definition of $\T_\Sigma^\V$, this means that $A\in \V=\V^\T$, i.e. there exists $\Delta\in \A$ and $q\colon \MT\FDelta\to A$ in $\T_\Delta$ with codomain $A$. Using projectivity of free $\MT$-algebras (see Lemma \ref{lem:projective}), choose a $\MT$-homomorphism $g\colon \MT\FSigma\to \MT\FDelta$ with $q\o g = e$. Since $\T$ is an $\F$-theory, there exists $\ol q\colon \MT\FSigma\epito B$ in $\T_\Sigma$ and $\ol g\colon B\to A$ with $q\o g = \ol g \o \ol q$.
\[
\xymatrix{
\MT\FSigma \ar[r]^g \ar@{->>}[d]_{\ol q} \ar@{->>}[dr]^e & \MT\FDelta \ar@{->>}[d]^{q}\\
B \ar@{->>}[r]_{\ol g}& A
}
\]
Therefore $e=\ol g \o \ol q$, i.e. $e\leq \ol q$. Since $\T_\Sigma$ is downwards closed and $\ol q\in \T_\Sigma$, it follows that $e\in \T_\Sigma$.
\end{proof}

\begin{lemma}\label{lem:ftheory_vs_pseudovar}
The maps 
\[\T\mapsto \V^\T\quad\text{and}\quad \V\mapsto \T^\V\]
give an isomorphism between the lattice of $\F$-theories and the lattice of pseudovarieties of $\F$-algebras (ordered by inclusion).
\end{lemma}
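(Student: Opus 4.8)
The plan is to assemble the statement from the lemmas already proved in this subsection. Lemma~\ref{lem:theory_to_pseudovar_welldef} and the lemma stating that $\T^\V$ is an $\F$-theory show that $\T\mapsto\V^\T$ and $\V\mapsto\T^\V$ are well-defined maps between the two classes, and Lemmas~\ref{lem:theory_pseudovar_equivalence1} and~\ref{lem:theory_pseudovar_equivalence2} show that they are mutually inverse. Hence all that remains is to check that both maps are monotone: a monotone bijection with monotone inverse is an isomorphism of posets, and since the posets in question are lattices, this is automatically an isomorphism of lattices. So the genuine work has already been done, and what is left is routine bookkeeping.

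For monotonicity of $\V\mapsto\T^\V$: if $\V\seq\V'$ are pseudovarieties of $\F$-algebras, then for each $\Sigma\in\A$ the class $\T^\V_\Sigma=\{\,e\colon\MT\FSigma\epito A : A\in\V\,\}$ is contained in $\T^{\V'}_\Sigma=\{\,e\colon\MT\FSigma\epito A : A\in\V'\,\}$, since the defining condition ``$A\in\V$'' is itself monotone in $\V$. Thus $\T^\V\leq\T^{\V'}$ directly from the definition of the order on $\F$-theories.

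For monotonicity of $\T\mapsto\V^\T$: suppose $\T\leq\T'$, i.e.\ $\T_\Sigma\seq\T'_\Sigma$ for all $\Sigma\in\A$. Given $A\in\V^\T$, property~(i) of Lemma~\ref{lem:theory_to_pseudovar_property} provides a surjective $\MT$-homomorphism $e\colon\MT\FSigma\epito A$ with $\Sigma\in\A$ and $e\in\T_\Sigma$. Since $\T_\Sigma\seq\T'_\Sigma$, we get $e\in\T'_\Sigma$, which by the same property~(i) applied to $\T'$ shows $A\in\V^{\T'}$. Hence $\V^\T\seq\V^{\T'}$.

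Combining these two monotonicity observations with the bijectivity from Lemmas~\ref{lem:theory_pseudovar_equivalence1} and~\ref{lem:theory_pseudovar_equivalence2} yields the claimed lattice isomorphism. I do not expect any real obstacle in this proof: the only step requiring a genuine argument is the mutual-inverse property, in particular the inclusion $\T^{\V^\T}\seq\T$, and that has already been established in Lemma~\ref{lem:theory_pseudovar_equivalence2} via projectivity of free $\MT$-algebras and the $\F$-theory condition.
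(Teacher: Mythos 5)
Your proof is correct and follows the same route as the paper: both deduce mutual inverseness from Lemmas~\ref{lem:theory_pseudovar_equivalence1} and~\ref{lem:theory_pseudovar_equivalence2} and then observe order-preservation (which the paper dismisses as "clearly order-preserving" and you spell out explicitly, correctly in both directions).
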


\begin{proof}
By Lemma \ref{lem:theory_pseudovar_equivalence1} and \ref{lem:theory_pseudovar_equivalence2}, the two maps are mutually inverse. Moreover, they are clearly order-preserving, which proves the claim.
\end{proof}
From this and Lemma \ref{lem:profiniteconcrete} we get the main result of the present section:

\begin{theorem}[Reiterman Theorem]\label{thm:reiterman}
The lattice of
pseudovarieties of $\F$-algebras is isomorphic to the 
lattice of pro-$\F$
theories.
\end{theorem}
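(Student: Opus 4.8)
The plan is to obtain the isomorphism by composing the two correspondences already established in this section. Part~(3) of Lemma~\ref{lem:profiniteconcrete} gives an order-isomorphism $\rho\mapsto\T^\rho$ between the poset of pro-$\F$ theories and the poset of $\F$-theories, with inverse $\T\mapsto\rho^\T$. Lemma~\ref{lem:ftheory_vs_pseudovar} gives an order-isomorphism $\T\mapsto\V^\T$ between the lattice of $\F$-theories and the lattice of pseudovarieties of $\F$-algebras, with inverse $\V\mapsto\T^\V$. Composing, the assignment $\rho\mapsto\V^{\T^\rho}$ is a bijection from pro-$\F$ theories to pseudovarieties of $\F$-algebras, with inverse $\V\mapsto\rho^{\T^\V}$.

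First I would note that this composite is automatically an order-isomorphism: a composite of order-preserving maps is order-preserving, and the same is true of the composite of the two inverses, so the bijection both preserves and reflects the ordering. Since pseudovarieties of $\F$-algebras form a complete lattice under inclusion — arbitrary intersections of pseudovarieties are again pseudovarieties, and the class of all $\A$-generated $\F$-algebras is the largest pseudovariety — this lattice structure is transported along the isomorphism, so that the pro-$\F$ theories also form a lattice, isomorphic to that of pseudovarieties.

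I do not expect any genuine obstacle in this step: the substantive work has all been done in the preceding lemmas, namely the local Reiterman theorem (Theorem~\ref{thm:localreiterman}), the passage between pro-$\F$ theories and $\F$-theories (Lemma~\ref{lem:profiniteconcrete}), and the passage between $\F$-theories and pseudovarieties of $\F$-algebras (Lemma~\ref{lem:ftheory_vs_pseudovar}). What is left is pure bookkeeping — chaining these order-isomorphisms together and checking that the composite is again one.
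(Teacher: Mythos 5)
Your proposal is correct and coincides with the paper's own argument: the theorem is obtained by composing the isomorphism of Lemma~\ref{lem:ftheory_vs_pseudovar} (pseudovarieties of $\F$-algebras $\leftrightarrow$ $\F$-theories) with that of Lemma~\ref{lem:profiniteconcrete}(3) ($\F$-theories $\leftrightarrow$ pro-$\F$ theories). The observation that the composite of mutually inverse order-preserving bijections is again an order-isomorphism is exactly the bookkeeping the paper leaves implicit.
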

As in the local case, we can interpret the above theorem in terms of profinite (in-)equations.
\begin{definition} Let $\Sigma\in\Set_f^S$ and $u,v\in\under{\hatt\FSigma}_s$ for some sort $s$.
\begin{enumerate}
\item Let $\D$ be a variety of algebras. A $\MT$-algebra $(A,\alpha)$ in $\F$
  \emph{satisfies the profinite equation $u=v$} if $h^+(u)= h^+(v)$ for every $\MT$-homomorphism $h\colon \MT\FSigma\to A$.
\item Let $\D$ be a variety of ordered algebras. A  $\MT$-algebra $(A,\alpha)$ in $\F$
  \emph{satisfies the profinite inequation $u\leq v$} if $h^+(u)\leq h^+(v)$ for every $\MT$-homomorphism $h\colon \MT\FSigma\to A$.
\end{enumerate}
\end{definition}

\begin{notation}
For a class $E$ of profinite (in-)equations over sets $\Sigma\in\A$, let $\V[E]$ denote the 
class of all $\A$-generated algebras in $\F$ satisfying all (in-)equations in 
$E$. Conversely, for a subclass $\V\seq\F$
let $E[\V]$ be the class of all profinite (in-)equations over $\A$ satisfied by all 
algebras in $\V$.
\end{notation}

\begin{corollary}[Equational Reiterman Theorem]
A class $\V\seq \F$ of $\A$-generated algebras forms a pseudovariety iff $\V=\V[E]$ for some class $E$ of profinite (in-)equations over $\A$.
\end{corollary}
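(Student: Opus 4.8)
The plan is to derive this corollary from the Equational Local Reiterman Theorem by decomposing pseudovarieties into families of local pseudovarieties, using the $\F$-theory correspondence established above in Lemma~\ref{lem:ftheory_vs_pseudovar}. I would treat the ordered case explicitly and remark that the unordered case is verbatim the same with profinite equations in place of inequations.

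For the ``if'' direction I would show directly that $\V[E]$ is a pseudovariety for any class $E$ of profinite inequations over $\A$. That $\V[E]$ consists of $\A$-generated algebras in $\F$ is built into the definition, so only closure under quotients and under $\A$-generated subalgebras of finite products needs checking. Both are immediate consequences of the compatibility law $(k\o h)^+ = k\o h^+$ valid for any $\MT$-homomorphism $k$ into an $\F$-algebra (a restatement of the compatibility of the limit cone $(h^+)$ defining $\hatt\FSigma$): given a quotient $e'\colon A\epito B$ and $h\colon\MT\FSigma\to B$, projectivity of $\MT\FSigma$ (Lemma~\ref{lem:projective}) supplies $g\colon\MT\FSigma\to A$ with $h = e'\o g$, whence $h^+ = e'\o g^+$, and the monotonicity of $e'$ carries satisfaction of $u\leq v$ from $A$ to $B$; and given an $\A$-generated subalgebra $m\colon A\monoto\prod_i A_i$ of a finite product, $(\pi_i\o m\o h)^+ = \pi_i\o m\o h^+$ together with the componentwise order on the product and the fact that $m$ is order-reflecting give $h^+(u)\leq h^+(v)$ as soon as each $A_i$ satisfies $u\leq v$.

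For the ``only if'' direction, let $\V$ be a pseudovariety of $\F$-algebras. I would first invoke the result established above that the family $\T^\V = (\T_\Sigma^\V)_{\Sigma\in\A}$, with $\T_\Sigma^\V = \{\,e\colon\MT\FSigma\epito A : A\in\V\,\}$, is an $\F$-theory; in particular every $\T_\Sigma^\V$ is a local pseudovariety of $\Sigma$-generated $\F$-algebras. Applying the Equational Local Reiterman Theorem to each $\T_\Sigma^\V$ yields, for every $\Sigma\in\A$, a set $E_\Sigma$ of profinite inequations over $\Sigma$ with $\T_\Sigma^\V = \P[E_\Sigma]$, and I would put $E := \bigcup_{\Sigma\in\A} E_\Sigma$ and prove $\V = \V[E]$. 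For $\V\seq\V[E]$: take $A\in\V$, a profinite inequation $u\leq v$ in some $E_\Sigma\seq E$, and an arbitrary $h\colon\MT\FSigma\to A$; factor $h = \ol h\o e$ with $e\colon\MT\FSigma\epito B$ surjective and $\ol h\colon B\monoto A$ order-reflecting. Then $B$ is an $\A$-generated subalgebra of $A$, so $B\in\V$ by closure under $\A$-generated subalgebras, hence $e\in\T_\Sigma^\V=\P[E_\Sigma]$ satisfies $u\leq v$ locally, i.e.\ $e^+(u)\leq e^+(v)$; monotonicity of $\ol h$ and $h^+ = \ol h\o e^+$ then give $h^+(u)\leq h^+(v)$, so $A$ satisfies $u\leq v$ and $A\in\V[E]$. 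For $\V[E]\seq\V$: take $A\in\V[E]$ and, using that $A$ is $\A$-generated, a surjection $e\colon\MT\FSigma\epito A$ with $\Sigma\in\A$; since $A$ satisfies every inequation of $E_\Sigma\seq E$ we have $e^+(u)\leq e^+(v)$ for all $u\leq v\in E_\Sigma$, so $e\in\P[E_\Sigma]=\T_\Sigma^\V$, and by the definition of $\T_\Sigma^\V$ this forces $A\in\V$.

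The only real bookkeeping obstacle is that ``satisfaction'' means something slightly different in the local and the global settings: in the former a \emph{fixed} quotient $e\colon\MT\FSigma\epito A$ satisfies $u\leq v$ when $e^+(u)\leq e^+(v)$, whereas in the latter an algebra $A$ satisfies $u\leq v$ when $h^+(u)\leq h^+(v)$ for \emph{every} $h\colon\MT\FSigma\to A$. Bridging the two is exactly where the surjective/order-reflecting factorization and the closure of $\V$ (resp.\ of $\V[E]$) under $\A$-generated subalgebras are needed; everything else in the argument is formal.
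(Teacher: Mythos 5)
Your proposal is correct. The ``if'' direction coincides with the paper's argument almost verbatim: closure of $\V[E]$ under quotients via projectivity of $\MT\FSigma$ (Lemma~\ref{lem:projective}) and the compatibility $h^+ = q\o g^+$, and closure under $\A$-generated subalgebras of finite products via $h_i^+=\pi_i\o m\o h^+$ together with joint order-reflection of the projections and order-reflection of $m$. For the ``only if'' direction, however, you take a genuinely different and somewhat more modular route. The paper works with the full set $E[\V]$ of inequations satisfied by $\V$, passes to the pro-$\F$ theory $\rho^\T$, and uses the homomorphism theorem to show that for $A\in\V[E[\V]]$ the projection $e^+$ factors through $\rho^\T_\Sigma$, so that $A$ is a quotient of $P^\T_\Sigma$ and hence lies in $\V^\T=\V$ by the Reiterman theorem. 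You instead decompose $\V$ into the family of local pseudovarieties $\T^\V_\Sigma$, invoke the Equational Local Reiterman Theorem as a black box to obtain witnessing sets $E_\Sigma$ with $\T^\V_\Sigma=\P[E_\Sigma]$, and glue them into $E=\bigcup_\Sigma E_\Sigma$; the verification $\V=\V[E]$ then needs only the definition of $\T^\V_\Sigma$, the surjective/order-reflecting factorization, and the cone compatibility $(\ol h\o e)^+=\ol h\o e^+$, with no further appeal to $P^\T_\Sigma$, the homomorphism theorem, or the global Reiterman correspondence $\V^\T=\V$. Your explicit bridging of local satisfaction ($e^+(u)\leq e^+(v)$ for a fixed quotient $e$) versus global satisfaction (quantified over all $h\colon\MT\FSigma\to A$) is exactly the point where closure under $\A$-generated subalgebras is consumed, and you handle it correctly in both inclusions. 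What the paper's route buys is a description of the canonical equational theory $E[\V]$ itself; what yours buys is a shorter derivation that pushes all profinite content into the already-proved local statement.
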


\begin{proof}
We consider the case where $\D$ is variety of ordered algebras; for the unordered case, replace all inequations by equations.

For the ``if'' direction, we may assume that $E$ consists of a single profinite inequation $u\leq v$ over $\Sigma\in\A$ (since the class of all pseudovarieties forms a complete lattice under intersection). Then
\begin{enumerate} 
\item \emph{$\V[u\leq v]$ is closed under quotients:}  Let $A\in \V[u\leq v]$ and $q\colon A\epito B$ be a surjective $\MT$-homomorphism. Given a $\MT$-homomorphism $h\colon \MT \FSigma\to B$, choose a $\MT$-homomorphism $g\colon \MT\FSigma\to A$ with $q\o g = h$, using projectivity of the free $\MT$-algebra $\MT\FSigma$. Then $g^+(u)\leq g^+(v)$ because $A$ satisfies $u\leq v$, and thus $h^+(u) = q\o g^+(u)\leq q\o g^+(v)=h^+(v)$. It follows that $B$ satisfies $u\leq v$ and thus $B\in \V[u\leq v]$
 \item \emph{$\V[u\leq v]$ is closed under $\A$-generated subalgebras of finite products.} Let $A_i$ ($i\in I$) be a finite family of algebras in $\V[u\leq v]$, and suppose that $m\colon B\monoto \prod A_i$ is an $\A$-generated subalgebra of the product $\prod A_i$. For any $\MT$-homomorphism $h\colon \MT\FSigma\to B$ and $i\in I$, we have the composite $h_i := \pi_i\o m\o h$ where $\pi_i\colon \prod A_i\to A_i$ is the product projection. Note that $h_i^+ = \pi_i \o m\o h^+$ because this holds when precomposed with the dense map $\iota_\FSigma$. Since $A_i$ satisfies $u\leq v$, we have $h_i^+(u)\leq h_i^+(v)$ and thus $\pi_i \o m \o h^+(u)\leq \pi_i\o m\o h^+(u)$ for all $i$. Since the projections $\pi_i$ are jointly order-reflecting and $m$ is order-reflecting, it follows that $h^+(u)\leq h^+(v)$. Therefore $B$ satisfies $u\leq v$, i.e. $B\in \V[u\leq v]$
\end{enumerate}
\smallskip
 For the ``only if'' direction, 
let $\V$ be a pseudovariety of $\F$-algebras, let $\T :=\T^\V$ be the $\F$-theory corresponding to $\V$, and let $\rho_\T = 
(\rho^\T_\Sigma\colon \hatT\FSigma \epito P_\Sigma^\T)_{\Sigma\in\A}$ be the pro-$\F$ theory corresponding to $\T$. Let us first observe that every profinite inequation $u\leq v$ over $\Sigma\in \A$ with $\rho^\T_\Sigma(u)\leq \rho^\T_\Sigma(v)$ is satisfied by every algebra $A\in \V$. Indeed, let $h\colon \MT\FSigma\to A$ be a $\MT$-homomorphism, and factorize $h=m\o e$ with $e$ surjective and $m$ order-reflecting. Then $e\in \T_\Sigma$ because $\V$ is closed under $\A$-generated subalgebras. Thus $\rho^\T_\Sigma(u)\leq \rho^\T_\Sigma(v)$ implies $e^+(u) = e^*_\T\o \rho^\T_\Sigma(u)\leq e^*_\T\o \rho^\T_\Sigma(v) = e^+(v)$ and thus $h^+(u) = m\o e^+(u)\leq m\o e^+(v) = h^+(v)$, showing that $A$ satisfies the profinite inequation $u\leq v$.

We claim that 
$\V=\V[E[\V]]$. The inclusion $\seq$ is trivial. To prove $\supseteq$, let 
$A\in\V[E[\V]]$, i.e. $A$ satisfies every profinite inequation over $\A$ that the algebras in $\V$ satisfy. Since $A$ is $\A$-generated, there is a surjective 
$\MT$-homomorphism $e\colon T\FSigma\epito A$ with $\Sigma\in \A$. By the above argument, any profinite inequation $u\leq v$ over $\Sigma$ with $\rho^\T_\Sigma(u)\leq \rho^\T_\Sigma(v)$ is satisfied by every algebra in $\V$ and thus also by $A$. Hence $e^+(u)\leq e^+(v)$. This implies by the homomorphism theorem that $e^+$ factors through $\rho^\T_\Sigma$. In particular, $A$ is a quotient of $P_\Sigma^\T$. This implies that $A$ lies in $\V^{\T}$ by the definition of $\V^\T$. Since $\V^\T=\V$ by Theorem \ref{thm:reiterman}, we conclude that $A\in \V$.
\end{proof}
The special case $\A=\Set_f^S$ of the previous corollary has been proved in \cite{camu16}.

\section{Unary Presentations}\label{sec:unpres}
In this section we develop the notion of a \emph{unary presentation} of a $\MT$-algebra that later,  in Section \ref{sec:varietythm}, will serve as the key tool for defining the derivatives of a language. For motivation, consider a monoid $M$ and recall that an equivalence relation $\equiv$ on $M$ is a \emph{monoid congruence} if it is stable under multiplication, i.e. $x\equiv x'$ and $y\equiv y'$ implies $xy\equiv x'y'$. To show that $\equiv$ is a monoid congruence, it actually suffices to show that $\equiv$ is stable under left and right multiplication, i.e. $x\equiv x'$ implies $xy\equiv x'y$ and $yx\equiv yx'$ for all $y\in M$. This observation generalizes to algebras $A$ over a finitary $S$-sorted signature $\Gamma$: an $S$-sorted equivalence $\equiv$ on $A$ is a $\Gamma$-congruence iff it is stable under all \emph{elementary translations}, see \ref{app:elementarytranslations}. Analogously for ordered algebras $A$ and $S$-sorted preorders $\preceq$ on $A$.  

The concept of a unary presentation states this phenomenon for algebras for the monad $\MT$ on $\D^S$ (in lieu of just $\Gamma$-algebras) in categorical terms. First, by identifying congruences/stable preorders with their corresponding quotients (see \ref{app:congruences} and \ref{app:stablepreorders}),  we get the following concept:

\begin{definition}\label{def:tquotient}
Let $(A,\alpha)$ be a $\MT$-algebra. A quotient $e\colon A\epito B$ in $\D^S$ is called a \emph{$\MT$-quotient} of $(A,\alpha)$ if it carries a $\MT$-homomorphism, i.e. if there exists a $\MT$-algebra structure $(B,\beta)$  on $B$ such that $e\colon (A,\alpha)\epito(B,\beta)$ is a $\MT$-homomorphism.
\[
\xymatrix{
TA \ar[r]^\alpha \ar@{->>}[d]_{Te} & A \ar@{->>}[d]^e\\
TB \ar@{-->}[r]_\beta & B
}
\]
If moreover $(B,\beta)$ lies in $\F$, we say that $e$ is an \emph{$\F$-quotient} of $(A,\alpha)$.
\end{definition}

\begin{rem}
Clearly the $\MT$-algebra structure $\beta$ on $B$ is uniquely determined because $Te$ is an epimorphism. Moreover, to show that $e$ is a $\MT$-quotient it suffices to find a morphism $\beta$ making the above square commute; then $(B,\beta)$ is always a $\MT$-algebra, see \ref{app:quotients_of_talgs}.
\end{rem}

\begin{rem}
In view of our applications, it is helpful to rephrase the categorical definitions of this section in terms of congruences/stable preorders instead of quotients. Definition \ref{def:tquotient} has the following equivalent formulation:
\begin{enumerate}
\item Suppose that $\D$ is a variety of algebras and view $\D^S$ as a variety of $S$-sorted algebras, see \ref{app:varieties}. Let $A\in \D^S$. Given a $\MT$-algebra $(A,\alpha)$, we call a congruence $\equiv$  on $A$ a \emph{$\MT$-congruence} if there exists a surjective $\MT$-homomorphism $e\colon (A,\alpha)\epito (B,\beta)$ with kernel $\equiv$. If moreover $(B,\beta)\in \F$, we call $\equiv$ an \emph{$\F$-congruence}. In the case $\F =$ all finite $\MT$-algebras, this means precisely that $\equiv$ is a $\MT$-congruence of finite index, i.e. with only finitely many congruence classes. By \ref{app:congruences}, we have the bijective correspondences
\begin{align*}
\text{congruences on $A$ in $\D^S$} &\quad\leftrightarrow\quad \text{quotients of $A$ in $\D^S$}\\
\text{$\MT$-congruences on $(A,\alpha)$} &\quad\leftrightarrow\quad \text{$\MT$-quotients of $(A,\alpha)$}\\
\text{$\F$-congruences on $(A,\alpha)$} &\quad\leftrightarrow\quad \text{$\F$-quotients of $(A,\alpha)$}
\end{align*}
\item   Suppose that $\D$ is a variety of ordered algebras, and view $\D^S$ as a variety of $S$-sorted ordered algebras. Let $A\in \D^S$. Given a $\MT$-algebra $(A,\alpha)$, we call a stable preorder $\preceq$  on $A$ an \emph{$\MT$-stable preorder} if there exists a surjective $\MT$-homomorphism $e\colon (A,\alpha)\epito (B,\beta)$  with ordered kernel $\preceq$. If moreover $(B,\beta)\in \F$, we call $\preceq$ an \emph{$\F$-stable preorder}. In the case $\F =$ all finite $\MT$-algebras, this means precisely that $\preceq$ is $\MT$-stable preorder of finite index, i.e. the induced congruence $\equiv \,=\, \preceq \cap \succeq$ has finite index. By \ref{app:stablepreorders}, we have the bijective correspondences
\begin{align*}
\text{stable preorders on $A$ in $\D^S$} &\quad\leftrightarrow\quad \text{quotients of $A$ in $\D^S$}\\
\text{$\MT$-stable preorders on $(A,\alpha)$} &\quad\leftrightarrow\quad \text{$\MT$-quotients of $(A,\alpha)$}\\
\text{$\F$-stable preorders on $(A,\alpha)$} &\quad\leftrightarrow\quad \text{$\F$-quotients of $(A,\alpha)$}
\end{align*}
\end{enumerate}
\end{rem}
The characterization of congruences/stable preorders of $\Gamma$-algebras in terms of translations has the following categorical formulation:

\begin{definition}\label{def:uquotient} Let $A$ be an object of $\D^S$.
\begin{enumerate}
\item
By a \emph{unary operation} on $A$ is meant a morphism $u\colon A_s\to A_{t}$ in $\D$, where $s$ and $t$ are arbitrary sorts.
\item  Given a set $\U$ of unary operations on $A$, a quotient $e\colon A\epito B$ in $\D^S$ is called a \emph{$\U$-quotient} of $A$ if every unary operation $u\colon A_s\to A_{t}$ in 
$\U$  admits a 
lifting along $e$, i.e. a morphism $u_{B}\colon B_s\to B_{t}$ making the following square commutative:
\[
\xymatrix{
A_s \ar[r]^u \ar@{->>}[d]_e & A_{t} \ar@{->>}[d]^e \\
B_s \ar@{-->}[r]_{u_B} & B_{t}
}
\]
\end{enumerate}
\end{definition}

\begin{rem}
Again, let us state this concept in terms of congruences/stable preorders.
\begin{enumerate}
\item Let $\D$ be a variety of algebras. Given $A\in \D^S$ and a set $\U$ of unary operations on $A$, we call a congruence $\equiv$ on $A$ a \emph{$\U$-congruence} if it stable under all operations in $\U$, that is, $a\equiv_s a'$ implies $u(a)\equiv_{t} u(a')$ for all $u\colon A_s\to A_{t}$ in $\U$ and $a,a'\in A_s$. By the homomorphism theorem (see \ref{app:homtheorem}), we get the bijective correspondence
\begin{align*}
\text{$\U$-congruences on $A$} &\quad\leftrightarrow\quad \text{$\U$-quotients of $A$}.
\end{align*}
\item Let $\D$ be a variety of ordered algebras. Given $A\in \D^S$ and a set $\U$ of unary operations on $A$, we call a stable preorder $\preceq$ on $A$ a \emph{$\U$-stable preorder} if it stable under all operations in $\U$, that is, $a\preceq a'$ implies $u(a)\preceq_t u(a')$ for all $u\colon A_s\to A_{t}$ in $\U$ and $a,a'\in A_s$. By the homomorphism theorem for ordered algebras (see \ref{app:homtheorem}), we get the bijective correspondence
\begin{align*}
\text{$\U$-stable preorders on $A$} &\quad\leftrightarrow\quad \text{$\U$-quotients of $A$}.
\end{align*}
\end{enumerate}
\end{rem}
For $\Gamma$-algebras over an infinitary signature $\Gamma$, the characterization of congruences by elementary translations generally fails, but remains valid for equivalence relations $\equiv$ that are refinable to a $\Gamma$-congruence of finite index, see Examples \ref{ex:unpres_monads} and \ref{ex:refinable}. The concept of refinement also has a categorical formulation:

\begin{definition}
Given a $\MT$-algebra $(A,\alpha)$, a quotient $e\colon A\epito B$ in $\D^S$ is called \emph{$\F$-refinable} if it factorizes through some $\F$-quotient of $(A,\alpha)$.
\end{definition}

\begin{rem}
Again, in terms of congruences/stable preorders we get the following concepts:
\begin{enumerate}
\item Suppose that $\D$ is a variety of algebras. Given a $\MT$-algebra $(A,\alpha)$, a congruence on $A$ in $\D^S$ is called \emph{$\F$-refinable} if it contains an $\F$-congruence. By the homomorphism theorem for $\MT$-algebras (see Remark \ref{rem:tpressurj}), we have the bijective correspondence
\begin{align*}
\text{$\F$-refinable congruences on $A$} &\quad\leftrightarrow\quad \text{$\F$-refinable quotients of $A$}.
\end{align*}
\item   Suppose that $\D$ is a variety of ordered algebras. Given a $\MT$-algebra $(A,\alpha)$, a stable preorder on $A$ in $\D^S$ is called \emph{$\F$-refinable} if it contains an $\F$-stable preorder. By the homomorphism theorem for $\MT$-algebras (see Remark \ref{rem:tpressurj}), we have the bijective correspondence
\begin{align*}
\text{$\F$-refinable stable preorders on $A$} &\quad\leftrightarrow\quad \text{$\F$-refinable quotients of $A$}.
\end{align*}
\end{enumerate}
\end{rem}
This leads to following central definition of our paper, the notion of a \emph{unary presentation} of a $\MT$-algebra.
\begin{definition}\label{def:unpres}
By a \emph{unary presentation} of a $\MT$-algebra $(A,\alpha)$ is meant a set
$\U$ of unary operations on $A$ such that for any $\F$-refinable quotient $e\colon A\epito B$ in $\D^S$,
\[ \text{$e$ is a $\MT$-quotient} \quad\text{iff}\quad \text{$e$ is a $\U$-quotient}. \]
\end{definition}

\begin{rem}
Every $\F$-refinable $\MT$-quotient is an $\F$-quotient because $\F$ is closed under quotients. Therefore the last line of the previous definition could be replaced by
\[ \text{$e$ is an $\F$-quotient of $(A,\alpha)$} \quad\text{iff}\quad \text{$e$ is a $\U$-quotient of $(A,\alpha)$}. \]
\end{rem}

\begin{rem}\label{rem:unprescong}
Here is the equivalent version of Definition \ref{def:unpres} for congruences and stable preorders:
\begin{enumerate}
\item Suppose that $\D$ is a variety of algebras. Then a set $\U$ of unary operations on $A$ forms a unary presentation of $(A,\alpha)$ iff for any $\F$-refinable congruence $\equiv$ on $A$,
\[ \text{$\equiv$ is a $\MT$-congruence on $(A,\alpha)$} \quad\text{iff}\quad \text{$\equiv$ is a $\U$-congruence on $(A,\alpha)$}. \]
\item   Suppose that $\D$ is a variety of ordered algebras. Then a set $\U$ of unary operations on $A$ forms a unary presentation of $(A,\alpha)$ iff for any $\F$-refinable stable preorder $\preceq$ on $A$, 
\[ \text{$\preceq$ is a $\MT$-stable order on $(A,\alpha)$} \quad\text{iff}\quad \text{$\preceq$ is a $\U$-stable preorder on $(A,\alpha)$}. \]
\end{enumerate}
\end{rem}

\begin{rem}
In practice, if the monad $\MT$ represents algebras with finitary operations, the restriction to \emph{$\F$-refinable} quotients can often be dropped, i.e. the equivalence of the two properties in Definition \ref{def:unpres} holds for arbitrary quotients $e$. However, the restriction is usually necessary in the presence of infinitary operations. See the examples in the following section.
\end{rem}

\subsection{Examples of unary presentations}
To illustrate the above definitions, we derive unary presentations for monoids, finitary $\Gamma$-algebras, $\omega$-semigroups and, more generally, algebras for a monad on $\Set_f^S$.

\begin{example}[Monoids]\label{ex:unpres_monoids}
Let $\MT=\MT_\ast$ on $\Set$ and $\F =$ all finite monoids. Every monoid $M$ has a unary presentation given by the unary operations $x\mapsto yx$ and $x\mapsto xy$ on $M$, where $y$ ranges over all elements of $M$.  Indeed, an equivalence relation $\equiv$ on $M$ forms a monoid congruence iff it is stable under left and right multiplication, i.e. $x\equiv x'$ implies $yx\equiv yx'$ and $xy\equiv x'y$ for all $y\in M$. This holds even without the condition that $\equiv$ is $\F$-refinable, i.e. refinable to a monoid congruence of finite index.
\end{example}

\begin{example}[$\Gamma$-algebras]\label{ex:unpres_gammaalgs}
Generalizing the previous example, let $\Gamma$ be a finitary $S$-sorted signature, and let $\MT$ be the free-algebra monad on $\Set^S$ associated to some variety $\ACat$ of $\Gamma$-algebras; that is, $\ACat\cong \Alg{\MT}$. Put $\F =$ all finite algebras in $\ACat$. Every algebra $A\in \ACat$ has a unary presentation given by the set of elementary translations on $A$. Indeed, an $S$-sorted equivalence relation $\equiv$ on $A$ is a $\Gamma$-congruence (= $\MT$-congruence) iff it is stable under all elementary translations,  see \ref{app:elementarytranslations}. Again the condition that $\equiv$ is $\F$-refinable, i.e. refinable to a $\Gamma$-congruence of finite index, is not needed.

Likewise, if $\MT$ is the free-algebra monad on $\Pos^S$ associated to an $S$-sorted variety $\ACat$ of ordered algebras, the elementary translations form a unary presentation.
\end{example}

\begin{example}[$\omega$-semigroups]\label{ex:unpres_omegasem}
Let $\MT=\MT_\infty$ on $\Set^2$ and $\F =$ all finite $\omega$-semigroups. Every $\omega$-semigroup $A=(A_+,A_\omega)$ has a unary presentation $\U$ given by the operations
\begin{enumerate}[(1)]
\item $x\mapsto yx$ and $x\mapsto xy$ on $A_+$, 
\item $x\mapsto xz$ and $x\mapsto x^\omega=\pi(x,x,x,\ldots)$ from $A_+$ to $A_\omega$, and 
\item $z\mapsto yz$ on $A_\omega$,
\end{enumerate}
where $y\in A_+\cup \{1\}$ and $z\in A_\omega$. Here we view $1$ as a neutral element of the semigroup $A_+$, i.e. we put $1x = x1 =: x$ and $1z=:z$. 

To show that $\U$ is a unary presentation, suppose that $\equiv$ is an $\F$-refinable two-sorted equivalence relation on $A$, i.e. $\equiv$ contains an $\omega$-semigroup congruence $\mathord{\sim}\seq \mathord{\equiv}$ of finite index. We need to prove that $\equiv$ is an $\omega$-semigroup congruence (i.e. stable under finite and infinite multiplication) iff $\equiv$ is stable under the operations in $\U$.

\smallskip
\noindent($\To$) If $\equiv$ is an $\omega$-semigroup congruence, then $\equiv$ is stable under finite and infinite multiplication and thus in particular under the operations in $\U$.

\smallskip
\noindent($\Leftarrow$) Suppose that $\equiv$ is stable under the operations in $\U$. Stability under the operation (1), the first one in (2) and the one in (3) implies that $\equiv$ is stable under (mixed) binary products. It remains to show that $\equiv$ is stable under infinite products: given $v_i, w_i\in A_+$  with $v_i\equiv w_i$ ($i=0,1,2,\ldots$) we need to show that $\pi(v_0,v_1,v_2\ldots) \equiv \pi(w_0,w_1,w_2\ldots)$. This rests on the following combinatorial lemma, an instance of Ramsey's theorem:

\begin{lemma*}[see \cite{pinperrin04}, p.77]
Let $X$ be a set, $h\colon X^+\to E$ a function into a finite set $E$ and $x_0,x_1,x_2,\ldots$ a sequence in $X$. Then there exists a strictly increasing sequence $k_0<k_1<k_2<\ldots$ of natural numbers such that \[h(x_{k_i}\ldots x_{k_{i+1}-1})=h(x_{k_0}\ldots x_{k_{1}-1})\quad\text{for all $i\geq 0$}.\] 
\end{lemma*}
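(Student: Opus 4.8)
The plan is to reduce the statement to the infinite Ramsey theorem for $2$-element subsets (i.e.\ edge-colourings of the complete graph on the vertex set $\Nat$). First I would use the given sequence $x_0,x_1,x_2,\ldots$ together with the map $h$ to define a colouring of the unordered pairs of natural numbers. Concretely, for $i<j$ I put
\[ c(\{i,j\}) = h(x_i x_{i+1}\cdots x_{j-1}). \]
This is a well-defined element of the finite set $E$, since $x_i\cdots x_{j-1}\in X^+$ is a nonempty word whenever $i<j$. Thus the problem is recast as one of finding a homogeneous set for the finite colouring $c$.

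Next I would invoke the infinite Ramsey theorem: because $E$ is finite, the edge-colouring $c$ of the complete graph on the infinite vertex set $\Nat$ admits an infinite homogeneous set $K\seq\Nat$, that is, an infinite set on which $c$ is constant, say with value $\bar c\in E$. I then enumerate $K$ in strictly increasing order as $k_0<k_1<k_2<\cdots$.

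Finally I would read off the conclusion. For every $i\geq 0$ the pair $\{k_i,k_{i+1}\}$ lies in $K$, and hence
\[ h(x_{k_i}\cdots x_{k_{i+1}-1}) = c(\{k_i,k_{i+1}\}) = \bar c = c(\{k_0,k_1\}) = h(x_{k_0}\cdots x_{k_1-1}), \]
which is precisely the desired equality. In fact homogeneity of $K$ yields the stronger fact that $h(x_{k_i}\cdots x_{k_j-1})=\bar c$ for \emph{all} $i<j$, not merely for consecutive indices; only the consecutive case is needed here.

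As for difficulty, there is essentially no genuine obstacle: the entire combinatorial content is packaged inside the infinite Ramsey theorem, which I am free to cite. The only points requiring care are to ensure the colouring is defined on \emph{all} pairs $i<j$ (so that homogeneity over $K$ forces the equality of the consecutive blocks $x_{k_i}\cdots x_{k_{i+1}-1}$), and to note that the indexing of the word $x_i\cdots x_{j-1}$ matches the concatenation appearing in the infinite product $\pi$, so that the lemma can later be applied verbatim to the stability argument for $\equiv$.
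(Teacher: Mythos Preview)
Your proof is correct and is exactly the approach the paper has in mind: the paper does not spell out a proof but simply cites \cite{pinperrin04} and explicitly calls the lemma ``an instance of Ramsey's theorem,'' which is precisely the reduction you carry out via the edge-colouring $c(\{i,j\})=h(x_i\cdots x_{j-1})$.
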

Apply the lemma to the set $X=A_+\times A_+$, the semigroup morphism
\[h\colon (A_+\times A_+)^+ \to (A_+/\mathord{\sim})\times (A_+/\mathord{\sim})\]
mapping $(v,w)\in A_+\times A_+$ to $([v]_\sim, [w]_\sim)$,
and the sequence $(v_0,w_0), (v_1,w_1), (v_2,w_2), \ldots$ in $A_+\times A_+$. This gives a strictly increasing sequence $k_0<k_1<k_2<\ldots$ of natural numbers such that
\[ v_{k_i}\ldots v_{k_{i+1}-1} \sim v_{k_0}\ldots v_{k_{1}-1} \quad\text{and}\quad w_{k_i}\ldots w_{k_{i+1}-1} \sim w_{k_0}\ldots w_{k_{1}-1}\quad\]
for all $i\geq 0$. It follows that
\begin{align*}
\pi(v_0,v_1,v_2,\ldots) & = v_0\ldots v_{k_0-1}\cdot \pi(v_{k_0}\ldots v_{k_1-1}, v_{k_1}\ldots v_{k_2-1}, v_{k_2}\ldots v_{k_3-1},\ldots)\\
&\sim v_0\ldots v_{k_0-1}\cdot \pi(v_{k_0}\ldots v_{k_1-1}, v_{k_0}\ldots v_{k_0-1}, v_{k_0}\ldots v_{k_0-1},\ldots)\\
&= v_0\ldots v_{k_0-1} \o (v_{k_0}\ldots v_{k_1-1})^\omega 
\end{align*}
where the first step uses the associative laws, the second one that $\sim$ is an $\omega$-semigroup congruence, and the third one is the definition of $(\dash)^\omega$. In particular, since $\sim\,\seq\, \equiv$, we get
\[ \pi(v_0,v_1,v_2,\ldots) \equiv v_0\ldots v_{k_0-1} (v_{k_0}\ldots v_{k_1-1})^\omega \]
and analogously
\[ \pi(w_0,w_1,w_2,\ldots) \equiv w_0\ldots w_{k_0-1} (w_{k_0}\ldots w_{k_1-1})^\omega. \] 
But since $\equiv$ is stable under left and right multiplication and $\omega$-powers by assumption, $v_i\equiv w_i$ for all $i\geq 0$ implies
\[ v_0\ldots v_{k_0-1} (v_{k_0}\ldots v_{k_1-1})^\omega \equiv w_0\ldots w_{k_0-1} (w_{k_0}\ldots w_{k_1-1})^\omega. \]
Thus $\pi(v_0,v_1,v_2,\ldots)\equiv \pi(w_0,w_1,w_2,\ldots)$, as required.
\end{example}

\begin{example}[Monads on $\Set_f^S$]\label{ex:unpres_monads}
Let $\MT$ be an arbitrary monad on $\Set^S$ and $\F=$ all finite $\MT$-algebras. Then every $\MT$-algebra $(A,\alpha)$ has a generic unary presentation given as follows. Let $1_{s}\in\Set^S$ be
  the $S$-sorted set with one element in sort $s$ and otherwise empty; thus a 
  morphism $1_s\to A$ in $\Set^S$ chooses an element of 
 $A_s$. A 
 \emph{polynomial} over $A$ is a morphism $p\colon1_{t}\to
 T(A+1_{s})$ with $s,t\in S$, i.e.\ a ``term'' of output sort $t$ in a distinguished
 variable of sort $s$. Every polynomial $p\colon1_{t}\to
 T(A+1_{s})$ over $A$ induces an evaluation map $[p]\colon A_s \to A_{t}$ sending an element
 $x\colon 1_s\to A$ of $A_s$ to the element
 \[ 1_{t}\xra{p} T(A+1_s) \xra{T(A+x)} T(A+A) 
 \xra{T\left[\id,\id\right]} TA
 \xra{\alpha} A \]
of $A_{t}$. We claim that the set \[\U = \{\, [p]: \text{$p$ a polynomial over $A$}\,\}\]forms unary presentation of $(A,\alpha)$. 

Thus let $\equiv$ be an $\F$-refinable equivalence relation on $A$. We need to show that $\equiv$ forms a $\MT$-congruence iff it is stable under polynomial evaluation maps.

\smallskip
\noindent\textbf{($\To$)} Suppose that $\equiv$ is a $\MT$-congruence, i.e. there exists a (finite) quotient $e\colon (A,\alpha)\epito (B,\beta)$ in $\Alg{\MT}$ with kernel $\equiv$. Every polynomial $p\colon1_{t}\to
 T(A+1_{s})$ over $A$ yields a polynomial over $B$ given by
 \[ q\colon 1_{t} \xra{p} T(A+1_s)  \xra{T(e + 1_s)} T(B+1_s)  \]
Then the following square commutes (which then implies that $\equiv$ is a $\U$-congruence):
\[
\xymatrix{
A_s \ar[r]^{[p]} \ar@{->>}[d]_e & A_{t} \ar@{->>}[d]^e\\
B_s \ar[r]_{[q]} & B_{t}
}
\]
To see this, let $x\colon 1_s\to A$ be an element of $A_s$ and consider the diagram below:
\[
\xymatrix{
1_{t} \ar[r]^<<<<<p \ar@{=}[d] & T(A+1_s) \ar[rr]^{T(A+x)} \ar[d]^{T(e+1_s)} && T(A+A) \ar[rr]^{T[id,id]}  \ar[d]^{T(e+e)} && TA \ar[r]^\alpha \ar[d]^{Te} & A \ar[r]^e & B \ar@{=}[d]\\
1_{t} \ar[r]_<<<<<q & T(B+1_s) \ar[rr]_{T(A+e\o x)} && T(B+B) \ar[rr]_{T[id,id]} && TB \ar[rr]_\beta & & B
}
\]
The upper horizontal path is precisely the image of $x$ under $e\o [p]$, and the lower horizontal path is the image of $x$ under $[q]\o e$. Therefore, all we need to show is that the above diagram commutes. 
The leftmost square commutes by the definition of $q$, the rightmost square because $e$ is a $\MT$-homomorphism, and the two other squares commute trivially. Thus the diagram commutes, showing that $e\o[p](x) = [q]\o e(x)$.

\smallskip
\noindent\textbf{($\Leftarrow$)} Recall e.g. from \cite{manes76} that every monad $\MT$ on $\Set^S$ arises from a signature and equations. More precisely, there exists an $S$-sorted signature $\Gamma$ (possibly consisting of a \emph{proper class} of operations) and a class $E$ of equations between well-founded $\Gamma$-terms such that $\MT$ is the free-algebra monad associated to $\Gamma$ and $E$. In particular, every $\MT$-algebra $(A,\alpha)$ corresponds to a $(\Gamma,E)$-algebra with the same carrier $A$, and every $\MT$-congruence  corresponds to a $\Gamma$-congruence, i.e.~an equivalence relation stable under all $\Gamma$-operations. To avoid complicated notation, we restrict ourselves to the single-sorted case.

Now suppose that $\equiv$ is an equivalence relation on a $(\Gamma,E)$-algebra $A$ that contains a $\Gamma$-congruence of finite index, and that moreover $\equiv$ is stable under polynomial evaluation maps. We need to show that $\equiv$ is a $\Gamma$-congruence. Thus let $\gamma$ be an operation symbol in $\Gamma$ of (possibly infinite) arity $\kappa$, and let  $x=(x_i)_{i<\kappa}$ and $y=(y_i)_{i<\kappa}$ be two $\kappa$-tuples in $A$ with  $x_i\equiv y_i$ for all $i<\kappa$. It is our task to prove that $\gamma^{A}(x)\equiv \gamma^{A}(y)$.

Since $\equiv$ contains a $\Gamma$-congruence $\sim$ of finite index, we may assume that only finitely many distinct elements of $A$ appear in the tuples $x$ and $y$. Indeed, otherwise choose a fixed representative of every $\sim$-equivalence class and replace every component of $x$ and $y$ by its unique $\sim$-representative. The resulting tuples $x'$ and $y'$ contain only finitely many distinct elements of $A$ because $\sim$ has finite index, and one has $\gamma^{A}(x)\equiv \gamma^{A}(x')$ and $\gamma^{A}(y)\equiv \gamma^{A}(y')$ because $\sim$ is a $\Gamma$-congruence and $\mathord{\sim}\seq\mathord{\equiv}$. Therefore it suffices to work with $x'$ and $y'$ in lieu of $x$ and $y$.

  The finiteness assumption on $x$ and $y$ implies that, in particular, there are only finitely many pairs $(u,v)\in A\times A$ for which the set
\[ I_{u,v} = \{\, i<\kappa : x_i=u \text{ and } y_i=v \,\} \]
is nonempty; say the pairwise distinct pairs $(u_1,v_1), \ldots, (u_n, v_n)$. Then the sets $I_{u_j,v_j}$ form a partition of $\kappa$. For each $j=0,\ldots, n$, consider the $\kappa$-tuple $x^j = (x_i^j)_{i<\kappa}$ arising from $x$ by replacing all occurrences of the values $u_1,\ldots, u_j$ by $v_1,\ldots, v_j$, respectively. Formally, 
\[
x_i^j = \begin{cases}
v_k, & i\in I_{u_k,v_k} \text{ for some $k\in\{1,\ldots,j\}$}\\
x_i, & \text{otherwise}.
\end{cases}
\]
Note that $x^0=x$ and $x^n=y$. For each $j=1,\ldots n$, form the polynomial $p^j$ over $A$ given by \[p^j =\gamma^{T(A+1)}(z^j) \in T(A+1),\]  where the $\kappa$-tuple $z^j=(z^j_i)_{i<\kappa}$ in $T(A+1)$ emerges from $x$ by replacing all occurrences of $u_1,\ldots, u_{j-1}$ by $v_1,\ldots, v_{j-1}$, respectively, and $u_j$ by the unique element $\ast$ of $1$. Here we view $1$ and $TA$ as subsets of $T(A+1)$. Formally,
 \[
 z_i^j = \begin{cases}
 \ast, & i\in I_{u_j,v_j}\\
 x_i^{j-1}, & \text{otherwise.}
 \end{cases}
 \]
Substituting $u_j$ for $\ast$ in $z^j$ yields $x^{j-1}$, and thus the evaluation map $[p^j]\colon A\to A$ maps $u_j$ to $\gamma^{A}(x^{j-1})$. Similarly, substituting $v_j$ for $\ast$ yields $x^j$, and thus $[p^j]$ maps $v_j$ to $\gamma^{A}(x^j)$. Since $u_j\equiv v_j$ and $\equiv$ is stable under polynomial evaluation maps, we get 
\[\gamma^{A}(x^{j-1}) = [p^j](u_j) \equiv [p^j](v_j) = \gamma^{A}(x^j) \quad \text{for $j=1,\ldots, n$}.\] This implies
\[ \gamma^{A}(x) = \gamma^{A}(x^0)\equiv \gamma^{A}(x^1)\equiv\gamma^{A}(x^2)\equiv\cdots\equiv \gamma^{A}(x^n)=\gamma^{A}(y),\]
proving that $\equiv$ is a $\Gamma$-congruence.
\end{example}

\begin{rem}
 Note that in the case of monoids and $\omega$-semigroups, the polynomial presentation is rather unwieldy and much larger than the unary presentations given in Example \ref{ex:unpres_monoids} and \ref{ex:unpres_omegasem}. For example, for a monoid $M$ the polynomial presentation contains all unary operations
$x\mapsto y_0xy_1x\ldots xy_n$ on $M$ with $y_0,\ldots, y_n\in M$. 

As the example of $\omega$-semigroups shows, finding a small unary presentation is generally a nontrivial challenge for algebras with infinitary operations. On the other hand, for algebras with finitary operations one can always take the unary presentation given by the elementary translations (see Example \ref{ex:unpres_gammaalgs}), which is much smaller than the generic presentation by all polynomials.
\end{rem}

\begin{example}\label{ex:refinable}
 In contrast to Example \ref{ex:unpres_monads}, in general not every $\MT$-algebra admits a unary presentation if $\D\neq\Set$. To see this, consider the variety $\D=\Set_{c,d}$ of sets with two constants, i.e. algebras for the signature $\Gamma=\{c,d\}$ with two constant symbols. Let $\Set_{c\neq d}$ be the full reflective subcategory on the terminal object $1$ and all $X\in\Set_{c,d}$ with $c^X\neq d^X$. Then the inclusion $\Set_{c\neq d} \monoto \Set_{c,d}$ is a monadic right adjoint, and thus it induces a monad $\MT$ on $\D$ satisfying $\Alg{\MT}\cong \Set_{c\neq d}$. Explicitly, $\MT$ is given by
\[
TX = \begin{cases}
X, & c^X\neq d^X;\\
1, & c^X=d^X.
\end{cases}
\]
Put $\F$ = all finite $\MT$-algebras. For any $X\in \D$, a nontrivial equivalence relation $\equiv$ on $\under{X}$ is a $\MT$-congruence iff  $c^X\not\equiv d^X$, i.e. the corresponding quotient $X/\mathord{\equiv}$ in $\D$ lies in $\Set_{c\neq d}$.

\begin{claim}
  The $\MT$-algebra corresponding to $X=\{x,c,d\}$ in $\Set_{c\neq d}$ has no unary presentation.
\end{claim}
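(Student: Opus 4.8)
The plan is to pin down \emph{all} unary operations on $X$ and then exhibit one $\F$-refinable quotient that defeats every possible choice. Since $\D=\Set_{c,d}$ is single-sorted, a unary operation on $X$ is just a morphism $u\colon X\to X$ in $\D$, i.e.\ a map $\{x,c,d\}\to\{x,c,d\}$ fixing the two distinct constants $c$ and $d$. There are exactly three such maps: the identity $\id_X$; the map $u_c$ with $u_c(x)=c$, $u_c(c)=c$, $u_c(d)=d$; and the map $u_d$ with $u_d(x)=d$, $u_d(c)=c$, $u_d(d)=d$. Hence any set $\U$ of unary operations on $X$, and in particular any unary presentation of $(X,\alpha)$, satisfies $\U\seq\{\,\id_X,u_c,u_d\,\}$.

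Next, consider the quotient $e\colon X\epito B$ in $\D$ that identifies $c$ with $d$ but keeps $x$ apart, so $B$ is the two-element object of $\D$ with $c^B=d^B$. As $B$ has two elements with $c^B=d^B$, it is neither terminal nor does it satisfy $c^B\neq d^B$, so $B\notin\Set_{c\neq d}\cong\Alg{\MT}$; thus $e$ carries no $\MT$-algebra structure on its codomain and is \emph{not} a $\MT$-quotient of $(X,\alpha)$. On the other hand, $e$ \emph{is} $\F$-refinable, because it factorizes as $e=e\o\id_X$ through $\id_X\colon X\epito X$, and $\id_X$ is an $\F$-quotient of $(X,\alpha)$: the algebra $(X,\alpha)$ is finite and lies in $\F$, since $X\in\Set_{c\neq d}$ (indeed $c^X=c\neq d=d^X$). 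Therefore $e$ falls within the scope of the biconditional in Definition \ref{def:unpres}.

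The crucial observation is that every one of $\id_X,u_c,u_d$ nevertheless lifts along $e$. For $\id_X$ this is trivial. For $u_c$ the composite $e\o u_c\colon X\to B$ is constant with value $e(c)=c^B=d^B$, so the constant map $v\colon B\to B$ at $c^B$ is a $\D$-morphism (it preserves $c^B$ and $d^B$ since these coincide) and satisfies $v\o e=e\o u_c$; hence $v$ is a lifting of $u_c$ along $e$. The case of $u_d$ is identical, using $e(d)=d^B=c^B$. Consequently, for \emph{every} $\U\seq\{\,\id_X,u_c,u_d\,\}$ the $\F$-refinable quotient $e$ is a $\U$-quotient, while it is not a $\MT$-quotient; this violates the defining biconditional, so no such $\U$ is a unary presentation, and $(X,\alpha)$ has no unary presentation. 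I expect no real obstacle here beyond spotting the quotient $e$; the only points needing a little care are checking that $e$ is genuinely $\F$-refinable (so that it must be tested at all) and that each of the three unary operations lifts along it.
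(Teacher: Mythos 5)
Your proof is correct, and it takes a genuinely different route from the paper's. The paper argues by cases on $\U$: if $\U$ contains a non-identity operation $u$ (necessarily $u(x)\in\{c,d\}$), it exhibits the congruence generated by $x\equiv c$, which \emph{is} a $\MT$-congruence but fails to be a $\U$-congruence since $u(x)\not\equiv u(c)$; if $\U$ contains only identities, it uses the congruence generated by $c\equiv d$, which is trivially a $\U$-congruence but not a $\MT$-congruence. So the paper's counterexample, and even which direction of the biconditional breaks, depends on $\U$. You instead enumerate all three unary operations on $X$ and observe that the single quotient identifying $c$ with $d$ is a $\U$-quotient for \emph{every} possible $\U$ --- the key point being that $u_c$ and $u_d$ still lift, since the induced map on the quotient is the constant at the merged class $c^B=d^B$, which is a legitimate $\Set_{c,d}$-morphism precisely because the constants have been identified --- while it is never a $\MT$-quotient. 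This makes the argument uniform (the failure is always in the direction ``$\U$-quotient $\Rightarrow$ $\MT$-quotient'') and in effect shows that the paper's case split is not needed. You are also more careful than the paper in checking $\F$-refinability of the offending quotient (via factorization through $\id_X$), which the paper leaves implicit; this is worth keeping, since Definition \ref{def:unpres} only quantifies over $\F$-refinable quotients.
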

\begin{proof}
Suppose that, on the contrary, $X$ has a unary presentation $\U$. We consider two cases:
\begin{enumerate}
\item $\U$ contains a non-identity morphism $u$. Then  $u(x)\in\{c,d\}$, and we assume w.l.o.g. that $u(x)=d$. Let $\equiv$ be the smallest equivalence relation on $X$ with $x\equiv c$. Then $\equiv$ is a $\MT$-congruence because $c\not\equiv d$. On the other hand, $\equiv$ is not a $\U$-congruence: we have $x\equiv c$ but $u(x)=d\not\equiv c=u(c)$, where the last equation holds because $u$ is a morphism of $\Set_{c,d}$. This contradicts our assumption that $\U$ is a unary presentation of $X$.
\item $\U$ is contains only identity morphisms. Let $\equiv'$ be the smallest equivalence relation on $X$ with $c\equiv' d$. Then $\equiv'$ is trivially a $\U$-congruence. On the other hand, $\equiv'$ is not a $\MT$-congruence because $c\equiv' d$. This contradicts our assumption that $\U$ is a unary presentation.\qedhere
\end{enumerate}
\end{proof}

\end{example}

\begin{example}
We have seen that in the two finitary Examples \ref{ex:unpres_monoids} and \ref{ex:unpres_gammaalgs} the restriction to $\F$-refinable quotients was not needed. However, for algebras with infinitary operations, this restriction is crucial. To demonstrate this, we devise a monad $\MT$ on $\Set$ and a $\MT$-algebra $A$ such that for the polynomial presentation $\U$, not every $\U$-congruence is a $\MT$-congruence.

Consider the free-algebra monad $\MT$ on $\Set$  associated to the signature $\Gamma$ with a single $\omega$-ary operation symbol $\gamma$ and no equations. The free $\Gamma$-algebra $\MT X$ generated by a set $X$ is carried by the set of well-founded $\Gamma$-trees (i.e.~$\omega$-branching trees containing no infinite path) with leaves labeled by elements of $X$. We call a $\Gamma$-tree $t$ \emph{bounded} if, for some $\ell\in \Nat$, every path in $t$ has length at most $\ell$. Now fix $X\neq \emptyset$ and consider the equivalence relation $\equiv$ on $TX$ with $s\equiv t$ iff either both $s$ and $t$ are bounded, or both $s$ and $t$ are unbounded. 

Let $\U$ be the polynomial presentation of the free $\Gamma$-algebra $A=\MT X$. We claim that the equivalence relation $\equiv$ is a $\U$-congruence, i.e. for $s\equiv t$ in $TX$ and any polynomial $p\in T(TX+\{\ast\})$ over $TX$, we have $[p](s) \equiv [p](t)$. Here $[p](u)$ ($u\in TX$) is the tree arising from $p$ by substituting the tree $u$ for every $\ast$-labeled leaf. There are three cases to consider:
\begin{enumerate}
\item $p$ is unbounded. Then both $[p](s)$ and $[p](t)$ are unbounded, and therefore  $[p](s) \equiv [p](t)$.
\item $p$ is bounded and both $s$ and $t$ are bounded. Then both $[p](s)$ and $[p](t)$ are bounded, and therefore $[p](s) \equiv [p](t)$.
\item $p$ is bounded and both $s$ and $t$ are unbounded. If the variable $\ast$ occurs in $p$ at least once, then both $[p](s)$ and $[p](t)$ are unbounded and therefore $[p](s) \equiv [p](t)$. Otherwise we have $[p](s)=p=[p](t)$ and thus trivially $[p](s)\equiv [p](t)$.
\end{enumerate}
Thus $\equiv$ is a $\U$-congruence. On the other hand, $\equiv$ is not a $\MT$-congruence (= $\Gamma$-congruence) on $\MT  X$. To see this, choose for each $n\geq 0$ a tree $t_n\in TX$ of height $n$. Then $t_0\equiv t_n$ for all $n\geq 0$ because $t_n$ is bounded. But
\[
\gamma^{TX}(t_0,t_0,t_0,\ldots) \not\equiv \gamma^{TX}(t_0,t_1,t_2,\ldots)
\]
because the tree $\gamma^{TX}(t_0,t_0,t_0,\ldots)$ is bounded and the tree $\gamma^{TX}(t_0,t_1,t_2,\ldots)$ is unbounded. Thus $\equiv$ not stable under the operation $\gamma^{TX}$, showing that $\equiv$ is not a $\Gamma$-congruence.
\end{example}

\subsection{Properties of unary presentations}
Next, we establish some important technical properties of unary presentations.

\begin{lemma}\label{lem:homo-preserves-unary}
  Let $D\in \D_f^S$ and let $\U$ be a unary presentation of $\MT D$. Suppose that $e\colon 
  \MT D\epito A$ 
  and $k\colon A\epito B$ are surjective $\MT$-homomorphisms with $A,B\in 
  \F$. Then the following diagram commutes for all $u\colon (T D)_s\to (T D)_{t}$ in $\U$, where $u_A$ and $u_B$ are the liftings of $u$ along $e$ 
  and $k\o e$, respectively.
    \[
      \xymatrix{
        (T D)_s \ar@{->>}[r]^{e} \ar[d]_{u} & A_s \ar@{->>}[r]^{k} 
        \ar[d]^{u_A}& B_s \ar[d]^{u_B}\\
        (T D)_{t} \ar@{->>}[r]_{e} & A_{t} \ar@{->>}[r]_{k} & 
        B_{t} 
      }
    \] 
\end{lemma}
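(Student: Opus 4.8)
The plan is to deduce the commutativity of the right-hand square by cancelling the epimorphism $e$ on the right. The starting point is Definition~\ref{def:unpres} applied twice. The homomorphism $e\colon \MT D\epito A$ is an $\F$-quotient of $\MT D$ (it carries the $\MT$-homomorphism $e$ and $A\in\F$), hence in particular an $\F$-refinable $\MT$-quotient; since $\U$ is a unary presentation of $\MT D$, it follows that $e$ is a $\U$-quotient, so each $u\colon (T D)_s\to (T D)_t$ in $\U$ lifts along $e$ to a (unique, as $e$ is surjective) morphism $u_A\colon A_s\to A_t$ with $u_A\o e = e\o u$. The same reasoning applied to the surjective $\MT$-homomorphism $k\o e\colon \MT D\epito B$ (again $B\in\F$) shows that $k\o e$ is a $\U$-quotient, so $u$ lifts along $k\o e$ to the morphism $u_B\colon B_s\to B_t$ with $u_B\o k\o e = k\o e\o u$. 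These are precisely the liftings $u_A$ and $u_B$ named in the statement, and they are uniquely determined because $e$ and $k\o e$ are surjective.

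Next I would chase the asserted identity $u_B\o k = k\o u_A$ after precomposition with the surjection $e\colon (T D)_s\epito A_s$:
\[
u_B\o k\o e \;=\; (k\o e)\o u \;=\; k\o(e\o u) \;=\; k\o(u_A\o e) \;=\; (k\o u_A)\o e,
\]
where the outer equalities are the two lifting equations above and the middle one is the left square of the diagram (which commutes by the definition of $u_A$). Since $e$ is surjective it is an epimorphism in $\D$ — surjective homomorphisms of algebras, resp. ordered algebras, are always epic — so we may cancel $e$ to obtain $u_B\o k = k\o u_A$, i.e.\ the right-hand square commutes.

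I do not expect a genuine obstacle here; the proof is a short diagram chase. The only points that need care are (i) checking that the morphisms produced by the unary-presentation property for the $\U$-quotients $e$ and $k\o e$ are exactly the $u_A$ and $u_B$ appearing in the statement, so that the outer rectangle of the displayed diagram genuinely commutes, and (ii) recording that surjections in $\D$ may be cancelled on the right. The argument is word-for-word the same in the ordered and the unordered settings.
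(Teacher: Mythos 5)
Your proof is correct and is essentially the paper's own argument: the left-hand and outer squares commute by the definition of the liftings $u_A$ and $u_B$, and the right-hand square then commutes because it does after precomposition with the epimorphism $e$. The only difference is that you additionally justify the existence of $u_A$ and $u_B$ via the unary-presentation property, which the statement of the lemma already presupposes.
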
 

\begin{proof}
The left-hand and the outward square commute by definition of $u_A$ and $u_B$. Thus the right-hand square commutes, as it does when precomposed with the epimorphism $e$.
\end{proof}

\begin{lemma}\label{lem:subdirectproduct}
Let $A\in \D^S$ and $\U$ a set of unary operations on $A$. Then for any two $\U$-quotients $e_i\colon A\epito A_i$ ($i=0,1$) the following statements hold:
\begin{enumerate}
\item The subdirect product of $e_0$ and $e_1$ in $\D^S$ (see \ref{app:subdirectproducts}) is a $\U$-quotient.
\item The pushout of $e_0$ and $e_1$ in $\D^S$ is a $\U$-quotient.
\end{enumerate}
\end{lemma}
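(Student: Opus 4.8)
The plan is to reduce both statements to facts about the factorization system and about pushouts \emph{in $\D$}, exploiting that $\D^S$ is the $S$-fold power of $\D$, so that its (surjective, injective/order-reflecting) factorizations and its pushouts are computed sortwise.

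For part (1), I would realize the subdirect product of $e_0$ and $e_1$ as the factorization $A\xra{e} A'\xra{m} A_0\times A_1$ of $\ang{e_0,e_1}$ into a surjection $e$ followed by an injective/order-reflecting $m$; since each $\pi_i\o m\o e=e_i$ is surjective, this $A'$ is a subdirect product and $e$ is a quotient of $A$. Now fix a unary operation $u\colon A_s\to A_t$ in $\U$, with liftings $u_{A_0}\colon A_{0,s}\to A_{0,t}$ and $u_{A_1}\colon A_{1,s}\to A_{1,t}$ along $e_0$ and $e_1$, and form the product morphism $u_{A_0}\times u_{A_1}$ on $(A_0\times A_1)$. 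Using $u_{A_i}\o e_{i,s}=e_{i,t}\o u$ one computes
\[
(u_{A_0}\times u_{A_1})\o m_s\o e_s \;=\; \ang{u_{A_0}\o e_{0,s},\,u_{A_1}\o e_{1,s}} \;=\; \ang{e_{0,t},e_{1,t}}\o u \;=\; m_t\o e_t\o u,
\]
so the square with left leg $e_s$, right leg $m_t$, top $e_t\o u$ and bottom $(u_{A_0}\times u_{A_1})\o m_s$ commutes; diagonal fill-in (see \ref{app:factsystem}) then yields a unique $u_{A'}\colon A'_s\to A'_t$ with $u_{A'}\o e_s=e_t\o u$ (and $m_t\o u_{A'}=(u_{A_0}\times u_{A_1})\o m_s$). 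The first equation says precisely that $u$ lifts along $e$, so $e$ is a $\U$-quotient.

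For part (2), let $p_0\colon A_0\to P$ and $p_1\colon A_1\to P$ be the pushout cocone of $e_0,e_1$ in $\D^S$ and put $e:=p_0\o e_0=p_1\o e_1$; since $p_1$, being the pushout of the surjection $e_0$, is surjective, so is $e=p_1\o e_1$, hence $e$ is a quotient of $A$, and $P_s$ is the pushout of $e_{0,s}$ and $e_{1,s}$ in $\D$. Given $u\colon A_s\to A_t$ in $\U$ with liftings $u_{A_0},u_{A_1}$ as above, the two maps $p_{0,t}\o u_{A_0}\colon A_{0,s}\to P_t$ and $p_{1,t}\o u_{A_1}\colon A_{1,s}\to P_t$ agree on $A_s$, since $p_{0,t}\o u_{A_0}\o e_{0,s}=p_{0,t}\o e_{0,t}\o u=e_t\o u=p_{1,t}\o e_{1,t}\o u=p_{1,t}\o u_{A_1}\o e_{1,s}$; so the universal property of the pushout $P_s$ gives a unique $u_P\colon P_s\to P_t$ with $u_P\o p_{0,s}=p_{0,t}\o u_{A_0}$ and $u_P\o p_{1,s}=p_{1,t}\o u_{A_1}$. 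Then $u_P\o e_s=u_P\o p_{0,s}\o e_{0,s}=p_{0,t}\o e_{0,t}\o u=e_t\o u$, so $u_P$ is the desired lifting of $u$ along $e$, and $e$ is a $\U$-quotient.

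I do not expect a genuine obstacle here: once one has observed that factorizations and pushouts in $\D^S$ are inherited sortwise from $\D$, part (1) is just diagonal fill-in and part (2) is just the pushout universal property; the only thing demanding care is the bookkeeping of the sort indices $s,t$. Alternatively, both claims can be argued via congruences, resp.\ stable preorders (see \ref{app:subdirectproducts}): the kernel of the subdirect product is $\ker e_0\cap\ker e_1$, whose stability under $\U$ is immediate, while the pushout corresponds to the join $\ker e_0\vee\ker e_1$, whose stability follows because any related pair is linked by a finite zig-zag of $\ker e_0$- and $\ker e_1$-steps, each of which every $u\in\U$ preserves.
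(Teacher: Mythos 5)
Your proposal is correct and follows essentially the same route as the paper: for the subdirect product you verify that $(u_{A_0}\times u_{A_1})\o m\o e = m\o e\o u$ (by postcomposing with the product projections, which is what your explicit pairing computation amounts to) and then apply diagonal fill-in, and for the pushout you check that $p_0\o u_{A_0}$ and $p_1\o u_{A_1}$ form a cocone over the span and invoke the pushout's universal property. The only cosmetic difference is that you track the sort indices and the sortwise computation of pushouts more explicitly than the paper does.
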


\begin{proof}
\begin{enumerate}
\item Let $e\colon A\epito B$ be the subdirect product of $e_0$ and $e_1$. Thus $\langle e_0, e_1\rangle = m\o e$ for some injective/order-reflecting morphism $m$.
Suppose that $u\colon A_s\to A_t$ is a unary operation in $\U$. We need to show that $u$ has a lifting along $e$. Consider the following diagram, where $u_{A_i}$ is the lifting of $u$ along $e_i$, using that $e_i$ is a $\U_\Sigma$-quotient.
\[
\xymatrix{
A_s \ar[r]^{u} \ar@{->>}@/_8ex/[ddd]_{e_i}\ar@{->>}[d]_e & A_t \ar@{->>}[d]^e \ar@{->>}@/^8ex/[ddd]^{e_i}\\
B_s \ar@{>->}[d]_{m} & B_t \ar@{>->}[d]^{m}\\
(A_0\times A_1)_s \ar[d]_{\pi_i} \ar[r]^{u_{A_0}\times u_{A_1}} & (A_0\times A_1)_t \ar[d]^{\pi_i} \\
(A_i)_s \ar[r]_{u_{A_i}} & (A_i)_t
}
\] 
All parts except possibly the upper rectangle commute. This implies that the upper rectangle also commutes, as is does when postcomposed with the product projections $\pi_i$. Diagonal fill-in yields a morphism $u_B\colon B_s\to B_t$ with $e\o  u = u_B\o e$, i.e. a lifting of $u$ along $e$. This proves that $e$ is a $\U$-quotient.
\item Let $p=p_0\o e_0 = p_1\o e_1\colon A\epito P$ be the pushout of $e_0$ and $e_1$, and let $u\colon A_s\to A_t$ in $\U$. We need to show that $u$ lifts along $p$. The proof is illustrated by the diagram below, where $u_{A_i}$ is the lifting of $u$ along the $\U$-quotient $e_i$.
\[
\xymatrix{
&& A_{t} \ar@{->>}[ddll]_{e_0} \ar@{->>}[ddrr]^{e_1} 
&&\\
&& A_s \ar@{->>}[dd]^{p} \ar[u]_{u} \ar@{->>}[dl]_{
e_0} 
\ar@{->>}[dr]^{e_1} 
&&\\
A_{0,t} \ar@{->>}[ddrr]_{p_0} & A_{0,s} \ar[l]_{u_{A_0}} \ar@{->>}[dr]_{p_0}  & & 
A_{1,s} 
\ar[r]^{u_{A_1}} \ar@{->>}[dl]^{p_1} & A_{1,t} \ar@{->>}[ddll]^{p_1}\\
&& P_s \ar@{-->}[d]^{u_P} &&\\
&& P_{t} &&
}
\]
 Then the morphisms $p_0\o u_{A_0}$ and $p_1\o u_{A_1}$ merge $e_0$ and $e_1$. Therefore, by the universal property of the pushout $p$, there exists a unique $u_P\colon P_s\to P_t$ with $p_0\o u_{A_0} = u_P\o p_0$ and $p_1\o u_{A_1} = u_P \o p_1$. Thus all parts of the above diagram commute. This implies $p\o u = u_P\o p$, i.e. $u$ lifts along $p$.
\end{enumerate}
\end{proof}

\subsection{Unary presentations for $\hatT$-algebras}

In this subsection, we show that any unary presentation of a free $\MT$-algebra $\MT D$ ($D\in \D_f^S$) can be extended to a presentation of the free $\hatT$-algebra $\hatT D$. First, we characterize $\F$-refinable quotients of $\MT D$ in terms of $\hatT$:

\begin{lemma}\label{lem:extensiblevsrefinable}
Let $D\in\D_f^S$. A quotient $e\colon T D\epito A$ in $\D^S$ is $\F$-refinable iff it is extensible to a quotient of $\hatt D$ in $\hatD^S$, i.e. $e = V\hat e \o \iota_ D$ for some $\hat e\colon \hatt D\epito A$ in $\hatD^S$.
\[
\xymatrix{
TD \ar[dr]_{e} \ar[r]^{\iota_D} \ar[r] & V\hatt D \ar[d]^{V\hat e}\\
& A
}
\]
\end{lemma}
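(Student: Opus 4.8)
The plan is to prove both implications directly from the explicit description of $\hatt D$ recorded in Remark~\ref{rem:hattconst}: $\hatt D$ is the cofiltered limit of the diagram $\Quo(\MT D,\F)\to\hatD^S$ that sends a quotient $q\colon \MT D\epito(C,\gamma)$ to $C$, with \emph{surjective} limit projections $q^+\colon \hatt D\epito C$ (surjectivity by Lemma~\ref{lem:projection-is-surjective}), and where each projection satisfies $Vq^+\o\iota_D = q$ by the defining property of the dense map $\iota_D$ (Remark~\ref{rem:basic}.\ref{rem:iota}). I would first note that, since every $\F$-quotient has finite codomain and $\D$ is locally finite, the codomain $A$ of an $\F$-refinable quotient is automatically finite; accordingly, on the right-hand side of the equivalence $A$ is viewed as a finite, discretely topologised object of $\hatD^S$.

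For the ``only if'' direction, suppose $e$ factorises as $e = r\o q$ with $q\colon \MT D\epito(B,\beta)$ an $\F$-quotient of $\MT D$ and $r$ a morphism in $\D^S$ (necessarily a surjection, since $e$ is). Consider the limit projection $q^+\colon \hatt D\epito B$, for which $Vq^+\o\iota_D = q$. Regarding the finite objects $B$, $A$ and the map $r$ as living in $\hatD^S$ (discrete topology), I would set $\hat e := r\o q^+\colon \hatt D\epito A$, which is a surjection in $\hatD^S$; then $V\hat e\o\iota_D = r\o Vq^+\o\iota_D = r\o q = e$, so $e$ is extensible.

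For the ``if'' direction, suppose $e = V\hat e\o\iota_D$ for a surjection $\hat e\colon \hatt D\epito A$ in $\hatD^S$ with $A$ finite. Since $A$ is finitely copresentable in $\hatD^S$ (Remark~\ref{rem:hatdlfcp}) and $\hatt D$ is the cofiltered limit above, $\hat e$ factors through one of the projections: $\hat e = f\o q^+$ for some quotient $q\colon \MT D\epito(C,\gamma)$ in $\Quo(\MT D,\F)$ and some $f\colon C\to A$ in $\hatD^S$. As $\hat e$ and $q^+$ are surjective, so is $f$. Then $e = V\hat e\o\iota_D = f\o Vq^+\o\iota_D = f\o q$, which exhibits $e$ as a factorisation through the $\F$-quotient $q$ of $\MT D$; hence $e$ is $\F$-refinable.

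I do not expect a serious obstacle here: both directions are essentially bookkeeping with universal properties already established in Section~\ref{sec:setting}. The two points needing a little care are (a) justifying that the codomain $A$ may be treated as a discrete object of $\hatD^S$, i.e.\ that it is finite — which is forced on the $\F$-refinable side and is part of the hypothesis on the other side; and (b) invoking finite copresentability of $A$ in $\hatD^S$ in the precise form of factoring a morphism through a projection of a cofiltered limit, for which the cofilteredness of $\Quo(\MT D,\F)$ together with Remark~\ref{rem:hatdlfcp} is exactly what is needed.
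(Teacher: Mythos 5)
Your proof is correct and follows essentially the same route as the paper's: the forward direction composes the given factorization with the limit projection $q^+$ using $Vq^+\o\iota_D=q$, and the converse uses finite copresentability of $A$ in $\hatD^S$ to factor $\hat e$ through the cofiltered limit cone defining $\hatt D$. Your added remarks on the finiteness of $A$ and the surjectivity of the factoring map are harmless explicit versions of what the paper leaves implicit.
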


\begin{proof}
Suppose the $e$ is $\F$-refinable, i.e. $e = p\o q$ for some quotient $q\colon \MT D\epito B$ in $\Alg{\MT}$ with $B\in \F$ and some morphism $p\colon B\epito A$ in $\D^S$. By Remark \ref{rem:basic}.\ref{rem:iota} it follows that 
\[ e = p\o q = p\o Vq^+\o \iota_ D, \]
i.e. $e$ is extensible to the quotient $p\o q^+\colon \hatt D\epito A$ in $\hatD^S$.

Conversely, suppose that $e$ is extensible to a quotient $\hat e\colon \hatt D\epito A$ in $\hatD^S$. Since $A$ is finitely copresentable in $\hatD^S$, see Remark \ref{rem:hatdlfcp}, the morphism $\hat e$ factors through the cofiltered limit cone defining $\hatt D$. That is, there is a finite quotient $q\colon \MT D\epito B$ in $\Alg{\MT}$ with $B\in\F$ and a morphism $p\colon B\epito A$ with $\hat e = p\o q^+$. Therefore
\[ e = V\hat e\o \iota_ D = p\o Vq^+\o \iota_ D = p\o q, \]
showing that $e$ is $\F$-refinable.
\end{proof}
Next, we show that given a unary presentation of $\MT D$, the unary operations extend to $\hatT D$:

\begin{lemma}\label{lem:extend_fi}
  Let $D\in \D_f^S$ and let $\U$ be a unary presentation of $\MT D$. Then every unary operation 
  $u\colon (T D)_s\to (T D)_{t}$ in $\U$ 
 extends uniquely to a morphism $\hat u\colon (\hatt D)_s \to 
  (\hatt D)_{t}$ in 
  $\hatD$ making the following square commute.
  \[
    \xymatrix{
    (T D)_s \ar[d]_{\iota_ D} \ar[r]^{u} & (T D)_{t}
    \ar[d]^{\iota_ D} \\
    V(\hatt D)_s \ar[r]_{V\hat u} & V(\hatt D)_{t}  
    }
  \]
\end{lemma}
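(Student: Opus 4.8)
The plan is to construct $\hat u$ as the morphism between two cofiltered limits that is induced by a compatible family of liftings of $u$ along the finite quotients of $\MT D$, and to read off both the commutativity of the square and the uniqueness of $\hat u$ from the density of $\iota_D$.

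Uniqueness is immediate. If $\hat u$ and $\hat u'$ both make the square commute, then $V\hat u\o(\iota_D)_s=(\iota_D)_t\o u=V\hat u'\o(\iota_D)_s$, and since $(\iota_D)_s$ has dense image and $(\hatt D)_t$ is Hausdorff (Remark \ref{rem:basic}.\ref{rem:iota}), this forces $V\hat u=V\hat u'$, hence $\hat u=\hat u'$ because $V$ is faithful.

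For existence I would use the description of $\hatt D$ from Remark \ref{rem:hattconst}.2 as the cofiltered limit of the diagram $P\colon\Quo(\MT D,\F)\to\hatD^S$, $(e\colon\MT D\epito C)\mapsto C$, with surjective limit projections $e^+\colon\hatt D\epito C$. Since limits in $\hatD^S$ are computed sortwise, $(\hatt D)_s=\limit P_s$ and $(\hatt D)_t=\limit P_t$, where $P_s,P_t\colon\Quo(\MT D,\F)\to\hatD$ are the $s$- and $t$-components of $P$, with limit projections $(e^+)_s$ and $(e^+)_t$ respectively. Each $e\in\Quo(\MT D,\F)$ is an $\F$-quotient of $\MT D$ and hence, being $\F$-refinable, a $\U$-quotient by Definition \ref{def:unpres}; thus $u$ lifts along $e$ to a morphism $u_C\colon C_s\to C_t$ with $u_C\o e_s=e_t\o u$, and $u_C$ is unique since $e_s$ is epic. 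I then claim that $(u_C)_{e}$ is a natural transformation $P_s\Ra P_t$: for a connecting morphism $g\colon C\to C'$ of $\Quo(\MT D,\F)$, i.e.\ a surjective $\MT$-homomorphism with $g\o e=e'$, the right-hand square of Lemma \ref{lem:homo-preserves-unary} (applied with $k:=g$, so that $u_{C'}$ is the lifting of $u$ along $e'=g\o e$) gives precisely the naturality square $g_t\o u_C=u_{C'}\o g_s$. Hence there is a unique morphism $\hat u\colon(\hatt D)_s\to(\hatt D)_t$ in $\hatD$ with $(e^+)_t\o\hat u=u_C\o(e^+)_s$ for all $e$.

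Finally I would verify the square $V\hat u\o(\iota_D)_s=(\iota_D)_t\o u$. Since $V$ preserves limits, $V(\hatt D)_t=\limit VP_t$ with jointly monic projections $V(e^+)_t$, so it suffices to check the identity after postcomposing with each $V(e^+)_t$; using $Ve^+\o\iota_D=e$ componentwise (Remark \ref{rem:basic}.\ref{rem:iota}) together with $u_C\o e_s=e_t\o u$, both composites reduce to $e_t\o u$, which finishes the argument. I do not expect a genuine obstacle here: the only delicate points are keeping track of the passage between the sortwise structure of $\hatD^S$ and the single-sorted category $\hatD$, and checking that the connecting morphisms of $\Quo(\MT D,\F)$ satisfy the hypotheses of Lemma \ref{lem:homo-preserves-unary} — their surjectivity being automatic from $e'=g\o e$ with $e'$ surjective.
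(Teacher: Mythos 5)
Your proposal is correct and follows essentially the same route as the paper: the liftings $u_C$ along the finite quotients form a compatible family (via Lemma~\ref{lem:homo-preserves-unary}) inducing $\hat u$ between the cofiltered limits, the square is checked by postcomposing with the jointly monic limit projections $Ve^+$, and uniqueness follows from density of $\iota_D$ and the Hausdorff property. The only cosmetic difference is that you package the compatibility as a natural transformation $P_s\Rightarrow P_t$ rather than as a cone $u_A\o e^+$ over the diagram defining $(\hatt D)_t$, which amounts to the same thing.
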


\begin{proof}
For each $u\colon (T D)_s\to (T D)_{t}$ in $\U$, the morphisms 
$u_A\o e^+\colon (\hatt D)_s\to 
A_{t}$  
(where $e$ ranges over
surjective $\MT$-homomorphisms $e\colon \MT D\epito A$ with 
$A\in\F$ and $u_A$ is the lifting of $u$ along $e$) form a 
compatible family over the diagram 
defining $(\hatt D)_{t}$ by Lemma~\ref{lem:homo-preserves-unary}.
Hence there exists a unique morphism $\hat u\colon 
(\hatt D)_s\to(\hatt D)_{t}$ in
$\hatD$ with $e^+\o \hat u = u_A\o e^+$ for all $e$. Therefore in the 
diagram below the outside and all parts except, perhaps, for the upper square 
commute:
  \[
    \xymatrix{
    (T D)_s \ar[d]_{\iota_ D} \ar[r]^{u} 
    \ar@{->>}@/_4em/[dd]_{e}& 
    (T D)_{t}
    \ar[d]^{\iota_ D} \ar@{->>}@/^4em/[dd]^{e}\\
    V(\hatt D)_s \ar@{->>}[d]_{Ve^+} \ar[r]^{V\hat u} & 
    V(\hatt D)_{t} \ar@{->>}[d]^{Ve^+} \\
    A_s \ar[r]_{u_A} & A_{t}
    }
  \]
It follows that the upper square commutes when postcomposed with the morphisms 
$Ve^+$. Since by Remark~\ref{rem:hatdlfcp} the functor $V$ 
preserves limits (and thus the morphisms $Ve^+$ are jointly 
monomorphic),  the upper square commutes. Moreover, $\hat u$ is unique with 
this property because $\iota_ D$ is dense and $(\hatt D)_{t}$ is a Hausdorff space (see Remark 
\ref{rem:basic}.\ref{rem:iota}).
\end{proof}
Therefore we can extend our notion of $\U$-quotients and $\F$-quotients of $\MT D$ as follows:

\begin{definition}
Let $D\in \D_f^S$ and $\phi\colon \hatt D \epito P$ a quotient of $\hatt D$ in $\hatD^S$.
\begin{enumerate}
\item 
 $\phi$ is called a \emph{profinite $\hatT$-quotient} of $\hatT D$ if it carries a profinite quotient algebra of $\hatT D$, i.e. there exists a $\hatT$-algebra structure $(P,\rho)$  on $P$ such that $(P,\rho)$ is a profinite $\hatT$-algebra and $\phi\colon \hatT D\epito(P,\rho)$ is a $\hatT$-homomorphism.
\item If $\U$ is a unary presentation of $\MT D$, then $\phi$ is called a \emph{$\U$-quotient} of $\hatt D$ if for every $u\colon (TD)_s\to (TD)_t$ in $\U$, the extension $\hat  u\colon (\hatt D)_s\to (\hatt D)_t$ has a lifting along $p$:
 \[
\xymatrix{
(\hatt D)_s \ar[r]^{\hat u} \ar@{->>}[d]_{\phi} & (\hatt D)_{t} 
\ar@{->>}[d]^{\phi}\\
P_s \ar@{-->}[r]_{\exists u_P} & P_{t}
}
  \]
\end{enumerate}
\end{definition}

\begin{rem}\label{rem:uextend}
If the object $P$ in the above definition is finite, then $\phi\colon \hatt D\epito P$ is a $\U$-quotient of $\hatt D$ iff its restriction $V\phi\o \iota_D\colon TD\epito A$ is a $\U$-quotient of $TD$. This follows immediately from the definition of $\hat u$ and the fact that $\iota_D$ is dense, see Remark \ref{rem:basic}.\ref{rem:iota}.
\end{rem} 

\begin{lemma}\label{lem:subdirectproducthatd}
Let $D\in \D_f^S$ and suppose that $\U$ is a unary presentation of $\MT D$. 
Then for any two $\U$-quotients $e_i\colon \hatt D\epito P_i$ ($i=0,1$) in $\hatD^S$ the following statements hold:
\begin{enumerate}
\item The subdirect product of $e_0$ and $e_1$ in $\hatD^S$ is a $\U$-quotient.
\item The pushout of $e_0$ and $e_1$ in $\hatD^S$ is a $\U$-quotient.
\item Given a $\U$-quotient $\phi\colon \hatt\FSigma\epito P$ and a finite quotient $e\colon P\epito A$, there exists a finite quotient $h\colon P\epito B$ such that $h\o \phi$ is a $\U$-quotient of $\hatt D$ and $e\leq h$.
\[
\xymatrix{
& \hatt D \ar@{->>}[d]^\phi\\
& P \ar@{->>}[d]^e \ar@{->>}[dl]_h\\
B \ar@{-->}[r]& A
}
\]
\item Given a $\U$-quotient $\phi\colon \hatt\FSigma\epito P$ and a finite quotient $e\colon P\epito A$, let $\S$ be the full subcategory of $(P\downarrow \hatD_f^S)$ on all finite quotients $e\colon P\epito A$ such that $e\o \phi$ is a $\U_\Sigma$-quotient. Then $P$ is the cofiltered limit of the diagram
\[ \pi_S\colon \colon \S\to \hatD^S,\quad (e\colon P\epito A)\mapsto A, \]
with limit projections $e$.
\end{enumerate}
\end{lemma}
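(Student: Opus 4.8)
The plan is to treat the four parts in order, reducing (3) and (4) to (1) and (2).

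For (1) and (2) I would repeat the diagram chases from the proof of Lemma~\ref{lem:subdirectproduct}, replacing each $u\in\U$ by its extension $\hat u\colon(\hatt D)_s\to(\hatt D)_t$ from Lemma~\ref{lem:extend_fi}: given the liftings $u_{P_i}$ of $\hat u$ along $e_i$, a diagonal fill-in against the injective/order-reflecting leg of the subdirect product produces the lifting of $\hat u$ along the subdirect product, while the universal property of the pushout in $\hatD^S$ produces the lifting along the pushout. Here I use that $\hatD^S$ carries the factorization system of surjective and injective/order-reflecting morphisms, is cocomplete, and that surjections are stable under pushout.

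The observation driving (3) and (4) is that $\hatt D$ is a cofiltered limit of \emph{finite} $\U$-quotients. Indeed, each $e\in\Quo(\MT D,\F)$ is an $\F$-quotient and hence, being trivially $\F$-refinable, a $\U$-quotient of $\MT D$ by the defining property of the unary presentation; by Remark~\ref{rem:uextend} the projection $e^+\colon\hatt D\epito A$ is then a finite $\U$-quotient of $\hatt D$, and by Remark~\ref{rem:hattconst} the algebra $\hatt D$ is the cofiltered limit of these $e^+$. For (3): since the finite object $A$ is finitely copresentable in $\hatD^S$ (Remark~\ref{rem:hatdlfcp}), the morphism $e\o\phi$ factors through one of these projections, say $e\o\phi=g\o q$ with $q\colon\hatt D\epito Q$ a finite $\U$-quotient. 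I would then form the pushout of $\phi$ and $q$ in $\hatD^S$, with legs $h\colon P\epito B$ and $a\colon Q\epito B$ and common composite $h\o\phi=a\o q$. Part (2) gives that $h\o\phi$ is a $\U$-quotient of $\hatt D$; $B$ is finite because $Q$ is and $a$ (a pushout of the surjection $\phi$) is surjective; and the cocone $(e,g)$ on the pushout span yields $r\colon B\to A$ with $r\o h=e$, i.e.\ $e\leq h$. This is exactly (3). For (4): part (1) shows $\S$ is cofiltered, since for $e_1,e_2\in\S$ the subdirect product $e_3$ of $e_1$ and $e_2$ refines both, is finite, and satisfies that $e_3\o\phi$ is the subdirect product of the $\U$-quotients $e_1\o\phi$ and $e_2\o\phi$, hence a $\U$-quotient; so $e_3\in\S$. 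Part (3) shows that the inclusion $\S\hookrightarrow(P\downarrow\hatD_f^S)$ is initial: every $f\colon P\to A$ factors through a finite quotient of $P$, which by (3) is refined by some object of $\S$, and connectedness of the relevant comma categories follows because $\S$ is cofiltered and the finite quotients of $P$ form a preorder. Since $P$, as an object of $\hatD^S$, is the cofiltered limit of its canonical diagram $(P\downarrow\hatD_f^S)\to\hatD^S$ — the $\hatD^S$-analogue of Lemma~\ref{lem:profinite_candiagram} — initiality of the inclusion (see \ref{app:final_func}) gives that $P$ is the cofiltered limit of $\pi_\S$ with limit projections the $e\in\S$.

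I expect part (3) to be the main obstacle: the naive approach of ``$\U$-saturating'' $e\o\phi$ by hand fails because a further quotient of a $\U$-quotient need not be a $\U$-quotient, so one cannot simply enlarge the congruence (or stable preorder) of $e\o\phi$ until it is $\U$-stable. The fix is to produce the desired finite $\U$-quotient not as a quotient of $A$ at all, but as the pushout of $\phi$ with a finite $\U$-quotient of $\hatt D$ through which $e\o\phi$ already factors, so that part (2) does the real work — the remaining care is only in checking that this pushout is finite and lies above $e$.
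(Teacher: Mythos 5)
Your proposal is correct and follows essentially the same route as the paper: parts (1) and (2) by redoing the diagram chases of Lemma~\ref{lem:subdirectproduct} for the extensions $\hat u$, part (3) by factoring $e\o\phi$ through a finite $\U$-quotient projection of $\hatt D$ (via finite copresentability of $A$) and pushing out against $\phi$, and part (4) by combining cofilteredness of $\S$ (from part (1)) with finality of the inclusion into the canonical diagram (from part (3)). Your closing remark about why one must go through the pushout rather than saturating $e\o\phi$ directly is exactly the point of the paper's argument as well.
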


\begin{proof}
The first two statements are completely analogous to  Lemma \ref{lem:subdirectproduct} and its proof.
\begin{itemize}
\item[3.] The proof is illustrated by the 
diagram below:
\[
\xymatrix{
\hatt D \ar@{->>}[rr]^{g^+} \ar@{->>}[dr]^q \ar@{->>}[dd]_{\phi} && C 
\ar@{->>}[dl]_{k} \ar[dd]^j \\
& B \ar@{-->}[dr]^{l} &\\
P \ar@{->>}[ur]^{h} \ar@{->>}[rr]_e && A
}
\]
Since the finite object $A$ is finitely copresentable in $\hatD^S$, the morphism $e\o \phi$ factors through the cofiltered limit cone defining $\hatt D$. That is, there exists a surjective $\MT$-homomorphism $g\colon \MT D\epito C$ with $C\in\F$ and a morphism $j\colon C\epito A$ with $e\o \phi = j\o g^+$. Then $g^+$ is a $\U$-quotient of $\hatt D$ by Remark \ref{rem:uextend}, since $\U$ is a unary presentation of $\MT D$. Form the pushout $q=h\o \phi = k\o g^+$ of $\phi$ and $g^+$ in $\hatD^S$. The pushout exists because $\hatD^S$, being a locally finitely copresentable category, is cocomplete (see \ref{app:lfcp_cat}). Note that $k$ and $q$ are surjective because pushouts preserve epimorphisms, and epimorphisms in $\hatD^S$ are surjective by Lemma \ref{lem:episurj}. This implies that $B$ is finite, because $C$ is finite. Moreover, $q$ is a $\U$-quotient by part 2 of this lemma. Since the morphisms $e$ and $j$ merge $\phi$ and $g^+$, the universal property of the pushout yields a morphism $l$ with $l\o h = e$. Therefore $e\leq h$, and $h\o\phi =q$ is a finite $\U$-quotient of $\hatt D$, as required.
\item[4.] We first show that the poset $\S$ is cofiltered. First, $\S$ is nonempty because it contains the image of the unique morphism $h\colon P\to 1$ into the terminal object. Secondly, $\S$ is directed, i.e. for any two elements $e_i\in \S$ ($i=0,1$) there exists an element $q\in \S$ with $e_i\leq q$. To see this, form the subdirect product $e$ of $e_0\o \phi$ and $e_1\o \phi$ in $\hatD^S$. It is a $\U$-quotient of $\hatt D$ by part 1 of this lemma. Moreover since $e_i\o \phi \leq \phi$, the minimality of $e$ gives $e\leq \phi$, i.e. $e=q\o \phi$ for some $q$. Then $q\in \S$ and $e_i\leq q$, as desired.
\[
\xymatrix{
& \hatt D \ar@{->>}[d]^\phi \ar@/_3ex/@{->>}[ddl]_e \\
& P \ar@{-->>}[dl]_q \ar@{->>}[d]^{e_i}\\
A \ar@{-->>}[r] & A_i
}
\]
Since the category $\hatD^S$ is locally finitely copresentable, the object $P$ is the limit of its canonical diagram
\[ \pi\colon (P\downarrow \hatD_f^S)\to \hatD^S, \quad (g\colon P\to A)\mapsto A,\]
see \ref{app:can_diagram}. Therefore it suffices to show that the inclusion $I\colon \S\to (P\downarrow \hatD_f^S)$ is a final functor, i.e. every morphism $g\colon P\to A$ with finite codomain factors through some quotient in $\S$. But this follows immediately from part 3 of this lemma.\endproof
\end{itemize}
\end{proof}
The following lemma shows that the lifting property of a unary presentation extends to profinite algebras:

\begin{lemma}\label{lem:profinite_algs}
Let $D\in\D_f^S$ and let $\U$ be a unary presentation of $\MT D$.
Then for
any quotient $\phi \colon \hatt D \epito P$ in $\hatD^S$,
\[ \text{$\phi$ is a profinite $\hatT$-quotient} \quad\text{iff}\quad \text{$\phi$ is a $\U$-quotient}.\]
\end{lemma}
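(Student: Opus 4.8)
The plan is to prove both implications by reducing to finite quotients---through Remark~\ref{rem:uextend} and the identification of finite $\hatT$-algebras with $\F$-algebras---and then reassembling the resulting finite data along a cofiltered limit.

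\emph{From a profinite $\hatT$-quotient to a $\U$-quotient.} Suppose $\phi\colon\hatt D\epito (P,\rho)$ is a $\hatT$-homomorphism onto a profinite $\hatT$-algebra. By Corollary~\ref{cor:profinite_epi}, $(P,\rho)$ is the cofiltered limit of the diagram $(P\epidownarrow\FAlg{\hatT})\to\Alg{\hatT}$ with surjective limit projections $e'\colon(P,\rho)\epito A'$. For each such $e'$, the composite $e'\o\phi\colon\hatt D\epito A'$ is a surjective $\hatT$-homomorphism with finite codomain, so by Remark~\ref{rem:basic}.\ref{rem:hatthomrestrict} and \ref{rem:basic}.\ref{rem:iota} its restriction along $\iota_D$ is a surjective $\MT$-homomorphism $\MT D\epito A'$ with codomain in $\F$, hence an $\F$-quotient, hence $\F$-refinable, hence---since $\U$ is a unary presentation of $\MT D$---a $\U$-quotient of $\MT D$; by Remark~\ref{rem:uextend}, $e'\o\phi$ is then a $\U$-quotient of $\hatt D$. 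Now fix $u\colon(TD)_s\to(TD)_t$ in $\U$ and let $u_{A'}\colon A'_s\to A'_t$ be the lifting of $\hat u$ along $e'\o\phi$ (a morphism of $\hatD$). As in Lemma~\ref{lem:homo-preserves-unary}, the family $(u_{A'})$ is compatible with the connecting morphisms of the diagram (precompose with the epimorphisms $e'\o\phi$), so it induces a morphism $u_P\colon P_s\to P_t$ in $\hatD$ with $e'\o u_P=u_{A'}\o e'$ for all $e'$. Postcomposing $\phi\o\hat u$ and $u_P\o\phi$ with the jointly monic projections $e'$ gives $\phi\o\hat u=u_P\o\phi$, so $\phi$ is a $\U$-quotient.

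\emph{From a $\U$-quotient to a profinite $\hatT$-quotient.} Suppose now that $\phi\colon\hatt D\epito P$ is a $\U$-quotient. By Lemma~\ref{lem:subdirectproducthatd}(4), $P$ is the cofiltered limit in $\hatD^S$ of the diagram $\pi_\S\colon\S\to\hatD^S$, $(e\colon P\epito A)\mapsto A$, with limit projections the $e$'s, where $\S$ is the (nonempty, cofiltered) category of finite quotients $e\colon P\epito A$ for which $e\o\phi$ is a $\U$-quotient of $\hatt D$. The crux is to lift this diagram to $\Alg{\hatT}$. Fix $e\in\S$ and put $h\defeq V(e\o\phi)\o\iota_D\colon\MT D\epito A$; by Remarks~\ref{rem:uextend} and \ref{rem:basic}.\ref{rem:iota} this is a surjective $\U$-quotient of $\MT D$, and it is $\F$-refinable because $e\o\phi\colon\hatt D\epito A$ extends it (Lemma~\ref{lem:extensiblevsrefinable}). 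Since $\U$ is a unary presentation, $h$ is an $\MT$-quotient; being an $\F$-refinable $\MT$-quotient it is an $\F$-quotient, so $A$ carries an $\MT$-algebra structure $\alpha$ with $(A,\alpha)\in\F$, and hence the finite $\hatT$-algebra structure $\alpha^+$. The limit projection $h^+\colon\hatt D\to(A,\alpha^+)$ is a $\hatT$-homomorphism with $Vh^+\o\iota_D=h=V(e\o\phi)\o\iota_D$, so density of $\iota_D$ forces $h^+=e\o\phi$; thus every $e\o\phi$ is a $\hatT$-homomorphism. Consequently the connecting morphisms of $\S$ are $\hatT$-homomorphisms by the homomorphism theorem (they factor the surjective $\hatT$-homomorphisms $e\o\phi$), so $\pi_\S$ lifts to a diagram $\bar\pi_\S\colon\S\to\FAlg{\hatT}$ over which $(e\o\phi)_{e\in\S}$ is a cone in $\Alg{\hatT}$. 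Since the forgetful functor $\Alg{\hatT}\to\hatD^S$ is monadic and hence creates limits (cf.\ \ref{app:limits_of_talgs}), the limit cone $(e\colon P\epito A)_{e\in\S}$ lifts uniquely to a limit cone in $\Alg{\hatT}$, equipping $P$ with a profinite $\hatT$-algebra structure $\rho$; the $\hatT$-homomorphism induced by the cone $(e\o\phi)$ into $(P,\rho)$ has underlying morphism $\phi$ (the projections $e$ are jointly monic), so $\phi$ is a profinite $\hatT$-quotient.

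The main obstacle lies in the second direction, specifically in producing the finite $\hatT$-algebra structures: one must verify that the $\iota_D$-restriction of each finite $\U$-quotient $e\o\phi$ is simultaneously surjective, $\F$-refinable, and a $\U$-quotient---so that the defining property of a unary presentation applies and yields an honest $\MT$-algebra structure on $A$---and then that the canonical projection $h^+$ coincides with $e\o\phi$, which a priori is only clear on the dense image of $\iota_D$ but suffices since $A$ is Hausdorff. The remaining ingredients (compatibility of the liftings $u_{A'}$, the homomorphism-theorem argument for the connecting maps, and creation of limits by the monadic forgetful functor) are routine.
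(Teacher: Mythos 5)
Your proof is correct and follows essentially the same route as the paper's: the forward direction lifts each $u$ along the finite quotients of the profinite algebra $P$ and assembles the liftings into $u_P$ via the cofiltered limit (the paper uses the canonical diagram of Lemma~\ref{lem:profinite_candiagram} where you use Corollary~\ref{cor:profinite_epi}, an immaterial difference), and the converse direction uses Lemma~\ref{lem:subdirectproducthatd}(4), the unary-presentation property applied to the restrictions $V(e\o\phi)\o\iota_D$, the identification $h^+=e\o\phi$ via density, and creation of limits by the forgetful functor. Your explicit check that the connecting morphisms of $\S$ are $\hatT$-homomorphisms is a detail the paper leaves implicit, but otherwise the two arguments coincide.
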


\proof
($\To$) Suppose that $\phi$ is a profinite $\hatT$-quotient, i.e. $P$ carries a $\hatT$-algebra structure such that $P$ is profinite and $\phi\colon \hatT D\epito P$ is $\hatT$-homomorphism.
For any finite quotient $h\colon P\epito A$ in 
$\Alg{\hatT}$ we have the corresponding $\F$-quotient $e \defeq
V(h \o \phi)\o\iota_ D\colon \MT D\epito A$ by Remark 
\ref{rem:basic}.\ref{rem:hatthomrestrict}. Since $\U$ is a 
unary presentation of $\MT D$, each $u\colon (T D)_s\to (T D)_{t}$ in $\U$ 
has a lifting
$u_{A}\colon A_s\to 
A_{t}$ along $e$.

Since $P$ is a  profinite $\hatT$-algebra, $P$ is the cofiltered limit of the diagram of all 
finite quotient algebras
$h\colon P\epito A$, see Lemma \ref{lem:profinite_candiagram}.  The morphisms 
$u_{A}\o h\colon 
P_s\to A_{t}$ form a compatible family
over this diagram
by Lemma \ref{lem:homo-preserves-unary}. Therefore there
exists a morphism $u_P\colon P_s\to P_{t}$ in $\hatD$ with $h\o u_P = 
u_{A}\o h$
for all $h$. It follows that $u_P$ is a lifting of $\hat u$ along $\phi$ (i.e. $\phi\o \hat u = u_P\o \phi$) since this holds when postcomposed with the limit 
projections $h$.
\[
\xymatrix{
(\hatt D)_s \ar[r]^{\hat u} \ar@{->>}[d]_{\phi} & (\hatt D)_{t} 
\ar@{->>}[d]^{\phi} \\
P_s \ar@{-->}[r]^{u_P} \ar@{->>}[d]_{h}  & P_{t} \ar@{->>}[d]^{h} \\
A_s \ar[r]_{u_{A}} & A_{t} 
}
\]
This shows that $\phi$ is a $\U$-quotient of $\hatt D$.

\smallskip
\noindent
($\Leftarrow$) Let $\hat e\colon \hatt D\epito P$ be a $\U$-quotient of $\hatt D$. We need to show that $P$ can be equipped with a 
$\hatT$-algebra structure such that $P$ is profinite and $\hat e$ is a 
$\hatT$-homomorphism. Let $\S$ be the full subcategory of $(P\downarrow \hatD_f^S)$ on all finite quotients $e\colon P\epito A$ such that $e\o \phi\colon \hatt D \epito A$ is a $\U$-quotient of $\hatt D$. By Lemma \ref{lem:subdirectproduct}.4, the object $P$ is the cofiltered limit of the diagram 
\[ \pi\colon \S \to \hatD^S,\quad (e\colon P\epito A)\mapsto A, \]
with limit projections $e$. For each $e\colon P\epito A$ in $\S$ we have that $e\o \phi$ is a $\U$-quotient of $\hatt D$ by the definition of $\S$. Therefore $q:= Ve\o V\phi\o \iota_D$ is a $\U$-quotient of $TD$ by Remark \ref{rem:uextend}, and moreover $q$ is $\F$-refinable by Lemma \ref{lem:extensiblevsrefinable}. Since $\U$ forms a unary presentation of $\MT D$, it follows that there exists a $\MT$-algebra structure $(A,\alpha)$ on $A$ such that $q\colon \MT\FSigma\epito (A,\alpha)$ is a $\MT$-homomorphism. Thus $q^+= e\o \phi\colon \hatT D\epito (A,\alpha^+)$ is a $\hatT$-homomorphism by diagram \eqref{eq:hat-etasurj} in Remark \ref{rem:hattconst}. 

Since the forgetful functor from $\Alg{\hatT}$ to $\hatD^S$ creates 
limits, see \ref{app:limits_of_talgs}, it follows that there is a 
unique $\hatT$-algebra structure 
$\rho\colon 
\hatt P\to P$ making the cone $(e\colon P\epito A)_{e\in \S}$ a cofiltered 
limit cone in $\Alg{\hatT}$. Thus $(P,\rho)$ is profinite. To see that $\phi\colon 
\hatT D\epito (A,\alpha)$ is a $\hatT$-homomorphism, consider the diagram 
below:
\[
\xymatrix{
\hatt \hatt  D \ar[r]^{\hatmu_ D} \ar@{->>}[d]_{\hatt \phi} & 
\hatt D \ar@{->>}[d]^{\phi}\\
\hatt P \ar@{-->}[r]^\rho \ar@{->>}[d]_{\hatt e} &  P \ar@{->>}[d]^e\\
\hatt A \ar[r]_{\alpha^+}  & A 
}
\] 
The lower square commutes for all $e\in \S$ in $\S$ by the 
definition of $\rho$, and the outside commutes because $e \o \phi = q^+$ is a $\hatT$-homomorphism. Thus 
also the upper square commutes, as it commutes when postcomposed with the limit 
projections $e$ in $\hatD^S$.

\section{Recognizable Languages and Duality}\label{sec:reclang}
In this section, we set up our duality framework for algebraic language theory. As explained in the introduction, the idea is to consider a variety $\C$ of algebras that dualizes to the variety $\D$ on the level of finite algebras, and form varieties of languages inside $\C$. To this end, we make the following

\begin{assumptions}\label{ass:catframework}
In addition to the Assumptions \ref{ass:dt} on $\D$ and the monad $\MT$ on $\D^S$, we fix a variety $\C$ of algebras such that
\begin{enumerate}[(i)] 
\item $\C$ is locally finite;
\item the signature of $\C$ contains a 
constant;
\item the full subcategories $\C_f$ and 
$\D_f$ on 
finite algebras are dually equivalent.
\end{enumerate}
\end{assumptions}

\begin{rem}
Assumption (ii) will be used to define derivatives of languages in the many-sorted case, see Definition \ref{def:derivatives_preimages}. If $S=1$, it can be dropped.
\end{rem}

\begin{example}\label{ex:categories} The following pairs of categories $\C$/$\D$ satisfy our assumptions:
\begin{enumerate}
\item $\C = \BA$ and $\D=\Set$: Stone duality 
\cite{Johnstone1982} yields a dual equivalence $\BA_f^{op}\simeq \Set_f$, 
mapping a finite boolean algebra to the set of its atoms.
\item $\C=\DL$ 
and $\D=\Pos$: Birkhoff 
duality 
\cite{birkhoff37} gives a  dual equivalence 
$(\DL)_f^{op}\simeq \Pos_f$, mapping a finite distributive lattice to the 
poset 
of its join-irreducible elements.
\item $\C = \D = \JSL$: the self-duality  $(\JSL)_f^{op}\simeq (\JSL)_f$ maps a finite semilattice $(X,\vee,0)$ to its opposite 
semilattice $(X,\wedge,1)$.
\item $\C =\D = \Vect_{\Field}$ for a finite field $\Field$: 
the familiar self-duality of $(\Vect_{\Field})_f$ maps a finite (= finite-dimensional) 
vector space 
$X$ to its dual space $X^* = \Vect_\Field(X,\Field)$.
\end{enumerate}
\end{example}

\begin{rem}\label{rem:oc_vs_od}
\begin{enumerate}
\item Recall from Lemma \ref{lem:hatD-procomp} that the category $\hatD$ is the
    pro-completion of $\D_f$. Moreover, since the variety $\C$ is locally finite, $\C$ is the 
    \emph{ind-completion} (the free completion under
    filtered colimits) of $\C_f$.  
    Then
    the dual equivalence between $\C_f$ and $\D_f$ extends to a dual 
    equivalence between $\C$ and $\hatD$: we have
\[\hatD \simeq \Pro(\D_f) \simeq [\Ind(\D_f^{op})]^{op} \simeq [\Ind(\C_f)]^{op} \simeq \C^{op}.\]
We denote the equivalence functors by
    \[P\colon \hatD\xra{\simeq} \C^{op} \quad\text{and}\quad P^{-1}\colon \C^{op}\xra{\simeq} 
    \hatD\]
    and the corresponding natural isomorphism between $\Id_{\C^{op}}$ and $PP^{-1}$  by
 \[ \lambda\colon \Id_{\C^{op}}\xra{\cong} PP^{-1}. \] 
   \item  Denote by $\under{\dash}$ the forgetful functors of $\C$ and $\hatD$ into $\Set$, and by
$\one_\C$ and $\one_\D$ the free objects on one generator both in $\C$ and $\D$, respectively. Note that both $\one_\C$ and $\one_\D$ are finite because $\C$ and $\D$ are assumed to be locally finite.  The two finite objects $O_\C\in\C_f$ and $O_\D\in \D_f$ defined by
\[O_\C \defeq P\one_\D \quad\text{and}\quad O_\D\defeq P^{-1}\one_\C\]
play the role of a 
\emph{dualizing object} (also called a \emph{schizophrenic object} in 
\cite{Johnstone1982}) of $\C$ and $\hatD$. This means that there is a natural 
isomorphism 
   between the functors $\under{\dash}\o P^{op}$ and $\hatD(\dash,O_\D)\colon \hatD^{op}\to \Set$ given for all $D\in \hatD$ by
 \[ \under{PD} \cong \C(\one_\C,PD) \cong \hatD(P^{-1}PD,P^{-1}\one_\C) = \hatD(P^{-1}PD,O_\D) \cong 
 \hatD(D,O_\D).\]   
  Symmetrically, we have a natural isomorphism $\under{\dash}\o {P^{-1}}\cong\C(\dash, 
O_\C)\colon \C^{op}\to \Set$. 
It follows that the objects $O_\C$ and $O_\D$ have the essentially the same underlying set:
\[
      \under{O_\D} \cong \hatD(\one_\D,O_\D) \cong \under{P\one_\D} = \under{O_\C}.
\]
Explicitly, this bijection maps an element $y\colon 1\to \under{O_\D}$ of $\under{O_\D}$ to the element
\[ 1 \xra{\eta} \under{\one_\C} \xra{\lambda} \under{PP^{-1}\one_\C} = \under{PO_\D} \xra{Py^\D} \under{P\one_\D} = \under{O_\C}  \] 
of $\under{O_\C}$, where $\eta\colon 1\to \under{\one_\C}$ is the inclusion of the generator and $y^\D\colon \one_\D\to O_\D$ is the unique extension of $y$ to a morphism in $\D$.
\item Recall from Lemma \ref{lem:episurj} that epimorphisms in $\hatD$ coincide with the surjective morphisms. Accordingly, \emph{subobjects} in the dual category $\C$ are represented by monomorphisms, i.e. injective morphisms.
\end{enumerate}
\end{rem}

\begin{example}\label{ex:ocod} Consider the categories of Example \ref{ex:categories}.
\begin{enumerate}
\item $\C = \BA$ and $\D=\Set$ with $\hatD=\Stone$: the dual equivalence $P\colon \Stone\simeq \BA^{op}$ is the classical Stone duality \cite{Johnstone1982}, mapping a Stone space to the boolean algebra of its clopen subsets. We have 
\[O_\BA = \{0<1\} \quad\text{and}\quad O_\Set=\{0,1\}.\]
\item $\C = \DL$ and $\D=\Pos$ with $\hatD=\Priest$: the dual equivalence $P\colon \Priest \simeq \DL^{op}$ is the classical Priestley duality \cite{priestley72}, mapping a Priestley space to the lattice of its clopen upper sets. We have 
\[O_{\DL} = \{0<1\} \quad\text{and}\quad O_\Pos=\{0<1\} .\]
\item $\C = \D=\JSL$ with $\hatD=\Stone(\JSL)$: the dual equivalence $P\colon \Stone(\JSL) \simeq \JSL^{op}$ maps a Stone semilattice to the semilattice of its clopen ideals. We have 
\[O_\JSL = \{0<1\}.\]
\item $\C= \D =\Vect_\Field$ with $\hatD=\Stone(\Vect_\Field)$: the dual equivalence $P\colon \Stone(\Vect_\Field) \simeq \Vect_\Field^{op}$ maps a Stone vector space $X$ to the space of linear continuous maps from $X$ into $\Field$. We have 
\[O_{\Vect_\Field} = K.\]
\end{enumerate}
\end{example}

\subsection{Recognizable languages}
A language $L\seq\Sigma^*$ of finite words may be identified with its characteristic
function~$L\colon\Sigma^*\to\{0,1\}$. To model languages in our categorical setting, we replace the one-sorted
alphabet $\Sigma$ by an $S$-sorted alphabet $\Sigma$ in 
$\Set_f^S$, and represent it in $\D^S$ via the free object $\FSigma\in\D^S_f$ generated
by~$\Sigma$.
The output set~$\{0,1\}$ is replaced by a finite ``object
of outputs'' in $\D^S_f$, viz.\ the object with $O_\D\in \D_f$ in each sort. By abuse of notation, we
denote this object of $\D^S_f$ also by $O_\D$. This leads to the following definition, unifying concepts in \cite{boj15} and \cite{amu15}.
\begin{definition}
 A
  \emph{language} over the alphabet $\Sigma\in\Set_f^S$ is a morphism \[L\colon 
  T\FSigma \to
  O_\D\] in $\D^S$. It is called \emph{$\F$-recognizable} if there exists a $\MT$-homomorphism  $h\colon \MT\FSigma\to (A,\alpha)$  with $(A,\alpha)\in \F$ and a morphism $p\colon A\to O_\D$ in $\D^S$ with $L=p\o 
  h$. 
  \[
  \xymatrix{
  T\FSigma \ar[r]^L \ar[d]_h & O_\D\\
  (A,\alpha) \ar[ur]_p &
  }
  \]
  In this case, we say that \emph{$L$ is recognized by $h$ (via $p$)}. If $\F=$ all finite $\MT$-algebras, an $\F$-recognizable language is called \emph{$\MT$-recognizable}. We denote by \[\Rec_\F(\Sigma)\quad\text{and}\quad \Rec(\Sigma) \]
    the set of all $\F$-recognizable and $\MT$-recognizable languages over $\Sigma$, respectively.
\end{definition}
Note that, although languages are morphisms in $\D^S$, the set $\Rec_\F(\Sigma)$ is one-sorted.

\begin{example}[Monoids]\label{ex:rec_monoids}
Let $\MT=\MT_*$ on $\Set$ with $O_\Set = \{0,1\}$. A language 
$L\colon T_*\Sigma\to O_\Set$ corresponds to 
a language $L\seq \Sigma^*$ of finite words. It is recognized 
by
a monoid morphism $h\colon \Sigma^*\to A$ iff $L =h^{-1}[Y]$ for some
subset $Y\seq A$. Recognizable languages coincide with regular languages, i.e. those accepted by finite automata; see e.g.~\cite{pin15}.
\end{example}

\begin{example}[$\omega$-semigroups]\label{ex:rec_omegasem}
Let $\MT=\MT_\infty$ on
$\Set^2$ with $O_\Set=\{0,1\}$.  Since 
$T_\infty(\Sigma,\emptyset)=(\Sigma^+,\Sigma^\omega)$, a language $L\colon 
T_\infty(\Sigma,\emptyset)$ $\to
O_\Set$ corresponds to an $\infty$-language 
$L
\seq (\Sigma^+,\Sigma^\omega)$. It is recognized by an 
$\omega$-semigroup morphism $h\colon (\Sigma^+,\Sigma^\omega)\to A$ iff
$L = h^{-1}[Y]$ for some two-sorted subset $Y\seq A$. An $\infty$-language is recognizable iff it is \emph{$\infty$-regular}, i.e.
accepted by some finite B\"uchi automaton; see \cite{pinperrin04}.  
\end{example}

\begin{example}[$\Field$-algebras]\label{ex:rec_kalgs}
Let $\MT=\MT_\Field$ be the free $\Field$-algebra monad in $\Vect_\Field$ with $O_{\Vect_\Field}=\Field$. 
As seen in Example \ref{ex:monads_kalgs} we have $\MT_\Field\FSigma = K[\Sigma]$, the $\Field$-algebra of polynomials over $\Sigma$, and a language $L\colon T_\Field \FSigma \to K$ corresponds to a weighted language $L^@\colon \Sigma^*\to \Field$ over the field $\Field$. It is recognizable if there exists a $\Field$-algebra homomorphism $h\colon \MT_\Field\FSigma\to A$ into some finite $\Field$-algebra and a linear map $p\colon A\to K$ with $L=p\o h$.
A weighted language is recognizable
 iff it is accepted by some finite linear weighted automaton over $\Field$; see \cite{reu80}.
\end{example}
The topological perspective on regular languages rests on the important observation 
that the regular languages over $\Sigma$ correspond exactly to the clopen 
subsets of the Stone space $\widehat{\Sigma^*}$ of profinite words, or equivalently 
to 
continuous maps from $\widehat{\Sigma^*}$ into the discrete space $\{0,1\}$;
see e.g. \cite[Prop. VI.3.12]{pin15}.  This generalizes from the 
monad $\MT_*$ on $\Set$ to arbitrary monads $\MT$:

\begin{theorem}\label{thm:reciso}
$\F$-recognizable languages over $\Sigma\in \Set_f^S$ correspond bijectively to morphisms from 
$\hatt\FSigma$ to~$O_\D$ in $\hatD^S$. The bijection is given by \[(\hatt\FSigma\xra{\hat L} O_\D) \mapsto (T\FSigma \xra{\iota_\FSigma} V\hatt\FSigma \xra{V\hat L} O_\D).\]
\end{theorem}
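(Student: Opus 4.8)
The plan is to verify that the assignment $\hat L \mapsto V\hat L \o \iota_\FSigma$ is a well-defined bijection between the hom-set $\hatD^S(\hatt\FSigma, O_\D)$ and the set $\Rec_\F(\Sigma)$ of $\F$-recognizable languages over $\Sigma$. The three facts I would rely on are: (i) the density of $\iota_\FSigma$ together with the defining identity $V h^+ \o \iota_\FSigma = h$ for every $\MT$-homomorphism $h\colon \MT\FSigma\to A$ with $A\in\F$ (Remark~\ref{rem:basic}.\ref{rem:iota} and Remark~\ref{rem:hattconst}); (ii) the fact that $O_\D$ is a finite object of $\D^S$ — it equals $P^{-1}\one_\C$, and $\one_\C$ is finite since $\C$ is locally finite (Remark~\ref{rem:oc_vs_od}) — so that, by Remark~\ref{rem:hatdlfcp}, $O_\D$ is finitely copresentable in $\hatD^S$, and morphisms with finite domain into $O_\D$ are the same in $\hatD^S$ and in $\D^S$; and (iii) the description of $\hatt\FSigma$ as the cofiltered limit of the diagram $(\MT\FSigma\downarrow\F)\to\hatD^S$ with limit projections $h^+$ (Remark~\ref{rem:hattconst}).

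First I would check that the map lands in $\Rec_\F(\Sigma)$ and is surjective onto it. Given $\hat L\colon \hatt\FSigma\to O_\D$ in $\hatD^S$, finite copresentability of $O_\D$ lets me factor it through the limit cone: $\hat L = p\o h^+$ for some $h\colon\MT\FSigma\to A$ with $A\in\F$ and some $p\colon A\to O_\D$, which I may read as a morphism of $\D^S$ since both $A$ and $O_\D$ are finite. Then
\[ V\hat L\o\iota_\FSigma \;=\; p\o(Vh^+\o\iota_\FSigma) \;=\; p\o h, \]
so the language $V\hat L\o\iota_\FSigma$ is recognized by the $\MT$-homomorphism $h$ via $p$ and is thus $\F$-recognizable. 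Conversely, given $\F$-recognizable $L = p\o h$ with $h\colon\MT\FSigma\to A$ an $\MT$-homomorphism, $A\in\F$, and $p\colon A\to O_\D$ in $\D^S$, I would set $\hat L := p\o h^+\colon\hatt\FSigma\to O_\D$, a morphism of $\hatD^S$ (again $p$ being a morphism between finite objects), and observe $V\hat L\o\iota_\FSigma = p\o(Vh^+\o\iota_\FSigma) = p\o h = L$. This shows surjectivity onto $\Rec_\F(\Sigma)$.

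For injectivity I would argue directly: if $V\hat L\o\iota_\FSigma = V\hat L'\o\iota_\FSigma$ for $\hat L,\hat L'\colon\hatt\FSigma\to O_\D$, then $\hat L = \hat L'$ because $\iota_\FSigma$ has dense image and $O_\D$ is Hausdorff (Remark~\ref{rem:basic}.\ref{rem:iota}(i)).

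I do not expect a genuine obstacle here; the proof is essentially an assembly of properties of the pro-$\F$ monad established earlier. The one point that needs a moment of care is bookkeeping fact (ii) — making sure $O_\D$ is really a finite object of $\D^S$, so that it is finitely copresentable in $\hatD^S$ and the relevant hom-sets of $\hatD^S$ and $\D^S$ coincide — and, relatedly, checking that a morphism $p\colon A\to O_\D$ obtained in $\hatD^S$ from a factorization through the limit cone may be used unchanged as a witness of $\F$-recognizability in $\D^S$, and vice versa.
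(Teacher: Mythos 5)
Your proposal is correct and follows essentially the same route as the paper: factor $\hat L$ through the cofiltered limit cone using finite copresentability of $O_\D$, use the identity $Vh^+\o\iota_\FSigma = h$ to match the two sides, and use density of $\iota_\FSigma$ for uniqueness. The only cosmetic difference is that you phrase the conclusion as injectivity plus surjectivity, whereas the paper phrases it as the two maps being mutually inverse by construction; the underlying arguments are identical.
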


\begin{proof}
  We first show that the indicated bijection is well-defined, i.e. for any morphism $\hat{L}\colon
  \hatt\FSigma\to O_\D$ in $\hatD^S$ the language $L \defeq V\hat{L}\o\iota_\FSigma\colon
  T\FSigma \to O_\D$ is $\F$-recognizable. Since the finite object $O_\D$ is finitely copresentable in
  $\hatD^S$, see Remark \ref{rem:hatdlfcp}, the morphism $\hat{L}$ factors
  through the cofiltered limit cone defining $\hatt \FSigma$, i.e.\ there exists
  a $\MT$-homomorphism $h\colon \MT\FSigma \to (A,\alpha)$ with $(A,\alpha)\in \F$ and a
  morphism $p\colon A\to O_\D$ in $\hatD^S$ with $\hat{L} = p\o \br{h}$. Then $L$ is recognized by $h$ via $p$:
  \begin{align*}
  L &= V\hat L\o \iota_\FSigma & \text{(def. $L$)}\\
  &= p\o Vh^+ \o \iota_\FSigma & \text{($\hat L = p\o h^+$)}\\
  &= p\o h & \text{(def. $\iota_\FSigma$ in Rem. 2.11)}. 
  \end{align*}
  Therefore $L$ is $\F$-recognizable. Conversely, let $L\colon T\FSigma\to O_\D$ be an $\F$-recognizable language. 
  Choose a $\MT$-ho\-mo\-mor\-phism $h\colon \MT\FSigma\to (A,\alpha)$ with $(A,\alpha)$ and a
  morphism $p\colon A\to O_\D$ with $L=p\o h$. This yields the following 
  morphism in $\hatD^S$:
  \[
    \hat{L} = (\hatt\FSigma \xra{\br{h}} A \xra{p} O_\D).
  \]
  Since $L = V\hat L \o \iota_\FSigma$ and $\iota_\FSigma$ is dense by Remark
  \ref{rem:basic}.\ref{rem:iota}, the morphism $\hat L$ is independent of
  the choice of $h$ and $p$. By construction the maps $\hat{L}\mapsto L$ and $L\mapsto
  \hat{L}$ are mutually inverse, which proves the theorem.
\end{proof}

\begin{rem}[$\C$-algebraic structure on $\Rec_\F(\Sigma)$]\label{rem:reg-as-C} From the above theorem and Remark 
\ref{rem:oc_vs_od}.2, we deduce
\begin{equation}\label{eq:iso}
\Rec_\F(\Sigma) \cong \hatD^S(\hatt\FSigma,O_\D) 
= \prod_s \hatD((\hatt\FSigma)_s,O_\D) \cong \prod_s 
\under{P(\hatt\FSigma)_s}.
\end{equation}
\vspace{-0.4cm}

\noindent Therefore we can view $\Rec_\F(\Sigma)$ as an object of $\C$, by endowing it with the unique $\C$-algebraic structure making the above bijection an isomorphism $\Rec_\F(\Sigma)\cong \prod_s P(\hatt\FSigma)_s$ in $\C$.
\end{rem}

\begin{proposition}\label{prop:reg-as-C}
$\Rec_\F(\Sigma)$ forms a subobject of the product
$\prod_s O_\C^{\under{T\FSigma}_s}$ in $\C$.
\end{proposition}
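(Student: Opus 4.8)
The plan is to transport the statement to the dual category $\hatD$ via the equivalence $P\colon\hatD\xra{\simeq}\C^{op}$ of Remark \ref{rem:oc_vs_od}.1, where the monomorphism we want becomes a surjection that is produced directly from the dense morphism $\iota_\FSigma$ of Remark \ref{rem:basic}.\ref{rem:iota}.

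First I would use Remark \ref{rem:reg-as-C} to replace $\Rec_\F(\Sigma)$ by the isomorphic $\C$-object $\prod_{s\in S}P(\hatt\FSigma)_s$. Since a product of monomorphisms is a monomorphism, it then suffices to exhibit, for every sort $s\in S$, a monomorphism $m_s\colon P(\hatt\FSigma)_s\monoto O_\C^{\under{T\FSigma}_s}$ in $\C$; the product $\prod_s m_s$, precomposed with the isomorphism $\Rec_\F(\Sigma)\cong\prod_s P(\hatt\FSigma)_s$, then yields the required embedding. Via the identifications $\under{P(\hatt\FSigma)_s}\cong\hatD((\hatt\FSigma)_s,O_\D)$ and $\under{O_\C}\cong\under{O_\D}$ from Remark \ref{rem:oc_vs_od}.2, this $m_s$ will simply send an $\F$-recognizable language $\hat L\colon(\hatt\FSigma)_s\to O_\D$ to its restriction $\under{V\hat L\o\iota_\FSigma}_s\colon\under{T\FSigma}_s\to\under{O_\D}$ along $\iota_\FSigma$, which mirrors the classical picture in which the regular languages sit inside the full power set of $\Sigma^*$.

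Fixing $s$, I would then apply $P^{-1}$. Since $P^{-1}\colon\C^{op}\xra{\simeq}\hatD$ turns products into coproducts and $O_\C=P\one_\D$, the power $O_\C^{\under{T\FSigma}_s}=\prod_{x\in\under{T\FSigma}_s}P\one_\D$ corresponds to the copower $R:=\coprod_{x\in\under{T\FSigma}_s}\one_\D$ in $\hatD$, and a monomorphism $P(\hatt\FSigma)_s\monoto O_\C^{\under{T\FSigma}_s}$ in $\C$ corresponds to an epimorphism $R\epito(\hatt\FSigma)_s$ in $\hatD$, i.e.\ by Lemma \ref{lem:episurj} to a surjection. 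So the task reduces to building a surjection $q\colon R\epito(\hatt\FSigma)_s$ in $\hatD$. By the universal property of the copower, $q$ amounts to a family $(q_x\colon\one_\D\to(\hatt\FSigma)_s)_{x\in\under{T\FSigma}_s}$ of $\hatD$-morphisms; and since $\one_\D$ is the free $\D$-algebra on one generator and is finite, hence discrete, a $\hatD$-morphism $\one_\D\to(\hatt\FSigma)_s$ is nothing but a point of $V(\hatt\FSigma)_s$. I would take $q_x$ to be the point $\iota_\FSigma(x)$.

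What is left is to check that this $q$ is surjective, and that is the only step where more than bookkeeping is involved. I would factor $q=m\o e$ in $\hatD$ with $e$ surjective and $m$ injective/order-reflecting. As every object of $\hatD$ carries a compact Hausdorff topology, $m$ is a closed embedding, so its image is a closed sub-$\D$-algebra of $(\hatt\FSigma)_s$ that, by the choice of the $q_x$, contains $\iota_\FSigma(x)$ for every $x\in\under{T\FSigma}_s$; hence it contains the sub-$\D$-algebra $\iota_\FSigma[(T\FSigma)_s]$ and therefore also its topological closure. Since $\iota_\FSigma$ has dense image (Remark \ref{rem:basic}.\ref{rem:iota}), this closure is all of $(\hatt\FSigma)_s$, so $m$, and hence $q$, is surjective. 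The main obstacle is essentially already neutralized by the density of $\iota_\FSigma$; the genuine care needed is only in handling the dual equivalence correctly on powers versus copowers, and in the compactness argument identifying the image of $q$ with a closed subalgebra of $(\hatt\FSigma)_s$.
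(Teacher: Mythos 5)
Your proof is correct and follows essentially the same route as the paper's: the paper likewise builds, for each $x\in\under{T\FSigma}_s$, the point $\hat x\colon\one_\D\to(\hatt\FSigma)_s$ with $V\hat x=\iota_\FSigma\o x$, observes that density of $\iota_\FSigma$ makes this family jointly epimorphic, and dualizes it to the jointly monomorphic family $(P\hat x)$ inducing the monomorphism $m_s$ into the power $O_\C^{\under{T\FSigma}_s}$ --- your copower-plus-image-factorization argument for surjectivity is just a repackaging of that step. The only thing the paper does in addition is an explicit diagram chase identifying the resulting embedding with the concrete map $L\mapsto (j\o L_s)_{s\in S}$ described in the remark following the proposition; your sketch asserts this identification but does not verify it, which is harmless for the proposition as stated but is used later (e.g.\ in Remark \ref{rem:ptsigma}).
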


\begin{rem}
More precisely,  let
$i\colon \Rec_\F(\Sigma)\monoto \prod_s O_\C^{\under{T\FSigma}_s}$ be the injective map
given by\[
  (T\FSigma\xra{\;L\;}O_\D) \quad\mapsto\quad (\under{T\FSigma}_s
  \xra{\;L_s\;}
  \under{O_\D} \xra{\;\cong\;} \under{O_\C})_{s\in S},
\] where
$\under{O_\D} \cong \under{O_\C}$ is the bijection of Remark
\ref{rem:oc_vs_od}.2. We will show that $i$ is a morphism of $\C$.
\end{rem}

\begin{proof}
\begin{enumerate}
\item We first show that, for each sort $s$, the object $P(\hatt\FSigma)_s$
  forms a subobject of $O_\C^{\under{T\FSigma}_s}$ in $\C$.  For each element $x\colon
  \one_\D \to (T\FSigma)_s$ of $\under{T\FSigma}_s$, the $\D$-morphism $\one_\D
  \xra{x} (T\FSigma)_s \xra{\iota_\FSigma} V(\hatt\FSigma)_s$ is continuous,
  because $\one_\D$ is finite and thus discrete. That is, there exists a unique morphism 
  $\hat{x}\colon \one_\D
  \to (\hatt\FSigma)_s$ in $\hatD$ with $V\hat{x} = \iota_\FSigma\o x$. Since the
  morphisms $x$ are jointly surjective, and $\iota_\FSigma$ is dense by Remark
  \ref{rem:basic}.\ref{rem:iota}, the family $(\hat{x})_{x\colon \one_\D\to
    (T\FSigma)_s}$ forms a jointly epimorphic family in
  $\hatD$. Thus the dual family $(P\hat{x}\colon P(\hatt\FSigma)_s\to O_\C)_{x\colon \one_\D\to
    (T\FSigma)_s}$ in $\C$ is
  jointly monomorphic. Let $m_s$ be the unique morphism in $\C$ making the triangle below commute for all $x\colon \one_\D \to (TX)_s$, where $\pi_x$ is the product projection.
  \[ 
  \xymatrix{
  P(\hatt\FSigma)_s \ar[dr]_{P\hat x} \ar@{ >->}[r]^{m_s} & 
  O_\C^{\under{T\FSigma}_s} 
  \ar[d]^{\pi_x} \\
  & O_\C
  }
   \]
   Since the family $(P\hat x)$ is jointly monomorphic, $m_s$ is a monomorphism (i.e. injective).

\item It follows that $\Rec_\F(\Sigma)$ is a subobject of $\prod_s
  O_\C^{\under{T\FSigma}_s}$ via the embedding \[ i\colon \Rec_\F(\Sigma) \cong
    \prod_s {P(T\FSigma)_s}  \xra{\prod_s m_s} \prod_s
    O_\C^{\under{T\FSigma}_s}. \]
Let $\pi_{s,x}\colon \prod_s O_\C^{\under{T \FSigma}_s}\to O_\C$ denote the product projection associated to $s\in S$ and $x\colon \one_\D \to (T\FSigma)_s$.  
By applying the definition of the bijection \eqref{eq:iso} and the morphisms $m_s$, the morphism $\pi_{s,x}\o i\colon \Rec_\F(\Sigma)\to O_\C$ maps a language $L\colon T\FSigma\to O_\D$ to the element
\[ 1 \xra{\eta} \under{\one_\C} \xra{\lambda} \under{PP^{-1}\one_\C} = \under{PO_\D} \xra{P\hat L_s} \under{P(\hatt \FSigma)_s} \xra{P\hat x} \under{O_\C},  \]
where $\hat L$ is the continuous extension of $L$, see Theorem \ref{thm:reciso}.
Now consider the following diagram, where $j\colon \under{O_\D}\cong \under{O_\C}$ is the bijection of Remark \ref{rem:oc_vs_od}.2:
\[
\xymatrix{
\under{\one_\C} \ar[r]^\lambda & \under{PP^{-1}\one_\C} \ar@{=}[r] & \under{PO_\D} \ar[d]^{P(L_s\o x)} \ar[r]^{P\hat L_s} & \under{P(\hatt\FSigma)_s} \ar[dd]^{P\hat x} \\
& & \under{P\one_\D} \ar@{=}[dr] &\\
1 \ar[uu]_\eta \ar[r]_{\eta} & \under{\one_\D} \ar[r]_{L_s\o x} & \under{O_\D} \ar[r]_j & \under{O_\C}   
}  
\]
The left-hand part commutes by the definition of the bijection $j$. For the right-hand part observe that
\[ V\hat L_s \o V\hat x = V\hat L_s \o \iota_\FSigma\o x = L_s\o x \]
by the definition of $\hat x$ and $\hat L$. Therefore $\hat L_s\o \hat x = L_s\o x$ in $\hatD$. It follows that the above diagram commutes. The upper path from $1$ to $\under{O_\C}$ is the element  $\pi_{s,x}\o i(x)$ of $\under{O_\C}$, as seen above. This shows that $i$ maps a language $L\colon T\FSigma\to O_\D$ to the element $(\under{T\FSigma}_s
  \xra{\;L_s\;}
  \under{O_\D} \xra{\;j\;} \under{O_\C})_{s\in S}$ of $\prod_s O_\C^{\under{T\FSigma}_s}$, as claimed.
\end{enumerate}
\end{proof}
From the above proposition it follows that the $\C$-algebraic structure of
$\Rec_\F(\Sigma)$ is completely determined by the one of $O_\C$. 

\begin{example}\label{ex:c_operations} For the categories of Example \ref{ex:categories} we get the following $\C$-algebraic operations on $\Rec_\F(\Sigma)$. In the first three cases we have $\under{O_\C}=\under{O_\D}=\{0,1\}$ and thus view languages $L\colon T\FSigma\to O_\D$ as (certain) subsets of $T\FSigma$.  In the last line, the operator $+$ is the addition of weighted languages over $\Field$, i.e. $(L_0+L_1)(x)=L_0(x)+L_1(x)$, and $0$ is the zero language $0\colon T\FSigma\to \Field$.

\begin{table}[h]
\begin{tabularx}{\columnwidth}{lllll}
\hline \vspace{-0.4cm}\\
$\C$ & $\D$ & $O_\C$ & $O_\D$ & $\C$-algebraic operations on $\Rec_\F(\Sigma)$
\\
\hline
$\BA$ & $\Set$ & $\{0<1\}$ & $\{0,1\}$ & $\cup$, $\cap$, $(\dash)^\complement$, $\emptyset$, $T\Sigma$ \\
$\DL$ & $\Pos$ & $\{0<1\}$ & $\{0<1\}$ & $\cup$, $\cap$, $\emptyset$, $T\Sigma$ \\
$\JSL$ & $\JSL$ & $\{0<1\}$  & $\{0<1\}$ & $\cup$, $\emptyset$ \\
$\Vect_{\Field}$ & $\Vect_{\Field}$ & $\Field$ & $\Field$ & $+$, $0$ \\
\hline
\end{tabularx}
\end{table}
\end{example}

\begin{rem} For the free-monoid monad $\MT=\MT_*$ on $\Set$ and $\F=$ all finite monoids, we recover an important result of Pippenger \cite{Pippenger1997}: the boolean
algebra of regular languages over $\Sigma$ is dual to the Stone space
$\widehat{\Sigma^*}$ of profinite words. In fact, in this one-sorted case the isomorphism \eqref{eq:iso} gives
$\Rec(\Sigma) \cong P(\widehat{\Sigma^*})$.
\end{rem}
\subsection{Syntactic \texorpdfstring{$\MT$}{T}-algebras}
In algebraic language theory, a recurring theme is to characterize languages by properties of their \emph{syntactic algebras}, i.e. their minimal recognizing algebras. For example, Sch\"utzenberger's theorem \cite{schuetzenberger} asserts that a regular language is star-free iff its syntactic monoid is aperiodic. Since the syntactic monoid of language can be computed directly from an accepting finite automaton, and aperiodicity is a decidable property of finite monoids, this implies that star-freeness is decidable.

It turns out that the language-theoretic concept of a syntactic algebra is intimately related to our purely algebraic notion of a unary presentation. The results of this subsection serve to put our concepts into the context of classical algebraic language theory; they are, however, not used in the sequel and may be skipped by readers interested only in the variety theorem and its applications.

\begin{definition} Let $L\colon T\FSigma\to O_\D$ be an $\F$-recognizable language. By a
\emph{syntactic 
$\MT$-algebra for $L$} is meant a 
 $\MT$-algebra $A_L$ together with a surjective $\MT$-homomorphism
$e_L\colon
\MT\FSigma\epito A_L$ (called a \emph{syntactic morphism} for $L$) such that 
\begin{enumerate}[(i)]
\item  $e_L$ recognizes $L$;
\item  $e_L$ 
factors through any surjective $\MT$-homomorphism $e\colon 
\MT\FSigma\epito A$ with $A\in \F$ that recognizes $L$, i.e.\ $e_L = h\o e$ for some $\MT$-homomorphism $h\colon A\epito A_L$.
\[
\xymatrix{
\MT\FSigma \ar@{->>}[dr]_{e_L} \ar@{->>}[r]^e  & A \ar@{-->>}[d]^{h}\\
& A_L
}
\]
\end{enumerate}
\end{definition}

\begin{rem}
The universal property determines $A_L$ and $e_L$ uniquely up to isomorphism. Moreover, it implies that $A_L\in \F$ because $\F$ is closed under quotients.
\end{rem}

\begin{example}\label{ex:synmon}
Let $\MT=\MT_*$ on $\Set$ with $O_\Set=\{0,1\}$. The syntactic monoid of a 
recognizable language $L\colon 
\Sigma^*\to \{0,1\}$ is formed as follows. Let $\equiv_L$ be the monoid congruence on $\Sigma^*$ given by 
\[ x\equiv_L x'\quad\text{iff}\quad L(yxz)=L(y x' z)\text{ for all $y,z\in\Sigma^*$}.\]
Then the quotient monoid $\Sigma^*/\mathord{\equiv_L}$ is finite and the projection
\[e_L\colon \Sigma^*\epito  
\Sigma^*/\mathord{\equiv_L}\]
 is a syntactic morphism for $L$.
\end{example}
The above definition of $\equiv_L$ involves for each $y,z\in \Sigma^*$ the unary operation $x\mapsto yxz$ on $\Sigma^*$. It can be expressed as the composite of the two unary operations $x\mapsto yx$ and $x\mapsto xz$ appearing in the unary presentation of $\Sigma^*$ in Example \ref{ex:unpres_monoids}. This is no coincidence: we will show below that one can always derive a syntactic congruence from a unary presentation, and vice versa.
\begin{notation}\label{not:syncong}
Let $\U$ be a set of unary operations on $T\FSigma$, and denote by ${\ol{\U}}$ its closure under composition and identity morphisms $\id\colon (T\FSigma)_s\to(T\FSigma)_s$. 
\begin{enumerate}
\item Suppose that $\D$ is a variety of algebras. The $S$-sorted equivalence relation $\equiv_{\U,L}$ on $\under{T\FSigma}$ is defined as follows: for any two elements $x,x'\in \under{T\FSigma}_s$, put
\[ x\equiv_{\U,L,s} x' \quad\text{iff}\quad L_{t}\o u(x) = L_{t}\o u(x') \text{ for all $u\colon (T\FSigma)_s\to (T\FSigma)_{t}$ in $\ol{\U}$}. \]
\item Suppose that $\D$ is a variety of ordered algebras. The $S$-sorted preorder $\leq_{\U,L}$ on $\under{A}$ is defined as follows: for any two elements $x,x'\in \under{T\FSigma}_s$, put
\[ x\leq_{\U,L,s} x' \quad\text{iff}\quad L_{t}\o u(x) \leq L_{t}\o u(x') \text{ for all $u\colon (T\FSigma)_s\to (T\FSigma)_{t}$ in $\ol{\U}$}. \]
\end{enumerate} 
\end{notation}
\begin{rem}
If $\D$ is a variety of algebras then $\equiv_{\U,L}$ is a congruence on $T\FSigma$ in $\D^S$. Indeed, $\equiv_{\U,L,s}$ is the intersection of the kernels of the morphisms $L_t\o u$, where $u\colon (T\FSigma)_s\to (T\FSigma)_{t}$ in $\U$. Analogously, if $\D$ is a variety of ordered algebras, $\leq_{\U,L}$ is a stable preorder on $T\FSigma$ in $\D^S$. We write 
\[e_L\colon T\FSigma\epito T\FSigma/\mathord{\equiv_{\U,L}}\quad\text{and}\quad e_L\colon T\FSigma\epito T\FSigma/\mathord{\leq_{\U,L}},\]
respectively, for the corresponding quotients in $\D^S$; see \ref{app:congruences} and \ref{app:stablepreorders}.
\end{rem}

\begin{theorem}\label{thm:synalgconst}
For any set $\U$ of unary operations on $T\FSigma$, the following are equivalent:
\begin{enumerate}[(i)]
\item  $\U$ is a unary presentation of the free $\MT$-algebra $\MT\FSigma$.
\item Every $\F$-recognizable language $L\colon T\FSigma\to O_\D$ has a syntactic $\MT$-algebra, and $e_L$ carries a quotient of $\MT\FSigma$ in $\Alg{\MT}$ that forms a syntactic morphism for $L$.
\end{enumerate}
\end{theorem}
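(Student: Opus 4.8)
The plan is to prove both implications, treating throughout the unordered case and noting that the ordered case is identical with congruences, kernels, and ``jointly point-separating'' replaced by stable preorders, ordered kernels, and ``jointly order-reflecting''. The conceptual content is that the construction $L\mapsto e_L$ of Notation~\ref{not:syncong} is precisely the generalized syntactic-algebra construction when $\U$ is a unary presentation, and conversely that the universal property of syntactic algebras is already strong enough to recover which quotients are $\MT$-quotients.

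\emph{(i) $\Rightarrow$ (ii).} Fix an $\F$-recognizable $L\colon T\FSigma\to O_\D$, recognized by a surjective $\MT$-homomorphism $h\colon\MT\FSigma\epito A$ ($A\in\F$) via $p\colon A\to O_\D$, and I would show that the $\D^S$-quotient $e_L\colon T\FSigma\epito T\FSigma/\mathord{\equiv_{\U,L}}$ carries a $\MT$-structure and is a syntactic morphism. First, since $h$ is a $\MT$-quotient and $\U$ a unary presentation, $h$ is a $\U$-quotient, so every $u\in\ol{\U}$ lifts along $h$; from $L=p\o h$ this gives $\mathord{\equiv_h}\subseteq\mathord{\equiv_{\U,L}}$, so $e_L$ factors through $h$ in $\D^S$ and is in particular $\F$-refinable. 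Second, $e_L$ is a $\U$-quotient: for $u\in\U$ and $v\in\ol{\U}$ one has $v\o u\in\ol{\U}$, so $x\equiv_{\U,L}x'$ forces $u(x)\equiv_{\U,L}u(x')$, i.e.\ $\mathord{\equiv_{\U,L}}$ is $\U$-stable. Since $\U$ is a unary presentation and $e_L$ is $\F$-refinable, $e_L$ is therefore a $\MT$-quotient, and its codomain --- being a $\MT$-quotient of $A\in\F$ (homomorphism theorem, Remark~\ref{rem:tpressurj}) --- lies in $\F$. Third, $\id\in\ol{\U}$ gives $\mathord{\equiv_{\U,L}}\subseteq\ker L$, so $L$ factors through $e_L$, i.e.\ $e_L$ recognizes $L$. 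Finally, the derivation of $\mathord{\equiv_h}\subseteq\mathord{\equiv_{\U,L}}$ used only that $h$ is a surjective $\MT$-homomorphism recognizing $L$, hence it applies to any recognizer $e$ of $L$; the induced $\D^S$-factorization $e_L=h'\o e$ has $h'$ a $\MT$-homomorphism by the homomorphism theorem. This is the universal property, so~(ii) holds (and in particular $L$ has a syntactic $\MT$-algebra).

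\emph{(ii) $\Rightarrow$ (i).} I must show, for every $\F$-refinable quotient $e\colon T\FSigma\epito B$ in $\D^S$, that $e$ is a $\MT$-quotient iff it is a $\U$-quotient. For ``$\Leftarrow$'', write $e=p\o q$ with $q\colon\MT\FSigma\epito C$ an $\F$-quotient ($C\in\F$), so $B$ is finite; for each $g\colon B\to O_\D$ in $\D^S$ the language $L_g:=g\o e$ is recognized by $q$, hence by~(ii) $e_{L_g}$ is a syntactic morphism factoring through $q$, so $\mathord{\equiv_{\U,L_g}}$ is a $\MT$-congruence with quotient in $\F$. I would check $\mathord{\equiv_e}=\bigcap_g\mathord{\equiv_{\U,L_g}}$: ``$\subseteq$'' holds because every $u\in\ol{\U}$ lifts along the $\U$-quotient $e$; ``$\supseteq$'' follows by taking $u=\id$ and using that the morphisms $B_s\to O_\D$ separate the points of the finite object $B_s$, where a morphism in a single sort extends to one on all of $B$ because $\D_f(B_r,O_\D)\cong\under{P B_r}$ is nonempty for every sort $r$ (by Assumption~\ref{ass:catframework}(ii) and the duality). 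Since $\mathord{\equiv_e}$ has finite index, only finitely many congruences contain it, so $\bigcap_g\mathord{\equiv_{\U,L_g}}$ is a \emph{finite} intersection of $\MT$-congruences and hence a $\MT$-congruence (realize it as the kernel of $\MT\FSigma\to\prod_i\MT\FSigma/\mathord{\equiv_{\U,L_{g_i}}}$, whose image is a subalgebra). Thus $e$ is a $\MT$-quotient. For ``$\Rightarrow$'', let $e\colon T\FSigma\epito B$ be an $\F$-refinable $\MT$-quotient, hence an $\F$-quotient, and suppose toward a contradiction that $x\equiv_e x'$ but $e(u(x))\ne e(u(x'))$ for some $u\in\U$. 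Pick $g\colon B\to O_\D$ in $\D^S$ separating $e(u(x))$ from $e(u(x'))$ and set $L:=g\o e$; then $e$ recognizes $L$, so by~(ii) the syntactic morphism $e_L$ factors through $e$, giving $\mathord{\equiv_e}\subseteq\mathord{\equiv_{\U,L}}$. Since $u\in\ol{\U}$, $x\equiv_e x'$ now forces $L\o u(x)=L\o u(x')$, i.e.\ $g$ does \emph{not} separate $e(u(x))$ and $e(u(x'))$ --- contradiction. Hence $e$ is a $\U$-quotient.

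\emph{Main obstacle.} The crux is the last ``$\Rightarrow$'' step of (ii)$\Rightarrow$(i): one has to notice that applying the abstract syntactic-algebra hypothesis to the concrete family of languages $g\o e$ --- obtained by postcomposing $e$ with morphisms into the dualizing object $O_\D$ --- already forces $\mathord{\equiv_e}$ to be stable under every $u\in\U$. The remaining ingredients are routine: that $O_\D$ is a (regular, resp.\ order-) cogenerator of $\D_f$ (a consequence of the duality, cf.\ the proof of Proposition~\ref{prop:reg-as-C}), that finite intersections of $\MT$-congruences are $\MT$-congruences, and that $\ol{\U}$ is closed under composition and identities --- which is exactly what makes $\equiv_{\U,L}$ interact correctly with the lifting property.
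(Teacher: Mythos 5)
Your proof is correct, and its skeleton matches the paper's: the direction (i)$\Rightarrow$(ii) is essentially the paper's argument (use $\id\in\ol{\U}$ to factor $L$ through $e_L$, use the liftings along an arbitrary recognizer to get $\mathord{\equiv_h}\subseteq\mathord{\equiv_{\U,L}}$ and hence the universal property, and use closure of $\ol{\U}$ under composition to see that $e_L$ is an $\F$-refinable $\U$-quotient, whence a $\MT$-quotient), and in (ii)$\Rightarrow$(i) both proofs turn on the same two devices, namely the family of languages $k\o e$ for $k\colon B\to O_\D$ and the fact that $O_\D$ is an (order-)cogenerator of $\D_f^S$ — the paper isolates the latter as Lemma~\ref{lem:cosepord}, and your sketch of why single-sort separating morphisms extend to all sorts is exactly its proof. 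The one place you genuinely diverge is the implication ``$\U$-quotient $\Rightarrow$ $\MT$-quotient''. The paper finishes by constructing the structure map $\alpha\colon TA\to A$ directly, applying the homomorphism theorem to $e\o\mu_\FSigma$ and chasing through the syntactic algebras $A_{L_k}$. You instead identify $\mathord{\equiv_e}$ with $\bigcap_g\mathord{\equiv_{\U,L_g}}$, note that this is a \emph{finite} intersection because $\mathord{\equiv_e}$ has finite index, and conclude that it is an $\F$-congruence by realizing the intersection as a subdirect product (Assumption~\ref{ass:dt}(iv)). This is a cleaner finish that trades the diagram chase for the closure of $\F$ under subalgebras of finite products, which the standing assumptions supply for free; the finite-index reduction is the essential point that makes it work, since $\F$ need not be closed under infinite subdirect products. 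Both versions, including your by-contradiction formulation of the converse implication, carry over to the ordered case under the substitutions you indicate.
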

To prepare the proof, we need to establish an important property of the output object $O_\D$:

\begin{lemma}\label{lem:cosepord}
\begin{enumerate}
\item If $\D$ is a variety of algebras, then $O_\D$ is a cogenerator of $\D_f^S$: for any object $D\in \D^S_f$ and any 
two elements $x,y\in
\under{D}_s$ with $s\in S$ we have
\[
  x= y \quad\text{iff}\quad \forall (D\xra{k} O_\D): k(x)= k(y).
\]
\item If $\D$ is a variety of ordered algebras, then $O_\D$ is an order-cogenerator of $\D_f^S$: for any object $D\in \D^S_f$ and any 
two elements $x,y\in
\under{D}_s$ with $s\in S$ we have
\[
  x\leq y \quad\text{iff}\quad \forall (D\xra{k} O_\D): k(x)\leq k(y).
\]
\end{enumerate}
\end{lemma}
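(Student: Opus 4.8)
The plan is to deduce both statements from the fact, recorded in Remark \ref{rem:oc_vs_od}, that $O_\D = P^{-1}\one_\C$ together with the natural isomorphism $\under{\dash}\circ P^{-1} \cong \C(\dash, O_\C)$, i.e.\ that $O_\D$ is the dualizing object translating homs \emph{into} $O_\D$ in $\hatD$ to the underlying set of the $\C$-dual. Concretely, for a finite $D \in \D_f$, the bijection $\under{PD} \cong \hatD(D, O_\D)$ of Remark \ref{rem:oc_vs_od}.2 means that the morphisms $k\colon D \to O_\D$ are exactly (up to the canonical identification) the elements of $\under{PD}$. Thus the $\D$-morphisms $D \to O_\D$ "are" the points of the $\C$-object $PD$, and separating elements of $D$ by such morphisms is the same as the points of $PD$ being jointly separating, which is a standard property of the locally finite variety $\C$ whose signature, by Assumption \ref{ass:catframework}(ii), contains a constant.

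First I would reduce to the single-sorted case: since the order/equality on $\D^S$ is computed sortwise and $O_\D \in \D_f^S$ is the constant family with value $O_\D \in \D_f$ in each sort, and since any $k\colon D \to O_\D$ in $\D_f^S$ restricts in sort $s$ to a morphism $D_s \to O_\D$ in $\D_f$ (extend by arbitrary constant morphisms in the other sorts, using that the signature of $\C$, hence — after dualizing — the object $O_\D$, admits a constant so that $\D_f$ has enough morphisms into $O_\D$), it suffices to treat $D \in \D_f$. The "only if" directions are trivial: $x = y$ (resp.\ $x \leq y$) obviously implies $k(x) = k(y)$ (resp.\ $k(x) \leq k(y)$) for every $k$. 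For the "if" direction in the unordered case: suppose $k(x) = k(y)$ for all $k\colon D \to O_\D$. Via the bijection $\hatD(D,O_\D) \cong \under{PD} \cong \C(\one_\C, PD)$, this says that every point of $PD$ takes equal values at $x$ and $y$ — more precisely, that the two morphisms $x^\D, y^\D\colon \one_\D \to D$ (extending $x,y$) become equal after dualizing and composing with every point $\one_\C \to PD$. Since $\C$ is a variety of algebras, its free object $\one_\C$ on one generator is a \emph{generator}, i.e.\ the maps $\C(\one_\C, C) \to \under{C}$ are the identity, and morphisms into any $C \in \C$ are determined by their underlying functions; hence jointly the points $\one_\C \to PD$ separate the (distinct) elements $Px, Py$ of the dual — equivalently, $Px^\D = Py^\D$ as morphisms $P D \to P\one_\D = O_\C \leftrightarrow$ something; unravelling the dual equivalence $P\colon \hatD \xra{\simeq} \C^{op}$, $Px^\D = Py^\D$ forces $x^\D = y^\D$, hence $x = y$.

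For the ordered case the argument is the same except that "separating" is replaced by "order-reflecting": I would use that in a variety $\C$ of (here we need rather) — wait, $\C$ is always a variety of algebras by Assumption \ref{ass:catframework}, even when $\D$ is ordered — the correct statement is that the dual equivalence $P\colon \hatD \simeq \C^{op}$ carries the factorization system (surjections, order-reflecting injections) of $\hatD$ (Lemma \ref{lem:episurj} and Remark \ref{rem:oc_vs_od}.3) to (injections, \dots) of $\C$, and that $O_\C$ with its underlying order (inherited from $O_\D$ via the bijection $\under{O_\D} \cong \under{O_\C}$ of Remark \ref{rem:oc_vs_od}.2) is an order-cogenerator of $\D_f^S$ precisely because the points $\one_\D \to O_\D$, dually the morphisms $PO_\D \to P\one_\D$, are jointly order-reflecting. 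So I would phrase the key lemma as: the family of all $\D_f$-morphisms $D \to O_\D$ is jointly monic (resp.\ jointly order-reflecting), which is exactly the assertion that $O_\D$ cogenerates (resp.\ order-cogenerates) $\D_f$, and this follows by dualizing the statement that $\one_\C$ generates $\C$ (hence $\C_f$), which holds because $\C$ is a variety of algebras.

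The main obstacle I anticipate is bookkeeping in the ordered case: making the identification $\under{O_\D} \cong \under{O_\C}$ compatible with the orders, and checking that "jointly order-reflecting on points" transports correctly across the contravariant equivalence $P$ (a surjection $A \epito B$ in $\hatD$ dualizes to a mono $PB \monoto PA$, and an order-reflecting mono should dualize appropriately) — essentially verifying that the dualizing object $O_\D/O_\C$ genuinely behaves schizophrenically with respect to the order, which is implicit in Remark \ref{rem:oc_vs_od} but needs to be spelled out. Everything else is a routine unwinding of the adjunction/duality isomorphisms already set up in the excerpt.
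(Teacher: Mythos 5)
Your reduction to a single sort (padding the remaining sorts with a morphism obtained from the constant in the signature of $\C$, via Assumption \ref{ass:catframework}(ii)) matches the paper, and your argument for the unordered case is sound: identifying morphisms $D\to O_\D$ with elements of $PD$, i.e.\ with morphisms $\one_\C\to PD$, and using that two $\C$-morphisms out of $PD$ agreeing on all elements are equal, is an equivalent (if more indirect) packaging of the paper's mono $D\monoto\prod_{\under{PD}}O_\D$.

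The gap is in the ordered case. Your distilled key lemma --- ``dualize the statement that $\one_\C$ generates $\C_f$'' --- only yields that the family of morphisms $D\to O_\D$ is jointly \emph{monic}, i.e.\ jointly injective; it does not yield that it is jointly \emph{order-reflecting}, which is what part (2) asserts. These are genuinely different properties: already in $\Pos$ an injective monotone map need not reflect the order (map a two-element antichain injectively into $\{0<1\}$). The dual of ``generator'' is ``cogenerator'', not ``order-cogenerator''. The substantive input you are missing is the \emph{strong} version: since $PD$ is finite, hence finitely generated in $\C$, the canonical morphism $\coprod_{i\in\under{PD}}\one_\C\to PD$ is \emph{surjective}, i.e.\ a strong epimorphism in $\C_f$; its dual $m\colon D\monoto\prod_{i}O_\D$ is therefore a strong monomorphism in $\D_f$, and strong monos in a variety of ordered algebras are exactly the order-reflecting injections (Remark \ref{rem:tpressurj}, \ref{app:factsystem}). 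Composing $m$ with the jointly order-reflecting product projections then produces, for $x\not\leq y$, a single $k=\pi_i\o m$ with $k(x)\not\leq k(y)$. You flag the ordered case as ``bookkeeping'' about transporting orders across the duality, but the real issue is that you need $\one_\C$ to be a \emph{strong} generator (surjectivity of the copower map), not merely a generator; without that the order-cogeneration statement does not follow from your argument.
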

\begin{proof}
To prove (b), suppose first that $S=1$. Given $D\in \D_f$, its dual object $PD\in \C_f$ is finite and thus finitely generated, so there exists a surjective morphism (i.e. a strong epimorphism) $e\colon \coprod_{i\in I} \one_\C \epito PD$ in $\C_f$, where $I=\under{PD}$. The dual morphism $m\colon D\monoto \prod_{i\in I} O_\D$ in $\D_f$ is a strong monomorphism and thus order-reflecting. Therefore $x\not\leq y$ in $\under{D}$ implies $m(x)\not\leq m(y)$, and thus $\pi_i m(x)\not \leq \pi_i m(y)$ for some $i\in I$ because the product projections $\pi_i\colon \prod_i O_\D\to O_\D$ are jointly order-reflecting. This shows that the morphism $k := \pi_i \o m\colon D\to O_\D$ separates $x$ and $y$, as desired.

Now let $S$ be arbitrary and $x\not\leq y \in \under{D}_s$. By the above argument there exists a morphism $k_s\colon D_s\to O_\D$
in $\D$ with $k_s(x)\not\leq k_s(y)$. For any sort $t\neq s$, pick an arbitrary
morphism $k_{t}\colon D_{t}\to O_\D$. Such a morphism exists because, by our
Assumption~\ref{ass:catframework}(ii) that the signature of $\C$ contains a constant, we
dually have a morphism $\one_\C \to PD_{t}$ in $\C_f$. Thus $k\colon D\to O_\D$ is
a morphism in $\D^S$ with $k(x)\not\leq k(y)$.

The proof of (a) is completely analogous, replacing inequations by equations.
\end{proof}

\begin{proof}[Proof of Theorem \ref{thm:synalgconst}]
We prove the case where $\D$ is a variety of ordered algebras; for the unordered case, just replace $\leq_{\U,L}$ by $\equiv_{\U,L}$ and  inequations by equations throughout the proof. For brevity, put $A_{L} := T\FSigma/\mathord{\leq_{\U,L}}$.

\medskip
\noindent\textbf{(i)$\To$(ii)} Suppose that $\U$ is a unary 
presentation of $\MT\FSigma$, and 
let 
$L\colon T\FSigma\to O_\D$ be an $\F$-recognizable 
language. 
\begin{enumerate}[(a)]
\item We show that there exists a morphism $p_{L}\colon A_{L}\to O_\D$ in $\D^S$ with 
$L=p_{L}\o 
e_{L}$, using the homomorphism theorem. Let $x,y\in 
\under{T\FSigma}_s$ 
with 
$e_{L}(x)\leq e_{L}(y)$, i.e. $x\leq_{\U,L} y$. Since $\ol \U$ contains 
all identities, putting  $u:=\id_{(T\FSigma)_s}$ in the definition of $\leq_{\U,L}$ (see Notation
\ref{not:syncong})  yields 
$L(x)\leq L(y)$. The homomorphism theorem gives the desired $p_{L}$.
\item Since $L$ is $\F$-recognizable, there is a surjective $\MT$-homomorphism 
$e\colon \MT\FSigma\epito (A,\alpha)$ into a $\MT$-algebra $(A,\alpha)\in \F$ and a morphism $p\colon A\to 
O_\D$ in $\D^S$ with $L=p\o e$. Thus $e$ is an $\F$-quotient, and since $\U$ is a unary presentation of $\MT\FSigma$, it follows that $e$ is a $\U$-quotient, i.e. every $u\colon (T\FSigma)_s\to (T\FSigma)_{t}$ in $\U$ has a lifting $u_A\colon A_s\to A_{t}$ 
along 
$e$ with $e\o u = u_A\o e$. We claim that there exists a morphism $h\colon 
A\to 
A_L$ in $\D^S$ with 
$e_L = h\o e$. This follows from the homomorphism theorem: let 
$x,y\in\under{T\FSigma}_s$ with $e(x)\leq e(y)$. Then, for all sorts $t$ and $u\colon 
(T\FSigma)_s\to 
(T\FSigma)_{t}$ in $\U$,
\[
  L\o u(x) 
  = 
  p\o e \o u(x) 
  = 
  p\o u_A\o e(x) 
  \leq 
  p\o u_A\o e(y) 
  =  
  p\o e \o u(y) 
  = 
  L\o u(y).
\]
Thus $x\leq_{\U,L} y$, or equivalently $e_L(x)\leq e_L(y)$, and the homomorphism 
theorem gives the desired $h$.
\item We show that (I) 
$e_{L}$ is $\F$-refinable, and (II) $e_L$ is a $\U$-quotient. This implies the claim: since
$\U$ is a unary presentation of $\MT\FSigma$ it follows that the morphism is $e_L$ is a $\MT$-quotient, i.e. it can be equipped with a 
$\MT$-algebra 
structure making $e_L$ a $\MT$-homomorphism. And part (a) and (b) show that 
$e_L$ recognizes $L$ and has the universal property of a syntactic morphism.

(I) is clear since $e_L = h\o e$ by (b) and $e\colon \MT\FSigma\epito (A,\alpha)$ is an $\F$-quotient of $\MT\FSigma$. For (II) we need to show that for all $u\colon (T\FSigma)_s\to (T\FSigma)_t$ in $\U$ and all $x,y\in 
\under{T\FSigma}_s$ with $x\leq_{\U,L} y$, we have $u(x)\leq_{\U,L} u(y)$. Note that 
for all sorts $t'$ and all $u'\colon (T\FSigma)_{t}\to (T\FSigma)_{t'}$ in $\U$ we have $u'\o u\in 
\ol \U$ because $\ol \U$ is closed under composition. Thus $x\leq_{\U,L} 
y$ 
implies 
\[ L\o (u'\o u) (x) \leq L\o (u'\o u) (x)\quad\text{for all sorts $t'$ and $u'\colon 
(T\FSigma)_{t}\to (T\FSigma)_{t'}$ in $\U$}, \]
which means precisely that $u(x)\leq_{\U,L} u(y)$. 
\end{enumerate}
\textbf{(ii)$\To$(i)} Suppose that, for any $\F$-recognizable language $L$ 
over $\Sigma$, the morphism $e_L\colon T\FSigma\epito A_L$ is a $\MT$-quotient that forms a syntactic morphism for $L$. We verify that 
$\U$ is a unary 
presentation. Thus let $e\colon T\FSigma\epito A$ be an $\F$-refinable quotient in 
$\D^S$. We need to show that $e$ is a $\MT$-quotient iff $e$ is a $\U$-quotient.

\smallskip \noindent For the ``only if'' direction, suppose that $e$ is a $\MT$-quotient, i.e.  $A$ carries a $\MT$-algebra 
structure $(A,\alpha)$ making $e\colon 
\MT 
\FSigma\epito (A,\alpha)$ a $\MT$-homomorphism. We need to show that $e$ is a $\U$-quotient: every 
unary operation $u\colon (T\FSigma)_s\to (T\FSigma)_{t}$ in $\U$ has a lifting 
along $e$, i.e. there exists a morphism $u_A\colon 
A_s\to A_{t}$ with $e\o u = u_A\o e$. To this end, we again apply the homomorphism theorem.
For any morphism $k\colon A\to O_\D$ in $\D^S$ we have the $\F$-recognizable language 
$L_k 
:= k\o e: T\FSigma\to O_\D$, which by hypothesis has the syntactic 
morphism 
$e_{L_k}\colon 
\MT\FSigma \epito A_{L_k}$. Since $e_{L_k}$ recognizes $L_k$, there exists a 
morphism $p_{L_k}\colon A_{L_k}\to O_\D$ with $L_k= 
p_{L_k} \o e_{L_k}$. Furthermore, the universal property of the syntactic 
morphism $e_{L_k}$ gives a unique $\MT$-homomorphism $h_k\colon A\epito A_{L_k}$ with 
$e_{L_k} = h_k \o e$. Then for all
$x,y\in\under{T\FSigma}_s$ we have the following implications:
\begin{align*}
 e(x) \leq e(y) ~&\To~ \forall(k\colon A\to O_\D): e_{L_k}(x) \leq e_{L_k}(y) & (e_{L_k} 
 = h_k \o e)\\
 ~&\Lra~ \forall k: x \leq_{\U,L_k} y & (\text{def. $e_{L_k}$})\\
 ~&\To~ \forall k: L_k\o u(x) \leq L_k\o u(y) & (\text{def. $\leq_{\U,L_k}$})\\
 ~&\Lra~ \forall k: k\o e \o u(x) \leq k\o e\o u(y) & (\text{def. $L_k$})\\
 ~&\Lra~ e \o u(x) \leq e\o u(y) & (\text{Lemma \ref{lem:cosepord}}).
\end{align*}
The homomorphism theorem  gives the desired lifting $u_A$. Thus $e$ is a $\U$-quotient.

\smallskip \noindent Conversely, for the ``if'' direction suppose that $e$ is a $\U$-quotient. Since $e$ is $\F$-refinable, we can choose a quotient $\ol e\colon \MT\FSigma\epito B$ with $B\in \F$ and a morphism $f\colon B\to A$ with $e=f\o \ol e$. Moreover, by hypothesis every $u\colon 
(T\FSigma)_s\to (T\FSigma)_{t}$ in $\U$ 
has 
a lifting $u_A\colon A_s\to A_{t}$ along $e$. It is our task to show that $A$ is a $\MT$-quotient, i.e. carries a 
$\MT$-algebra structure making $e$ a $\MT$-homomorphism. For each $k\colon A\to 
O_\D$ in $\D^S$ the language $L_k := k\o e\colon T\FSigma\to O_\D$ is $\F$-recognizable; indeed, we have $L_k = k\o f \o \ol e$, so $L_k$ is recognized by the $\MT$-homomorphism $\ol e$. By hypothesis, $L_k$ has the syntactic morphism $e_{L_k}\colon \MT\FSigma\epito 
A_{L_k}$. Since $e_{L_k}$ recognizes $L_k$, there exists a morphism $p_{L_k}\colon 
A_{L_k}\to 
O_\D$ with $L_k = p_{L_k}\o e_{L_k}$.

We claim that $e_{L_k}$ factors through $e$. To see this, we use the 
homomorphism 
theorem. Given $x,y\in \under{T\FSigma}_s$ with $e(x)\leq e(y)$, we have for all 
$u\colon (T\FSigma)_s\to (T\FSigma)_{t}$ in $\U$:
\begin{align*}
L_k\o  u(x) &= k\o e \o u(x) & \text{(def. $L_k$)}\\
&= k\o u_A\o e(x) & \text{(def. $u_A$)}\\
&\leq k\o u_A \o e(y) & \text{($e(x)\leq e(y)$)}\\
&= k\o e \o u(y) & \text{(def. $u_A$)}\\
&= L_k\o  u(y) & \text{(def. $L_k$)}.
\end{align*}
Thus $x\leq_{\U,L_k} y$, or equivalently $e_{L_k}(x)\leq e_{L_k}(y)$. The 
homomorphism 
theorem yields a morphism $h_k$ with $e_{L_k} = h_k \o e$.

We are ready to define the desired $\MT$-algebra structure $(A,\alpha)$ on $A$ for which 
$e$ is a $\MT$-homomorphism. Since $T$ preserves epimorphisms, it suffices to
find a 
morphism $\alpha\colon TA\to A$ in $\D^S$ making the following square commute, see \ref{app:quotients_of_talgs}:
\[
\xymatrix{
TT\FSigma \ar[r]^{\mu_\FSigma}  \ar@{->>}[d]_{Te} & T\FSigma \ar@{->>}[d]^e \\
TA \ar[r]_\alpha & A
}
\]
 The construction of $\alpha$  once 
again 
rests on the homomorphism theorem.
The proof is illustrated by the diagram below, where $\alpha_{L_k}$ is the 
$\MT$-algebra structure of $A_{L_k}$.
\[
\xymatrix{
TT\FSigma \ar[r]^{\mu_\FSigma} \ar@{->>}[d]_{Te} 
\ar@/_3em/@{->>}[dd]_{Te_{L_k}} & T\FSigma \ar@{->>}[d]^e \ar[dr]^{L_k} & \\
TA \ar@{-->}[r]_\alpha \ar@{->>}[d]_{Th_k} & A \ar@{->>}[d]^{h_k} \ar[r]^k & 
O_\D\\
TA_{L_k} \ar[r]_{\alpha_{L_k}} & A_{L_k} \ar[ur]_{p_{L_k}}
}
\]
For all $x,y\in \under{TT\FSigma}_s$ with $Te(x)\leq Te(y)$ we have 
\begin{align*}
 k\o e\o  \mu_\FSigma(x) &= L_k \o  \mu_\FSigma(x) & (\text{def.\ $L_k$})\\
 &= p_{L_k} \o e_{L_k} \o \mu_\FSigma(x) & (\text{def. $p_{L_k}$, $e_{L_k}$})\\
 &= p_{L_k} \o \alpha_{L_k}\o Te_{L_k}(x) & (\text{$e_{L_k}$ is 
 $\MT$-hom.})\\
 &=  p_{L_k} \o \alpha_{L_k}\o Th_k \o Te(x) & (\text{def.\ $h_k$})\\
 &\leq  p_{L_k} \o \alpha_{L_k}\o Th_k \o Te(y) & (\text{$Te(x)\leq Te(y)$})\\
 &= \cdots & (\text{compute backwards})\\ 
 &= k\o e\o  \mu_\FSigma(y).
\end{align*}
Since this holds for all $k\colon A\to O_\D$, Lemma \ref{lem:cosepord} shows that 
$e\o \mu_\FSigma(x)\leq e\o 
\mu_\FSigma(y)$. Thus 
the homomorphism theorem yields the desired $\MT$-algebra structure 
$\alpha$. Therefore $e$ is a $\MT$-quotient. 

This concludes the proof that $\U$ forms a unary presentation of $\MT\FSigma$.
\end{proof}

\begin{example}\label{ex:syncong}
Let $\MT=\MT_\ast$ on $\Set$. Consider the the unary presentation of the free monoid $M=\Sigma^*$ as in Example \ref{ex:unpres_monoids} and let $L\seq \Sigma^*$ be a regular (= monoid-recognizable) language. Then $\equiv_{\U,L}$ is precisely the monoid congruence $\equiv_L$ in Example \ref{ex:synmon}, and thus the above theorem shows that $e_L\colon \Sigma^*\epito \Sigma^*/\mathord{\equiv_L}$ is indeed a syntactic morphism for $L$. Similarly, for the unary presentations in Example \ref{ex:unpres_omegasem} and \ref{ex:unpres_monads}, the theorem gives a description of the syntactic $\omega$-semigroup for an $\infty$-recognizable language \cite{pinperrin04}, and of the syntactic $\MT$-algebra for a monad $\MT$ on $\Set^S$ and a $\MT$-recognizable language \cite{boj15}. Since syntactic algebras are not used in the sequel, we leave the details to the interested reader.
\end{example}
Theorem \ref{thm:synalgconst} explains why syntactic algebras are traditionally presented as a key technique in work on Eilenberg theorems: they implicitly contain the concept of a unary presentation. However,  from a categorical point of view the latter are the ``heart of the matter'', and for deriving Eilenberg-type correspondences it is usually easier to directly work with unary presentations in lieu of considering syntactic algebras. We will demonstrate this in Section \ref{sec:applications}.

\section{The Variety Theorem}\label{sec:varietythm}

In this section we establish our main result, the Variety Theorem for $\F$-recognizable languages. To introduce varieties of languages in our categorical setting, we first need a notion of \emph{language derivatives} extending the classical concept. Here our Assumption \ref{ass:catframework}(ii) comes to use:

\begin{notation}
Recall that the variety $\C$ is assumed to have a constant in the signature. 
Choosing a constant gives a natural transformation from $C_{\one_\C}\colon \C\to
\C$, the 
constant functor on $\one_\C\in\C$, to the identity functor $\Id_\C$. It dualizes 
to a 
natural 
transformation
\[\bot\colon \Id_{\hatD}\to C_{O_\D}.\]
We usually drop subscripts and write $\bot$ for each component $\bot_D\colon D\to O_\D$ ($D\in \hatD$).
\end{notation}

\begin{example}\label{ex:bot} The purpose of $\bot$
is to model the empty set. Recall that for our categories $\D$ of Example~\ref{ex:categories} we have
\[O_{\Set} = \{0,1\},\, O_\Pos = \{0<1\},\, O_{\JSL} = \{0<1\}\text{ and } O_{\Vect_\Field} = \Field,\] see Example \ref{ex:ocod}. In each case, we choose
$\bot\colon D\to O_\D$ to be the constant morphism with
value~$0$, i.e. the characteristic function of the empty set.
\end{example}

\begin{definition}\label{def:derivatives_preimages} Let $L\colon T\FSigma\to 
O_\D$ be a language. Then we define the following languages:
\begin{enumerate}
\item The \emph{derivative} $u^{-1}L$ of $L$ w.r.t. a unary operation $u\colon
(T\FSigma)_s\to(T\FSigma)_{t}$
  is the language over~$\Sigma$ defined by
\[
      \begin{cases}
        (u^{-1}L)_s = (T\FSigma)_s \xra{\;u\;} (T\FSigma)_{t} \xra{\;L_{t}\;} O_\D &\\
        (u^{-1}L)_r = (T\FSigma)_r \xra{\;\iota_\FSigma\;} (V\hatt\FSigma)_r \xra{\;V\bot\;}
        O_\D & \text{for $r \neq s$.}
      \end{cases}
    \]
\item The \emph{preimage} $g^{-1}L$ of $L$ under a $\MT$-homomorphism $g\colon
  \MT\FDelta\to \MT\FSigma$ is the language over~$\Delta$ defined by 
\[ g^{-1}L = (\,T\FDelta \xra{g} T\FSigma \xra{L} O_\D\,).\]
\end{enumerate}
\end{definition}

\begin{rem}
If $S=1$, we simply have $u^{-1}L = L\o u$ and the natural transformation $\bot$ is not used. Therefore, in the single-sorted case our Assumption \ref{ass:catframework}(ii) can be dropped.
\end{rem}

\begin{example}\label{ex:derivatives}
\begin{enumerate}
\item Let $\MT=\MT_*$ on $\Set$. Consider the unary presentation $\U_\Sigma$ of the free monoid $M=\MT_\ast\Sigma = \Sigma^*$ as in Example \ref{ex:unpres_monoids}. Then a language $L\colon T_\ast \Sigma\to  O_\Set$, viewed as a subset $L\seq\Sigma^*$, has the following derivatives w.r.t. the operations in $\U_\Sigma$:
\[y^{-1}L = \{\,x\in\Sigma^*: yx\in L\,\}\quad\text{and}\quad Ly^{-1} = \{\,x\in\Sigma^*: xy\in L\,\}\] for $y\in \Sigma^*$. These are exactly the classical derivatives for languages of finite words. The preimage of $L$ under a monoid morphism $g\colon \Delta^*\to \Sigma^*$ is the language
\[ g^{-1}L = \{\, x\in\Delta^*: g(x)\in L \,\}.\]
\item Let $\MT=\MT_\infty$ on $\Set^2$. Consider the unary presentation $\U_\Sigma$ of the free $\omega$-semigroup $A=\MT(\Sigma,\emptyset)=(\Sigma^+,\Sigma^\omega)$ as in Example \ref{ex:unpres_omegasem}. A language $L\colon \MT_\infty(\Sigma,\emptyset)\to O_\Set = (\{0,1\},\{0,1\})$ may be viewed as the subset $L\seq \Sigma^+\cup \Sigma^\omega$ obtained as the union of the subsets $L_+$ and $L_\omega$ in its two sorts. Then $L$ has the following derivatives w.r.t.~the operations in $\U_\Sigma$:
\begin{align*}
y^{-1} L &= \{\,x\in \Sigma^+: yx\in L\,\}\\
Ly^{-1} &= \{\,x\in \Sigma^+: xy\in L\,\}\\
Lz^{-1} &= \{\,x\in \Sigma^+: xz \in L\,\}\\
\omega^{-1}L &= \{\,x\in \Sigma^+: x^\omega \in L\,\}\\
y^{-1}L_\omega &= \{\,z\in\Sigma^\omega: yz\in L\,\}
\end{align*}
for $y\in\Sigma^*$ and $z\in 
\Sigma^\omega$.  These are the derivatives for
$\infty$-languages studied by Wilke~\cite{wilke91}. The preimage of $L$ under an $\omega$-semigroup morphism $g\colon (\Delta^+,\Delta^\omega)\to (\Sigma^+,\Sigma^\omega)$ is the language
\[ g^{-1}L = \{ x\in \Delta^+\cup \Delta^\omega\colon g(x) \in L \}. \]
\item Let $\MT$ be an arbitrary monad on $\Set^S$, and consider the unary operations $[p]$ of Example \ref{ex:unpres_monads} presenting the free $\MT$-algebra $A=\MT\Sigma$. The induced
derivatives of a language $L\seq T\Sigma$ are the languages 
$p^{-1}L\seq 
T\Sigma$ given by 
\[(p^{-1}L)_s = \{\,x\in (T\Sigma)_s : [p](x)\in L_{t}\,\}\quad\text{and}\quad
(p^{-1}L)_r = \emptyset \text{ for $r\neq s$},\]
where $p\colon 1_{t}\to T(T\Sigma+1_s)$ 
ranges over polynomials over $T\Sigma$. These \emph{polynomial derivatives} were
 studied by Boja\'nczyk \cite{boj15}. The preimage of $L$ under a $\MT$-homomorphism $g\colon \MT\FDelta\to\MT\FSigma$ is the language
\[ g^{-1}L = \{x\in T\Delta : g(x)\in L \}.\]
\end{enumerate}
\end{example}

\begin{proposition}\label{prop:recclosed} Let $L\colon T\FSigma\to O_\D$ be an $\F$-recognizable language.
\begin{enumerate}
\item If $\U_\Sigma$ is a unary presentation of $\MT\FSigma$, every derivative $u^{-1}L$ with $u\in\U_\Sigma$ is $\F$-recognizable.
\item For any $\MT$-homomorphism $g\colon \MT\FDelta\to\MT\FSigma$, the preimage $g^{-1}L$ is $\F$-recognizable.
\end{enumerate}
\end{proposition}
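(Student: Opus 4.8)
The plan is to use Theorem \ref{thm:reciso}, which identifies $\F$-recognizable languages over $\Sigma$ with morphisms $\hatt\FSigma \to O_\D$ in $\hatD^S$. So for each of the two statements I would start with a recognizable language $L\colon T\FSigma\to O_\D$, pass to its continuous extension $\hat L\colon \hatt\FSigma\to O_\D$ (so that $L = V\hat L\o\iota_\FSigma$), build a suitable morphism into $O_\D$ in $\hatD^S$ for the derivative/preimage, and then invoke Theorem \ref{thm:reciso} again to conclude recognizability. The key point in each case is that the ``transformation'' of $L$ (precomposition by a unary operation, or by $g$) lifts from $T\FSigma$ to $\hatt\FSigma$.

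For part (2) this is the easy case. Given $g\colon \MT\FDelta\to\MT\FSigma$, Lemma \ref{lem:gat} provides the continuous extension $\hat g\colon \hatT\FDelta\to\hatT\FSigma$ with $V\hat g\o\iota_\FDelta = \iota_\FSigma\o g$. Then I would set $\widehat{g^{-1}L} := \hat L\o\hat g\colon \hatt\FDelta\to O_\D$ in $\hatD^S$, and check that its restriction along $\iota_\FDelta$ equals $g^{-1}L = L\o g$: indeed $V(\hat L\o\hat g)\o\iota_\FDelta = V\hat L\o V\hat g\o\iota_\FDelta = V\hat L\o\iota_\FSigma\o g = L\o g$. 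By Theorem \ref{thm:reciso}, $g^{-1}L$ is therefore $\F$-recognizable.

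For part (1), given a unary operation $u\colon (T\FSigma)_s\to (T\FSigma)_t$ in the unary presentation $\U_\Sigma$, Lemma \ref{lem:extend_fi} yields a unique extension $\hat u\colon (\hatt\FSigma)_s\to (\hatt\FSigma)_t$ in $\hatD$ with $V\hat u\o\iota_\FSigma = \iota_\FSigma\o u$. I would then define $\widehat{u^{-1}L}\colon \hatt\FSigma\to O_\D$ sortwise: in sort $s$ it is $\hat L_t\o\hat u$, in every other sort $r\neq s$ it is the component $\bot\colon (\hatt\FSigma)_r\to O_\D$ of the natural transformation $\bot\colon\Id_{\hatD}\to C_{O_\D}$. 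This is a well-defined morphism of $\hatD^S$. Restricting along $\iota_\FSigma$ gives exactly the two cases of Definition \ref{def:derivatives_preimages}: in sort $s$, $V(\hat L_t\o\hat u)\o\iota_\FSigma = V\hat L_t\o\iota_\FSigma\o u = L_t\o u = (u^{-1}L)_s$, and in sort $r\neq s$, $V\bot\o\iota_\FSigma = (u^{-1}L)_r$ by definition. Again Theorem \ref{thm:reciso} applies and $u^{-1}L$ is $\F$-recognizable.

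The only mildly subtle point, and the place where I would be careful, is the existence of the extension $\hat u$ of a unary operation to $\hatt\FSigma$: this is precisely the content of Lemma \ref{lem:extend_fi} and relies on $\U_\Sigma$ being a genuine unary presentation of $\MT\FSigma$ (so that the liftings $u_A$ along all finite $\F$-quotients form a compatible family over the defining cofiltered diagram of $\hatt\FSigma$). Since that lemma is already available, the proof is essentially routine bookkeeping with the density of $\iota_\FSigma$ and the naturality of $\bot$; no genuine obstacle remains.
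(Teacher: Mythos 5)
Your proposal is correct and follows essentially the same route as the paper's proof: both pass to the continuous extension $\hat L$ via Theorem~\ref{thm:reciso}, use Lemma~\ref{lem:extend_fi} (resp.\ Lemma~\ref{lem:gat}) to extend $u$ (resp.\ $g$) to $\hatt\FSigma$, form the composite with $\hat L$ (padded by $\bot$ in the other sorts for the derivative), and conclude by Theorem~\ref{thm:reciso} again. No gaps.
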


\begin{proof}
Let $L\colon T\FSigma\to O_\D$ be an $\F$-recognizable language. By Theorem 
\ref{thm:reciso} there exists a morphism $\hat L\colon \hatt\FSigma\to O_\D$ 
in $\hatD^S$ with $L=V\hat L\o \iota_\FSigma$.
\begin{enumerate}[(a)]
\item Let $u\colon (T\FSigma)_s\to (T\FSigma)_{t}$ in $\U_\Sigma$,
  and take its continuous extension $\hat u$, see
  Lemma~\ref{lem:extend_fi}. Then we have the commutative diagrams below
  (where $r\neq s$ in the right-hand diagram). 
\[
\xymatrix{
(T\FSigma)_s \ar@/^2em/[rr]^{u^{-1}L} \ar[r]^u \ar[d]_{\iota_\FSigma} & 
(T\FSigma)_{t} \ar[r]^L \ar[d]^{\iota_\FSigma} & O_\D\\
V(\hatt\FSigma)_s \ar[r]_{V\hat u} & 
V(\hatt\FSigma)_{t} \ar[ur]_{V\hat L} &
}
\qquad
\xymatrix@C+1em{
(T\FSigma)_r \ar@/^2em/[rr]^{u^{-1}L} \ar[r]_{\iota_\FSigma} 
 & V(\hatt\FSigma)_r \ar[r]_{V\bot} & O_\D
}
\]
This shows that $u^{-1}L$ corresponds to the morphism
$\hatt\FSigma \to O_\D$ in $\hatD^S$ being $\hat L \cdot \hat u$ in sort $s$ and
$\bot$ in every other sort. By Theorem~\ref{thm:reciso}, $u^{-1}L$ is
$\F$-recognizable.

\item Let $g\colon \MT\FDelta\to\MT\FSigma$ be a $\MT$-homomorphism
  and $\hat g$ its continuous extension, see Lemma \ref{lem:gat}. Then
  we have the commutative diagrams below.
  \[
  \xymatrix{
    T\FDelta \ar[r]^g \ar[d]_{\iota_\FDelta} & T\FSigma 
    \ar[r]^L \ar[d]^{\iota_\FSigma} & O_\D\\
    V\hatt\FDelta \ar[r]_{V\hat g} & V\hatt\FSigma \ar[ur]_{V\hat L} & 
    }
  \]
  This shows that $g^{-1}L = L \cdot g$ corresponds to
  $\hat L \cdot \hat g\colon \hatt\FSigma \to O_\D$, whence $g^{-1}L$ is
  $\F$-recognizable by Theorem~\ref{thm:reciso}.\qedhere
\end{enumerate}
\end{proof}

\takeout{
\begin{rem}\label{rem:admissiblesubobject}
  Recall the isomorphism  $\Rec_\F(\Sigma) \cong \prod_s
  P(\hatt\FSigma)_s$ of
  Remark~\ref{rem:reg-as-C}. In the following we study subobjects 
  $W_\Sigma\seq\Rec_\F(\Sigma)$ in $\C$. However, for technical reasons we  
  restrict ourselves
  to subobjects of the form $\prod_s m_s\colon \prod_s (W_\Sigma')_s 
  \monoto \prod_s P(\hatt\FSigma)_s$, where $m_s\colon (W_\Sigma')_s\monoto 
  P(\hatt\FSigma)_s$ is a monomorphism in $\C$:
      \[
        \xymatrix@C+0,5em@R-1em{
          W_\Sigma \ar@{ >->}[r]^\seq  \ar[d]_{\cong} & \Rec_\F(\Sigma)
          \ar[d]^{\cong} \\
          \prod_s (W_\Sigma')_s \ar@{ >->}[r]_{\prod_s m_s} & \prod_s P(\hatt\FSigma)_s
        }
      \] 
      Such subobjects are called 
        \emph{admissible}. Clearly, for $S=1$, any 
  subobject of $\Rec_\F(\Sigma)$ is admissible. More importantly, if $\C$ is one of the categories of Example 
  \ref{ex:categories} and $\U_\Sigma$ contains all identity morphisms, one can show that any subobject $W_\Sigma\seq \Rec_\F(\Sigma)$ closed under derivatives (i.e. $L\in W_\Sigma$ implies $u^{-1}L\in 
  W_\Sigma$ for all $u\in \U_\Sigma$) is admissible. Thus, in these cases the admissibility condition in Definition \ref{def:variety}.1 below can be dropped. For Definition \ref{def:variety}.2, recall from   the previous section that we work with a fixed class $\A\seq \Set_f^S$ of alphabets.
\end{rem}
}

\begin{definition}
Let $\mathcal{L} = (\,L^s\colon T\FSigma\to O_\D\,)_{s\in S}$ be an $S$-indexed family of languages over $\Sigma$. The \emph{diagonal} of $\mathcal L$ is the language $\Delta\mathcal{L}$ over $\Sigma$ given by \[(\Delta\mathcal{L})_s = L^s_s\colon (T\FSigma)_s\to O_\D\quad\text{for all $s\in S$}.\]
\end{definition}

\begin{notation}
Recall that we work with a fixed class $\A\seq \Set_f^S$ of alphabets.
Fix for each $\Sigma\in \A$ a unary presentation $\U_\Sigma$ of the free $\MT$-algebra $\MT\FSigma$.
\end{notation}

\begin{definition}[Local variety of languages]\label{def:localvariety}
A \emph{local variety of languages} over
    $\Sigma\in \A$ is a subobject
  \[W_\Sigma\seq \Rec_\F(\Sigma)\] in $\C$ closed under $\U_\Sigma$-derivatives and diagonals. That is,
\begin{enumerate}[(i)]
\item for each $L\in W_\Sigma$ and $u\in\U_\Sigma$, one has $u^{-1}L\in W_\Sigma$;
\item for each $S$-indexed family  $\mathcal{L} = (\,L^s\,)_{s\in S}$ of languages in $W_\Sigma$, one has $\Delta \mathcal{L}\in W_\Sigma$.
\end{enumerate}
\end{definition}

\begin{rem}\label{rem:diagclosed} There are two important cases where the closure under diagonals in the above definition is trivially satisfied and thus can be dropped:
\begin{enumerate}
\item  For $S=1$, clearly every subobject $W_\Sigma\seq\Rec_\F(\Sigma)$ is closed under diagonals.
\item If $\C$ is one of the categories of Example 
  \ref{ex:categories} and $\U_\Sigma$ contains all identity morphisms $\id_{(T\FSigma)_s}: (T\FSigma)_s\to (T\FSigma)_s$, then every subobject $W_\Sigma\seq\Rec_\F(\Sigma)$ closed under $\U_\Sigma$-derivatives is closed under diagonals. To see this, let $\mathcal{L} = (L^s)_{s\in S}$ be an $S$-indexed family of languages in $W_\Sigma$. Since $W_\Sigma$ is closed under $\U_\Sigma$-derivatives, the language $\id_{(T\FSigma)_s}^{-1}L^s$ lies in 
$W_\Sigma$ for each $s\in S$. Recall from Example \ref{ex:bot}
that $\bot$ has been chosen as the zero map. Therefore the derivative $\id_{(T\FSigma)_s}^{-1}L^s$ agrees with $L^s$ in sort $s$,
and 
is the zero map (corresponding to the empty set) in all other sorts. Now observe that for the categories $\C=\BA$, $\DL$, $\JSL$, 
the set $W_\Sigma\seq\Rec_\F(\Sigma)$, being a subobject in $\C$, is closed under union by Example \ref{ex:c_operations}. This implies that the diagonal 
language $\Delta\mathcal L = 
\bigcup_s 
\id_{(T\FSigma)_s}^{-1}L^s$ lies in $W_\Sigma$. Analogously for 
$\C=\Vect_{\Field}$ where $W_\Sigma$, being a linear subspace of $\Rec_\F(\Sigma)$, is closed under addition of languages.
\end{enumerate}
Since condition 2 above is satisfied in all our applications in Section \ref{sec:applications}, the varieties considered there will not need closure under diagonals as an explicit property.
\end{rem}

\begin{example}
In contrast to the previous remark, subobjects of $\Rec(\Sigma)$ are not always closed under diagonals. For example, consider the identity monad $\MT=\Id$ on $\Set^2$. Then $\MT$-algebras are just two-sorted sets. Fix a finite set $X\neq \emptyset$ and consider the two-sorted alphabet $\Sigma=(X,X)\in \Set_f^2$.  The boolean algebra of $\MT$-recognizable languages over $\Sigma$ is given by
\[ \Rec(\Sigma)=P(X)\times P(X) = \{\, (K,L): K,L\seq X \,\}. \]
The free $\MT$-algebra $\MT(\Sigma)=\Sigma$  has the trivial unary presentation $\U_{\Sigma}=\emptyset$. Then 
\[ W_\Sigma = \{\, (L,L) : L\seq X \,\} \]
is a boolean subalgebra of $\Rec(\Sigma)$ that is (trivially) closed under $\U_{\Sigma}$-derivatives, but it is not closed under diagonals: for any two distinct subsets $K,L\seq X$ we have $(K,K), (L,L)\in W_\Sigma$, but the diagonal $(K,L)$ does not lie in $W_\Sigma$. 
\end{example}
Next, we characterize the closure of a local variety under diagonals in terms of morphisms in $\C$ (see Lemma \ref{lem:diagonalclosed} below). First, a technical preparation:

\begin{rem}\label{rem:ptsigma}
\begin{enumerate}
\item For each $\Sigma\in\Set_f^S$ and each sort $s$ we have
\begin{align*} 
\under{P(\hatt\FSigma)_s} &\cong \C(\one_\C,P(\hatt\FSigma)_s) \\
&\cong 
\hatD((\hatt\FSigma)_s,O_\D)\\
&\cong  \{\,(T\FSigma)_s\xra{L_s} O_\D : L \in \Rec_\F(\Sigma) \,\}.
\end{align*}
The last bijection is given by $\hat f\mapsto V\hat f\o 
\iota_\FSigma$. To see this, 
observe that for each $\F$-recognizable language $L\colon
T\FSigma\to O_\D$, the language $L'\colon T\FSigma\to O_\D$ with 
$L'_s = L_s$ and $L_r = V\bot\o \iota_\FSigma$ for $r\neq s$ is also 
$\F$-recognizable. Indeed, if $L$ is recognized by the $\MT$-homomorphism $h\colon \MT\FSigma\to A$ via $p\colon A\to O_\D$, then $L'$ is recognized by $h$ via $p'$, where $p'_s = p_s$ and $p'_r = \bot$ for $r\neq s$, using the naturality of $\bot$. From this and Theorem
\ref{thm:reciso} the bijection immediately follows.

Therefore, we assume from now on that $P(\hatt\FSigma)_s$ is carried by the set $\{\,L_s: 
L\in\Rec_\F(\Sigma)\,\}$. With this identification, the isomorphism 
$\Rec_\F(\Sigma)\cong \prod_s P(\hatt\FSigma)_s$ of Remark \ref{rem:reg-as-C} 
maps 
an $\F$-recognizable language $L\colon T\FSigma\to O_\D$ to the $S$-tuple $((T\FSigma)_s\xra{L_s} 
O_\D)_{s\in S}$ of its components.
\item Every subobject $W_\Sigma\seq \Rec_\F(\Sigma)$ in $\C$ contains the ``empty 
language'', i.e.\ the language with $V\bot\o \iota_\FSigma \colon 
(T\FSigma)_r 
\to O_\D$ in each 
sort $r$. Indeed, by the 
definition of $\bot$ (the dual of the natural transformation choosing a 
constant) this language is precisely the constant 
in $\Rec_\F(\Sigma)\cong 
\prod_s P(\hatt\FSigma)_s$, and every subobject of $\Rec_\F(\Sigma)$ in $\C$ must contain the 
constant.
\end{enumerate}
\end{rem}

\begin{lemma}\label{lem:diagonalclosed}
A subobject $W_\Sigma\seq \Rec_\F(\Sigma)$ is closed under diagonals iff it has the form $\prod_s m_s\colon \prod_s (W_\Sigma')_s 
    \monoto \prod_s P(\hatt\FSigma)_s$ where $m_s\colon (W_\Sigma')_s\monoto 
    P(\hatt\FSigma)_s$ is a monomorphism in $\C$. 
\end{lemma}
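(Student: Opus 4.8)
The statement is an iff between two descriptions of a subobject $W_\Sigma\hookrightarrow\Rec_\F(\Sigma)\cong\prod_s P(\hatt\FSigma)_s$: closure under diagonals, versus being a product $\prod_s m_s$ of monos $m_s\colon (W_\Sigma')_s\monoto P(\hatt\FSigma)_s$. My plan is to pass through the identification $\Rec_\F(\Sigma)\cong\prod_s P(\hatt\FSigma)_s$ of Remark~\ref{rem:reg-as-C}, under which (by Remark~\ref{rem:ptsigma}.1) a language $L$ is sent to the tuple $(L_s)_{s\in S}$ of its components, and each $P(\hatt\FSigma)_s$ is carried by the set $\{\,L_s : L\in\Rec_\F(\Sigma)\,\}$. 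Under this identification, taking the diagonal $\Delta\mathcal L$ of a family $\mathcal L=(L^s)_{s\in S}$ of elements of $\Rec_\F(\Sigma)$ corresponds exactly to forming the tuple $(L^s_s)_{s\in S}$; that is, diagonals are precisely the ``mix-and-match'' operation on tuples that picks one coordinate from each of $|S|$ chosen elements. So closure of $W_\Sigma$ under diagonals says: if for each $s$ the $s$-th coordinate of some element of $W_\Sigma$ equals $L^s_s$, then the tuple $(L^s_s)_s$ again lies in $W_\Sigma$.

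First I would prove the easy direction ($\Leftarrow$): if $W_\Sigma$ has the form $\prod_s(W_\Sigma')_s$ as a subobject, then given a family $(L^s)_{s\in S}$ in $W_\Sigma$, the $s$-th component $L^s_s$ lies in $(W_\Sigma')_s$ for each $s$ (since $L^s\in\prod_s(W_\Sigma')_s$), hence the tuple $(L^s_s)_s=\Delta\mathcal L$ lies in $\prod_s(W_\Sigma')_s=W_\Sigma$. For the converse ($\Rightarrow$), suppose $W_\Sigma$ is closed under diagonals. Define, for each sort $s$, the set $(W_\Sigma')_s := \{\,L_s : L\in W_\Sigma\,\}\seq P(\hatt\FSigma)_s$, i.e.\ the image of $W_\Sigma$ under the $s$-th product projection $\pi_s\colon\Rec_\F(\Sigma)\to P(\hatt\FSigma)_s$. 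I would then show (a) that $(W_\Sigma')_s$ is a subobject of $P(\hatt\FSigma)_s$ in $\C$, and (b) that $W_\Sigma$ coincides, as a subobject of $\prod_s P(\hatt\FSigma)_s$, with $\prod_s(W_\Sigma')_s$. For (a): since $W_\Sigma\monoto\Rec_\F(\Sigma)\cong\prod_s P(\hatt\FSigma)_s$ is a subobject (mono) in $\C$ and $\pi_s$ is a morphism of $\C$, the composite $\pi_s$ restricted to $W_\Sigma$ factors as $W_\Sigma\epito (W_\Sigma')_s\monoto P(\hatt\FSigma)_s$ via the (surjective, injective)-factorization in $\C$ (Assumptions~\ref{ass:catframework}(i), \ref{ass:dt}(ii) and Remark~\ref{rem:tpressurj}); this exhibits $(W_\Sigma')_s$ as a subobject of $P(\hatt\FSigma)_s$ and gives the mono $m_s$.

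The heart of the matter is (b), specifically the inclusion $\prod_s(W_\Sigma')_s\seq W_\Sigma$ (the reverse inclusion is immediate from the definition of $(W_\Sigma')_s$ as the projection). This is exactly where closure under diagonals is used: take any tuple $(K_s)_{s\in S}\in\prod_s(W_\Sigma')_s$. By definition of $(W_\Sigma')_s$, for each $s$ there is an element $L^s\in W_\Sigma$ with $(L^s)_s=K_s$. The family $\mathcal L=(L^s)_{s\in S}$ is then a family of languages in $W_\Sigma$, and its diagonal $\Delta\mathcal L$ has $s$-th component $(L^s)_s=K_s$ for every $s$; hence $\Delta\mathcal L=(K_s)_{s\in S}$ as an element of $\prod_s P(\hatt\FSigma)_s$, and closure under diagonals gives $(K_s)_s\in W_\Sigma$. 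This proves $\prod_s(W_\Sigma')_s\seq W_\Sigma$, and combined with the reverse inclusion and the fact that both are subobjects of the same object cut out by the respective monos, we conclude $W_\Sigma=\prod_s(W_\Sigma')_s=\prod_s m_s$ as subobjects. The main (minor) subtlety to handle carefully is bookkeeping: verifying that $\prod_s m_s$ is indeed a mono (it is, as a product of monos in a category where limits preserve monos) and that equality of subobjects follows from having the same underlying set of elements together with both being (injective) morphisms of $\C$ into $\prod_s P(\hatt\FSigma)_s$ — which holds since $\C$ is locally finite and morphisms of $\C$ are determined by their action on underlying sets. I do not expect any genuine obstacle here; the work is entirely in translating ``diagonal'' through the duality isomorphism and reading off the product structure.
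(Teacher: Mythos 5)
Your proposal is correct and follows essentially the same route as the paper: both define $(W_\Sigma')_s$ as the (surjective, injective)-factorization of $\pi_s$ restricted to $W_\Sigma$, identify its carrier as $\{L_s : L\in W_\Sigma\}$ via Remark~\ref{rem:ptsigma}, and observe that an arbitrary element of $\prod_s(W_\Sigma')_s$ is exactly the tuple of components of a diagonal $\Delta\mathcal L$, so that the comparison map $\langle e_s\rangle\colon W_\Sigma\to\prod_s(W_\Sigma')_s$ is an isomorphism precisely when $W_\Sigma$ is closed under diagonals. The only cosmetic difference is that the paper packages both directions at once as ``$e$ is surjective iff closed under diagonals,'' whereas you split the argument into two inclusions.
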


\begin{proof}
Given any subobject $W_\Sigma\seq \Rec_\F(\Sigma)$ and any sort $s$, consider the morphism
\[ \xymatrix{ W_\Sigma \ar@{>->}[r]^<<<<<\seq & \Rec_\F(\Sigma) \ar[r]^<<<<<\cong & \prod_s P(\hatt\FSigma)_s \ar[r]^{\pi_s} & P(\hatt \FSigma)_s } \]
where $\pi_s$ is the product projection, and factorize it into a surjective morphism $e_s\colon W_\Sigma \epito (W_\Sigma')_s$ follows by an injective morphism
$m_s\colon
(W_\Sigma')_s 
\monoto P(\hatt\FSigma)_s$:
\begin{equation}\label{eq:wsigmasquare}
  \xymatrix@R-1em{
  W_\Sigma \ar@{ >->}[r]^{\seq} \ar@{->>}[dd]_{e_s} & \Rec_\F(\Sigma) 
  \ar[d]^\cong \\
  & \prod_s P(\hatt\FSigma)_s \ar[d]^{\pi_s} \\
  (W_\Sigma')_s \ar@{ >->}[r]_{m_s} & P(\hatt\FSigma)_s
  }
\end{equation}
By Remark \ref{rem:ptsigma}.1 above, $(W_\Sigma')_s$ is (up to isomorphism) carried by the set
\[
  \{\,(T\FSigma)_s\xra{L_s} O_\D \;:\; L\in W_\Sigma\,\}.
\]
From \eqref{eq:wsigmasquare} we get the commutative square below, where $e=\langle e_s \rangle_{s\in 
S}\colon W_\Sigma\to \prod_s (W'_\Sigma)_s$:
\[
\xymatrix@C+3em@R-1em{
W_\Sigma \ar@{ >->}[r]^{\seq} \ar[d]_{e} & \Rec_\F(\Sigma) \ar[d]^\cong \\
\prod_s (W_\Sigma')_s \ar@{ >->}[r]_{\prod_s m_s} & \prod_s P(\hatt\FSigma)_s
}
\]
Clearly $e$ is monic. Out task is to show that $e$ is surjective (i.e. an isomorphism) iff $W_\Sigma$ is closed under diagonals. To this end, note that every element of $\prod_s(W_\Sigma')_s$ is an $S$-tuple $(L^s_s\colon (T\FSigma)_s\to O_\D)_{s\in S}$ for some $S$-indexed family $\mathcal{L}=(L^s)_{s\in S}$ of languages in $W_\Sigma$. That $e$ is surjective means precisely that, for any such family $\mathcal L$, there exists a language $L\in W_\Sigma$ with $L_s = L^s_s$ for all $s$, i.e. with $L=\Delta\mathcal{L}$. This concludes the proof.
\end{proof}

\begin{theorem}[Local Variety Theorem]\label{thm:localeilenberg}
The lattice of local varieties of languages over $\Sigma\in\A$ (ordered by inclusion) is isomorphic to the lattice of local pseudovarieties of $\Sigma$-generated 
$\F$-algebras form isomorphic lattices.
\end{theorem}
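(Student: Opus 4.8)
The plan is to factor the desired isomorphism through the lattice of $\Sigma$-generated profinite $\hatT$-algebras: the Local Reiterman Theorem (Theorem~\ref{thm:localreiterman}) already identifies that lattice with the lattice of local pseudovarieties of $\Sigma$-generated $\F$-algebras, so it remains to identify it with the lattice of local varieties of languages over $\Sigma$, using duality together with the theory of unary presentations of Section~\ref{sec:unpres}.

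First I would match subobjects with quotients. By Remark~\ref{rem:reg-as-C} we have $\Rec_\F(\Sigma)\cong\prod_s P(\hatt\FSigma)_s$ in $\C$, and by Remark~\ref{rem:ptsigma} the object $P(\hatt\FSigma)_s$ is carried by $\{\,L_s:L\in\Rec_\F(\Sigma)\,\}$. By Lemma~\ref{lem:diagonalclosed} a subobject $W_\Sigma\seq\Rec_\F(\Sigma)$ is closed under diagonals exactly when it has the form $\prod_s m_s\colon\prod_s(W_\Sigma')_s\monoto\prod_s P(\hatt\FSigma)_s$ with each $m_s$ a monomorphism of $\C$. Dualising each $m_s$ along the equivalence $P\colon\hatD\simeq\C^{op}$ of Remark~\ref{rem:oc_vs_od}, and recalling that epimorphisms of $\hatD$ are precisely the surjections (Lemma~\ref{lem:episurj}), produces a quotient $\phi^{W_\Sigma}=(\phi_s)_{s\in S}\colon\hatt\FSigma\epito P^{W_\Sigma}$ in $\hatD^S$; conversely, a quotient $\phi\colon\hatt\FSigma\epito P$ dualises back, sortwise via $P\phi_s\colon PP_s\monoto P(\hatt\FSigma)_s$, to a subobject of product form. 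These two assignments are mutually inverse up to the natural isomorphism $\lambda$ of Remark~\ref{rem:oc_vs_od}.

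The crux is to show that $W_\Sigma$ is closed under $\U_\Sigma$-derivatives iff $\phi^{W_\Sigma}$ is a $\U_\Sigma$-quotient. Here I would invoke the computation in the proof of Proposition~\ref{prop:recclosed}: for $u\colon(T\FSigma)_s\to(T\FSigma)_t$ in $\U_\Sigma$ with continuous extension $\hat u$ (Lemma~\ref{lem:extend_fi}), the derivative $u^{-1}L$ corresponds, under $\Rec_\F(\Sigma)\cong\hatD^S(\hatt\FSigma,O_\D)$ of Theorem~\ref{thm:reciso}, to the morphism equal to $\hat L_t\circ\hat u$ in sort $s$ and to the constant $\bot$ in every other sort. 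Since every subobject of $\Rec_\F(\Sigma)$ in $\C$ contains $\bot$ (Remark~\ref{rem:ptsigma}), closure of $W_\Sigma=\prod_s(W_\Sigma')_s$ under the $u$-derivative says precisely that $P\hat u$ carries $(W_\Sigma')_t$ into $(W_\Sigma')_s$, i.e.\ that $P\hat u\circ m_t$ factors through $m_s$; dualising this factorisation yields a morphism $u_P\colon P_s\to P_t$ with $u_P\circ\phi_s=\phi_t\circ\hat u$, which is exactly the lifting property defining a $\U_\Sigma$-quotient, and the converse is the same chain of equivalences read backwards. By Lemma~\ref{lem:profinite_algs}, $\phi^{W_\Sigma}$ is a $\U_\Sigma$-quotient iff it is a profinite $\hatT$-quotient, that is, iff $P^{W_\Sigma}$ admits a profinite $\hatT$-algebra structure making $\phi^{W_\Sigma}$ a $\hatT$-homomorphism --- equivalently, iff $\phi^{W_\Sigma}\colon\hatt\FSigma\epito P^{W_\Sigma}$ is a $\Sigma$-generated profinite $\hatT$-algebra. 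Thus $W_\Sigma\mapsto\phi^{W_\Sigma}$ restricts to a bijection between local varieties of languages over $\Sigma$ and $\Sigma$-generated profinite $\hatT$-algebras.

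Finally I would observe that this bijection is order-preserving: an inclusion $W_\Sigma\seq W_\Sigma'$ of subobjects corresponds sortwise to a factorisation of quotients, hence to $\phi^{W_\Sigma}\leq\phi^{W_\Sigma'}$, and likewise in the other direction, so it is a lattice isomorphism; composing with the lattice isomorphism of Theorem~\ref{thm:localreiterman} gives the claim. I expect the real work to lie in the third step: translating the sort-indexed bookkeeping of derivatives --- each depending on a single component of $L$ and ``vanishing'' elsewhere --- into the clean dual statement about liftings, while keeping the identifications coming from the dual equivalence $P$ and from $\lambda$ consistent throughout.
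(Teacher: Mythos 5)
Your proposal is correct and follows essentially the same route as the paper's proof: represent diagonal-closed subobjects via Lemma~\ref{lem:diagonalclosed}, translate closure under $\U_\Sigma$-derivatives into the lifting condition dual to a $\U_\Sigma$-quotient (the content of Lemma~\ref{lem:localvar}, which you re-derive from Proposition~\ref{prop:recclosed} and Remark~\ref{rem:ptsigma}), apply Lemma~\ref{lem:profinite_algs} to identify $\U_\Sigma$-quotients with $\Sigma$-generated profinite $\hatT$-algebras, and conclude with the Local Reiterman Theorem. The only cosmetic difference is that the paper packages your ``crux'' step as a separate lemma rather than inlining it.
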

To prepare the proof, we need an auxiliary lemma:
\begin{lemma}\label{lem:localvar}
 Suppose that 
$W_\Sigma\seq \Rec_\F(\Sigma)$ is a subobject closed under diagonals,
represented by subobjects $m_r\colon (W_\Sigma')_r \monoto P(\hatt\FSigma)_r$ 
($r\in S$) in $\C$, see Lemma \ref{lem:diagonalclosed}. Let $u\colon (T\FSigma)_s\to (T\FSigma)_{t}$ in $\U_\Sigma$ and $\hat u\colon (\hatt\FSigma)_s\to (\hatt\FSigma)_{t}$ its continuous 
  extension, see Lemma 
  \ref{lem:extend_fi}. Then the 
following statements are equivalent:
\begin{enumerate}[(i)]
  \item $u^{-1}L\in W_\Sigma$ for all $L\in W_\Sigma$.
  \item There exists a morphism $u'$ in $\C$ making the following square commute:
    \begin{equation}\label{eq:derclos}
      \vcenter{
        \xymatrix{
          (W_\Sigma')_{t} \ar[r]^{u'} \ar@{ >->}[d]_{m_{t}} & (W_\Sigma')_{s}
          \ar@{ >->}[d]^{m_s}\\ P(\hatt\FSigma)_{t} \ar[r]_{P\hat u} &
          P(\hatt\FSigma)_s.
        }
      }
    \end{equation}
\end{enumerate}
In particular, $W_\Sigma$ is a local variety iff a morphism
$u'$ with
\eqref{eq:derclos} exists for every $u \in \U_\Sigma$.
\end{lemma}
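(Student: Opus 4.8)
The plan is to transport the abstract square~\eqref{eq:derclos} to the level of underlying sets, where it becomes a transparent statement about languages, and then read off the equivalence. I would first spell out the relevant identifications. By Remark~\ref{rem:ptsigma}.1, for each sort $r$ the object $P(\hatt\FSigma)_r$ is carried by the set $\{\,(T\FSigma)_r\xra{L_r}O_\D : L\in\Rec_\F(\Sigma)\,\}$, and under the isomorphism $\Rec_\F(\Sigma)\cong\prod_r P(\hatt\FSigma)_r$ of Remark~\ref{rem:reg-as-C} a language $L$ corresponds to the tuple $(L_r)_{r\in S}$ of its components; moreover the monomorphism $m_r\colon (W_\Sigma')_r\monoto P(\hatt\FSigma)_r$ exhibits $(W_\Sigma')_r$ as the subset $\{\,L_r : L\in W_\Sigma\,\}$. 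Since $W_\Sigma$ is closed under diagonals, Lemma~\ref{lem:diagonalclosed} identifies the inclusion $W_\Sigma\seq\Rec_\F(\Sigma)$ with $\prod_r m_r$, so a language $M\in\Rec_\F(\Sigma)$ lies in $W_\Sigma$ iff $M_r\in(W_\Sigma')_r$ for every sort $r$.

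Next I would compute the underlying map of $P\hat u\colon P(\hatt\FSigma)_{t}\to P(\hatt\FSigma)_s$. Combining the natural isomorphism $\under{PD}\cong\hatD(D,O_\D)$ of Remark~\ref{rem:oc_vs_od}.2 — under which $P\hat u$ becomes precomposition with $\hat u$ — with the bijection $\hatD((\hatt\FSigma)_r,O_\D)\cong\{\,L_r : L\in\Rec_\F(\Sigma)\,\}$, $\hat f\mapsto V\hat f\o\iota_\FSigma$, and invoking the defining square $V\hat u\o\iota_\FSigma=\iota_\FSigma\o u$ of Lemma~\ref{lem:extend_fi}, one obtains that $P\hat u$ sends $L_t$ to $L_t\o u$. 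By Definition~\ref{def:derivatives_preimages} this is precisely the $s$-component $(u^{-1}L)_s$ of the derivative, while the components $(u^{-1}L)_r$ for $r\neq s$ equal the ``empty language'' $V\bot\o\iota_\FSigma$, which lies in every subobject of $\Rec_\F(\Sigma)$ by Remark~\ref{rem:ptsigma}.2.

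It remains to assemble these observations. By the first two paragraphs, $u^{-1}L\in W_\Sigma$ iff $(u^{-1}L)_s=L_t\o u\in(W_\Sigma')_s$, so statement~(i) says exactly that $L_t\o u\in(W_\Sigma')_s$ for all $L\in W_\Sigma$; since $(W_\Sigma')_t=\{\,L_t : L\in W_\Sigma\,\}$, this is the assertion that the underlying map of $P\hat u$ carries the subset $(W_\Sigma')_t$ into the subset $(W_\Sigma')_s$. The latter is patently equivalent to the existence of a \emph{function} $u'$ with $m_s\o u'=P\hat u\o m_t$, and such a $u'$ is automatically a morphism of $\C$ because in a variety the corestriction of a homomorphism to a subalgebra containing its image is again a homomorphism (equivalently, apply the homomorphism theorem, $m_s$ being a regular monomorphism). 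This proves (i)$\Leftrightarrow$(ii). For the final clause, under the standing hypothesis $W_\Sigma$ is closed under diagonals, so by Definition~\ref{def:localvariety} it is a local variety iff it is closed under $\U_\Sigma$-derivatives, i.e.\ iff~(i) holds for every $u\in\U_\Sigma$; by the equivalence just established this happens iff a morphism $u'$ fitting into~\eqref{eq:derclos} exists for every $u\in\U_\Sigma$. The only point requiring genuine care is the set-level computation of $P\hat u$ through the nested identifications of $P(\hatt\FSigma)_r$ with languages, together with the check that the factorising map $u'$ lives in $\C$; the rest is bookkeeping.
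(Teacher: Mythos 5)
Your proposal is correct and follows essentially the same route as the paper's proof: identify $P(\hatt\FSigma)_r$ and $(W_\Sigma')_r$ with sets of language components via Remark~\ref{rem:ptsigma}, compute that $P\hat u$ sends $L_t$ to $L_t\o u$, and use closure under diagonals together with the fact that the ``empty language'' lies in every subobject to match condition~(i) with the set-level containment $P\hat u[(W_\Sigma')_t]\seq (W_\Sigma')_s$. Your extra remark that the factorizing map $u'$ is automatically a $\C$-morphism (corestriction to a subalgebra containing the image) is a point the paper leaves implicit, but it is a harmless refinement rather than a different argument.
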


\begin{proof}
Recall from Remark \ref{rem:ptsigma}.1 that $P(\hatt \FSigma)_r$ is, up to 
isomorphism, carried by the set $\{\,L_r: L\in \Rec_\F(\Sigma)\,\}$, and 
$(W_\Sigma')_r$ by 
the subset $\{\,L_r: L\in W_\Sigma\,\}$. From the definition of $\hat u$ it
follows that $P\hat u$ takes an 
element $L_{t}$ of $P(\hatt\FSigma)_{t}$ to $L_{t}\o u$.
Thus (ii) is equivalent to the statement that
$L_{t}\o u\in (W_\Sigma')_{s}$ for all $L\in W_\Sigma$. From 
this observation the implication (i)$\To$(ii) follows immediately, since
$(u^{-1}L)_s = L_{t}\o u$. 

Conversely, suppose that (ii) holds, and let $L\in W_\Sigma$. By the above 
argument, we have $L_{t}\o u\in (W_\Sigma')_{s}$. Moreover, by Remark 
\ref{rem:ptsigma}.3 the 
``empty language'' with $V\bot\o \iota_\FSigma$ in each sort lies in 
$W_\Sigma$. The closure of $W_\Sigma$
under diagonals thus implies that 
the language with $L_{t}\o u$ in sort $s$ and $V\bot\o \iota_\FSigma$ in 
all sorts $r\neq s$ lies in $W_\Sigma$. But this is precisely the derivative 
$u^{-1}L$, which proves (ii)$\To$(i).
\end{proof}

\begin{proof}[Proof of the Local Variety Theorem]
Duality + Local Reiterman! Suppose that $W_\Sigma\seq \Rec_\F(\Sigma)$ is
a subobject in $\C$ closed under diagonals, represented by a subobject
\[
  m = \left(\xymatrix{(W_\Sigma')_s \ar@{ >->}[r]^-{m_s} &
      P(\hatt\FSigma)_s}\right)
  _{s\in S}
\]
in $\C^S$, see Lemma \ref{lem:diagonalclosed}. From Lemma \ref{lem:localvar} and \ref{lem:profinite_algs}, it follows via duality that $W_\Sigma$ forms a local 
variety of languages iff 
the dual quotient
\[
  \left(
    \xymatrix@+2em{(\hatt\FSigma)_s \ar[r]^-\cong & P^{-1}P
    (\hatt\FSigma)_s \ar@{->>}[r]^-{P^{-1}m_s} &  P^{-1}(W_\Sigma')_s}
\right)_{s\in S}
\]
in $\hatD^S$ carries a $\Sigma$-generated profinite $\hatT$-algebra. Thus the lattice of local varieties of languages over $\Sigma$ is isomorphic to the lattice of $\Sigma$-generated profinite $\hatT$-algebras. Moreover, the Local Reiterman Theorem (see Theorem \ref{thm:localreiterman}) asserts that the latter is isomorphic to the lattice of local pseudovarieties 
of $\Sigma$-generated $\F$-algebras. This concludes the proof.
\end{proof}
Next, we consider non-local varieties of languages and prove the full Variety Theorem.

\begin{definition}[Variety of languages]\label{def:variety}
A \emph{variety of languages} is a family of local varieties 
\[(\,W_\Sigma\seq \Rec_\F(\Sigma)\,)_{\Sigma\in\A}\] closed
  under preimages of $\MT$-homomorphisms, i.e. for each $L\in W_\Sigma$ and each $\MT$-homomorphism $g\colon \MT\FDelta\to \MT\FSigma$ with $\Sigma,\Delta\in \A$ one has $g^{-1}L\in
  W_\Delta$.
\end{definition}
This leads to the main result of our paper, which holds under the Assumptions \ref{ass:catframework}.
\begin{theorem}[Variety Theorem]
\label{thm:eilenberg}
The lattice of varieties of languages (ordered by inclusion) is isomorphic to the lattice of
pseudovarieties of $\F$-algebras.
\end{theorem}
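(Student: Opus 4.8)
The plan is to assemble the Variety Theorem from the two main building blocks already proved in the excerpt: the Local Variety Theorem (Theorem~\ref{thm:localeilenberg}), which handles a fixed alphabet, and the Reiterman Theorem (Theorem~\ref{thm:reiterman}), which converts pseudovarieties of $\F$-algebras into pro-$\F$ theories (equivalently, via Lemma~\ref{lem:profiniteconcrete}, into $\F$-theories). The strategy is exactly the one sketched in the introduction: a variety of languages dualizes ``alphabet by alphabet'' to an $\F$-theory, and an $\F$-theory corresponds to a pseudovariety. So the proof will construct a chain of lattice isomorphisms
\[
\{\text{varieties of languages}\} \;\cong\; \{\F\text{-theories}\} \;\cong\; \{\text{pseudovarieties of }\F\text{-algebras}\},
\]
where the second isomorphism is Lemma~\ref{lem:ftheory_vs_pseudovar}, and the first is the new content.

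For the first isomorphism, I would proceed as follows. Given a variety of languages $(W_\Sigma)_{\Sigma\in\A}$, each $W_\Sigma$ is by definition a local variety, so by Theorem~\ref{thm:localeilenberg} it corresponds to a local pseudovariety $\T_\Sigma\seq\Quo(\MT\FSigma,\F)$; set $\T=(\T_\Sigma)_{\Sigma\in\A}$. The key step is to verify that the closure of $(W_\Sigma)$ under preimages of $\MT$-homomorphisms is \emph{exactly} the compatibility condition (ii) in the definition of an $\F$-theory (Definition~\ref{def:ftheory}). Concretely: unravelling the isomorphism of Theorem~\ref{thm:localeilenberg} via Remark~\ref{rem:ptsigma} and Lemma~\ref{lem:diagonalclosed}, a local variety $W_\Sigma$ corresponds to a $\Sigma$-generated profinite $\hatT$-algebra $\phi_\Sigma\colon\hatT\FSigma\epito P_\Sigma$, and hence (by Construction~\ref{rem:localvarconst} and Lemma~\ref{lem:localpseudovar_localprofiniteth_1}) to $\T_\Sigma=\P^{\phi_\Sigma}$. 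For a $\MT$-homomorphism $g\colon\MT\FDelta\to\MT\FSigma$ one has its continuous extension $\hat g$ (Lemma~\ref{lem:gat}), and the preimage operation $g^{-1}(-)$ on recognizable languages is, under the bijection of Theorem~\ref{thm:reciso}, precomposition with $\hat g$ (this is precisely the computation in the proof of Proposition~\ref{prop:recclosed}(b)). Dualizing, $(W_\Delta)$ being closed under $g^{-1}(-)$ for all such $g$ says exactly that for every $e\colon\MT\FSigma\epito A$ in $\T_\Sigma$ there is a quotient in $\T_\Delta$ through which $e\circ g$ factors --- i.e.\ condition (ii) of Definition~\ref{def:ftheory}. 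Conversely, given an $\F$-theory $\T$, apply Theorem~\ref{thm:localeilenberg} sortwise to get local varieties $W_\Sigma$, and the same translation shows $(W_\Sigma)$ is closed under preimages. Both assignments are clearly order-preserving (inclusion of families is checked componentwise, and Theorem~\ref{thm:localeilenberg} is order-preserving in each component), and they are mutually inverse because Theorem~\ref{thm:localeilenberg} is a bijection in each sort. This yields the first lattice isomorphism.

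Composing, we obtain the lattice isomorphism
\[
\{\text{varieties of languages}\}\;\cong\;\{\F\text{-theories}\}\;\xrightarrow{\ \cong\ }\;\{\text{pseudovarieties of }\F\text{-algebras}\},
\]
the last arrow being Lemma~\ref{lem:ftheory_vs_pseudovar}. I would also spell out the composite explicitly for the record: a variety $(W_\Sigma)$ is sent to the pseudovariety $\V$ consisting of all $\A$-generated $A\in\F$ such that some (equivalently every, by Lemma~\ref{lem:theory_to_pseudovar_property}) recognizing quotient $e\colon\MT\FSigma\epito A$ with $\Sigma\in\A$ has the property that the language $p\circ e$ lies in $W_\Sigma$ for all $p\colon A\to O_\D$; and conversely $\V$ is sent to the family $W_\Sigma = \{\,L\in\Rec_\F(\Sigma) : L$ is recognized by some $A\in\V\,\}$.

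The main obstacle I anticipate is the bookkeeping in the first isomorphism: one must chase the preimage operation $g^{-1}(-)$ through the three successive identifications $\Rec_\F(\Sigma)\cong\hatD^S(\hatt\FSigma,O_\D)\cong\prod_s P(\hatt\FSigma)_s$ (Theorem~\ref{thm:reciso}, Remark~\ref{rem:reg-as-C}), then through the duality $P\colon\hatD\xrightarrow{\simeq}\C^{op}$ back to quotients of $\hatt\FSigma$, and finally through the Local Reiterman correspondence $\phi\mapsto\P^\phi$, checking at each stage that preimage-closure matches $\F$-theory compatibility; the one genuinely load-bearing computation is that $P\hat g$ is the dual of $g^{-1}(-)$, which is contained in Proposition~\ref{prop:recclosed}. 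Everything else is routine verification that the constructions are inverse and monotone, which I would state but not belabor.
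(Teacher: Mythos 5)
Your proposal is correct and follows essentially the same route as the paper: the paper's proof is precisely ``duality + Reiterman,'' using Lemma~\ref{lem:var} (your load-bearing computation that $P\hat g$ dualizes $g^{-1}(-)$) to match preimage-closure with the pro-$\F$-theory condition, and then invoking Theorem~\ref{thm:reiterman}. The only difference is cosmetic bookkeeping --- you factor through $\F$-theories and cite Lemma~\ref{lem:ftheory_vs_pseudovar} directly, whereas the paper factors through pro-$\F$ theories and cites the Reiterman Theorem wholesale; these intermediates are identified by Lemma~\ref{lem:profiniteconcrete}, which you also invoke.
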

Again, we need a technical lemma to prepare the proof.

\begin{lemma}\label{lem:var}
  Let $W_\Sigma\seq \Rec_\F(\Sigma)$ and 
  $W_\Delta\seq \Rec_\F(\Delta)$ be subobjects closed under diagonals, represented
  by $m^\Sigma_s \colon (W'_\Sigma)_s \monoto P(\hatt\FSigma)_s$ and 
  $m^\Delta_s \colon (W'_\Delta)_s \monoto P(\hatt\FDelta)_s$ ($s \in S$), respectively (see Lemma \ref{lem:diagonalclosed}).
  Then for 
  any $\MT$-homomorphism $g\colon \MT\FDelta \to \MT\FSigma$, the following 
  statements 
  are equivalent:
  \begin{enumerate}[(i)]
    \item $g^{-1}L\in W_\Delta$ for all $L\in W_\Sigma$.
    \item There is a morphism $g'\colon W'_\Sigma \to W'_\Delta$ in $\C^S$ 
    making the
      following square commute for any sort $s$, where $\hat g\colon 
      \hatT\FDelta\to\hatT\FSigma$ is the continuous extension of $g$ (see 
      Lemma \ref{lem:gat}).
  \begin{equation}\label{eq:preclos}
    \vcenter{
    \xymatrix{
    (W_\Sigma')_s   \ar[r]^{g'_s} \ar@{ >->}[d]_{m^\Sigma_s} & (W_\Delta')_s 
    \ar@{ >->}[d]^{m^{\Delta}_s}\\
    P(\hatt\FSigma)_s \ar[r]_{P\hat g_s} & P(\hatt\FDelta)_s
    }
  }
  \end{equation}
  
  \end{enumerate}

\end{lemma}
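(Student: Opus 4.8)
The plan is to prove the equivalence by working out explicitly what the dual morphism $P\hat g_s$ does, in complete analogy with the proof of Lemma~\ref{lem:localvar}. First I would recall, from Remark~\ref{rem:ptsigma}.1 together with Lemma~\ref{lem:diagonalclosed}, that we may take $P(\hatt\FSigma)_s$ to be carried by the set $\{\,L_s : L\in\Rec_\F(\Sigma)\,\}$, with the subobject $m^\Sigma_s$ carried by $\{\,L_s : L\in W_\Sigma\,\}$, and likewise with $\Delta$ in place of $\Sigma$. The crucial step is then the identity
\[ P\hat g_s(L_s) = (g^{-1}L)_s \qquad (L\in\Rec_\F(\Sigma)), \]
which I would derive as follows: every $L\in\Rec_\F(\Sigma)$ has a continuous extension $\hat L\colon\hatt\FSigma\to O_\D$ with $L = V\hat L\o\iota_\FSigma$ (Theorem~\ref{thm:reciso}); the diagram in the proof of Proposition~\ref{prop:recclosed}(b) shows that $g^{-1}L = V(\hat L\o\hat g)\o\iota_\FDelta$, i.e.\ $g^{-1}L$ corresponds to $\hat L\o\hat g$; and since $P$ acts on hom-sets into $O_\D$ by precomposition (Remark~\ref{rem:oc_vs_od}.2), $P\hat g_s$ sends $\hat L_s$ (i.e.\ $L_s$) to $\hat L_s\o\hat g_s$ (i.e.\ $(g^{-1}L)_s$).

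With this in hand, for (i)$\To$(ii) I would argue: if $g^{-1}L\in W_\Delta$ for all $L\in W_\Sigma$, then for each sort $s$ the $\C$-morphism $P\hat g_s\o m^\Sigma_s\colon (W_\Sigma')_s\to P(\hatt\FDelta)_s$ sends each element $L_s$ (with $L\in W_\Sigma$) to $(g^{-1}L)_s\in (W_\Delta')_s$, so its image lies in the subobject $m^\Delta_s$; since subobjects in the variety $\C$ are subalgebras (equivalently, by diagonal fill-in for the surjective/injective factorization system of $\C$), this composite factors as $m^\Delta_s\o g'_s$ for a unique $\C$-morphism $g'_s\colon (W_\Sigma')_s\to (W_\Delta')_s$. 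Collecting the $g'_s$ over all sorts yields the desired $g'$ in $\C^S$, and \eqref{eq:preclos} holds by construction.

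For (ii)$\To$(i) I would take $L\in W_\Sigma$ and use commutativity of \eqref{eq:preclos}: for every sort $s$ we get $(g^{-1}L)_s = P\hat g_s(L_s) = m^\Delta_s(g'_s(L_s))\in (W_\Delta')_s$, so there is a language $L^s\in W_\Delta$ with $L^s_s = (g^{-1}L)_s$. Applying the closure of $W_\Delta$ under diagonals to the family $\mathcal L = (L^s)_{s\in S}$ gives $\Delta\mathcal L\in W_\Delta$, and since $(\Delta\mathcal L)_s = L^s_s = (g^{-1}L)_s$ for every $s$ we have $\Delta\mathcal L = g^{-1}L$, hence $g^{-1}L\in W_\Delta$. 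This proves (i). I expect the only genuinely delicate point to be the explicit description of $P\hat g_s$ in the first step; everything afterwards is routine bookkeeping of language components, mirroring Lemma~\ref{lem:localvar} almost verbatim, with the closure of $W_\Delta$ under diagonals playing exactly the role it plays there.
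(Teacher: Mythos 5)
Your proposal is correct and follows essentially the same route as the paper's own proof: identify $P(\hatt\FSigma)_s$ and $(W_\Sigma')_s$ with $\{L_s : L\in\Rec_\F(\Sigma)\}$ and $\{L_s : L\in W_\Sigma\}$ via Remark~\ref{rem:ptsigma}, observe that $P\hat g_s$ acts by $L_s\mapsto L_s\o g_s=(g^{-1}L)_s$, and then read off (i)$\To$(ii) as a factorization through the subobject $m^\Delta_s$ and (ii)$\To$(i) via closure under diagonals. Your write-up is somewhat more explicit than the paper's (in particular in justifying the description of $P\hat g_s$ and in spelling out the diagonal argument), but there is no substantive difference.
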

\begin{proof}
  Again we use that $P(\hatt \FSigma)_s$ can assumed to be carried by
  the set $\{\,L_s: L\in \Rec_\F(\Sigma)\,\}$, and $(W_\Sigma')_s$ is the
  subset $\{L_s: L\in W_\Sigma\}$, see Remark \ref{rem:ptsigma}.1. Analogously for $P(\hatt\FDelta)_s$
  and $(W_\Delta')_s$. From the definition of $\hat g$ it follows that
  $P\hat g_s$ takes an element $L_{s}$ of $P(\hatt\FSigma)_{s}$ to
  $L_{s}\o g_s$.  Thus (ii) is equivalent to the statement that
  $L_{s}\o g_s\in (W_\Delta')_{s}$ for all $L\in W_\Sigma$ and all
  sorts $s$.  From this the implication of (i)$\To$(ii) follows
  immediately, since $(g^{-1}L)_s = L_s \cdot g_s$. Conversely,
  suppose that (ii) holds, and let $L\in W_\Sigma$. By the above
  argument, we have $L_s\o g_s\in (W_\Delta')_s$ for all $s$. Since $W_\Delta$ is closed under diagonals, this implies that $g^{-1}L= L\o g$ lies
  in $W_\Delta$, i.e. (ii)$\To$(i) holds.
\end{proof}

\begin{proof}[Proof of the Variety Theorem]
Duality + Reiterman! As shown in the proof of the Local Variety Theorem, a family $(\,W_\Sigma\seq \Rec_\F(\Sigma)\,)_{\Sigma\in\A}$ of local varieties corresponds via duality to a family of $\Sigma$-generated profinite $\hatT$-algebras in $\hatD^S$. By Lemma \ref{lem:var}, the family $(W_\Sigma)_\Sigma$ forms a variety of 
languages (i.e. is closed under preimages) iff the dual family of 
$\Sigma$-generated profinite $\hatT$-algebras forms a 
pro-$\F$ theory. This shows that the lattice of varieties of languages over $\Sigma$ is isomorphic to the lattice of pro-$\F$ theories. Moreover, the Reiterman Theorem (see Theorem \ref{thm:reiterman}) asserts that the latter is isomorphic to the lattice of pseudovarieties of $\F$-algebras. This proves the Variety Theorem.
\end{proof}
For concrete applications of our (Local) Variety Theorem, see Section \ref{sec:applications}.

\begin{rem}\label{rem:straubing}
Straubing~\cite{straubing02} studied \emph{$\mathsf{C}$-varieties of
  regular languages} which are defined as Eilenberg's varieties of
regular languages, except that closure under preimages is required
only w.r.t.~a given class $\mathsf{C}$ of monoid morphisms. By
making a class $\mathsf{C}$ of $\MT$-homomorphisms an
additional parameter of our framework, Theorem~\ref{thm:eilenberg}
easily generalizes to a monad version of Straubing's variety theorem for $\mathsf{C}$-varieties. In more detail, let $\mathsf{C}$ be a family associating to each pair $(\Sigma,\Delta)\in 
\A^2$ a set $\mathsf{C}(\Delta,\Sigma)$ of $\MT$-homomorphisms from 
$\MT\FDelta$ to $\MT\FSigma$. A \emph{$\mathsf{C}$-variety of languages} is 
given as in Definition \ref{def:variety}, but with $g$ restricted to 
homomorphisms in $\mathsf{C}$. Similarly, a \emph{pro-$\F$ theory} and an \emph{$\F$-theory w.r.t. $\mathsf{C}$} is 
given as in Definition \ref{def:profinitetheory} and \ref{def:ftheory}, but with $g$ again 
restricted to homomorphisms in $\mathsf{C}$. This leads to the following
\end{rem}

\begin{theorem}[Straubing Theorem for $\MT$-algebras]
The lattice of $\mathsf{C}$-varieties of languages is isomorphic to the lattice of 
$\F$-theories w.r.t. $\mathsf{C}$.
\end{theorem}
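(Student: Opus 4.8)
The plan is to mirror exactly the proof of the Variety Theorem (Theorem~\ref{thm:eilenberg}), only tracking the extra parameter $\mathsf{C}$ throughout. The Straubing Theorem is the ``relativized'' version of the chain of isomorphisms
\[
\text{$\mathsf{C}$-varieties of languages} \;\cong\; \text{pro-$\F$ theories w.r.t.\ $\mathsf{C}$} \;\cong\; \text{$\F$-theories w.r.t.\ $\mathsf{C}$},
\]
and the statement as printed ($\mathsf{C}$-varieties $\cong$ $\F$-theories w.r.t.\ $\mathsf{C}$) follows by composing these two. So the first step is to reconstruct, verbatim, the ``duality'' half: a family $(W_\Sigma\seq\Rec_\F(\Sigma))_{\Sigma\in\A}$ of local varieties corresponds via the dual equivalence $P^{-1}\colon\C^{op}\xra{\simeq}\hatD$ to a family of $\Sigma$-generated profinite $\hatT$-algebras, exactly as established in the proof of the Local Variety Theorem. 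This part does not involve $\mathsf{C}$ at all; it is reused unchanged.

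The second step is to check that the relativized closure condition dualizes correctly. Here one invokes Lemma~\ref{lem:var} essentially as stated: for a \emph{fixed} $\MT$-homomorphism $g\colon\MT\FDelta\to\MT\FSigma$, the property ``$g^{-1}L\in W_\Delta$ for all $L\in W_\Sigma$'' is equivalent to the existence of a dual factorization $g'$ through the subobjects $m^\Sigma$, $m^\Delta$. Quantifying this equivalence over $g\in\mathsf{C}(\Delta,\Sigma)$ rather than over all $\MT$-homomorphisms shows that $(W_\Sigma)_\Sigma$ is a $\mathsf{C}$-variety of languages iff, for every $\Sigma,\Delta\in\A$ and every $g\in\mathsf{C}(\Delta,\Sigma)$, the dual $\hatT$-homomorphism $\hat g$ admits a compatible map $g_P\colon P_\Delta\to P_\Sigma$ as in Definition~\ref{def:profinitetheory}(ii)---but with $g$ restricted to $\mathsf{C}$. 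By definition this says precisely that the dual family is a pro-$\F$ theory w.r.t.\ $\mathsf{C}$. Hence the lattice of $\mathsf{C}$-varieties of languages is isomorphic to the lattice of pro-$\F$ theories w.r.t.\ $\mathsf{C}$.

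The third step is purely bookkeeping: one must observe that the Local Reiterman Theorem (Theorem~\ref{thm:localreiterman}) and Lemma~\ref{lem:profiniteconcrete} carry over verbatim to the $\mathsf{C}$-relativized setting, because the local half (local pseudovarieties $\cong$ $\Sigma$-generated profinite $\hatT$-algebras) makes no reference to morphisms between alphabets, while Lemma~\ref{lem:profiniteconcrete}'s translation between pro-$\F$ theories and $\F$-theories uses only the coherence condition (ii), which in both Definition~\ref{def:profinitetheory} and Definition~\ref{def:ftheory} is now quantified over the same restricted class $\mathsf{C}(\Delta,\Sigma)$. One re-runs the proof of Lemma~\ref{lem:profiniteconcrete}, noting that at each place where a $\MT$-homomorphism $g\colon\MT\FDelta\to\MT\FSigma$ is introduced it may be taken from $\mathsf{C}(\Delta,\Sigma)$, and the diagram chases go through unchanged. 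This gives pro-$\F$ theories w.r.t.\ $\mathsf{C}$ $\cong$ $\F$-theories w.r.t.\ $\mathsf{C}$, and composing with the isomorphism from step two yields the theorem.

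I do not expect a genuine obstacle here; the whole point is that $\mathsf{C}$ slots in as a passive parameter. The one place demanding a little care is ensuring that the dual extension $g\mapsto\hat g$ of Lemma~\ref{lem:gat} and its compatibility with limit projections is what actually appears in the pro-$\F$ theory condition---i.e.\ that restricting $g$ to $\mathsf{C}$ on the language side corresponds \emph{exactly} to restricting $\hat g$ to $\{\hat g : g\in\mathsf{C}(\Delta,\Sigma)\}$ on the algebra side, with no hidden closure being forced. Since $\hat g$ is uniquely determined by $g$ (Lemma~\ref{lem:gat}) and Lemma~\ref{lem:var} is already phrased for a single $g$, this matching is immediate, and the proof reduces to the remark ``Duality + Reiterman, with $\mathsf{C}$ carried along''.
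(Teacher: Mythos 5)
Your proposal matches the paper's proof exactly: the paper likewise observes that a family of local varieties is a $\mathsf{C}$-variety iff its dual family of $\Sigma$-generated profinite $\hatT$-algebras is a pro-$\F$ theory w.r.t.\ $\mathsf{C}$ (the relativization of Lemma~\ref{lem:var}), and then appeals to the $\mathsf{C}$-relativized analogue of Lemma~\ref{lem:profiniteconcrete} to pass from pro-$\F$ theories to $\F$-theories w.r.t.\ $\mathsf{C}$. Your extra remark that $g\mapsto\hat g$ induces no hidden closure on the algebra side is a sensible sanity check but does not change the argument.
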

\begin{proof}
 This is in complete analogy to the first part of the proof of the Variety Theorem: a family of local varieties of languages forms a $\mathsf{C}$-variety iff its dual family of $\Sigma$-generated profinite $\hatT$-algebras forms a pro-$\F$ theory w.r.t. $\mathsf{C}$.  And by an argument analogous to Lemma \ref{lem:profiniteconcrete}, pro-$\F$ theories w.r.t. $\mathsf{C}$ correspond to $\F$-theories w.r.t $\mathsf{C}$.
\end{proof}
The special case $\MT=\MT_*$ on $\Set$ is due to Straubing
\cite{straubing02}.

\section{Reduced $\MT$-algebras and the Reduced Variety Theorem}
We have seen that in a many-sorted setting, a restriction of the alphabets is often necessary to capture the proper  languages. In some cases, it can is also be necessary to restrict the sorts. For example, recall that an $\infty$-language is a two-sorted subset of $T_\infty(\Sigma,\emptyset)=(\Sigma^+,\Sigma^\omega)$. If one is only interested in \emph{$\omega$-languages} (i.e. languages of infinite words), one needs restrict to subsets of $T_\infty(\Sigma,\emptyset)$ that are empty in the first sort.

In this section, we present a variety theorem that is parametric in a subset $S_0\seq S$ of the sorts and considers only recognizable languages that are empty outside of $S_0$. On the algebraic side, this means to restrict to $\MT$-algebras that are \emph{reduced} with respect to their $S_0$-components. As concrete applications of our Reduced Variety Theorem (see Theorem \ref{thm:eilenberg_reduced}), will derive Eilenberg-type correspondences for $\omega$-regular languages and regular tree languages in Section \ref{sec:applications}.

 The results of this section are inspired by the paper \cite{salehi07} where reduced languages and algebras are investigated for the special case of tree languages. They generalize the results of the previous sections, which emerge by taking $S_0=S$. Many arguments are straightforward adaptions, so we will only sketch some of the proofs and emphasize on the new concepts.

\begin{notation}
Throughout this section, we fix a subset $S_0\seq S$ of sorts and continue to work with a fixed class $\A\in \Set_f^S$ of alphabets. Moreover, let us fix for each $\Sigma\in\A$ a unary presentation $\U_\Sigma$ of $\MT\FSigma$, and let $\ol \U_\Sigma$ denote the closure of $\U_\Sigma$ under composition and identity morphisms $\id\colon (T\FSigma)_s\to (T\FSigma)_s$. Then also $\ol\U_\Sigma$ forms a unary presentation of $\MT\FSigma$. We write $\ol \U_\Sigma(s,t)$ for the set of all unary operations $u\colon (T\FSigma)_s\to(T\FSigma)_t$ in $\ol\U_\FSigma$, where $s,t\in S$.
\end{notation}

\subsection{Reduced $\MT$-algebras}
In this subsection, we investigate $\MT$-algebras that are ``minimal'' with respect to their $S_0$-part. They form the algebraic counterpart of languages that are empty in non-$S_0$ sorts.

\begin{definition}
A $\MT$-algebra $(A,\alpha)$ is called \emph{reduced} if there exists no proper quotient $e\colon (A,\alpha)\epito (B,\beta)$ in $\Alg{\MT}$ whose components $e_s: A_s\epito B_s$ for $s\in S_0$ are isomorphisms. By a \emph{$\Sigma$-generated reduced $\F$-algebra} is meant a $\Sigma$-generated $\F$-algebra $e\colon \MT\FSigma\epito (A,\alpha)$ whose codomain $(A,\alpha)$ is reduced.
\end{definition}

\begin{rem}
For $S=S_0$, every $\MT$-algebra is reduced.
\end{rem}

\begin{definition}\label{def:reducedquot}
Let $e\colon T\FSigma\epito A$ be a quotient of $T\FSigma$ in $\D^S$. Then $e$ is called
\begin{enumerate}
\item \emph{reduced} if there exists no proper quotient $e\colon A\epito B$ in $\D^S$ whose components $e_s: A_s\epito B_s$ for $s\in S_0$ are isomorphisms.
\item a \emph{reduced $\U_\Sigma$-quotient} if it is (i) reduced, (ii) $\F$-refinable and (iii) every unary operation $u\in \ol \U(s,t)$ with $s,t\in S_0$ lifts along $e$, i.e. there exists $u_A\colon A_s\to A_t$ with $e\o u = u_A\o e$.
\[
\xymatrix{
(T\FSigma)_s \ar[r]^u \ar@{->>}[d]_e & (T\FSigma)_t \ar@{->>}[d]^e \\
A_s \ar@{-->}[r]_{u_A} & A_t
}
\]
\end{enumerate}
\end{definition}

\begin{rem}\label{rem:reduction}
 A quotient $e\colon T\FSigma\epito A$ in $\D^S$ is reduced iff, for all sorts $s\not\in S_0$, the set $\under{A_s}$ has at most one element. Therefore a reduced quotient $e\colon T\FSigma\epito A$ is completely determined by its components $e_s\colon (T\FSigma)_s\to A_s$ for $s\in S_0$.
\end{rem}

\begin{rem}
In the special case $S_0=S$, the reduced $\U_\Sigma$-quotients $e\colon T\FSigma\epito A$ of $T\FSigma$ are exactly the $\F$-quotients of $\MT\FSigma$. Indeed, in this case the lifting property of $e$ in the above definition states exactly that $e$ is a $\ol \U_\Sigma$-quotient in the sense of Definition \ref{def:uquotient}, and thus corresponds to an $\F$-quotient because $\ol \U_\Sigma$ is a unary presentation of $\MT\FSigma$.
\end{rem}
Our goal in this subsection is to show that $\Sigma$-generated reduced $\F$-algebras correspond to  reduced $\U_\Sigma$-quotients of $T\FSigma$ (see Lemma \ref{lem:usigma_vs_reduced}). The key is the following construction:

\begin{notation}
Let $e\colon T\FSigma\epito A$ be a quotient of $T\FSigma$ in $\D^S$.
\begin{enumerate}
\item If $\D$ is variety of algebras, the congruence $\equiv_e$ on $T\FSigma$ is defined as follows: for any sort $s$ and elements $x,x'\in \under{T\FSigma}_s$, put 
\[ x\equiv_e x' \quad\text{iff}\quad e_t\o u(x) = e_t\o u(x') \text{ for all $u\in \ol\U_\Sigma(s,t)$ with $t\in S_0$}.  \]
\item If $\D$ is variety of ordered algebras, the stable preorder $\preceq_e$ on $T\FSigma$  is defined as follows: for any sort $s$ and elements $x,x'\in \under{T\FSigma}_s$, put 
\[ x\preceq_e x' \quad\text{iff}\quad e_t\o u(x) \leq e_t\o u(x') \text{ for all $u\in \ol\U_\Sigma(s,t)$ with $t\in S_0$}.  \]
\end{enumerate}
We denote by $e_R\colon T\FSigma\epito A_R$ the quotient in $\D^S$ induced by $\equiv_e$ (resp. $\preceq_e$), see \ref{app:congruences} and  \ref{app:stablepreorders}.
\end{notation}
The following lemma collects the key properties of $e_R$:

\begin{lemma}\label{lem:reduction_props} For any $\F$-refinable reduced quotient $e\colon T\FSigma\epito A$, the following holds:
\begin{enumerate}
\item $e_R$ is an $\F$-quotient, i.e. carries a $\Sigma$-generated $\F$-algebra $e_R\colon \MT\FSigma\epito (A_R,\alpha_R)$.
\item $e_R$ is the smallest $\Sigma$-generated $\F$-algebra with $e\leq e_R$.
\item $(A_R,\alpha_R)$ is reduced.
\item If $e$ is a reduced $\U_\Sigma$-quotient, then $e_R$ is the unique $\Sigma$-generated reduced $\F$-algebra with $e_{R,s}= e_s$ for all $s\in S_0$.
\end{enumerate}
\end{lemma}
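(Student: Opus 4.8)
The plan is to prove the four statements of Lemma~\ref{lem:reduction_props} in the order given, since each one relies on the previous ones. I will treat the ordered case; the unordered case is obtained by replacing the stable preorder $\preceq_e$ by the congruence $\equiv_e$ and inequations by equations throughout.

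For \textbf{(1)}, the key observation is that $\ol\U_\Sigma$ is a unary presentation of $\MT\FSigma$ (Notation preceding this subsection), so it suffices to show that $e_R$ is an $\F$-refinable $\ol\U_\Sigma$-quotient of $T\FSigma$ and then invoke Definition~\ref{def:unpres}. For $\F$-refinability: since $e$ is $\F$-refinable, there is an $\F$-quotient $q\colon\MT\FSigma\epito(B,\beta)$ with $e = p\o q$ for some $p$. One checks that $\preceq_q\seq\preceq_e$ is refined by the kernel of $q$ — this is because $q$ is a $\MT$-homomorphism and hence stable under all operations in $\ol\U_\Sigma$, so $q_t\o u$ factors through $q_s$ for every $u\in\ol\U_\Sigma(s,t)$; hence $e_R$ factors through $q$, making $e_R$ $\F$-refinable. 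For the $\ol\U_\Sigma$-quotient property restricted to $S_0$-sorts: this is immediate from the very definition of $\preceq_e$, since if $u\in\ol\U_\Sigma(s,t)$ with $t\in S_0$ and $x\preceq_e x'$, then for every $u'\in\ol\U_\Sigma(t,r)$ with $r\in S_0$ we have $u'\o u\in\ol\U_\Sigma(s,r)$ (closure under composition), so $e_r\o(u'\o u)(x)\leq e_r\o(u'\o u)(x')$, i.e.\ $u(x)\preceq_e u(x')$; the homomorphism theorem then yields the lifting $u_A$. But I need the lifting along \emph{all} sorts, not just $S_0$; here I use that $e_R$ is reduced (Remark~\ref{rem:reduction}), so the components of $A_R$ outside $S_0$ are terminal and liftings there are automatic. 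Thus $e_R$ is a $\ol\U_\Sigma$-quotient, hence a $\MT$-quotient, hence an $\F$-quotient since $\F$ is closed under quotients. \textbf{I expect this to be the main obstacle}, precisely because one must carefully juggle the interplay between the $S_0$-restricted lifting in Definition~\ref{def:reducedquot} and the full lifting required by Definition~\ref{def:uquotient}, using reducedness to bridge the gap.

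For \textbf{(2)}, I will show directly that $e_R$ is below any $\Sigma$-generated $\F$-algebra $f\colon\MT\FSigma\epito(C,\gamma)$ with $e\leq f$. Given such $f$, the kernel of $f$ is stable under all operations in $\ol\U_\Sigma$ (as $f$ is a $\MT$-homomorphism), so $f_t\o u$ depends only on $f_s$; combined with $e\leq f$ (i.e.\ $e = r\o f$ for some $r$), one deduces $f(x) = f(x')$ implies $e_t\o u(x) = e_t\o u(x')$ for all $u$, in particular for those with $t\in S_0$, hence $x\preceq_e x'$ whenever $f(x)\leq f(x')$. The homomorphism theorem for $\D^S$ gives a morphism $A_R\to C$ exhibiting $e_R\leq f$; that $e\leq e_R$ is clear since $\preceq_e$ contains the ordered kernel of $e$ (take $u = \id$).

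For \textbf{(3)}, reducedness of $(A_R,\alpha_R)$ as a $\MT$-algebra follows from reducedness of $e_R$ as a quotient (Remark~\ref{rem:reduction} plus the fact that a $\MT$-quotient whose underlying $\D^S$-quotient is reduced is a reduced $\MT$-algebra — the lifted structure cannot collapse further outside $S_0$ without collapsing the whole thing, which the minimality from (2) forbids). Finally for \textbf{(4)}: assume $e$ is a reduced $\U_\Sigma$-quotient. By Definition~\ref{def:reducedquot}(2), every $u\in\ol\U_\Sigma(s,t)$ with $s,t\in S_0$ lifts along $e$, say to $u_A\colon A_s\to A_t$; then for such $u$ the relation $\preceq_e$ simplifies: $x\preceq_e x'$ iff $u_A\o e_s(x)\leq u_A\o e_s(x')$ for all such $u$, which holds in particular when $e_s(x)\leq e_s(x')$, and conversely taking $u = \id_s$ shows $\preceq_e$ restricted to sort $s$ is exactly the ordered kernel of $e_s$. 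Hence $e_{R,s} = e_s$ for all $s\in S_0$. Uniqueness: by Remark~\ref{rem:reduction} a reduced quotient is determined by its $S_0$-components, and any $\Sigma$-generated reduced $\F$-algebra agreeing with $e$ on $S_0$-sorts must, by (2), lie above $e_R$ and agree with it on $S_0$, hence equal it.
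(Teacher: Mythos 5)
There is a genuine error running through your proof: you assert that $e_R$ is a reduced quotient in $\D^S$, i.e.\ that the components of $A_R$ outside $S_0$ are terminal. This is false. By definition $A_R = T\FSigma/\mathord{\preceq_e}$, and at a sort $s\notin S_0$ the relation $\preceq_{e,s}$ identifies $x$ and $x'$ only when they cannot be separated by any operation in $\ol\U_\Sigma(s,t)$ with $t\in S_0$ — in general a highly nontrivial quotient. (Compare Example \ref{ex:reducedalg_omegasem}: the $+$-sort of a reduced $\omega$-semigroup is not a singleton.) Being a reduced \emph{$\MT$-algebra} (part 3 of the statement, no proper $\MT$-quotient iso on $S_0$-sorts) is much weaker than being a reduced \emph{$\D^S$-quotient} (Remark \ref{rem:reduction}, singletons outside $S_0$); $e$ satisfies the latter by hypothesis, $e_R$ does not. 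In part (1) this makes your patch for operations $u\in\ol\U_\Sigma(s,t)$ with $t\notin S_0$ — ``liftings there are automatic'' — unjustified. Fortunately the fix is trivial: your own composition argument for $t\in S_0$ works verbatim for arbitrary $t$, since $u(x)\preceq_{e,t}u(x')$ is by definition tested only against $v\in\ol\U_\Sigma(t,t')$ with $t'\in S_0$, and $v\o u\in\ol\U_\Sigma(s,t')$; no case split is needed, and this is exactly what the paper does.

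Part (3) is a real gap, not just a fixable slip: your entire argument is premised on $e_R$ being a reduced $\D^S$-quotient, which fails, and the parenthetical appeal to minimality is too vague to recover it. The paper's argument is: given a $\MT$-quotient $q\colon(A_R,\alpha_R)\epito(B,\beta)$ with $q_s$ iso for $s\in S_0$, one shows $e\leq q\o e_R$ via the homomorphism theorem (for $s\in S_0$ use that $q_s$ is iso together with $e\leq e_R$; for $s\notin S_0$ use that \emph{$e$} is reduced, so its codomain is a singleton there), then part (2) gives $e_R\leq q\o e_R$, and combined with the trivial $q\o e_R\leq e_R$ this forces $q$ to be an isomorphism. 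You should also note two smaller points: in part (2), ``$\preceq_e$ contains the ordered kernel of $e$'' has the containment backwards (you need $\preceq_e$ to be \emph{contained in} the ordered kernel of $e$ for $e$ to factor through $e_R$), and the sorts $s\notin S_0$ again require the reducedness of $e$ rather than ``take $u=\id$''; and in part (4) the final step ``agree on $S_0$, hence equal'' silently uses that the comparison morphism is iso on $S_0$-sorts and that the competing algebra is reduced. Aside from these issues your overall strategy — stability of $\preceq_e$ under $\ol\U_\Sigma$ via closure under composition, $\F$-refinability and minimality via factoring through an arbitrary $\F$-quotient above $e$, and the kernel computation in (4) using the liftings guaranteed by Definition \ref{def:reducedquot} — is the same as the paper's.
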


\begin{proof}
We consider the case where $\D$ is a variety of ordered algebras; for the unordered case, replace equations by inequations and stable preorders by congruences. 

\smallskip
\noindent We first establish an auxiliary statement:

\smallskip
\noindent($\ast$) \emph{Claim.} If $h\colon \MT\FSigma\epito B$ is a $\Sigma$-generated $\F$-algebra  with $e\leq h$ then also $e_R\leq h$.

\smallskip
\noindent\emph{Proof.} Suppose that $e\leq h$, i.e. $e = p\o h$ for some morphism $p\colon B\epito A$. To show that $e_R\leq h$, we apply the homomorphism theorem. Thus suppose that elements $x,x'\in \under{T\FSigma}_s$ are given with $h(x)\leq h(x')$. We need to show that $e_R(x)\leq e_R(x')$, i.e. $x\preceq_e x'$. For any $u\in\ol\U_\Sigma(s,t)$ with $t\in S_0$, we have
\[ e_t\o u(x) = p_t\o h_t\o u(x) = p_t\o u_B\o h_s(x) \leq  p_t\o u_B\o h_s(x') =  p_t\o h_t\o u(x') =  e_t\o u(x'). \]
Here $u_B$ is the lifting of $u$ along $h$, which exists because $\ol \U_\Sigma$ is a unary presentation of $\MT\FSigma$ and $h$ is an $\F$-quotient. This shows $x\preceq_s x'$, as claimed. Thus  by the homomorphism theorem  $e_R$ factors through $h$, proving ($\ast$).

\smallskip
\noindent Now for the proof of the four statements in the lemma.
\begin{enumerate}
\item Since the quotient $e$ is $\F$-refinable by hypothesis, $e_R$ is also $\F$-refinable by ($\ast$). To show that $e_R$ is an $\F$-quotient (equivalently, that $\preceq_e$ is an $\F$-congruence), it suffices to verify that $\preceq_e$ is a $\U_\Sigma$-congruence, since $\U_\Sigma$ is a unary presentation. Thus suppose that $x,x'\in \under{T\FSigma}_s$ are given with $x\preceq_e x'$ and let $u\in\U_\Sigma(s,t)$. We need to show $u(x)\preceq_e u(x')$. For all $v\in \ol\U_\Sigma(t,t')$ with $t'\in S_0$, we have $v\o u\in\ol \U_\FSigma(s,t')$ by closure under composition, and thus $e_{t'}\o v\o u(x)\leq e_{t'}\o v\o u(x')$ since $x\preceq_e x'$. But this means precisely that $u(x)\preceq_e u(x')$ by the definition of $\preceq_e$, as desired.
\item We first show that $e\leq e_R$, using the homomorphism theorem. Let $x,x'\in\under{T\FSigma}_s$ with $e_R(x)\leq e_R(x')$ (i.e. $x\preceq_e x'$). We need to show that $e_s(x)\leq e_s(x')$. The latter is trivial for $s\not\in S_0$ because $e$ is reduced, i.e. the codomain of $e_s$ has at most one element. For $s\in S_0$, we have $e_t\o u (x)\leq e_t\o u(x')$ for all $u\in \ol\U_\Sigma(s,t)$ with $t\in S_0$ by the definition of $\preceq_e$. In particular, putting $u=\id_{(T\FSigma)_s}$ yields $e_s(x)\leq e_s(x')$. Thus $e\leq e_R$ by the homomorphism theorem.

That $e_R$ is the smallest $\Sigma$-generated $\F$-algebra  with $e\leq e_R$ now follows from ($\ast$).
\item Suppose that $q\colon (A_R,\alpha_R)\epito (B,\beta)$ is a quotient of $(A_R,\alpha_R)$ such that $q_s$ is an isomorphism for every $s\in S_0$. We need to show that $q$ is an isomorphism. To this end, we first show that $e\leq q\o e_R$, using once again the homomorphism theorem: given $x,x'\in \under{T\FSigma}_s$ with $q\o e_R(x)\leq q\o e_R(x')$, we need to show that $e(x)\leq e(x')$. For $s\not\in S_0$ this holds trivially because $e$ is reduced, i.e. the codomain of $e_s$ has at most one element. Thus let $s\in S_0$. Then $q\o e_R(x)\leq q\o e_R(x)$ implies $e_R(x)\leq e_R(x')$ because $q_s$ is an isomorphism, and thus $e(x)\leq e'(x)$ because $e\leq e_R$ by part 2 of the lemma. Thus, the homomorphism theorem gives $e\leq q\o e_R$. By part 2 again, we conclude $e_R\leq q\o e_R$. On the other hand, we trivially have $q\o e_R\leq e_R$. Thus $q\o e_R \cong e_R$, which shows that $q$ is an isomorphism.

\item Suppose that $e$ is a reduced $\U_\Sigma$-quotient. We first show that $e_{R,s}= e_s$ for $s\in S_0$. To this end, it suffices to show that $e_{R,s}$ and $e_s$ have the same ordered kernel: for all $x,x'\in \under{T\FSigma}_s$ one has $x\preceq_e x'$ iff $e_s(x)\leq e_s(x')$. The ``only if'' direction  follows from ($\ast$). For the ``if'' direction, let $e_s(x)\leq e_s(x')$. To show that $x\preceq_e x'$, let $u\in \ol \U(s,t)$ with $t\in S_0$. Then 
\[ e_t\o u(x) = u_A\o e_s(x)\leq u_A\o e_s(x') = e_t\o u(x'), \]
where $u_A$ is the lifting of $u$ along $e$. It exists because $s,t\in S_0$ and $e$ is a reduced $\U_\Sigma$-quotient. Thus, by the definition of $\preceq_e$ we have $x\preceq_e x'$, as claimed.

For the uniqueness statement, suppose that $h\colon \MT\FSigma\epito (B,\beta)$ is a $\Sigma$-generated reduced $\F$-algebra with $h_s = e_s$ for all $s\in S_0$. Then $e\leq h$; indeed, we have $e=p\o h$ where $p\colon B\epito A$ is the morphism with $p_s=\id_{A_s}$ for $s\in S_0$, and otherwise the unique morphism into $A_s$. Thus, by ($\ast$), we also have $e_R\leq h$, i.e. $e_R = q\o h$ for some $q\colon B\epito A_R$. Clearly $q_s = \id$ for $s\in S_0$ because $h_s = e_{R,s} = e_s$. Since $(B,\beta)$ is reduced, it follows that $q$ is an isomorphism. Thus $h\cong e_R$.
\qedhere
\end{enumerate}
\end{proof}

\begin{lemma}\label{lem:subdirectproductreduced}
Let $e_0$ and $e_1$ be two reduced $\U_\Sigma$-quotients of $T\FSigma$.
\begin{enumerate}
\item The subdirect product $e$ of $e_0$ and $e_1$ in $\D^S$ is a reduced $\U_\Sigma$-quotient, and moreover
\item $e_R$ is the subdirect product of $e_{0,R}$ and $e_{1,R}$ in $\Alg{\MT}$. 
\item The pushout of $e_0$ and $e_1$ in $\D^S$ is a reduced $\U_\Sigma$-quotient.
\end{enumerate}
\end{lemma}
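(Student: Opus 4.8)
The plan is to leverage the earlier machinery for $\U_\Sigma$-quotients (Lemma~\ref{lem:subdirectproduct}) together with the reduction construction $e\mapsto e_R$ and its universal properties (Lemma~\ref{lem:reduction_props}). Throughout I would treat the ordered case, the unordered one being obtained by replacing stable preorders with congruences and inequations with equations.

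\medskip
\noindent\textbf{Step 1 (part 1).} Let $e\colon T\FSigma\epito A$ be the subdirect product of $e_0$ and $e_1$ in $\D^S$, so $\langle e_0,e_1\rangle = m\o e$ for some injective/order-reflecting $m\colon A\monoto A_0\times A_1$. I must check the three defining clauses of a reduced $\U_\Sigma$-quotient. \emph{Reducedness:} for $s\notin S_0$ both $(A_i)_s$ are singletons (since each $e_i$ is reduced, cf.\ Remark~\ref{rem:reduction}), hence $(A_0\times A_1)_s$ is a singleton, and therefore its subobject $A_s$ is a singleton too; thus $e$ is reduced. \emph{$\F$-refinability:} each $e_i$ is $\F$-refinable, so $e_i = p_i\o q_i$ with $q_i\colon\MT\FSigma\epito B_i$ a finite $\MT$-quotient; then $e$ factors through the subdirect product of $q_0$ and $q_1$, which is a finite $\MT$-quotient because $\F$ is closed under subalgebras and finite products (Assumption~\ref{ass:dt}(iv)). \emph{Lifting of $\U_\Sigma$-operations between $S_0$-sorts:} for $u\in\ol\U_\Sigma(s,t)$ with $s,t\in S_0$, both $e_0$ and $e_1$ admit liftings $u_{A_i}$ of $u$; the argument is then verbatim that of Lemma~\ref{lem:subdirectproduct}(1)---one draws the diagram with $A$, $m$, $A_0\times A_1$, and the $u_{A_i}$, observes that the top rectangle commutes after postcomposition with the projections $\pi_i$ hence commutes, and applies diagonal fill-in along $m$ to obtain $u_A$ with $e\o u = u_A\o e$. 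So $e$ is a reduced $\U_\Sigma$-quotient.

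\medskip
\noindent\textbf{Step 2 (part 2), the main point.} I claim $e_R$ is the subdirect product of $e_{0,R}$ and $e_{1,R}$ in $\Alg{\MT}$. By Lemma~\ref{lem:reduction_props}(2), $e_{i,R}$ is the smallest $\Sigma$-generated $\F$-algebra with $e_i\le e_{i,R}$; in particular $e_i\le e_{i,R}$, hence $e\le e_i\le e_{i,R}$, so $e_R\le e_{i,R}$ for $i=0,1$ again by Lemma~\ref{lem:reduction_props}(2) (applied to $e$, using that $e$ is $\F$-refinable from Step~1). Thus $e_R\le \langle e_{0,R},e_{1,R}\rangle$, and factoring the latter through its image gives its subdirect product $d\colon \MT\FSigma\epito D$ with $e_R\le d$. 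Conversely I must show $d\le e_R$. Since $D$ is an $\A$-generated subalgebra of $A_{0,R}\times A_{1,R}$, it lies in $\F$ by closure of $\F$ under subalgebras and finite products, so $d$ is a $\Sigma$-generated $\F$-algebra; and $e\le e_i\le e_{i,R}$ for both $i$ implies $e\le \langle e_{0,R},e_{1,R}\rangle$, hence $e\le d$ (diagonal fill-in through the image factorization). By the minimality in Lemma~\ref{lem:reduction_props}(2), $e_R\le d$ already gives one inequality; for the other, $e\le d$ and minimality of $e_R$ among $\Sigma$-generated $\F$-algebras above $e$ yield $e_R\le d$---wait, that is the same direction. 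The correct second half: $d\le e_R$ follows because $e_{i,R}$ is the \emph{smallest} $\F$-algebra above $e_i$, and $e_R$ is an $\F$-algebra (Lemma~\ref{lem:reduction_props}(1)) that sits above $e_i$? That is false in general. Instead I argue directly with ordered kernels: unravelling the definition of $\preceq_e$ and of $\preceq_{e_i}$, for $x,x'$ in an $S_0$-sort one has $x\preceq_e x'$ iff $e_t\o u(x)\le e_t\o u(x')$ for all $u\in\ol\U_\Sigma(s,t)$, $t\in S_0$, and since $(e_0)_t, (e_1)_t$ jointly order-reflect (as $m$ is order-reflecting and the $\pi_i$ jointly order-reflect), this holds iff $x\preceq_{e_0}x'$ and $x\preceq_{e_1}x'$. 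Hence $\preceq_{e_R}$ on $S_0$-sorts is the intersection of $\preceq_{e_{0,R}}$ and $\preceq_{e_{1,R}}$, which is precisely the ordered kernel of $\langle e_{0,R},e_{1,R}\rangle$ there; since all algebras in sight are reduced (Lemma~\ref{lem:reduction_props}(3)) and a reduced quotient is determined by its $S_0$-components (Remark~\ref{rem:reduction}), this identifies $e_R$ with the subdirect product of $e_{0,R}$ and $e_{1,R}$.

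\medskip
\noindent\textbf{Step 3 (part 3).} For the pushout $p = p_0\o e_0 = p_1\o e_1\colon T\FSigma\epito P$ in $\D^S$: reducedness is immediate since for $s\notin S_0$ the sources of $p_0,p_1$ are singletons, hence so is $P_s$. $\F$-refinability follows because $P$ is a quotient of the subdirect product of $e_0,e_1$, which is $\F$-refinable by Step~1, and $\F$ is closed under quotients. Lifting of $\U_\Sigma$-operations between $S_0$-sorts is exactly the argument of Lemma~\ref{lem:subdirectproduct}(2): the morphisms $p_0\o u_{A_0}$ and $p_1\o u_{A_1}$ merge $e_0$ and $e_1$, so the universal property of the pushout produces $u_P$ with $p\o u = u_P\o p$.

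\medskip
\noindent\textbf{Expected main obstacle.} The routine clauses (reducedness, $\F$-refinability, lifting) are direct transcriptions of Lemma~\ref{lem:subdirectproduct} and present no difficulty. The delicate point is part~2: one must be careful that ``$e_R = $ subdirect product of $e_{0,R}$ and $e_{1,R}$'' is proved via the ordered-kernel computation on $S_0$-sorts rather than by a naive appeal to the minimality property, which only gives $e_R\le e_{i,R}$. The reducedness of all the algebras involved, so that a reduced quotient is pinned down by its $S_0$-components (Remark~\ref{rem:reduction}), is what closes the gap.
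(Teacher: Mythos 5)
Parts 1 and 3 of your proposal are correct and follow the same route as the paper: reducedness and $\F$-refinability come from the minimality of the subdirect product (resp.\ from $p\leq e_0$ for the pushout), and the lifting property is verbatim Lemma~\ref{lem:subdirectproduct}. The problems are all in part 2, and they are genuine. With the paper's convention ($e\leq e'$ iff $e$ factors through $e'$), the subdirect product satisfies $e_i\leq e$, not $e\leq e_i$, so your chain ``$e\le e_i\le e_{i,R}$'' is backwards and the conclusions drawn from it ($e_R\leq e_{i,R}$, $e_R\leq d$) are unsupported. Worse, the step you then dismiss as ``false in general'' is true and is exactly the paper's proof: from $e_i\leq e\leq e_R$ and the fact that $e_R$ is a $\Sigma$-generated $\F$-algebra (Lemma~\ref{lem:reduction_props}.1--2), minimality of $e_{i,R}$ gives $e_{i,R}\leq e_R$; conversely, any $\F$-algebra $q$ with $e_{0,R},e_{1,R}\leq q$ satisfies $e_0,e_1\leq q$, hence $e\leq q$ by minimality of the subdirect product in $\D^S$, hence $e_R\leq q$ by Lemma~\ref{lem:reduction_props}.2. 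This exhibits $e_R$ as the smallest $\F$-quotient above $e_{0,R}$ and $e_{1,R}$, i.e.\ their subdirect product in $\Alg{\MT}$. The ``naive'' minimality argument, with the inequalities the right way round, is the whole proof.

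Your substitute ordered-kernel argument also has a gap at its final step. The computation $\preceq_e\,=\,\preceq_{e_0}\cap\preceq_{e_1}$ is correct, but you restrict it to $S_0$-sorts and then invoke Remark~\ref{rem:reduction} to conclude. That remark characterizes reduced quotients \emph{in $\D^S$} (singleton components outside $S_0$); it does not apply to reduced $\MT$-algebras such as the subalgebra $D\monoto A_{0,R}\times A_{1,R}$, whose non-$S_0$ sorts are in general not singletons. To pass from agreement on $S_0$-sorts to equality you would need either that $D$ is itself a reduced $\MT$-algebra (not established, and essentially equivalent to the claim) or the inequality $d\leq e_R$ combined with reducedness of $A_R$ via Lemma~\ref{lem:reduction_props}.4 --- neither of which you obtain. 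Note that your kernel identity in fact holds on \emph{all} sorts, since $\preceq_e$ is defined on every sort $s$ via the operations in $\ol\U_\Sigma(s,t)$ with $t\in S_0$; had you not restricted to $S_0$-sorts, the equality of ordered kernels would identify $e_R$ with the subdirect product of $e_{0,R}$ and $e_{1,R}$ directly, making the reducedness detour unnecessary.
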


\begin{proof}
\begin{enumerate}
\item By the definition of the subdirect product (see \ref{app:subdirectproducts}), clearly $e$ is a reduced quotient. Next, we show that $e$ is $\F$-refinable. Since $e_0$ and $e_1$ are $\F$-refinable, they factor through some (w.l.o.g. the same) $\F$-quotient $h$ of $\MT\FSigma$, i.e. $e_0,e_1\leq h$. But since $e$ is the smallest quotient of $T\FSigma$ above $e_0$ and $e_1$, we also have $e\leq h$, showing that $e$ is $\F$-refinable. That $e$ satisfies the lifting property is shown as in the proof of Lemma \ref{lem:subdirectproduct}.1.
\item We need to prove that $e_R$ is the smallest $\Sigma$-generated $\F$-algebra with $e_{i,R}\leq e_R$ ($i=0,1$). First, by Lemma \ref{lem:reduction_props}.2 we have $e\leq e_R$ and thus also $e_i\leq e_R$, since $e_i\leq e$. By Lemma \ref{lem:reduction_props}.2 again, $e_{i,R}$ is the smallest $\Sigma$-generated $\F$-algebra with $e_i\leq e_{i,R}$, and therefore $e_{i,R}\leq e_R$. In remains to show that $e_R$ is the smallest $\Sigma$-generated $\F$-algebra satisfying the latter inequality. Thus let $q$ be any $\Sigma$-generated $\F$-algebra with $e_{i,R}\leq q$ for $i=0,1$. Since $e_{i}\leq e_{i,R}$, we get $e_i\leq q$ for $i=0,1$, and thus $e\leq q$ by the minimality of $e$. But since $e_R$ is the smallest $\Sigma$-generated $\F$-algebra with $e\leq e_R$, it follows that  $e_R\leq q$, as desired.
\item Let $p$ be the pushout of $e_0$ and $e_1$ in $\D^S$. Clearly $p$ is reduced and $\F$-refinable because $e_0$ is and $p\leq e_0$. The lifting property is established as in the proof of Lemma \ref{lem:subdirectproduct}.2.\qedhere
\end{enumerate}
\end{proof}

\begin{notation}\label{not:downarrow}
For any morphism $h\colon A\to B$ in $\D^S$, we denote by $h_\downarrow\colon A\to B_\downarrow$ the morphism where (i) $h_{\downarrow,s} = h_s$ for $s\in S_0$ and (ii) $h_{\downarrow,s}$ is the image of the unique morphism $A_s\to 1$ for $s\not\in S_0$. In particular, $B_{\downarrow,s}=B_s$ for $s\in S_0$, and $B_{\downarrow,s}$ has at most one element for $s\not\in S_0$. Moreover, if $h$ is a quotient, then $h_\downarrow$ is a reduced quotient in $\D^S$. 
\end{notation}

\begin{rem}
If $h\colon \MT\FSigma\epito (A,\alpha)$ is a $\Sigma$-generated reduced $\F$-algebra, then $h_\downarrow$ is a reduced $\U_\Sigma$-quotient of $T\FSigma$. Indeed, that $h_\downarrow$ satisfies the lifting property of Definition \ref{def:reducedquot}.2 follows immediately from the fact that $\ol \U_\Sigma$ forms a unary presentation of $\MT\FSigma$ and $h$ is an $\F$-quotient.
\end{rem}

\begin{lemma}\label{lem:usigma_vs_reduced} The maps
\[ h\mapsto h_\downarrow\quad\text{and}\quad e\mapsto e_R\]
define an isomorphism between the poset of $\Sigma$-generated reduced $\F$-algebras and the poset of reduced $\U_\Sigma$-quotients of $T\FSigma$.
\end{lemma}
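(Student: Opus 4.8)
The plan is to show that the two maps $h \mapsto h_\downarrow$ and $e \mapsto e_R$ are mutually inverse and order-preserving, using the package of properties collected in Lemma \ref{lem:reduction_props} as the main engine. First I would verify that the maps are well-defined: a $\Sigma$-generated reduced $\F$-algebra $h\colon \MT\FSigma\epito (A,\alpha)$ yields a reduced $\U_\Sigma$-quotient $h_\downarrow$ by the remark immediately preceding the lemma (the lifting property follows since $\ol\U_\Sigma$ is a unary presentation of $\MT\FSigma$ and $h$ is an $\F$-quotient), and conversely $e\mapsto e_R$ sends a reduced $\U_\Sigma$-quotient to a $\Sigma$-generated reduced $\F$-algebra by Lemma \ref{lem:reduction_props}.1 and \ref{lem:reduction_props}.3.

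Next I would establish the two round-trips. For $e \mapsto (e_R)_\downarrow = e$: given a reduced $\U_\Sigma$-quotient $e\colon T\FSigma\epito A$, Lemma \ref{lem:reduction_props}.4 says $e_R$ is the unique $\Sigma$-generated reduced $\F$-algebra with $e_{R,s}=e_s$ for all $s\in S_0$. Since $e$ is itself reduced, it is determined by its $S_0$-components (Remark \ref{rem:reduction}), and $(e_R)_\downarrow$ by definition keeps exactly the $S_0$-components of $e_R$, which are the $S_0$-components of $e$; hence $(e_R)_\downarrow = e$ as quotients in $\D^S$. For $h \mapsto (h_\downarrow)_R = h$: by Lemma \ref{lem:reduction_props}.4 applied to the reduced $\U_\Sigma$-quotient $h_\downarrow$, the algebra $(h_\downarrow)_R$ is the unique $\Sigma$-generated reduced $\F$-algebra whose $S_0$-components agree with those of $h_\downarrow$, i.e.\ with those of $h$. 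But $h$ is already a $\Sigma$-generated reduced $\F$-algebra with those $S_0$-components, so by the uniqueness clause $(h_\downarrow)_R \cong h$ as quotients of $\MT\FSigma$.

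Finally I would check that both maps preserve the order. If $h \leq h'$ are $\Sigma$-generated reduced $\F$-algebras, say $h = p\o h'$, then restricting to $S_0$-components and completing with the unique morphisms into the (at most one-element) non-$S_0$ components of the codomain of $h_\downarrow$ gives a morphism witnessing $h_\downarrow \leq h'_\downarrow$; this is a routine diagram chase using Notation \ref{not:downarrow}. In the other direction, if $e \leq e'$ are reduced $\U_\Sigma$-quotients, then $e \leq e' \leq e'_R$, and since $e'_R$ is a $\Sigma$-generated $\F$-algebra above $e$, the minimality clause of Lemma \ref{lem:reduction_props}.2 gives $e_R \leq e'_R$. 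Together with the two round-trip identities this yields the claimed poset isomorphism.

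I expect the only genuinely delicate point to be bookkeeping about when equalities hold ``on the nose'' versus only ``up to the canonical isomorphism of quotients'': the uniqueness statements in Lemma \ref{lem:reduction_props}.4 are up to isomorphism, and one must be careful that the composite round-trips land back on the \emph{same} quotient, not merely an isomorphic one. This is handled by the standard convention (already in force in the paper, cf.\ the notation for $\Quo(\MT\FSigma,\F)$) that quotients are identified when connected by an isomorphism, so once phrased in the poset of quotients there is nothing further to prove; no step requires a new idea.
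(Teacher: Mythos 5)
Your proposal is correct and follows essentially the same route as the paper's own proof: both round-trip identities are derived from the uniqueness clause of Lemma \ref{lem:reduction_props}.4 together with the fact that reduced quotients are determined by their $S_0$-components, and monotonicity of $e\mapsto e_R$ is obtained from the minimality clause of Lemma \ref{lem:reduction_props}.2 exactly as in the paper. No gaps.
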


\begin{proof}
\begin{enumerate}[(a)]
\item For every reduced $\U_\Sigma$-quotient $e\colon T\FSigma\epito A$, we have $e = e_{R,\downarrow}$. Indeed, clearly both quotients agree on sorts $s\not\in S_0$ because they are reduced. On sorts $s\in S_0$, the quotient $e$ agrees with $e_R$ by Lemma \ref{lem:reduction_props}.4, and $e_R$ agrees with $e_{R,\downarrow}$ by the definition of $(\dash)_\downarrow$.
\item For every $\Sigma$-generated reduced $\F$-algebra $h\colon \MT\FSigma\epito (A,\alpha)$, we have $h=h_{\downarrow,R}$. Indeed, since both $h$ and $h_{\downarrow,R}$ agree with the reduced $\U_\Sigma$-quotient $e:= h_\downarrow$ on sorts $s\in S_0$, this follows from Lemma \ref{lem:reduction_props}.4.
\item It remains to show that the two maps $h\mapsto h_\downarrow$  and $e\mapsto e_R$ are monontone. Clearly the map $h\mapsto h_\downarrow$ is monotone. For the monotonicity of $e\mapsto e_R$, suppose that $e_i\colon T\FSigma\epito A_i$, $i=0,1$, are reduced $\U_\Sigma$-quotients with $e_0\leq e_1$, i.e. $e_0$ factorizes through $e_1$. Since $e_1\leq e_{1,R}$ by Lemma \ref{lem:reduction_props}.2,  we also have $e_0\leq e_{1,R}$. But since (again by Lemma \ref{lem:reduction_props}.2) $e_{0,R}$ is the smallest $\Sigma$-generated $\F$-algebra with $e_0\leq e_{0,R}$, it follows that $e_{0,R}\leq e_{1,R}$.\qedhere
\end{enumerate}
\end{proof}
With the help of unary presentations, finite reduced $\F$-algebras can be described as follows:

\begin{lemma}\label{lem:reducedconcrete}
Let $A\in \F$ be an $\A$-generated algebra and let $e\colon \MT\FSigma\epito A$ be a surjective $\MT$-homomorphism with $\Sigma\in \A$. Then the following statements are equivalent:
\begin{enumerate}[(i)]
\item $A$ is reduced.
\item For each sort $s\not \in S_0$ and any two elements $a,a' \in \under{A_s}$, 
\[ \text{if $u_A(a) = u_A(a')$ for all $u\in \ol\U_\Sigma(s,t)$ with $t\in S_0$}\qquad \text{then}\qquad a=a', \]
and in the ordered case
\[ \text{if $u_A(a) \leq u_A(a')$ for all $u\in \ol\U_\Sigma(s,t)$ with $t\in S_0$}\qquad \text{then}\qquad a\leq a',\] 
where $u_A$ is the lifting of $u$ along $e$.
\end{enumerate}
\end{lemma}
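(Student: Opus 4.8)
The plan is to prove the equivalence $(i)\Leftrightarrow (ii)$ by relating the quotient $e\colon \MT\FSigma\epito A$ to its reduction $e_R$ and applying the key properties established in Lemma \ref{lem:reduction_props}. The crucial observation is that the condition in $(ii)$ is precisely a restatement, phrased in terms of the elements of $A$ and the liftings $u_A$, of the statement that $e$ \emph{is} a reduced $\U_\Sigma$-quotient of $T\FSigma$ in the sense of Definition \ref{def:reducedquot}.2 — more precisely, that the reduced quotient $e_\downarrow$ (see Notation \ref{not:downarrow}) is already ``as reduced as it can get.'' So the strategy is: first translate $(ii)$ into a statement about the stable preorder (resp.\ congruence) $\preceq_{e_\downarrow}$ (resp.\ $\equiv_{e_\downarrow}$), then invoke Lemma \ref{lem:reduction_props}.

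\textbf{Proof sketch.} We treat the ordered case; the unordered case is obtained by replacing inequations with equations and stable preorders with congruences. Since $A$ is $\A$-generated, $e$ is $\F$-refinable (indeed $A\in\F$), so $e_\downarrow\colon T\FSigma\epito A_\downarrow$ is an $\F$-refinable reduced quotient of $T\FSigma$, and $\ol\U_\Sigma$ is a unary presentation of $\MT\FSigma$. For each sort $s\notin S_0$, $A_\downarrow$ has at most one element in sort $s$, while $A_{\downarrow,t}=A_t$ for $t\in S_0$; the liftings of $u\in\ol\U_\Sigma(s,t)$ along $e$ and along $e_\downarrow$ coincide for $t\in S_0$ (both equal $u_A$ followed by the inclusion of the appropriate component), since $e$ and $e_\downarrow$ agree on the $S_0$-sorts.

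First I would show $(i)\Rightarrow(ii)$. Suppose $A$ is reduced, and let $s\notin S_0$ and $a,a'\in\under{A_s}$ satisfy $u_A(a)\le u_A(a')$ for all $u\in\ol\U_\Sigma(s,t)$ with $t\in S_0$. Consider the smallest stable preorder $\preceq$ on $A$ containing the pair $(a,a')$ and also all pairs from the given partial order of $A$; equivalently, pass to the quotient $q\colon A\epito B$ identifying along this preorder. I claim $q$ is an $\MT$-quotient: by Theorem \ref{thm:synalgconst} / the unary presentation property it suffices to check $q$ is a $\ol\U_\Sigma$-quotient, i.e.\ that $\preceq$ is stable under all $u\in\ol\U_\Sigma$; and this is exactly the hypothesis, because any $u$ out of sort $s$ into a sort $t\in S_0$ sends $(a,a')$ to a pair already below in $A_t$, while $u$ into a sort $t\notin S_0$ is vacuous ($A_t$ has $\le 1$ element), using that $\ol\U_\Sigma$ is closed under composition so it suffices to check generators out of $s$. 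Moreover $q_t$ is an isomorphism for every $t\in S_0$ (we added no identifications there) and $q$ is $\F$-refinable (it is a quotient of $A\in\F$). Since $A$ is reduced, $q$ must be an isomorphism, forcing $a\le a'$. This proves $(ii)$.

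For $(ii)\Rightarrow(i)$: suppose $(ii)$ holds and let $q\colon (A,\alpha)\epito(B,\beta)$ be a quotient in $\Alg{\MT}$ with $q_t$ an isomorphism for all $t\in S_0$; we must show $q$ is an isomorphism. For $t\notin S_0$ the component $q_t$ is automatically surjective, so it remains to show it is order-reflecting (resp.\ injective). Let $a,a'\in\under{A_s}$ with $q_s(a)\le q_s(a')$ for $s\notin S_0$. For any $u\in\ol\U_\Sigma(s,t)$ with $t\in S_0$, let $u_A$ and $u_B$ be the liftings along $e$ and along $q\circ e$ respectively (they exist since $\ol\U_\Sigma$ is a unary presentation and $e$, $q\circ e$ are $\F$-quotients), so $q_t\circ u_A = u_B\circ q_s$ by Lemma \ref{lem:homo-preserves-unary}. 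Then $q_t(u_A(a)) = u_B(q_s(a)) \le u_B(q_s(a')) = q_t(u_A(a'))$, and since $q_t$ is an order-isomorphism (as $t\in S_0$) we get $u_A(a)\le u_A(a')$. As this holds for all such $u$, hypothesis $(ii)$ gives $a\le a'$, so $q_s$ is order-reflecting. Hence $q$ is an isomorphism and $A$ is reduced.

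\textbf{Main obstacle.} The only delicate point I anticipate is the step in $(i)\Rightarrow(ii)$ where one builds the witnessing quotient $q$ of $A$ and must verify it is a genuine $\MT$-quotient rather than merely a $\D^S$-quotient; this is where the unary presentation hypothesis (and the closure of $\ol\U_\Sigma$ under composition, which lets one check only generators out of the single sort $s$) does the real work. One must also be careful that adding the pair $(a,a')$ in sort $s\notin S_0$ does not force any new identifications in sorts $t\in S_0$ — this is exactly guaranteed by the hypothesis $u_A(a)\le u_A(a')$, which says the images in $S_0$-sorts are already comparable, so no collapse happens there and $q_t$ stays an isomorphism. Everything else is a routine application of the homomorphism theorem and Lemma \ref{lem:homo-preserves-unary}.
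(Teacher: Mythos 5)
Your proof is correct, but it takes a genuinely different route from the paper's. The paper's proof is a short chain of equivalences built on the reduction machinery already in place: by Lemma~\ref{lem:usigma_vs_reduced}, $A$ is reduced iff $e=e_{\downarrow,R}$; since $e_{\downarrow,R}\leq e$ always holds (Lemma~\ref{lem:reduction_props}.2), this is equivalent to $e\leq e_{\downarrow,R}$, which the homomorphism theorem unwinds to the implication ``$x\preceq_{e_\downarrow}x'$ implies $e(x)\leq e(x')$''; restricting to sorts $s\notin S_0$ (the case $s\in S_0$ being trivial via $u=\id$) and using surjectivity of $e$ gives exactly condition~(ii). You instead prove the two implications directly at the level of quotients of $A$: for (ii)$\Rightarrow$(i) you use Lemma~\ref{lem:homo-preserves-unary} to show any $\Alg{\MT}$-quotient that is iso on $S_0$-sorts is order-reflecting everywhere (this half is clean and essentially matches the spirit of the paper's argument); for (i)$\Rightarrow$(ii) you construct the principal $\MT$-quotient of $A$ collapsing a given pair $(a,a')$ and invoke reducedness. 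That second direction is where your approach pays a price: you must verify that the stable preorder generated by $(a,a')$ and $\leq$, closed under the $\D$-term operations, transitivity, \emph{and} all liftings $u_A$, restricts to $\leq_t$ on every sort $t\in S_0$. You correctly identify the key points --- that liftings compose, so $(vu)_A=v_Au_A$ and new pairs in an $S_0$-sort only arise as $\bigl((wu)_A(a),(wu)_A(a')\bigr)$ with $wu\in\ol\U_\Sigma(s,t)$, all already in $\leq_t$ by hypothesis --- but this generated-preorder verification should be written out, since it is the load-bearing step. Two minor blemishes: your parenthetical claim that $u$ lifts along $e_\downarrow$ for $s\notin S_0$ is not generally true (and is not needed; your argument only uses liftings along $e$ and $q\o e$), and the appeal to ``Theorem~\ref{thm:synalgconst} / the unary presentation property'' should be to the unary presentation property of $\MT\FSigma$ applied to the composite $q\o e$ (which is $\F$-refinable), not to a presentation of $A$ itself. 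In exchange for the extra work, your proof is self-contained with respect to the $(\dash)_R$ construction, whereas the paper's ten-line proof leans entirely on Lemmas~\ref{lem:reduction_props} and~\ref{lem:usigma_vs_reduced}.
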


\begin{proof}
We consider the case where $\D$ is a variety of ordered algebras. That $A$ is reduced means precisely that $e=e_{\downarrow,R}$ by Lemma \ref{lem:usigma_vs_reduced}.  Since $e_{\downarrow,R}\leq e$ by Lemma \ref{lem:reduction_props}.2, the latter equation is equivalent to $e\leq e_{\downarrow,R}$. By the homomorphism theorem, this is the case precisely if, for all $x,x'\in \under{T\FSigma}_s$,
\[x\preceq_{e_\downarrow} x'\quad\text{implies}\quad e(x)\leq e(x')\]
i.e. 
 \[ \text{if $e_t\o u(x) \leq e_t\o u(x')$ for all $u\in \ol\U_\Sigma(s,t)$ with $t\in S_0$}\qquad \text{then}\qquad e(x)\leq e(x').\] 
For $s\in S_0$ this implication is trivially satisfied (take $u=\id_{(T\FSigma)_s}$). Thus it suffices to restrict to sorts $s\not\in S_0$. But then the above implication means exactly that, for $a,a'\in \under{A_s}$ with $s\in S_0$,
 \[ \text{if $u_A(a) \leq u_A(a')$ for all $u\in \ol\U_\Sigma(s,t)$ with $t\in S_0$}\qquad \text{then}\qquad a\leq a',\] 
 because $e$ is surjective and $u_A$ is the lifting of $u$ along $e$.
\end{proof}

\begin{example}\label{ex:reducedalg_omegasem}
Let $\MT=\MT_\infty$ on $\Set^S$ with $S=\{+,\omega\}$, $S_0=\{\omega\}$, $\A= \{(\Sigma,\emptyset): \Sigma\in\Set_f\}$, $\F =$ all finite $\omega$-semigroups. Given an $\A$-generated (= complete) finite $\omega$-semigroup $A$, express $A$ as a quotient $e\colon (\Sigma^+,\Sigma^\omega)\epito (A_+,A_\omega)$ with $\Sigma\in \Set_f$. For the unary presentation $\U_\Sigma$ of $(\Sigma^+,\Sigma^\omega)$ as in Example \ref{ex:unpres_omegasem},  the unary operations in $\ol\U_\Sigma(+,\omega)$ (i.e. from $\Sigma^+$ to $\Sigma^\omega$) have the form $x\mapsto yxz$ and $x\mapsto y(xy')^\omega$ with $y,y'\in \Sigma^*$ and $z\in \Sigma^\omega$. The corresponding lifted operations from $A_+$ to $A^\omega$ have the form $a\mapsto bac$ and $a\mapsto b(ab')^\omega$ with $b,b'\in A_+\cup \{1\}$ and $c\in A_\omega$. Thus, by the above lemma the $\omega$-semigroup $A$ is reduced iff, for all $a,a'\in A_+$,
\[ \text{if $bac = ba'c$ and $b(ab')^\omega = b(a'b')^\omega$ for all $b,b'\in A_+\cup\{1\}$ and $c\in A_\omega$} \qquad\text{then}\qquad a=a'. \]
In other words, $A$ is reduced iff any two distinct elements of $A_+$ can be separated by some iterated unary operation from $A_+$ into  $A_\omega$.
\end{example}
\subsection{Local pseudovarieties of reduced $\MT$-algebras}
In this section we consider reduced $\U_\Sigma$-quotients of $\hatt\FSigma$ and show that they correspond to local pseudovarieties of $\Sigma$-generated reduced $\F$-algebras. This is the reduced analogue of the Local Reiterman Theorem \ref{thm:localreiterman}.

\begin{notation}
Recall that we denote the continuous extension of $u\in \U_\Sigma(s,t)$ by $\hat u\colon (\hatt \FSigma)_s\to (\hatt\FSigma)_t$, see Lemma \ref{lem:extend_fi}. Then every $u=u_n\o\ldots\o u_1$ in $\ol \U_\Sigma$ has the continuous extension $\hat u := \hat u_n\o\ldots\o \hat u_1$.
Moreover, the continuous extension of an $\F$-refinable quotient $e\colon T\FSigma\epito A$ in $\D^S$ is denoted by $\hat e\colon \hatt\FSigma\epito A$ (see Lemma \ref{lem:extensiblevsrefinable}).
\end{notation}
The following definition gives the analogous concepts of Definition \ref{def:reducedquot} for $\hatt$:
\begin{definition}\label{def:reducedquothatt}
Let $\phi\colon \hatt\FSigma\epito P$ be a quotient of $\hatt\FSigma$ in $\hatD^S$. Then $\phi$ is called
\begin{enumerate}
\item \emph{reduced} if there exists no proper quotient $\phi\colon P\epito P'$ in $\hatD^S$ whose components $\phi_s$ for $s\in S_0$ are isomorphisms.
\item a \emph{reduced $\U_\Sigma$-quotient} if it is (i) reduced and (ii) for every unary operation $u\in \ol \U_\Sigma(s,t)$ with $s,t\in S_0$, the continuous extension $\hat u$ lifts along $\phi$, i.e. there exists $u_P\colon P_s\to P_t$ with $\phi\o  \hat u = u_P\o \phi$.
\[
\xymatrix{
(\hatt\FSigma)_s \ar[r]^{\hat u} \ar@{->>}[d]_\phi & (\hatt\FSigma)_t \ar@{->>}[d]^\phi \\
P_s \ar@{-->}[r]_{u_P} & P_s
}
\]
\end{enumerate}
\end{definition}

\begin{rem}
If $S_0=S$, a reduced $\U_\Sigma$-quotient of $\hatt\FSigma$ corresponds precisely to a $\Sigma$-generated profinite $\hatT$-algebra, see Lemma \ref{lem:profinite_algs}.
\end{rem}

\begin{rem}\label{rem:uextendreduced}
In analogy to Remark \ref{rem:uextend}, if the object $P$ in the above definition is finite, then $\phi$ forms a reduced $\U_\Sigma$-quotient of $\hatt\FSigma$ iff its restriction $V\phi\o \iota_\FSigma\colon T\FSigma\epito P$ forms a $\U_\Sigma$-quotient of $T\FSigma$.
\end{rem}

\begin{definition}\label{def:localpseudovar_reduced}
A \emph{local pseudovariety of reduced $\U_\Sigma$-quotients} is an ideal in the poset of reduced $\U_\Sigma$-quotients of $T\FSigma$ (cf. Definition \ref{def:localpseudovariety}).
\end{definition}
We aim to show that local pseudovarieties of reduced $\U_\Sigma$-quotients correspond to reduced $\U_\Sigma$-quotients of $\hatt\FSigma$. The following construction gives the translation between the two concepts, generalizing Construction \ref{rem:localvarconst}.

\begin{construction}\label{rem:usigma_to_localpseudovar}
\begin{enumerate}
\item To every reduced $\U_\Sigma$-quotient $\phi\colon\hatt\FSigma\epito P$ we associate the class $\P^\phi$ of all quotients of $T\FSigma$ of the form
\[ e = (\,T\FSigma \xra{\iota_\FSigma} V\hatt\FSigma \xra{V\phi} P \xra{Ve'} A\,), \]
where $e'\colon P\epito A$ is a finite quotient in $\hatD^S$ and $e'\o \phi$ is a reduced $\U_\Sigma$-quotient of $\hatt\FSigma$. Note that any such morphism $e$ is a reduced $\U_\Sigma$-quotient of $T\FSigma$: it is clearly reduced, it is $\F$-refinable by Lemma \ref{lem:extensiblevsrefinable}, and satisfies the lifting property by Remark \ref{rem:uextendreduced}.
\item To every local pseudovariety $\P$ of reduced $\U_\Sigma$-quotients we associate a quotient \[\phi_\P\colon \hatt\FSigma\epito P^\P\] in $\hatD^S$ as follows. View $\P$ as a full subcategory of the comma category $(T\FSigma\downarrow \D_f)$ and form the cofiltered limit $P^\P$ of the diagram 
\[ \P\mapsto \hatD^S,\quad (e\colon T\FSigma\epito A)\mapsto A,  \]
with limit projections denoted by
\[ e_\P^*\colon P\epito A. \] 
Note that the projections are surjective by Lemma \ref{lem:projection-is-surjective}. The morphisms $\hat e\colon \hatt\FSigma\epito A$ (where $e$ ranges over elements of $\P$) form a compatible family over the same diagram, and thus there exists a unique $\phi^\P\colon \hatt\FSigma\epito P^\P$ making the triangle below commute:
\[
\xymatrix{
\hatt\FSigma \ar@{->>}[d]_{\hat e}\ar@{->>}[r]^{\phi^\P} & P^\P \ar@{->>}[dl]^{e_\P^*}\\
A
}
\] The morphism $\phi^\P$ is surjective by Lemma \ref{lem:surjections-between-cofiltered-diagrams}.
\end{enumerate}
\end{construction}

\begin{lemma}
\begin{enumerate}
\item For every reduced $\U_\Sigma$-quotient $\phi\colon \hatt\FSigma\epito P$, the class $\P^\phi$ forms a local pseudovariety of reduced $\U_\Sigma$-quotients of $T\FSigma$.
\item For every local pseudovariety $\P$ of reduced $\U_\Sigma$-quotients of $T\FSigma$, the morphism $\phi^\P\colon \hatt\FSigma\epito P$ is a reduced $\U_\Sigma$-quotient of $\hatt\FSigma$.
\end{enumerate}
\end{lemma}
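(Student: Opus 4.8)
The plan is to mirror the proofs of Lemma~\ref{lem:localpseudovar_localprofiniteth_1}, Lemma~\ref{lem:localpseudovar_localprofiniteth_2}, and Corollary~\ref{cor:profinite_epi}, but with \emph{profinite $\hatT$-algebra} replaced by \emph{reduced $\U_\Sigma$-quotient} throughout, using the extra machinery of Lemma~\ref{lem:subdirectproducthatd} and Lemma~\ref{lem:subdirectproductreduced} to handle the ``reduced'' conditions.

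\medskip
\noindent\textbf{Part 1.} Let $\phi\colon\hatt\FSigma\epito P$ be a reduced $\U_\Sigma$-quotient and consider $\P^\phi$ as in Construction~\ref{rem:usigma_to_localpseudovar}.1. First I would verify \emph{downward closure}: if $e_0\leq e_1$ with $e_1\in\P^\phi$, say $e_1 = Ve_1'\o V\phi\o\iota_\FSigma$ for a finite quotient $e_1'\colon P\epito A_1$ with $e_1'\o\phi$ a reduced $\U_\Sigma$-quotient, and $e_0 = p\o e_1$ in $\D^S$, then $e_0$ is reduced (being $\leq$ a reduced quotient), $\F$-refinable by Lemma~\ref{lem:extensiblevsrefinable}, and the corresponding finite quotient $e_0'\colon P\epito A_0$ in $\hatD^S$ (obtained since $A_0$ is finitely copresentable, factoring $\widehat{e_0}$ through $\phi$) satisfies: $e_0'\o\phi$ is reduced, and its lifting property for $u\in\ol\U_\Sigma(s,t)$, $s,t\in S_0$, follows because $e_0'\o\phi = p'\o(e_1'\o\phi)$ postcomposes the lifting $u_{A_1}$ along $e_1'\o\phi$. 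Hence $e_0\in\P^\phi$. For \emph{directedness}, given $e_0,e_1\in\P^\phi$ induced by $e_0',e_1'\colon P\epito A_i$, form the subdirect product $e'$ of $e_0'$ and $e_1'$ in $\hatD^S$; then $e'\o\phi$ is a reduced $\U_\Sigma$-quotient by Lemma~\ref{lem:subdirectproducthatd}.1 and by checking that subdirect products of reduced quotients are reduced (the $S_0$-components of the mediating mono are isos, cf.\ Lemma~\ref{lem:subdirectproductreduced}.1). Since $e'\leq\phi$, minimality of the subdirect product gives $e' = q\o\phi$ for some finite quotient $q\colon P\epito A$ with $q\o\phi$ a reduced $\U_\Sigma$-quotient, so the induced $e\in\P^\phi$ dominates $e_0,e_1$. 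Nonemptiness follows from the terminal-object quotient. This makes $\P^\phi$ an ideal in the poset of reduced $\U_\Sigma$-quotients of $T\FSigma$, i.e.\ a local pseudovariety.

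\medskip
\noindent\textbf{Part 2.} Let $\P$ be a local pseudovariety of reduced $\U_\Sigma$-quotients of $T\FSigma$ and $\phi^\P\colon\hatt\FSigma\epito P^\P$ as in Construction~\ref{rem:usigma_to_localpseudovar}.2. I first observe that $\P$ is cofiltered (it is directed by hypothesis), so the cofiltered limit $P^\P$ exists in $\hatD^S$. To see $\phi^\P$ is \emph{reduced}: suppose $q\colon P^\P\epito P'$ has isomorphic $S_0$-components; then for every $e\colon T\FSigma\epito A$ in $\P$ the composite $q\o e^*_\P\o\phi^\P$ agrees with $\widehat{e}$ on $S_0$-sorts after factoring, and since each such $e$ is itself reduced, an argument exactly like Lemma~\ref{lem:reduction_props}.3 forces $q$ to be iso on the limit (using that $P^\P$ is, up to iso, determined by its $S_0$-components, which are the cofiltered limit of those of the $A$'s). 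To see $\phi^\P$ satisfies the \emph{lifting property}: fix $u\in\ol\U_\Sigma(s,t)$ with $s,t\in S_0$ and its continuous extension $\hat u$. For each $e\colon T\FSigma\epito A$ in $\P$, the quotient $e$ is a reduced $\U_\Sigma$-quotient of $T\FSigma$, so by Remark~\ref{rem:uextendreduced} the composite $\widehat{e}\colon\hatt\FSigma\epito A$ is a reduced $\U_\Sigma$-quotient of $\hatt\FSigma$, hence $\hat u$ lifts along $\widehat{e}$ to some $u_A\colon A_s\to A_t$. By an analogue of Lemma~\ref{lem:homo-preserves-unary} the maps $u_A\o e^*_\P\colon P^\P_s\to A_t$ form a compatible family over the diagram defining $P^\P$ (restricted to sorts $s,t\in S_0$), so they induce $u_{P^\P}\colon P^\P_s\to P^\P_t$ with $\phi^\P\o\hat u = u_{P^\P}\o\phi^\P$, checked by postcomposing with the jointly monic projections $e^*_\P$. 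Thus $\phi^\P$ is a reduced $\U_\Sigma$-quotient of $\hatt\FSigma$.

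\medskip
\noindent\textbf{Main obstacle.} The subtle point is the reducedness arguments in both parts: ``reduced'' is a non-local condition involving \emph{all} quotients, so one cannot simply check it sortwise or on finite pieces directly. The trick, as in Lemma~\ref{lem:reduction_props}.3 and Remark~\ref{rem:reduction}, is to exploit that a reduced quotient is determined by its $S_0$-components (the non-$S_0$ components being subsingletons) and that cofiltered limits in $\hatD^S$ are computed sortwise, so reducedness of the limit reduces to a statement about the $S_0$-indexed subdiagram together with the universal property. I expect this bookkeeping — and the parallel use of Remark~\ref{rem:uextendreduced} to transport $\U_\Sigma$-quotient status between $T\FSigma$ and $\hatt\FSigma$ at finite stages — to be the only genuinely nontrivial step; everything else is a routine adaptation of the proofs already given for the case $S_0 = S$.
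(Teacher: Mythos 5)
Your route is the paper's: downward closure plus directedness-via-subdirect-products for part~1, and for part~2 the compatible family $u_A\o e^*_\P$ inducing $u_P$, checked against the dense map $\iota_\FSigma$ and the jointly monic projections --- that last paragraph is essentially verbatim the paper's argument. There is, however, one step in part~1 that is wrong as stated. You justify the lifting property of $e_0'\o\phi=p\o(e_1'\o\phi)$ by saying it ``postcomposes the lifting $u_{A_1}$ along $e_1'\o\phi$''. That inference is invalid: to get a lifting of $\hat u$ along $p\o e_1'\o\phi$ from one along $e_1'\o\phi$ you would need $p\o u_{A_1}$ to factor through $p$, i.e.\ $p(a)=p(a')$ to imply $p(u_{A_1}(a))=p(u_{A_1}(a'))$, which does not follow; quotients of $\U$-quotients are not $\U$-quotients in general (this failure is exactly why unary presentations are only required to work for $\F$-refinable quotients). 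The repair is that the whole detour is unnecessary: downward closure is taken inside the poset of reduced $\U_\Sigma$-quotients of $T\FSigma$, so $e_0$ is \emph{assumed} to have the lifting property, and Remark \ref{rem:uextendreduced} transfers it to $e_0'\o\phi$ since the codomain is finite. This is why the paper dismisses downward closure with ``clearly''.

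In part~2 you also verify that $\phi^\P$ is reduced, a condition the paper's proof silently skips. Your argument there is muddled --- $q\o e^*_\P$ does not compose, and Lemma \ref{lem:reduction_props}.3 is about $\MT$-algebras, not about this limit --- but your parenthetical contains the correct and much shorter argument: each codomain $A$ in the diagram is reduced, hence has subsingleton components outside $S_0$ (Remark \ref{rem:reduction}); limits in $\hatD^S$ are computed sortwise on underlying sets, so $(P^\P)_s$ is a subsingleton for $s\notin S_0$; therefore any quotient of $P^\P$ that is an isomorphism on the $S_0$-sorts is an isomorphism everywhere. State it that way and drop the appeal to Lemma \ref{lem:reduction_props}.3.
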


\begin{proof}
\begin{enumerate}
\item Let $\phi\colon \hatt\FSigma\epito P$ be a reduced $\U_\Sigma$-quotient. Clearly $\P^\phi$ is downwards closed. To show that $\P^\phi$ is directed,  let
\[ e_i = (T\FSigma \xra{\iota_\FSigma} V\hatt\FSigma \xra{V\phi} P \xra{Ve_i'} A_i), \quad i=0,1,\]
be two elements of $\P^\phi$. We need to find an element $e$ with $e_0,e_1\leq e$. Form the subdirect product $e'\colon P\epito A$ of $e_0'$ and $e_1'$ in $\hatD^S$.  Then $e'\o \phi$ is a reduced $\U_\Sigma$-quotient; the argument is completely analogous to the proof of Lemma \ref{lem:subdirectproduct}.1. Therefore 
\[ e := (T\FSigma \xra{\iota_\FSigma} V\hatt\FSigma \xra{V\phi} P \xra{Ve'} A), \quad i=0,1,\]
lies in $\P^\phi$, and moreover $e_i\leq e$ because $e_i'\leq e'$.
\item Let $\P$ be a local pseudovariety of reduced $\U_\Sigma$-quotients of $T\FSigma$, and let $u\in \ol\U_\Sigma(s,t)$ with $s,t\in S_0$. We need to show that $\hat u$ has a lifting along $\phi^\P$. To this end, observe that the morphisms $P_s\xra{e_\P^*} A_s \xra{u_A} A_t$ (where $e\colon T\FSigma\epito A$ ranges over $\P$ and $u_A$ is the lifting of $u$ along $e$) form a compatible family over the diagram defining the limit $P_t$. Therefore there exists a unique $u_P\colon P_s\to P_t$ with $e_\P^*\o u_P = u_A\o e_\P^*$ for all $e\colon T\FSigma\epito A$ in $\P$. This implies $u_P\o \phi^\P = \phi^\P \o \hat u$, as this equation holds when precomposed with the dense map $\iota_\FSigma$ and postcomposed with the jointly monomorphic limit projections $e_\P^*$.\qedhere
\end{enumerate}
\end{proof}

\begin{lemma}\label{lem:localpseudovar_localprofiniteth_1_reduced}
For any local pseudovariety $\P$ of reduced $\U_\Sigma$-quotients of $T\FSigma$, we have 
$\P = 
\P^\phi$ where $\phi := \phi^\P$.
\end{lemma}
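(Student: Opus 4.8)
The plan is to mirror the proof of Lemma~\ref{lem:localpseudovar_localprofiniteth_1} almost verbatim, substituting ``reduced $\U_\Sigma$-quotient'' for ``$\Sigma$-generated $\F$-algebra'' and using the reduced toolkit built up in this section. Set $\phi := \phi^\P$ and recall from Construction~\ref{rem:usigma_to_localpseudovar} that $\P^\phi$ consists of all quotients $e = (T\FSigma \xra{\iota_\FSigma} V\hatt\FSigma \xra{V\phi} P^\P \xra{Ve'} A)$ such that $e'\colon P^\P\epito A$ is a finite quotient in $\hatD^S$ with $e'\o\phi$ a reduced $\U_\Sigma$-quotient of $\hatt\FSigma$. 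We must show this class coincides with $\P$.

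\medskip
\textbf{The inclusion $\P\seq\P^\phi$.} Let $e\colon T\FSigma\epito A$ be in $\P$. By construction, the limit projection $e_\P^*\colon P^\P\epito A$ satisfies $\hat e = e_\P^*\o\phi^\P$, and $e = V\hat e\o\iota_\FSigma$ since $e$ is $\F$-refinable (Lemma~\ref{lem:extensiblevsrefinable}). Hence $e = V e_\P^*\o V\phi\o\iota_\FSigma$, so $e$ has exactly the shape required for membership in $\P^\phi$ \emph{provided} $e_\P^*\o\phi^\P = \hat e$ is a reduced $\U_\Sigma$-quotient of $\hatt\FSigma$. But this is precisely the continuous extension of the reduced $\U_\Sigma$-quotient $e\in\P$ of $T\FSigma$, and by Remark~\ref{rem:uextendreduced} (together with the fact that $A$ is finite and $e$ is reduced, $\F$-refinable, with the liftings of Definition~\ref{def:reducedquot}.2) the morphism $\hat e$ is indeed a reduced $\U_\Sigma$-quotient of $\hatt\FSigma$. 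Therefore $e\in\P^\phi$.

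\medskip
\textbf{The inclusion $\P^\phi\seq\P$.} Let $e\colon T\FSigma\epito A$ be in $\P^\phi$, so $e = Ve'\o V\phi^\P\o\iota_\FSigma$ for some finite quotient $e'\colon P^\P\epito A$ with $e'\o\phi^\P$ a reduced $\U_\Sigma$-quotient. Since $A$ is finite, it is finitely copresentable in $\hatD^S$ (Remark~\ref{rem:hatdlfcp}), so $e'$ factors through the cofiltered limit cone $(h_\P^*\colon P^\P\epito B)_{h\in\P}$ defining $P^\P$: there is an $h\colon T\FSigma\epito B$ in $\P$ and a morphism $s\colon B\epito A$ in $\hatD^S$ with $e' = s\o h_\P^*$. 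Postcomposing the defining triangle $h_\P^*\o\phi^\P = \hat h$ with $s$ and then precomposing with $V(\dash)\o\iota_\FSigma$ yields $e = Vs\o \hat h\o$ restricted along $\iota_\FSigma$, i.e.\ $e = Vs\o h$ after identifying $V\hat h\o\iota_\FSigma = h$ (Lemma~\ref{lem:extensiblevsrefinable}). Thus $e\leq h$ with $h\in\P$. Since $\P$ is downwards closed and $e$ is itself a reduced $\U_\Sigma$-quotient of $T\FSigma$ (Construction~\ref{rem:usigma_to_localpseudovar}), we conclude $e\in\P$. I would illustrate both directions with the same commutative-diagram style used in the proof of Lemma~\ref{lem:localpseudovar_localprofiniteth_1}.

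\medskip
\textbf{Main obstacle.} The only genuinely nontrivial point — and the place where the reduced setting departs from the original argument — is verifying that the factorization $e' = s\o h_\P^*$ lands in $\P$ rather than merely producing \emph{some} quotient above $e$: one needs the target $h$ of the factorization to be an element of $\P$, which is automatic since $h$ ranges over the diagram $\P$ itself, but one must be careful that ``downward closure of $\P$'' is applied to $e$ \emph{qua} reduced $\U_\Sigma$-quotient, so the fact that $e\in\P^\phi$ already guarantees $e$ is a reduced $\U_\Sigma$-quotient (via Remark~\ref{rem:uextendreduced} and Lemma~\ref{lem:extensiblevsrefinable}) is what makes the membership $e\in\P$ meaningful. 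Everything else is a routine transport of the unreduced proof along the dictionary ``$\F$-quotient $\leftrightarrow$ reduced $\U_\Sigma$-quotient'' established by Lemmas~\ref{lem:reduction_props} and~\ref{lem:usigma_vs_reduced}.
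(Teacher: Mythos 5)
Your proposal is correct and follows essentially the same route as the paper's own proof: the forward inclusion via the identity $\hat e = e^*_\P\o\phi^\P$ and the definition of $\P^\phi$, the backward inclusion via finite copresentability of $A$ to factor $e'$ through the limit cone and then downward closure of $\P$. Your explicit check (via Remark~\ref{rem:uextendreduced}) that $e^*_\P\o\phi^\P$ is a reduced $\U_\Sigma$-quotient, which membership in $\P^\phi$ requires, is a point the paper's proof leaves implicit, so no gap there.
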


\begin{proof}
$\P \seq 
\P^{\phi}$: Let $(e\colon T\FSigma\epito A)\in \P$. Then for the 
corresponding limit projection $e^*_\P\colon P^\P \epito A$ we have
\[ e = (\xymatrix{T\FSigma \ar[r]^{\iota_\FSigma} & V\hatt\FSigma 
\ar@{->>}[r]^{V\phi} & VP^\P \ar@{->>}[r]^{Ve^*_\P} & A}) 
\]
since $\hat e = e^*_\P \cdot \phi^\P$ by the definition of $\phi=\phi^\P$
and since $V\hat e \cdot \iota_\FSigma = e$ by the definition of $\hat e$. Therefore $e\in \P^{\phi}$ by the
definition of $\P^{\phi}$.

$\P^{\phi}\seq \P$: Let $(e\colon \MT\FSigma\epito A)\in 
\P^{\phi}$. Thus there exists a quotient 
$e'\colon P^\P \epito A$ such that
\[ e = (\xymatrix{T\FSigma \ar[r]^{\iota_\FSigma} & V\hatt\FSigma 
\ar@{->>}[r]^{V\phi} & VP^\P \ar@{->>}[r]^{Ve'} & A}). 
\]
Since the finite object $A$ is finitely copresentable in $\hatD^S$, the morphism $e'$
factors through the the limit cone defining $P^\P$; that is, there 
exists an $h\colon 
\MT\FSigma\epito A'$ in $\P$ and a $\hatT$-homomorphism $s\colon A'\epito A$ with 
$e' = s\o h^*_\P$. Then the commutative diagram below shows that $e\leq h$, and thus $e\in \P$ because $h\in \P$ and $\P$ is downwards closed.
\[
\xymatrix{
T\FSigma \ar[r]^-{\iota_\FSigma} \ar@{->>}@/_5em/[rrr]_h \ar@/^5em/[rrrr]^e & 
V\hatt\FSigma \ar@{->>}@/_2em/[rr]_{V\hat h} \ar@{->>}[r]^-{V\phi^\P} &  
VP^\P \ar@/^2em/[rr]^{Ve'} \ar@{->>}[r]^-{Vh_\P^*} & A' \ar@{->>}[r]^-s & 
A
}
\vspace*{-15pt}
\]
\end{proof}

\begin{lemma}\label{lem:subdirectproducthatdreduced}
Let $e_0$ and $e_1$ be two reduced $\U_\Sigma$-quotients of $\hatt\FSigma$.
\begin{enumerate}
\item The subdirect product of $e_0$ and $e_1$ in $\hatD^S$ is a reduced $\U_\Sigma$-quotient. 
\item The pushout of $e_0$ and $e_1$ in $\hatD^S$ is a reduced $\U_\Sigma$-quotient.
\item Given a reduced $\U_\Sigma$-quotient $\phi\colon \hatt\FSigma\epito P$ and a finite quotient $e\colon P\epito A$, there exists a finite quotient $h\colon P\epito B$ such that $h\o \phi$ is a reduced $\U_\Sigma$-quotient of $\hatt D$ and $e\leq h$.
\[
\xymatrix{
& \hatt D \ar@{->>}[d]^\phi\\
& P \ar@{->>}[d]^e \ar@{->>}[dl]_h\\
B \ar@{-->}[r]& A
}
\]
\item Given a reduced $\U_\Sigma$-quotient $\phi\colon \hatt\FSigma\epito P$ and a finite quotient $e\colon P\epito A$, let $\S$ be the full subcategory of $(P\downarrow \hatD_f^S)$ on all finite quotients $e\colon P\epito A$ such that $e\o \phi$ is a reduced $\U_\Sigma$-quotient. Then $P$ is the cofiltered limit of the diagram
\[ \pi \colon \S\to \hatD^S,\quad (e\colon P\epito A)\mapsto A, \]
with limit projections $e$.
\end{enumerate}
\end{lemma}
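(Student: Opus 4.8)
The plan is to follow the proof of Lemma~\ref{lem:subdirectproducthatd} line by line, replacing ``$\U_\Sigma$-quotient'' by ``reduced $\U_\Sigma$-quotient'' and inserting the reduction operator $(\dash)_\downarrow$ of Notation~\ref{not:downarrow} wherever a finite algebra extracted from the limit cone defining $\hatt\FSigma$ must be made reduced. Throughout, the bookkeeping about reducedness is handled as in Lemma~\ref{lem:subdirectproductreduced}, using Remark~\ref{rem:reduction}: a quotient is reduced iff its non-$S_0$ components are subsingletons. For parts~1 and~2 I would run the diagonal-fill-in argument of Lemma~\ref{lem:subdirectproduct} (respectively its pushout variant), but only for the continuous extensions $\hat u$ of operations $u\in\ol\U_\Sigma(s,t)$ with $s,t\in S_0$. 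For the subdirect product $e\colon\hatt\FSigma\epito B$ of $e_0,e_1$ with codomains $P_0,P_1$, this produces the required lifting of each such $\hat u$ along $e$, and $B$ is reduced since it embeds into $P_0\times P_1$, whose non-$S_0$ components are subsingletons; the pushout $p$ of $e_0,e_1$ is reduced because its codomain is a quotient of $P_0$, and the lifting is obtained from the pushout property exactly as in Lemma~\ref{lem:subdirectproduct}.2. No $\F$-refinability condition enters here, since reduced $\U_\Sigma$-quotients of $\hatt\FSigma$ (unlike those of $T\FSigma$) carry none.

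For part~3 I would argue as in Lemma~\ref{lem:subdirectproducthatd}.3. As $A$ is finitely copresentable in $\hatD^S$, the morphism $e\o\phi\colon\hatt\FSigma\epito A$ factors through the cofiltered limit defining $\hatt\FSigma$, so there is a surjective $\MT$-homomorphism $g\colon\MT\FSigma\epito C$ with $C\in\F$ and a $j\colon C\epito A$ with $e\o\phi = j\o g^+$. Since $C$ need not be reduced, I replace it by $C_\downarrow$: the quotient $g_\downarrow\colon\MT\FSigma\epito C_\downarrow$ is a reduced $\U_\Sigma$-quotient of $T\FSigma$ (it is reduced, $\F$-refinable, and inherits the liftings from the $\F$-quotient $g$), whence its continuous extension $(g^+)_\downarrow\colon\hatt\FSigma\epito C_\downarrow$ is a reduced $\U_\Sigma$-quotient of $\hatt\FSigma$ by Remark~\ref{rem:uextendreduced}. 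Forming the pushout $q = h\o\phi = k\o(g^+)_\downarrow$ of $\phi$ and $(g^+)_\downarrow$ in the cocomplete category $\hatD^S$ (see~\ref{app:lfcp_cat}), part~2 makes $q$ a reduced $\U_\Sigma$-quotient of $\hatt\FSigma$, the maps $h,k$ are surjective (pushouts preserve epimorphisms, which are surjective by Lemma~\ref{lem:episurj}), and $B$ is finite as a quotient of $C_\downarrow$. Finally, because $\phi$, hence $e\o\phi$, is reduced, $e\o\phi$ is trivial on non-$S_0$ sorts, so $j$ factors through the reduction $C\epito C_\downarrow$ as $j = j'\o r$; then $e\o\phi = j'\o(g^+)_\downarrow$, so $e$ and $j'$ form a cocone under $\phi$ and $(g^+)_\downarrow$, and the pushout property yields $l\colon B\to A$ with $l\o h = e$, i.e.\ $e\leq h$.

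For part~4 I would check that $\S$ is a cofiltered poset and invoke finality. It is nonempty, containing the image of $P\to 1$, and directed: given $e_0,e_1\in\S$, the subdirect product $e$ of $e_0\o\phi$ and $e_1\o\phi$ is a reduced $\U_\Sigma$-quotient of $\hatt\FSigma$ by part~1, and since $e_i\o\phi\leq\phi$ while $e$ is their least common refinement, $e\leq\phi$, say $e = q\o\phi$; then $q\in\S$ and $e_i\leq q$. As $\hatD^S$ is locally finitely copresentable, $P$ is the limit of its canonical diagram $(P\downarrow\hatD_f^S)\to\hatD^S$ (see~\ref{app:can_diagram}), so it remains to show that $\S\hookrightarrow(P\downarrow\hatD_f^S)$ is final; but every $g\colon P\to A$ with finite codomain factors as a surjection followed by a monomorphism, and part~3 enlarges the surjective factor to a member of $\S$, through which $g$ then factors. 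I expect the main obstacle to be the step in part~3 that passes from $C$ to the reduced quotient $C_\downarrow$ while preserving the factorization $e\o\phi = j'\o(g^+)_\downarrow$ needed for $e\leq h$: this works precisely because reducedness of $\phi$ forces every morphism out of $P$ to be trivial on the non-$S_0$ sorts, so that $j$ automatically descends along $C\epito C_\downarrow$.
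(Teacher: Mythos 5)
Your proposal is correct and follows essentially the same route as the paper, whose proof consists only of the two pointers "parts 1--2 as in Lemma~\ref{lem:subdirectproductreduced}, parts 3--4 as in Lemma~\ref{lem:subdirectproducthatd}." You have in addition correctly identified and resolved the one step where those templates do not transfer verbatim, namely that in part~3 the finite algebra $C$ extracted from the limit cone need not be reduced, so one must pass to $C_\downarrow$ (checking that $g_\downarrow$ is still a reduced $\U_\Sigma$-quotient and that $j$ descends along $C\epito C_\downarrow$ because reducedness of $\phi$ trivializes the non-$S_0$ sorts of $A$) before forming the pushout.
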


\begin{proof}
The first two statements follow as in  proof of Lemma \ref{lem:subdirectproductreduced}. The last two  statements follow as in the proof of Lemma \ref{lem:subdirectproducthatd}.
\end{proof}

\begin{lemma}\label{lem:localpseudovar_localprofiniteth_2_reduced}
 For each reduced $\U_\Sigma$-quotient $\phi\colon 
 \hatt\FSigma\epito 
P$, we have $\phi = \phi^\P$ where $\P := \P^\phi$.
\end{lemma}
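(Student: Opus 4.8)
The plan is to run the reduced analogue of the proof of Lemma~\ref{lem:localpseudovar_localprofiniteth_2}. Let $\S$ be the full subcategory of $(P\downarrow\hatD_f^S)$ on all finite quotients $e'\colon P\epito A$ for which $e'\o\phi$ is a reduced $\U_\Sigma$-quotient of $\hatt\FSigma$. By Lemma~\ref{lem:subdirectproducthatdreduced}.4, $P$ is the cofiltered limit of the diagram $\pi\colon\S\to\hatD^S$, $(e'\colon P\epito A)\mapsto A$, with limit projections $e'$.

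First I would define a functor $F\colon\S\to\P$ by
\[ F(e') = \bigl(\,T\FSigma\xra{\iota_\FSigma}V\hatt\FSigma\xra{V\phi}VP\xra{Ve'}A\,\bigr), \]
acting as the identity on morphisms. By Construction~\ref{rem:usigma_to_localpseudovar}.1, $F(e')$ lies in $\P=\P^\phi$, so $F$ is well defined. I claim $F$ is an isomorphism of categories. It is injective on objects: if $F(e')=F(e'')$, then $V(e'\o\phi)\o\iota_\FSigma=V(e''\o\phi)\o\iota_\FSigma$, hence $e'\o\phi=e''\o\phi$ because $\iota_\FSigma$ is dense and the codomains are Hausdorff (Remark~\ref{rem:basic}.\ref{rem:iota}), and therefore $e'=e''$ since $\phi$ is an epimorphism. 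It is surjective on objects by the definition of $\P^\phi$, and bijective on morphisms since $\S$ and $\P$ are posets of quotients ordered by factorization and $F$ preserves and reflects this order. Moreover $F$ commutes with the projection functors: $\pi'\o F=\pi$, where $\pi'\colon\P\to\hatD^S$ sends $(e\colon\MT\FSigma\epito A)\mapsto A$.

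Next I would invoke uniqueness of limits. The limit of $\pi$ is $P$ with projections $e'$ by Lemma~\ref{lem:subdirectproducthatdreduced}.4, while the limit of $\pi'$ is $P^\P$ with projections $e_\P^*$ by Construction~\ref{rem:usigma_to_localpseudovar}.2. Since $F$ is an isomorphism (in particular, a final functor) and $\pi'\o F=\pi$, the cone $(e')_{e'\in\S}$ also exhibits $P$ as a limit of $\pi'$; hence there is a unique isomorphism $j\colon P\xra{\cong}P^\P$ with $F(e')^*_\P\o j=e'$ for all $e'\in\S$. Finally, $\phi^\P=j\o\phi$ follows because both morphisms agree after postcomposition with every limit projection $F(e')^*_\P$: on one side $F(e')^*_\P\o j\o\phi=e'\o\phi$, and on the other side $F(e')^*_\P\o\phi^\P$ equals the continuous extension of $F(e')$ by the defining triangle of $\phi^\P$ in Construction~\ref{rem:usigma_to_localpseudovar}.2, and this continuous extension is $e'\o\phi$ because $V(e'\o\phi)\o\iota_\FSigma=F(e')$ (Lemma~\ref{lem:extensiblevsrefinable}). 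As the projections $F(e')^*_\P$ are jointly monomorphic, we conclude $\phi^\P=j\o\phi$, i.e.\ $\phi=\phi^\P$ as quotients of $\hatt\FSigma$.

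The only genuine work — and the main obstacle — is the input Lemma~\ref{lem:subdirectproducthatdreduced}.4 that $P$ really is the cofiltered limit of $\pi\colon\S\to\hatD^S$; this rests on the subdirect-product and pushout closure of reduced $\U_\Sigma$-quotients and is the reduced counterpart of Lemma~\ref{lem:subdirectproducthatd}.4. Granting that, together with the density of $\iota_\FSigma$ used to pin down $F$ on objects, the remainder is a routine reprise of the proof of Lemma~\ref{lem:localpseudovar_localprofiniteth_2}.
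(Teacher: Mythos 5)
Your proof is correct and follows essentially the same route as the paper's: the paper's (much terser) argument likewise identifies $\S$ with $\P^\phi$, uses Lemma~\ref{lem:subdirectproducthatdreduced}.4 to exhibit $P$ as the limit of the same diagram, and concludes by uniqueness of limits ``as in the proof of Lemma~\ref{lem:localpseudovar_localprofiniteth_2}''. You have merely spelled out the details that the paper delegates to that earlier proof.
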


\begin{proof}
Consider the diagram $\pi$ as in Lemma \ref{lem:subdirectproducthatdreduced}.3. Then $\S\cong \P=\P^\phi$, and thus by the definition of $P^\P$ (see Construction \ref{rem:usigma_to_localpseudovar}) the object $P^\P$ is the limit of the diagram $\pi$. On the other hand, by  Lemma \ref{lem:subdirectproducthatdreduced}.4 also the object $P$ is the limit of the diagram $\pi$ with limit projections $e$. The uniqueness of limits gives an isomorphism $j\colon P\xra{\cong} P^\P$, and as in the proof of Lemma \ref{lem:localpseudovar_localprofiniteth_2} this implies $\phi\cong\phi^\P$.
\end{proof}

\begin{lemma}\label{lem:localreitermanusigma}
The maps
\[ \phi\mapsto \P^\phi\quad\text{and}\quad \P\mapsto \phi^\P \]
give an isomorphism between the lattice of reduced $\U_\Sigma$-quotients of $\hatt\FSigma$ and the lattice of pseudovarieties of reduced $\U_\Sigma$-quotients.
\end{lemma}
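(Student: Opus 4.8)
The plan is to mirror exactly the structure of the proof of the Local Reiterman Theorem (Theorem \ref{thm:localreiterman}), since Lemma \ref{lem:localreitermanusigma} is its reduced analogue with $\Sigma$-generated profinite $\hatT$-algebras replaced by reduced $\U_\Sigma$-quotients of $\hatt\FSigma$ and local pseudovarieties of $\Sigma$-generated $\F$-algebras replaced by local pseudovarieties of reduced $\U_\Sigma$-quotients of $T\FSigma$. Most of the work has already been done: Lemmas \ref{lem:localpseudovar_localprofiniteth_1_reduced} and \ref{lem:localpseudovar_localprofiniteth_2_reduced} show that the two maps $\phi\mapsto \P^\phi$ and $\P\mapsto \phi^\P$ are mutually inverse, so what remains is only to check that both maps are order-preserving.

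First I would record that $\phi\mapsto \P^\phi$ is monotone: given reduced $\U_\Sigma$-quotients $\phi\colon\hatt\FSigma\epito P$ and $\phi'\colon\hatt\FSigma\epito P'$ with $\phi\leq\phi'$, say $\phi = r\o\phi'$ for some $r\colon P'\epito P$, every finite quotient $e'\colon P\epito A$ with $e'\o\phi$ a reduced $\U_\Sigma$-quotient yields the finite quotient $e'\o r\colon P'\epito A$, and $(e'\o r)\o\phi' = e'\o\phi$ is again a reduced $\U_\Sigma$-quotient; hence $V(e')\o V\phi\o\iota_\FSigma = V(e'\o r)\o V\phi'\o\iota_\FSigma$ shows $\P^\phi\seq\P^{\phi'}$. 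Second, for the monotonicity of $\P\mapsto\phi^\P$: given local pseudovarieties $\P\seq\P'$ of reduced $\U_\Sigma$-quotients, the limit projections $e^*_{\P'}\colon P^{\P'}\epito A$ for $e$ ranging over $\P$ form a compatible cone over the diagram defining $P^\P$ (using that each connecting morphism $h\colon e\to e'$ in $\P\seq\P'$ satisfies $h\o e^*_{\P'} = (e')^*_{\P'}$), so there is a unique $q\colon P^{\P'}\to P^\P$ with $e^*_{\P'} = e^*_\P\o q$ for all $e\in\P$. Then $q\o\phi^{\P'} = \phi^\P$, since both sides yield $e$ after applying $V$ and precomposing with the dense map $\iota_\FSigma$ and postcomposing with the jointly monomorphic projections $e^*_\P$; hence $\phi^\P\leq\phi^{\P'}$. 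This is essentially a verbatim transcription of the second half of the proof of Theorem \ref{thm:localreiterman}, the only difference being the replacement of $e^+$ by $\hat e$ and the bookkeeping of the reducedness/lifting conditions, which are preserved by all the constructions involved.

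I do not expect any genuine obstacle here, since the hard content — that the constructions are well-defined and mutually inverse — is already established in the cited lemmas, and the subdirect-product and pushout stability needed to make the diagrams in Construction \ref{rem:usigma_to_localpseudovar} cofiltered is provided by Lemma \ref{lem:subdirectproducthatdreduced}. If anything requires care it is making sure that the cone compatibility argument for $q$ only ever uses connecting morphisms that live inside $\P$ (not $\P'$), exactly as in the original proof; but this is routine.

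\begin{proof}
By Lemma \ref{lem:localpseudovar_localprofiniteth_1_reduced} and Lemma \ref{lem:localpseudovar_localprofiniteth_2_reduced}, the maps $\phi\mapsto \P^\phi$ and $\P\mapsto \phi^\P$ are mutually inverse. It remains to prove that both are order-preserving.

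Let $\phi\colon\hatt\FSigma\epito P$ and $\phi'\colon\hatt\FSigma\epito P'$ be reduced $\U_\Sigma$-quotients with $\phi\leq\phi'$, i.e.\ $\phi = r\o\phi'$ for some $r\colon P'\epito P$ in $\hatD^S$. Let $(e\colon T\FSigma\epito A)\in \P^\phi$; by Construction \ref{rem:usigma_to_localpseudovar}.1 there is a finite quotient $e'\colon P\epito A$ with $e'\o\phi$ a reduced $\U_\Sigma$-quotient of $\hatt\FSigma$ and $e = Ve'\o V\phi\o\iota_\FSigma$. Then $e'\o r\colon P'\epito A$ is a finite quotient with $(e'\o r)\o\phi' = e'\o\phi$, which is a reduced $\U_\Sigma$-quotient, and $e = V(e'\o r)\o V\phi'\o\iota_\FSigma$. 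Hence $e\in\P^{\phi'}$, so $\P^\phi\seq\P^{\phi'}$.

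Conversely, let $\P\seq\P'$ be local pseudovarieties of reduced $\U_\Sigma$-quotients of $T\FSigma$, with associated reduced $\U_\Sigma$-quotients $\phi^\P\colon\hatt\FSigma\epito P^\P$ and $\phi^{\P'}\colon\hatt\FSigma\epito P^{\P'}$. The limit projections $e^*_{\P'}\colon P^{\P'}\epito A$, where $e$ ranges over $\P\seq\P'$, form a compatible family over the diagram defining $P^\P$: for every connecting morphism $h\colon e\to e'$ in $\P$ (cf.\ Construction \ref{rem:usigma_to_localpseudovar}.2) the equation $h\o e^*_{\P'} = (e')^*_{\P'}$ holds between the limit projections of $P^{\P'}$. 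Hence there is a unique morphism $q\colon P^{\P'}\to P^\P$ with $e^*_{\P'} = e^*_\P\o q$ for all $e\in\P$. It follows that $q\o\phi^{\P'} = \phi^\P$, because both sides yield $e$ after applying $V$ and precomposing with the dense map $\iota_\FSigma$ and postcomposing with the limit projection $e^*_\P$, and the $e^*_\P$ ($e\in\P$) are jointly monomorphic while $\iota_\FSigma$ is dense. Therefore $\phi^\P\leq\phi^{\P'}$.
\[
\xymatrix{
\hatt\FSigma \ar@{->>}[d]_{\phi^{\P'}}
\ar@{->>}[dr]^{\phi^\P}
& \\
{P^{\P'}} \ar@{->>}[r]^{q} \ar@{->>}[d]_{e^*_{\P'}} & P^\P
\ar@{->>}[dl]^{e^*_\P} \\
A
}
\]
Thus both maps are order-preserving, which proves the claim.
\end{proof}
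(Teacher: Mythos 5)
Your proof is correct and takes the same route as the paper: the paper's own proof simply cites Lemmas \ref{lem:localpseudovar_localprofiniteth_1_reduced} and \ref{lem:localpseudovar_localprofiniteth_2_reduced} for mutual inverseness and states that order-preservation is shown ``in analogy to the proof of Theorem \ref{thm:localreiterman}.'' You have merely written out explicitly the analogy the paper leaves implicit, and the details (the composite $e'\o r$ for monotonicity of $\phi\mapsto\P^\phi$, and the mediating morphism $q$ for monotonicity of $\P\mapsto\phi^\P$) all check out.
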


\begin{proof}
The two maps give mutually inverse bijections by Lemma \ref{lem:localpseudovar_localprofiniteth_1_reduced} and \ref{lem:localpseudovar_localprofiniteth_2_reduced}. That they are order-preserving is shown in analogy to the proof of Theorem \ref{thm:localreiterman}).
\end{proof}
Let us rephrase the above lemma in terms of reduced algebras:

\begin{definition}
By a \emph{local pseudovariety of $\Sigma$-generated reduced $\F$-algebras} is meant a filter in the poset of $\Sigma$-generated reduced $\F$-algebras.
\end{definition}
From the above lemma and Lemma \ref{lem:usigma_vs_reduced} we get

\begin{theorem}[Local Reiterman Theorem for Reduced $\F$-algebras]\label{thm:localreiterman_reduced}
The lattice of reduced $\U_\Sigma$-quotients of $\hatt\FSigma$ is isomorphic to the lattice of pseudovarieties of $\Sigma$-generated reduced $\F$-algebras.
\end{theorem}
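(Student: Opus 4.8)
The plan is to derive the statement purely formally from the two preceding lemmas, exactly in the spirit of the proof of the unreduced Local Reiterman Theorem \ref{thm:localreiterman}. By Lemma \ref{lem:localreitermanusigma}, the lattice of reduced $\U_\Sigma$-quotients of $\hatt\FSigma$ is isomorphic, via $\phi\mapsto\P^\phi$ and $\P\mapsto\phi^\P$, to the lattice of local pseudovarieties of reduced $\U_\Sigma$-quotients of $T\FSigma$. It therefore suffices to identify this latter lattice with the lattice of local pseudovarieties of $\Sigma$-generated reduced $\F$-algebras, and then compose the two isomorphisms.

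For the remaining identification I would invoke Lemma \ref{lem:usigma_vs_reduced}, which supplies mutually inverse monotone maps $h\mapsto h_\downarrow$ and $e\mapsto e_R$ between the poset of $\Sigma$-generated reduced $\F$-algebras and the poset of reduced $\U_\Sigma$-quotients of $T\FSigma$, both equipped with the quotient order of Definition \ref{def:localpseudovariety}. Since an order isomorphism of posets carries its down-directed down-sets to down-directed down-sets in both directions (direct and inverse images of ideals under an order isomorphism are again ideals, and symmetrically for filters), it restricts to an isomorphism of the associated lattices of local pseudovarieties. Concretely, a local pseudovariety $\P$ of reduced $\U_\Sigma$-quotients is sent to $\{\,e_R : e\in\P\,\}$, and conversely a local pseudovariety $\V$ of $\Sigma$-generated reduced $\F$-algebras to $\{\,h_\downarrow : h\in\V\,\}$; Lemma \ref{lem:usigma_vs_reduced}(a),(b) guarantees that these assignments are mutually inverse, and monotonicity of both maps guarantees that they preserve inclusions. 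Composing with the isomorphism of Lemma \ref{lem:localreitermanusigma} then yields the asserted lattice isomorphism.

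I do not expect a genuine obstacle in this final step: the substantive work has already been discharged in establishing the reduction construction $e\mapsto e_R$ together with its properties (Lemma \ref{lem:reduction_props} and \ref{lem:usigma_vs_reduced}) and the reduced local correspondence (Lemma \ref{lem:localreitermanusigma}). The only point worth a moment's care is notational housekeeping: one should check that the poset structures appearing in Lemma \ref{lem:usigma_vs_reduced} are precisely the quotient orders underlying the notion of a local pseudovariety in Definitions \ref{def:localpseudovariety} and \ref{def:localpseudovar_reduced}, so that the respective closure conditions correspond under the bijection — and this is immediate from the definitions.
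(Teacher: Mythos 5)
Your proposal is correct and follows the paper's own (very terse) argument exactly: the paper likewise obtains the theorem by composing the lattice isomorphism of Lemma \ref{lem:localreitermanusigma} with the order isomorphism of Lemma \ref{lem:usigma_vs_reduced}, which transports local pseudovarieties across the two posets. The only point worth flagging is that the paper's definition of a local pseudovariety of $\Sigma$-generated reduced $\F$-algebras literally says ``filter'' where, for your transfer argument (and for consistency with Definitions \ref{def:localpseudovariety} and \ref{def:localpseudovar_reduced}) it should read ``ideal''; your reading is clearly the intended one.
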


\subsection{Pseudovarieties of reduced $\MT$-algebras}
In this subsection, we consider pseudovarieties of reduced $\F$-algebras and show that they correspond to reduced pro-$\F$ theories. This is the reduced version of the Reiterman Theorem \ref{thm:reiterman}. We continue to work with a fixed class $\A\seq\Set_f^S$ of alphabets.

\begin{definition}\label{def:pseudovar_reduced}
We say that a $\MT$-algebra $A$ \emph{divides} a $\MT$-algebra $B$ if $A$ is a quotient of a subalgebra of $B$, i.e. there exist $\MT$-homomorphisms $e$ and $m$ as shown below:
\[
\xymatrix{
C \ar@{>->}[r]^m \ar@{->>}[d]& B\\
A&
}
\]
 A \emph{pseudovariety of reduced $\F$-algebras} is a full subcategory $\V\seq \F$ such that all elements of $\V$ are $\A$-generated and reduced, and $\V$ is closed under division of finite products. That is, whenever $A_1,\ldots A_n\in \V$ and $A$ is an $\A$-generated reduced $\F$-algebra that divides $A_1\times\ldots\times A_n$, then $A\in \V$.
\end{definition}

\begin{rem}
For $S=S_0$, every $\MT$-algebra is reduced and thus a {pseudovariety of reduced $\F$-algebras} is just a pseudovariety of $\F$-algebras, see Definition \ref{def:pseudovar}.
\end{rem}

\begin{definition}\label{def:profinitetheoryreduced}
A \emph{reduced pro-$\F$ theory} is a family \[\rho = (\,\rho_\Sigma\colon \hatt\FSigma 
\epito P_\Sigma\,)_{\Sigma\in \A}\] such that $\rho_\Sigma$ is a reduced $\U_\Sigma$-quotient of $\hatt\FSigma$, and for 
every 
$\MT$-homomorphism $g\colon \MT \FDelta\to \MT\FSigma$ with 
$\Sigma,\Delta\in\A$ there exists a  
morphism $g_P\colon P_\Delta\to P_\Sigma$ with $\rho_\Sigma \o \hat g = 
g_P\o \rho_\Delta$.
\[ 
\xymatrix{
\hatt\FDelta \ar[r]^{\hat g} \ar@{->>}[d]_{\rho_\Delta} & \hatt\FSigma \ar@{->>}[d]^{\rho_\Sigma}\\
P_\Delta \ar@{-->}[r]_{g_P} & P_\Sigma
}
\]
Here $\hat g$ is the continuous extension of $g$, see Lemma \ref{lem:gat}. Reduced pro-$\F$ theories are ordered by $\rho\leq \rho'$ iff 
$\rho_\Sigma\leq \rho_\Sigma'$ for each $\Sigma\in \A$.
\end{definition}
Using the local Reiterman theorem for reduced $\F$-algebras, pro-$\F$ theories can be equivalently described in terms of local pseudovarieties:

\begin{definition}
A \emph{reduced $\F$-theory} is a family $\T=(\,\T_\Sigma\,)_{\Sigma\in \A}$ such that (i) $\T_\Sigma$ is a local pseudovariety of reduced $\U_\Sigma$-quotients of $T\FSigma$ for every $\Sigma\in \A$, and (ii) for every $\MT$-homomorphism $g\colon \MT\FDelta\to \MT\FSigma$ with $\Sigma,\Delta\in \A$ and every $e\colon \MT\FSigma\epito A$ in $\T_\Sigma$, there exists an $\ol e\colon \MT\FDelta\epito B$ in $\T_\Delta$ and a morphism $\ol g\colon B\to A$ with $\ol g \o \ol e = e\o g$.
\[ 
\xymatrix{
\MT\FDelta \ar[r]^{g} \ar@{->>}[d]_{\ol e} & \MT\FSigma \ar@{->>}[d]^{e}\\
B \ar@{-->}[r]_{\ol g} & A
}
\]
Reduced $\F$-theories are ordered by $\T\leq \T'$ iff $\T_\Sigma\seq \T_\Sigma'$ for all $\Sigma \in \A$. 
\end{definition}

\begin{notation}
\begin{enumerate}
\item For any reduced pro-$\F$ theory $\rho$, form the family \[\T^\rho = (\T_\Sigma^\rho)_{\Sigma\in\A}\] where $\T_\Sigma^\rho$ is the local pseudovariety of $\U_\Sigma$-quotients of $\T\FSigma$ associated to $\rho_\Sigma$ (see Construction \ref{rem:usigma_to_localpseudovar}.1).
\item For any reduced $\F$-theory $\T$, form the family \[\rho^\T = (\rho_\Sigma^\T\colon \hatT\FSigma\epito P_\Sigma^\T)\] where $\rho_\Sigma^\T$ is the reduced $\U_\Sigma$-quotient of $\hatt\FSigma$ associated to $\T_\Sigma$ (see Construction \ref{rem:usigma_to_localpseudovar}.2).
\end{enumerate}
\end{notation}
In analogy to Lemma \ref{lem:profiniteconcrete} we get
\begin{lemma}\label{lem:profiniteconcretereduced}
\begin{enumerate}
\item For any reduced pro-$\F$ theory $\rho$, the family $\T^\rho$ forms a reduced $\F$-theory.
\item For any reduced $\F$-theory $\T$, the family $\rho^\T$ forms a reduced pro-$\F$ theory.
\item The maps $\rho\mapsto \T^\rho$ and $\T\mapsto \rho^\T$ give an isomorphism between the poset of reduced pro-$\F$ theories and the poset of reduced $\F$-theories.
\end{enumerate}
\end{lemma}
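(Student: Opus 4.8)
The plan is to follow the proof of Lemma~\ref{lem:profiniteconcrete} essentially verbatim, replacing the ordinary local Reiterman theorem (Theorem~\ref{thm:localreiterman}) by its reduced counterpart (Lemma~\ref{lem:localreitermanusigma}) and the closure properties of subdirect products and pushouts by their reduced versions (Lemmas~\ref{lem:subdirectproductreduced} and~\ref{lem:subdirectproducthatdreduced}); I treat the three parts in turn. \emph{Part 1.} Let $\rho$ be a reduced pro-$\F$ theory. By Lemma~\ref{lem:localreitermanusigma} each component $\T_\Sigma^\rho = \P^{\rho_\Sigma}$ is a local pseudovariety of reduced $\U_\Sigma$-quotients of $T\FSigma$, so it only remains to verify the compatibility clause of Definition of a reduced $\F$-theory. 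Fix an $\MT$-homomorphism $g\colon\MT\FDelta\to\MT\FSigma$ and an element $e\in\T_\Sigma^\rho$; by Construction~\ref{rem:usigma_to_localpseudovar}.1 we may write $e = (T\FSigma\xra{\iota_\FSigma}V\hatt\FSigma\xra{V\rho_\Sigma}P_\Sigma\xra{Vq}A)$ with $q\colon P_\Sigma\epito A$ a finite quotient such that $q\o\rho_\Sigma$ is a reduced $\U_\Sigma$-quotient of $\hatt\FSigma$. Using the structure morphism $g_P\colon P_\Delta\to P_\Sigma$ supplied by $\rho$, factor $q\o g_P\colon P_\Delta\to A$ in $\hatD^S$ as a surjection $e_0\colon P_\Delta\epito C$ followed by an injective/order-reflecting morphism. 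Applying Lemma~\ref{lem:subdirectproducthatdreduced}.3 to the reduced $\U_\Delta$-quotient $\rho_\Delta$ and the finite quotient $e_0$ yields a finite quotient $h\colon P_\Delta\epito B$ with $e_0\leq h$ and $h\o\rho_\Delta$ a reduced $\U_\Delta$-quotient of $\hatt\FDelta$; writing $q\o g_P = \ol g\o h$ for the induced $\ol g\colon B\to A$, the quotient $\ol e := (T\FDelta\xra{\iota_\FDelta}V\hatt\FDelta\xra{V\rho_\Delta}P_\Delta\xra{Vh}B)$ lies in $\T_\Delta^\rho$ by Construction~\ref{rem:usigma_to_localpseudovar}.1. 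Finally, using the left-hand square of~\eqref{eq:h_at} (Lemma~\ref{lem:gat}) and $\rho_\Sigma\o\hat g = g_P\o\rho_\Delta$ one computes $e\o g = V q\o V g_P\o V\rho_\Delta\o\iota_\FDelta = V\ol g\o Vh\o V\rho_\Delta\o\iota_\FDelta = \ol g\o\ol e$, so $\T^\rho$ is a reduced $\F$-theory.

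\emph{Part 2.} Let $\T$ be a reduced $\F$-theory. By Lemma~\ref{lem:localreitermanusigma} each $\rho_\Sigma^\T = \phi^{\T_\Sigma}$ is a reduced $\U_\Sigma$-quotient of $\hatt\FSigma$. For the compatibility clause, fix $g\colon\MT\FDelta\to\MT\FSigma$; for every $e\colon T\FSigma\epito A$ in $\T_\Sigma$ choose $\ol e\colon T\FDelta\epito A_e$ in $\T_\Delta$ and $g_e\colon A_e\to A$ with $g_e\o\ol e = e\o g$ (possible since $\T$ is a reduced $\F$-theory), taking $\ol e$ to be the canonical factorization of $e\o g$ so that the morphisms $g_e\o\ol e_\T^*\colon P_\Delta^\T\to A$ form a cone over the diagram defining $P_\Sigma^\T$ — this step is literally as in the proof of Lemma~\ref{lem:profiniteconcrete}.2. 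The induced morphism $g_P\colon P_\Delta^\T\to P_\Sigma^\T$ with $e_\T^*\o g_P = g_e\o\ol e_\T^*$ for all $e$ then satisfies $\rho_\Sigma^\T\o\hat g = g_P\o\rho_\Delta^\T$, since this identity holds after precomposing with the dense map $\iota_\FDelta$ and postcomposing with the jointly monomorphic limit projections $e_\T^*$. Hence $\rho^\T$ is a reduced pro-$\F$ theory.

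\emph{Part 3.} For each $\Sigma\in\A$, Lemma~\ref{lem:localreitermanusigma} gives mutually inverse order-isomorphisms $\phi\mapsto\P^\phi$ and $\P\mapsto\phi^\P$ between reduced $\U_\Sigma$-quotients of $\hatt\FSigma$ and local pseudovarieties of reduced $\U_\Sigma$-quotients of $T\FSigma$. Since $\rho\mapsto\T^\rho$ and $\T\mapsto\rho^\T$ act componentwise by these maps, they are mutually inverse and order-preserving for the componentwise orderings; by Parts~1 and~2 they restrict to maps between reduced pro-$\F$ theories and reduced $\F$-theories, which yields the asserted lattice isomorphism.

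The only point that requires genuine care is the reducedness bookkeeping in Part~1: in Lemma~\ref{lem:profiniteconcrete} every surjective $\hatT$-homomorphic image of a $\Sigma$-generated profinite $\hatT$-algebra automatically lay in $\T_\Delta^\rho$, whereas here a naive factorization of $q\o g_P$ need not be a \emph{reduced} quotient, so one is forced to invoke Lemma~\ref{lem:subdirectproducthatdreduced}.3 to enlarge it to an honest reduced $\U_\Delta$-quotient. The cone verification in Part~2 and the order-theoretic parts are line-by-line translations of the non-reduced argument.
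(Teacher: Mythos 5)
Your proof is correct and follows essentially the same route as the paper's: part 1 factorizes $q\o g_P$ and then invokes Lemma~\ref{lem:subdirectproducthatdreduced}.3 to enlarge the factorization to a genuine reduced $\U_\Delta$-quotient (exactly the paper's argument, and you correctly flag this as the one point where the reduced setting differs from Lemma~\ref{lem:profiniteconcrete}), part 2 is the same cone construction the paper delegates to the non-reduced case, and part 3 appeals to Lemma~\ref{lem:localreitermanusigma} componentwise just as the paper does.
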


\begin{proof}
\begin{enumerate}
\item Suppose that $\rho$ is a reduced pro-$\F$ theory. Let $g\colon \MT\FDelta\to\MT\FSigma$ be a $\MT$-homomorphism with $\Sigma,\Delta\in \A$, and  let $e\colon \MT\FSigma\epito A$ in $\T_\Sigma^\rho$. By the definition of $\T_\Sigma^\rho$, the quotient $e$ has the form
\[ e = (T\FSigma \xra{\iota_\FSigma} V\hatt\FSigma \xra{V\rho_\Sigma} VP_\Sigma \xra{Vq} A) \]
for some $q$ in $\hatD^S$. Factorize the morphism $q\o g_P$ into a surjective morphism $\ol q\colon P_\Delta\to C$ followed by an injective/order-reflecting morphism $\ol g\colon C\to A$. By Lemma \ref{lem:subdirectproducthatdreduced}.3, there exists a finite quotient $h\colon P_\Delta \epito C$ such that $h\o \rho_\Delta$ is a reduced $\U_\Sigma$-quotient and  $\ol q\leq h$. Then the reduced $\U_\Sigma$-quotient
\[ \ol e := (T\FDelta \xra{\iota_\FDelta} V\hatt\FDelta \xra{V\rho_\Delta} VP_\Delta \xra{V h} C)\]
lies in $\T_\Delta^\rho$ (by the definition of $\T_\Delta^\rho$) and $e\o g$ factors through $\ol e$, as shown by the diagram below:
\[
\xymatrix{
&T\FDelta \ar@/_3ex/[dddl]_{\ol e} \ar[r]^g \ar[d]_{\iota_\FDelta} & T\FSigma \ar[d]^{\iota_\FSigma} \ar@/^7ex/[ddd]^e \\
&V\hatt\FDelta \ar[r]_{V\hat g}\ar@{->>}[d]_{V\rho_\Delta} & V\hatt\FSigma \ar@{->>}[d]^{V\rho_\Sigma}\\
&VP_\Delta \ar[dl]_{h} \ar@{->>}[d]_{V\ol q} \ar[r]_{Vg_P} & VP_\Sigma \ar@{->>}[d]^{Vq} \\
C \ar@{-->}[r] &B \ar[r]_{\ol g} & A
}
\]
This shows that $\T^\rho$ is a reduced $\F$-theory.
\item Completely analogous to the proof of Lemma \ref{lem:profiniteconcrete}.2.
\item By Lemma \ref{lem:localreitermanusigma}, the two maps $\rho\mapsto \T^\rho$ and $\T \mapsto \rho^\T$ give mutually inverse and order-preserving bijections between the two posets.\qedhere
\end{enumerate}
\end{proof}
 Recall from Lemma \ref{lem:reduction_props} that every reduced $\U_\Sigma$-quotient $e\colon T\FSigma\epito A$ has an associated reduced $\Sigma$-generated $\F$-algebra $e_R\colon \MT\FSigma\epito A_R$, viz. the smallest $\Sigma$-generated $\F$-algebra with $e\leq e_R$. Recall also the morphism $h_\downarrow$ associated to a morphism $h$ in $\D^S$, see Notation \ref{not:downarrow}.

In analogy to Lemma \ref{lem:theory_to_pseudovar_property} we get:
\begin{lemma}\label{lem:theory_to_pseudovar_property_reduced}
Let $\T$ be a reduced $\F$-theory, and let $A$ be an 
$\A$-generated reduced algebra in $\F$. Then the following statements are equivalent:
\begin{enumerate}[(i)]
\item There exists $e\in \T_\Sigma$ for some $\Sigma\in \A$ such that $e_R$ has codomain $A$.
\item For every $\MT$-homomorphism $h\colon \MT\FDelta\to A$ with $\Delta\in \A$, the morphism $h_\downarrow$ factors through some $e\in \T_\Delta$.
\end{enumerate}
\end{lemma}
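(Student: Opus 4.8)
The plan is to mirror closely the proof of Lemma \ref{lem:theory_to_pseudovar_property}, adapting each step to account for the reduction operator $(\dash)_R$ and the operator $(\dash)_\downarrow$. The conceptual dictionary is: the role played by ``$e\colon\MT\FSigma\epito(A,\alpha)$ is a surjective $\MT$-homomorphism in $\T_\Sigma$'' in the non-reduced case is now played by ``$e\in\T_\Sigma$ is a reduced $\U_\Sigma$-quotient whose reduction $e_R$ has codomain $A$'', and the role of an arbitrary $\MT$-homomorphism $h\colon\MT\FDelta\to A$ is now played by the reduced quotient $h_\downarrow$. The two key facts doing the work are Lemma \ref{lem:reduction_props} (especially parts 2 and 4: $e_R$ is the smallest $\Sigma$-generated $\F$-algebra above $e$, and for a reduced $\U_\Sigma$-quotient $e$ one has $e_{R,s}=e_s$ for $s\in S_0$) and the isomorphism of posets in Lemma \ref{lem:usigma_vs_reduced} ($h\mapsto h_\downarrow$, $e\mapsto e_R$).

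For the implication (i)$\To$(ii), assume $e\in\T_\Sigma$ with $e_R$ having codomain $A$, and let $h\colon\MT\FDelta\to A$ with $\Delta\in\A$. First I would pass to $h_\downarrow\colon T\FDelta\epito A_\downarrow$; since $A$ is reduced, $A_\downarrow$ and $A$ agree on $S_0$ and differ only in the trivial components, and in fact $(h_\downarrow)_R$ recovers the relevant quotient. Using projectivity of $\MT\FDelta$ (Lemma \ref{lem:projective}) applied to the surjection $e_R\colon\MT\FSigma\epito A$, choose a $\MT$-homomorphism $g\colon\MT\FDelta\to\MT\FSigma$ with $e_R\o g = h$ (after composing with the epi part of $h$). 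Because $\T$ is a reduced $\F$-theory, the composite $e\o g$ factors as $\ol g\o\ol e$ with $\ol e\in\T_\Delta$. It then remains to check that $h_\downarrow$ factors through $\ol e$; this follows by composing the factorization with the canonical map $A\to A_\downarrow$ and using that $\ol e$ is reduced together with Lemma \ref{lem:reduction_props}.2 to absorb the reduction. The diagram here is the reduced analogue of the one in the proof of Lemma \ref{lem:theory_to_pseudovar_property}.

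For the implication (ii)$\To$(i), since $A$ is $\A$-generated there is a surjective $\MT$-homomorphism $h\colon\MT\FSigma\epito A$ for some $\Sigma\in\A$. Applying (ii) to this $h$, the reduced quotient $h_\downarrow$ factors through some $e\in\T_\Delta$ — but one must be careful about which alphabet: one applies (ii) with $\Delta:=\Sigma$, so $h_\downarrow$ factors through some $e\in\T_\Sigma$, i.e.\ $h_\downarrow\leq e$ in the poset of reduced $\U_\Sigma$-quotients. Since $A$ is reduced, Lemma \ref{lem:usigma_vs_reduced} gives $h = h_{\downarrow,R}$, and the monotonicity of $e\mapsto e_R$ together with $h_\downarrow\leq e$ yields $h = h_{\downarrow,R}\leq e_R$; since $\T_\Sigma$ is downwards closed (it is an ideal, being a local pseudovariety of reduced $\U_\Sigma$-quotients) and $e\in\T_\Sigma$, we may as well take $e$ with $e_R$ having codomain $A$ — explicitly, factor $h$ through the subdirect-product-style minimality of $e_R$, or simply observe that $h_\downarrow\in\T_\Sigma$ directly since $h_\downarrow\leq e$ and $e\in\T_\Sigma$, and then $e_R:=(h_\downarrow)_R = h$ has codomain $A$. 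This proves (i).

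I expect the main obstacle to be the bookkeeping around the operator $(\dash)_\downarrow$ versus $(\dash)_R$ and making sure the factorizations land in the correct poset (reduced $\U_\Sigma$-quotients of $T\FSigma$, not arbitrary $\F$-quotients): one has to invoke Lemma \ref{lem:reduction_props}.4 at exactly the right moments to know that passing from a reduced $\U_\Sigma$-quotient to its reduction does not change the $S_0$-components, and Lemma \ref{lem:usigma_vs_reduced} to know the two operations are mutually inverse on the relevant posets. The categorical content is otherwise a routine transcription of the proof of Lemma \ref{lem:theory_to_pseudovar_property}.
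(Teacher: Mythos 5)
Your proposal is correct and follows essentially the same route as the paper's proof: for (i)$\To$(ii) use projectivity of $\MT\FDelta$ to write $h=e_R\o g$, factor $e\o g$ through some $\ol e\in\T_\Delta$ via the reduced-theory property, and then use that the witness $q$ of $e\leq e_R$ (Lemma \ref{lem:reduction_props}.2) is the identity on $S_0$-sorts by Lemma \ref{lem:reduction_props}.4, so that $h_\downarrow=q\o h=\ol g\o\ol e$; for (ii)$\To$(i) the "simply observe" variant you end with ($h_\downarrow\leq e$, downward closedness gives $h_\downarrow\in\T_\Sigma$, and $h_{\downarrow,R}=h$ by Lemma \ref{lem:usigma_vs_reduced}) is exactly the paper's argument. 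The only cosmetic slip is the parenthetical about "composing with the epi part of $h$" in the projectivity step, which is unnecessary since Lemma \ref{lem:projective} applies to arbitrary $h$ into the codomain of the surjection $e_R$.
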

\begin{proof}
(i)$\Ra$(ii) Let $e\colon T\Sigma\epito A'$ in $\T_\Sigma$ such that $e_R$ has codomain $A$. Given a $\MT$-homomorphism $h\colon \MT\FDelta\to A$ with $\Delta\in \A$, choose a 
$\MT$-homomorphism $g\colon \MT\FDelta\to \MT\FSigma$ with $h = e_R\o g$, see Lemma \ref{lem:projective}. Since $\T$ is an $\F$-theory, we have $e\o g = \ol g \o \ol e$ for some $\ol e\in \T_\Delta$ and some morphism $\ol g$. Thus we get the commutative diagram below, where the morphism $q$ witnesses that $e\leq e_R$, see Lemma \ref{lem:reduction_props}.2.
\[
\xymatrix{
 \MT\FDelta \ar[r]^g \ar[dr]^h \ar@{->>}[d]_{\ol e} & \MT\FSigma \ar@{->>}[d]^{e_R} \ar@{->>}[dr]^{e} & \\
 B \ar@/_5ex/[rr]_{\ol g} & A \ar[r]_q & A'
}
\]
Since $e_{R,s}=e_s$ for $s\in S_0$, see Lemma \ref{lem:reduction_props}.4, we have $q_s = \id$ for $s\in S_0$. Therefore $h_\downarrow = q\o h$, and thus $h_\downarrow$ factors through $\ol e$ via $\ol g$.
This proves (ii).

(ii)$\Ra$(i)  Since $A$ is $\A$-generated, there exists a surjective 
$\MT$-homomorphism $h\colon \MT\FSigma\epito A$ for some $\Sigma\in\A$. 
Thus, by (ii), $h_\downarrow$ factors through some $q\in \T_\Sigma$, i.e. $h_\downarrow\leq q$. Since $\T_\Sigma$ is downwards closed, we have $h_\downarrow\in \T_\Sigma$. Moreover, since $h$ is a reduced quotient of $\MT\FSigma$,  Lemma \ref{lem:usigma_vs_reduced} shows that $h_{\downarrow,R}=h$ and thus $h_{\downarrow,R}$ has codomain $A$. Therefore (i) holds with $e:=h_\downarrow$.
\end{proof}

\begin{notation}
For any reduced $\F$-theory $\T$, let $\V^\T$ be the class of all $\A$-generated reduced algebras $A\in \F$ satisfying the equivalent properties of the above lemma.
\end{notation}

\begin{lemma}\label{lem:theory_to_pseudovar_welldef_reduced}
If $\T$ is a reduced $\F$-theory, then $\V^\T$ is a pseudovariety of $\F$-algebras.
\end{lemma}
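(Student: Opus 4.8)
The plan is to verify the two requirements of Definition~\ref{def:pseudovar_reduced} for $\V^\T$, viz.\ that $\V^\T$ is a pseudovariety of reduced $\F$-algebras. That every member of $\V^\T$ is an $\A$-generated reduced $\F$-algebra is immediate from the definition of $\V^\T$ as the class of $\A$-generated reduced algebras in $\F$ satisfying the equivalent conditions of Lemma~\ref{lem:theory_to_pseudovar_property_reduced}. The substantive part is closure under division of finite products: I would fix $A_1,\ldots,A_n\in\V^\T$ and an $\A$-generated reduced $\F$-algebra $A$ dividing $\prod_{i=1}^n A_i$, witnessed by a subalgebra $m\colon C\monoto\prod_i A_i$ and a surjective $\MT$-homomorphism $q\colon C\epito A$. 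By criterion~(ii) of Lemma~\ref{lem:theory_to_pseudovar_property_reduced} it then suffices to show that for every $\MT$-homomorphism $h\colon\MT\FDelta\to A$ with $\Delta\in\A$, the morphism $h_\downarrow$ factors through some element of $\T_\Delta$.

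The core argument adapts the proof of Lemma~\ref{lem:theory_to_pseudovar_welldef}. Given such an $h$, first use projectivity of $\MT\FDelta$ (Lemma~\ref{lem:projective}) to lift $h$ along $q$, obtaining $g\colon\MT\FDelta\to C$ with $q\o g = h$, and set $g_i\defeq\pi_i\o m\o g\colon\MT\FDelta\to A_i$, where $\pi_i$ is the $i$-th product projection. Since $A_i\in\V^\T$, Lemma~\ref{lem:theory_to_pseudovar_property_reduced}(ii) applied to $g_i$ yields a reduced $\U_\Delta$-quotient $e_i\in\T_\Delta$ through which $(g_i)_\downarrow$ factors. As $\T_\Delta$ is a local pseudovariety of reduced $\U_\Delta$-quotients — hence a directed ideal, equivalently closed under subdirect products by Lemma~\ref{lem:subdirectproductreduced}.1 — there is a single $e\colon\MT\FDelta\epito B$ in $\T_\Delta$ with $e_i\leq e$ for all $i$, so $(g_i)_\downarrow = k_i\o e$ for suitable morphisms $k_i\colon B\to (A_i)_\downarrow$ in $\D^S$.

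Next I would apply diagonal fill-in. The morphism $m_\downarrow\colon C_\downarrow\monoto\prod_i (A_i)_\downarrow$ is still injective/order-reflecting — on the sorts in $S_0$ it equals $m$, and on the remaining sorts it is a map between objects with at most one element, cf.\ Remark~\ref{rem:reduction} — and the square with top edge $e$, left edge $g_\downarrow$, right edge $\langle k_i\rangle_i$ and bottom edge $m_\downarrow$ commutes (on $S_0$-sorts because $\pi_{i,s}\o m_s\o g_s = (g_i)_s = (g_i)_{\downarrow,s} = k_{i,s}\o e_s$, and trivially elsewhere). Since $e$ is surjective, diagonal fill-in gives $d\colon B\to C_\downarrow$ with $d\o e = g_\downarrow$. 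Finally, as $h_\downarrow$ and $g_\downarrow$ agree with $h$ and $g$ on the $S_0$-sorts and are trivial on the others, the morphism $\psi\colon B\to A_\downarrow$ given by $\psi_s\defeq q_s\o d_s$ for $s\in S_0$ and the unique map otherwise satisfies $h_\downarrow = \psi\o e$. Thus $h_\downarrow$ factors through $e\in\T_\Delta$, which by Lemma~\ref{lem:theory_to_pseudovar_property_reduced}(ii) shows $A\in\V^\T$.

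I do not expect a genuine obstacle, since the lemma is essentially the ``reduced'' restatement of Lemma~\ref{lem:theory_to_pseudovar_welldef}. The one point that needs care rather than a verbatim transcription is the bookkeeping around the operation $(\dash)_\downarrow$ of Notation~\ref{not:downarrow}: one must check that $m_\downarrow$ remains a subobject (so that the fill-in is legitimate) and that a factorization of $g_\downarrow$ through $e$ transports to one of $h_\downarrow$. Both facts fall out immediately once one observes that on sorts outside $S_0$ every object involved has at most one element.
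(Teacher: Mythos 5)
Your proof is correct and follows essentially the same route as the paper's: projectivity of $\MT\FDelta$ to lift $h$ along $q$, the criterion of Lemma~\ref{lem:theory_to_pseudovar_property_reduced}(ii) applied to the composites with the product projections, directedness of $\T_\Delta$ to obtain a common $e$, and a diagonal fill-in against $m_\downarrow$ followed by transporting the factorization back to $h_\downarrow$. The one (welcome) refinement is that you set up the fill-in square with domain $C_\downarrow$ and left leg $g_\downarrow$, so that $m_\downarrow$ is genuinely injective/order-reflecting in every sort, whereas the paper's diagram keeps the unreduced subalgebra as domain, where $m_\downarrow$ need not be monic outside $S_0$.
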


\begin{proof}
Let $A_i$ ($i\in I$) be a finite family of algebras in $\V^\T$ and suppose that $A$ is an $\A$-generated reduced $\MT$-algebra that divides $\prod_i A_i$. That is, there exists an injective/order-reflecting $\MT$-homomorphism $m\colon B\monoto \prod_i A_i$ and a surjective $\MT$-homomorphism $q\colon B\epito A$ for some $\MT$-algebra $B$. We denote the product projections by
$\pi_i\colon \prod_{i} A_i\to A_i$. To show that $A$ lies in $\V^\T$, we verify the criterion of Lemma \ref{lem:theory_to_pseudovar_property_reduced}(ii). Thus let $h\colon \MT\FDelta\to A$ be a $\MT$-homomorphism. We need to show that $h_\downarrow$
factors through some $e\in \T_\Delta$. First, since $\MT\FDelta$ is projective (see Lemma \ref{lem:projective}), we can choose a $\MT$-homomorphism $\ol h$ with $h=q\o \ol h$.
 For each $i$, the $\MT$-homomorphism $h_i := \pi_i\o m\o \ol h$ has codomain $A_i\in \V^\T$. Thus, by the definition of $\V^\T$, the morphism $h_{i,\downarrow}$ factors through some $e_i\in \T_\Delta$; that is, $h_{i,\downarrow} = k_i\o e_{i}$ for some $k_i$. Since $\T_\Delta$ is a local pseudovariety, there exists $e\in \T_\Delta$ with $e_i\leq e$ for all $i$, i.e. $e_{i}= g_i\o e$ for some $g_i$. Thus we get the commutative diagram below, where $r_i$ is the unique morphism with $h_{i,\downarrow} = r_i \o h_i \,(= r_i \o \pi_i\o m\o \ol h)$.
\[  
\xymatrix@R+1em{
T\FDelta \ar@{->>}@/^5ex/[rr]^{e_{i}}\ar@/_5ex/[dd]_h \ar@{->>}[r]^{e} \ar[d]_{\ol h} & C  
\ar@{->>}[r]^{g_i} & C_i \ar[dr]^{k_i}\\
B \ar@{->>}[d]_q \ar@{ >->}[r]_m & \prod_i A_i \ar[r]_{\pi_i} & A_i \ar[r]_{r_i}& A_{i,\downarrow}\\
A &&&
}
\]
From this we get the following commutative square, using that $\prod_i A_{i,\downarrow} = (\prod_i A_{i})_\downarrow $:
\[
\xymatrix{
T\Delta \ar@{->>}[r]^{e} \ar[d]_{\ol h} & C \ar[d]^{\langle k_i\o g_i\rangle} \\
B \ar@{>->}[r]_{m_\downarrow} & \prod_i A_{i,\downarrow}
}
\]
Diagonal fill-in gives a morphism $d\colon C\to B$ with $d\o e = \ol h$. But then $h_\downarrow = q_\downarrow\o \ol h = q_\downarrow\o d\o e$, showing that $h_\downarrow$ factors through $e\in \T_\Delta$.
By the definition of $\V^\T$, this proves that $A\in \V^\T$.
\end{proof}

\begin{notation}
For any pseudovariety $\V$ of reduced $\F$-algebras and $\Sigma\in \A$ form the family
\[  \T^\V = (\,\T_\Sigma^\V\,)_{\Sigma\in \A}\]
where, for each $\Sigma\in \A$,
\[  \T_\Sigma^\V = \{\,e\colon T\FSigma\epito A \,:\, \text{$e$ is a reduced $\U_\Sigma$-quotient with $A_R\in \V$}\,\}.\]
\end{notation}

\begin{lemma}
For any pseudovariety $\V$ of reduced $\F$-algebras, $\T^\V$ is an $\F$-theory.
\end{lemma}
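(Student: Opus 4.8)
The plan is to mirror, in the reduced setting, the proof of the earlier lemma stating that $\T^\V$ is an $\F$-theory. There are two things to check: (1) each $\T_\Sigma^\V$ is a local pseudovariety of reduced $\U_\Sigma$-quotients of $\MT\FSigma$, i.e.\ an ideal in that poset; and (2) the compatibility condition for reduced $\F$-theories from Definition~\ref{def:profinitetheoryreduced}'s sibling, namely that for every $\MT$-homomorphism $g\colon \MT\FDelta\to\MT\FSigma$ and every $e\in\T_\Sigma^\V$ there is an $\ol e\in\T_\Delta^\V$ and a morphism $\ol g$ with $\ol g\o \ol e=e\o g$.

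For (1): nonemptiness of $\T_\Sigma^\V$ is clear (take the terminal reduced $\U_\Sigma$-quotient, whose reduction lies in $\V$ since $\V$ is closed under division of finite products and hence contains the terminal algebra; one should check $\V$ is nonempty, which follows from the definition). Downward closure: if $e_0\leq e_1$ with $e_1\in\T_\Sigma^\V$, then by Lemma~\ref{lem:usigma_vs_reduced} and the monotonicity of $e\mapsto e_R$ (shown in the proof of Lemma~\ref{lem:usigma_vs_reduced}, part (c)) we get $e_{0,R}\leq e_{1,R}$, so $e_{0,R}$ is a quotient of $e_{1,R}\in\V$; since $e_{0,R}$ is an $\A$-generated reduced $\F$-algebra dividing $e_{1,R}$ (a quotient is in particular a quotient of a subalgebra), closure of $\V$ under division gives $e_{0,R}\in\V$, hence $e_0\in\T_\Sigma^\V$. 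Directedness: given $e_0,e_1\in\T_\Sigma^\V$, form their subdirect product $e$ in $\D^S$; by Lemma~\ref{lem:subdirectproductreduced}, $e$ is again a reduced $\U_\Sigma$-quotient and $e_R$ is the subdirect product of $e_{0,R}$ and $e_{1,R}$ in $\Alg\MT$. Thus $e_R$ is an $\A$-generated reduced $\F$-algebra dividing $e_{0,R}\times e_{1,R}$, and since both factors are in $\V$ and $\V$ is closed under division of finite products, $e_R\in\V$; hence $e\in\T_\Sigma^\V$ with $e_0,e_1\leq e$.

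For (2): let $g\colon\MT\FDelta\to\MT\FSigma$ and $e\colon\MT\FSigma\epito A$ in $\T_\Sigma^\V$, so $A_R\in\V$. Factorize $e\o g$ in $\D^S$ as a surjection $\ol e\colon\MT\FDelta\epito B$ followed by an injective/order-reflecting $\ol g\colon B\monoto A$. The candidate is this $\ol e$, but it need not be reduced, so instead I would pass to $\ol e_\downarrow$ (Notation~\ref{not:downarrow}), which is a reduced quotient of $\MT\FDelta$ agreeing with $\ol e$ on $S_0$-sorts; one checks $\ol e_\downarrow$ is $\F$-refinable (it factors through $B$, and $B$ is an $\A$-generated subalgebra of $A\in\F$, hence in $\F$) and satisfies the lifting property (because $\ol\U_\Sigma$ is a unary presentation and $B$ carries an $\F$-quotient), so $\ol e_\downarrow$ is a reduced $\U_\Delta$-quotient. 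Its reduction $(\ol e_\downarrow)_R=(\ol e)_{\downarrow,R}$ divides $A$ — here I use that $\ol g$ exhibits $B$ as a subalgebra of $A$, and reduction of a subalgebra divides the reduction of the whole, an argument analogous to Lemma~\ref{lem:reduction_props} — and since $A_R\in\V$ and $\V$ is closed under division, $(\ol e_\downarrow)_R\in\V$, so $\ol e_\downarrow\in\T_\Delta^\V$. Finally the required factorization: on $S_0$-sorts $\ol e_\downarrow$ agrees with $\ol e$, so $e\o g=\ol g\o \ol e$ factors through $\ol e_\downarrow$ via a morphism built from $\ol g$ composed with the canonical map relating $\ol e$ and $\ol e_\downarrow$; on non-$S_0$ sorts the codomain of $\ol e_\downarrow$ is terminal so the factorization is automatic. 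The main obstacle I anticipate is precisely this last bookkeeping step: verifying that $e\o g$ really does factor through $\ol e_\downarrow$ (not just through $\ol e$) and that the resulting $\ol g$ is a genuine $\MT$-homomorphism, which requires care with the interplay between $(\dash)_\downarrow$, the reduction operator, and the homomorphism theorem — the same kind of careful diagram-chase that dominates the proofs of Lemmas~\ref{lem:reduction_props} and~\ref{lem:theory_to_pseudovar_property_reduced}.
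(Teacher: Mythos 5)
Part (1) of your argument (that each $\T_\Sigma^\V$ is an ideal) is essentially the paper's proof and is fine: downward closure via monotonicity of $e\mapsto e_R$ and closure of $\V$ under division, directedness via Lemma~\ref{lem:subdirectproductreduced}. Part (2), however, has a genuine gap, and it stems from a type error at the very first step. You factorize $e\o g$ in $\D^S$, where $e\colon T\FSigma\epito A$ is a reduced $\U_\Sigma$-quotient. But the codomain $A$ of such a quotient is merely an object of $\D^S$ satisfying the conditions of Definition~\ref{def:reducedquot}; it is \emph{not} a $\MT$-algebra and in particular not in $\F$ (only $A_R$ is). So your justifications collapse: ``$B$ is an $\A$-generated subalgebra of $A\in\F$, hence in $\F$'' is false as stated, ``$B$ carries an $\F$-quotient'' is precisely what is not known, and ``$(\ol e_\downarrow)_R$ divides $A$'' does not typecheck because division is a relation between $\MT$-algebras. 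Without an $\F$-quotient structure on $B$ you cannot invoke the unary-presentation property of $\ol\U_\Delta$ to get the lifting condition for $\ol e_\downarrow$, and without dividing $A_R$ (not $A$) you cannot apply closure of $\V$ under division.

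The missing idea is to pass to the reduction \emph{before} factorizing: the paper factorizes $e_R\o g$ (a genuine $\MT$-homomorphism, since $e_R\colon\MT\FSigma\epito A_R$ is the $\Sigma$-generated $\F$-algebra associated to $e$) in $\Alg{\MT}$, obtaining a surjective $\MT$-homomorphism $h\colon\MT\FDelta\epito B$ followed by $m\colon B\monoto A_R$. Now $B$ really is an $\A$-generated subalgebra of $A_R\in\F$, so $h$ is an $\F$-quotient, $\ol e:=h_\downarrow$ is a reduced $\U_\Delta$-quotient (lifting property for free from the unary presentation of $\MT\FDelta$), and $\ol e_R=h_{\downarrow,R}$ is a quotient of $B$, hence divides $A_R\in\V$. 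The remaining bookkeeping you anticipate (that $e\o g$ factors through $\ol e$, using that $e$ and $e_R$, and $\ol e$, $\ol e_R$ and $h$, all agree on $S_0$-sorts by Lemma~\ref{lem:reduction_props}.4) is then exactly the diagram chase in the paper. Your construction would in fact produce the same $\ol e_\downarrow$ up to isomorphism---since $e_s=e_{R,s}$ on $S_0$-sorts the two images coincide there---but the route through $e_R$ is what makes the verifications possible, and your write-up does not supply it.
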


\begin{proof}
\begin{enumerate}[(a)]
\item We first show that $\T_\Sigma^\V$ is a local pseudovariety of reduced $\U_\Sigma$-quotients for each $\Sigma\in \A$.
 To show that $\T_\Sigma^\V$ is downwards closed, let $e\colon T\FSigma\epito A$ and $q\colon T\FSigma\epito B$ be two reduced $\U_\Sigma$-quotients with $e\in \T_\Sigma^\V$ and $q\leq e$. Then $q_R\leq e_R$ by Lemma \ref{lem:usigma_vs_reduced}, i.e. $B_R$ is a quotient of $A_R$. Since $B_R$ is reduced and $A_R\in \V$, this implies that $B_R\in \V$ because $\V$ is closed under reduced quotients. Therefore $q\in \T_\Sigma^\V$ by the definition of $\T_\Sigma^\V$.

To show that $\T_\Sigma^\V$ is directed, let $e_i\colon T\FSigma\epito A_i$ ($i=0,1$) be two reduced $\U_\Sigma$-quotients in $\T_\Sigma^\V$. We need to show that there exists a reduced $\U_\Sigma$-quotient $e$ in $\T_\Sigma^\V$ with $e_0,e_1\leq e$. Let $e\colon T\FSigma\epito A$ be the subdirect product of $e_0$ and $e_1$ in $\D^S$, i.e. that smallest quotient of $T\FSigma$ with $e_0,e_1\leq e$. By Lemma \ref{lem:subdirectproductreduced}, $e$ is a reduced $\U_\Sigma$-quotient, and moreover $e_{R}$ is the subdirect product of the two $\Sigma$-generated $\F$-algebras $e_{i,R}\colon \MT\FSigma\epito A_{i,R}$ in $\Alg{\MT}$. Thus $A_R$ is a subalgebra of the product $A_{0,R}\times A_{1,R}$. Moreover, $A_R$ is clearly $\A$-generated, and $A_R$ is reduced by Lemma \ref{lem:reduction_props}.3. It follows that $A_R\in \V$ because $A_{0,R}, A_{1,R}\in \V$ and $\V$ is closed under division of finite products. Thus $e$ lies in $\T_\Sigma^\V$. This concludes the proof that $\T_\Sigma^\V$ is a local pseudovariety.
\item To show that $\T^\V$ forms an $\F$-theory, let $g\colon \MT\FDelta\to\MT\FSigma$ be a $\MT$-homomorphism with $\Sigma,\Delta\in \A$ and let $e\colon T\FSigma\epito A$ in $\T_\Sigma^\V$. Factorize $e_R\o g$ into a surjective $\MT$-homomorphism $h\colon \MT\Delta\epito B$ followed by an injective/order-reflecting $\MT$-homomorphism $m\colon B\monoto A_R$. Let $\ol e := h_\downarrow\colon T\FDelta\epito B'$. Now consider the following diagram:
\[
\xymatrix{
T\FDelta \ar[r]^g \ar@{->>}[d]_{\ol e} \ar@/_5ex/[dd]_{\ol e_R} \ar@{->>}@/_10ex/[ddd]_h  & T\FSigma \ar@{->>}[d]^e \ar@{->>}@/^5ex/[ddd]^{e_R}\\
B' \ar[r]^{\ol g} & A \\
B'_R \ar@{-->>}[u] &\\
B \ar@{-->>}[u] \ar@{>->}[r]_m & A_R \ar@{-->>}[uu]
}
\]
Here the dashed arrows are witnessing that $\ol e\leq \ol e_R\leq h$ and $e\leq e_R$, see Lemma \ref{lem:reduction_props}.2. Note that by Lemma \ref{lem:reduction_props}.4, on sorts $s\in S_0$ the morphisms $\ol e$, $\ol e_R$ and $h$ agree, as do the morphisms $e$ and $e_R$, and thus the dashed arrows are carried by identity morphisms on these sorts. Therefore there exists a unique morphism $\ol g$ making the lower part of the diagram commute; specifically, $\ol g$ agrees with $m$ on sorts $s\in S_0$, and is uniquely determined on sorts $s\not\in S_0$ because $A_s$ has at most one element. Then the outer part of the diagram and all parts except possibly the upper square commute, which then implies that also the upper square commutes. Moreover, we have $\ol e\in \T_\Sigma^\V$ since the algebra $B_R'$ divides  $A_R\in \V$ and thus lies in $\V$.

\end{enumerate}
\end{proof}

\begin{lemma}\label{lem:theory_pseudovar_equivalence1_reduced}
For any pseudovariety $\V$ of reduced $\F$-algebras we have $\V = \V^\T$ where $\T := \T^\V$.
\end{lemma}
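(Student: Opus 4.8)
The plan is to mirror the proof of Lemma \ref{lem:theory_pseudovar_equivalence1} for the reduced setting. I would prove the two inclusions $\V \seq \V^\T$ and $\V^\T \seq \V$ separately, where $\T := \T^\V$.

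For $\V\seq\V^\T$, let $A\in\V$. Then $A$ is $\A$-generated and reduced, so there exists a surjective $\MT$-homomorphism $h\colon \MT\FSigma\epito A$ with $\Sigma\in\A$. Consider the reduced $\U_\Sigma$-quotient $e := h_\downarrow\colon T\FSigma\epito A$ (note $h$ is already a reduced quotient since $A$ is reduced, so $h_\downarrow$ agrees with $h$ up to the trivial components in sorts outside $S_0$). By Lemma \ref{lem:usigma_vs_reduced} we have $e_R = h_{\downarrow,R} = h$, so $e_R$ has codomain $A\in\V$. Hence $e\in\T_\Sigma^\V = \T_\Sigma$ by the definition of $\T^\V$. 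Now invoking Lemma \ref{lem:theory_to_pseudovar_property_reduced}(i), this witnesses $A\in\V^\T$ by the definition of $\V^\T$.

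For the reverse inclusion $\V^\T\seq\V$, let $A\in\V^\T$. By Lemma \ref{lem:theory_to_pseudovar_property_reduced}(i), there exist $\Sigma\in\A$ and a reduced $\U_\Sigma$-quotient $e\in\T_\Sigma$ such that $e_R$ has codomain $A$. But $\T_\Sigma = \T_\Sigma^\V$, and by the definition of $\T_\Sigma^\V$, the fact that $e\in\T_\Sigma^\V$ means precisely that $A = $ the codomain of $e_R$ lies in $\V$. Thus $A\in\V$.

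Since both inclusions are routine once the correspondence $e\leftrightarrow e_R$ of Lemma \ref{lem:usigma_vs_reduced} and the characterization in Lemma \ref{lem:theory_to_pseudovar_property_reduced} are in hand, the only point requiring mild care is the bookkeeping of the $(\dash)_\downarrow$ and $(\dash)_R$ operations in the first inclusion — making sure that when $A$ is already reduced, passing through $h_\downarrow$ and then $(\dash)_R$ returns $A$ up to isomorphism. This is exactly what Lemma \ref{lem:reduction_props}.4 and Lemma \ref{lem:usigma_vs_reduced} guarantee, so there is no real obstacle; the argument is a direct transcription of the unreduced case with $\F$-quotients replaced by reduced $\U_\Sigma$-quotients.
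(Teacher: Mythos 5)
Your proof is correct and follows essentially the same route as the paper's: both directions reduce to the correspondence $h\mapsto h_\downarrow$, $e\mapsto e_R$ of Lemma \ref{lem:usigma_vs_reduced} together with the definitions of $\T^\V$ and $\V^\T$ (via Lemma \ref{lem:theory_to_pseudovar_property_reduced}(i)). The only blemish is the parenthetical claim that $h$ "is already a reduced quotient since $A$ is reduced" --- a reduced $\MT$-algebra need not have singleton carriers outside $S_0$, so $h$ need not be a reduced quotient in the sense of Definition \ref{def:reducedquot}; but your argument never uses this, only the identity $h_{\downarrow,R}=h$ from Lemma \ref{lem:usigma_vs_reduced}, which is exactly what the paper uses.
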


\begin{proof}
To show $\V\seq\V^\T$, let $A\in \V$. Since $A$ is $\A$-generated, there exists a surjective $\MT$-homomorphism $e\colon \MT\FSigma\epito A$ with $\Sigma\in \A$, and since $A$ is reduced, we have $e=e_{\downarrow,R}$ by Lemma \ref{lem:usigma_vs_reduced}. Then $e_\downarrow\in \T_\Sigma$ by the definition of $\T_\Sigma = \T_\Sigma^\V$ and therefore $A\in \V^\T$ by the definition of $\V^\T$.

For the reverse inclusion $\V^\T\seq \V$, let $A\in \V^\T$. Then, for some $\Sigma\in\A$, there exists $e\in \T_\Sigma$ such that $e_R$ has codomain $A$. But by the definition of $\T_\Sigma = \T_\Sigma^\V$, this means that $A\in \V$. 
\end{proof}

\begin{lemma}\label{lem:theory_pseudovar_equivalence2_reduced}
For any reduced $\F$-theory $\T$
we 
have $\T = \T^\V$, where $\V:=\V^\T$.
\end{lemma}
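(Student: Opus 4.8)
\textbf{Proof proposal for Lemma \ref{lem:theory_pseudovar_equivalence2_reduced}.}
The plan is to mirror the proof of the non-reduced version, Lemma \ref{lem:theory_pseudovar_equivalence2}, replacing $\MT$-quotients of $\MT\FSigma$ by reduced $\U_\Sigma$-quotients of $T\FSigma$ throughout and passing through the reduction operators $(\dash)_R$ and $(\dash)_\downarrow$. Since $\T$, $\T^\V$ and $\V^\T$ are all defined componentwise, it suffices to prove $\T_\Sigma = \T_\Sigma^\V$ for every fixed $\Sigma\in\A$, and I would handle the two inclusions separately. The only nontrivial inputs will be Lemma \ref{lem:reduction_props} (turning a reduced $\U_\Sigma$-quotient $e$ into a $\Sigma$-generated reduced $\F$-algebra $e_R$), Lemma \ref{lem:usigma_vs_reduced} (in particular the identity $e = e_{R,\downarrow}$ established in part (a) of its proof), Lemma \ref{lem:theory_to_pseudovar_property_reduced} (the equivalence of conditions (i) and (ii) characterizing $\V^\T$), and the fact that each $\T_\Sigma$, being a local pseudovariety of reduced $\U_\Sigma$-quotients, is an ideal and hence downwards closed.

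For the inclusion $\T_\Sigma\seq\T_\Sigma^\V$, I would take $e\colon T\FSigma\epito A$ in $\T_\Sigma$. By Lemma \ref{lem:reduction_props} the associated quotient $e_R\colon \MT\FSigma\epito A_R$ is a $\Sigma$-generated $\F$-algebra whose codomain $A_R$ is reduced; since $\Sigma\in\A$, the algebra $A_R$ is moreover $\A$-generated, so it is an admissible candidate for membership in $\V^\T$. Condition (i) of Lemma \ref{lem:theory_to_pseudovar_property_reduced} for $A_R$ is witnessed by $e$ itself (it lies in $\T_\Sigma$ and $e_R$ has codomain $A_R$), hence $A_R\in\V^\T=\V$. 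By the very definition of $\T_\Sigma^\V$ this gives $e\in\T_\Sigma^\V$.

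For the reverse inclusion $\T_\Sigma^\V\seq\T_\Sigma$, I would take a reduced $\U_\Sigma$-quotient $e\colon T\FSigma\epito A$ in $\T_\Sigma^\V$, so that $A_R\in\V=\V^\T$. Applying condition (ii) of Lemma \ref{lem:theory_to_pseudovar_property_reduced} to $A_R$ with the $\MT$-homomorphism $e_R\colon\MT\FSigma\to A_R$ and $\Delta:=\Sigma\in\A$, the morphism $(e_R)_\downarrow$ factors through some $q\in\T_\Sigma$, i.e.\ $(e_R)_\downarrow\leq q$. But by part (a) of the proof of Lemma \ref{lem:usigma_vs_reduced} we have $e = e_{R,\downarrow} = (e_R)_\downarrow$, so $e\leq q$ with $q\in\T_\Sigma$; since $\T_\Sigma$ is downwards closed, $e\in\T_\Sigma$. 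This finishes both inclusions and hence the lemma.

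I do not expect a genuine obstacle here: the argument is routine once the right auxiliary lemmas are in place. The only point requiring care is the bookkeeping of the two reduction operators — specifically, remembering that a reduced $\U_\Sigma$-quotient $e$ is recovered from its ``algebraification'' as $e = (e_R)_\downarrow$, and choosing the correct direction of the order $\leq$ so that downward closure of $\T_\Sigma$ applies. (An alternative route for the second inclusion, using condition (i) of Lemma \ref{lem:theory_to_pseudovar_property_reduced} together with projectivity of $\MT\FSigma$ from Lemma \ref{lem:projective} and the reduced $\F$-theory property of $\T$, would also work and more closely parallels the proof of Lemma \ref{lem:theory_pseudovar_equivalence2}, but the version via condition (ii) is shorter.)
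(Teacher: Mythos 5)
Your proposal is correct. The first inclusion $\T_\Sigma\seq\T_\Sigma^\V$ is exactly the paper's argument: $e_R$ is a $\Sigma$-generated reduced (and $\A$-generated) $\F$-algebra, condition (i) of Lemma \ref{lem:theory_to_pseudovar_property_reduced} is witnessed by $e$ itself, so $A_R\in\V^\T=\V$ and hence $e\in\T_\Sigma^\V$. For the reverse inclusion you diverge from the paper: the paper invokes characterization (i) of $\V^\T$, which only yields a witness $q\colon T\FDelta\epito B$ over some \emph{other} alphabet $\Delta$ with $B_R=A_R$, and must then use projectivity of $\MT\FSigma$ (Lemma \ref{lem:projective}) to produce $g$ with $q_R\o g=e_R$, pass through $q\o g=(q_R\o g)_\downarrow=e_{R,\downarrow}=e$, and finally apply the reduced $\F$-theory closure property of $\T$ to land back in $\T_\Sigma$. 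You instead apply characterization (ii) directly to $h:=e_R$ with $\Delta:=\Sigma$, getting $(e_R)_\downarrow=e\leq q$ for some $q\in\T_\Sigma$ in one step. This is legitimate and shorter, because the projectivity-plus-$\F$-theory argument is already packaged inside the proof of (i)$\Ra$(ii) in Lemma \ref{lem:theory_to_pseudovar_property_reduced}; the paper effectively re-runs that argument inline, while you reuse it. The one point you correctly identified as needing care — that $e=e_{R,\downarrow}$, which is part of Lemma \ref{lem:usigma_vs_reduced} — is indeed the hinge of either version.
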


\begin{proof}

 To show $\T\seq\T^\V$, let $e\colon \MT\FSigma\epito A$ in $\T_\Sigma$. Then $A_R\in \V^\T=\V$ by the definition of $\V^\T$. But then $e\in \T^\V$ by the definition of $\T^\V$.

For the reserve inclusion $\T^\V\seq \T$, let $e\colon \MT\FSigma\epito A$ in $\T_\Sigma^\V$. By the definition of $\T_\Sigma^\V$, this means that $A_R\in \V=\V^\T$. By the definition of $\V^\T$, there exists $\Delta\in \A$ and $q\colon T\FDelta\epito B$ in $\T_\Delta$ with $B_R=A_R$. Using projectivity of free $\MT$-algebras (see Lemma \ref{lem:projective}), choose a $\MT$-homomorphism $g\colon \MT\FSigma\to \MT\FDelta$ with $q_R\o g = e_R$. This implies \[q\o g = q_{R,\downarrow}\o g = (q_R\o g)_\downarrow = e_{R,\downarrow}=e\] by Lemma \ref{lem:usigma_vs_reduced}.  Since $\T$ is a reduced $\F$-theory, there exists $\ol q\in \T_\Sigma$ and $\ol g$ with $q\o g = \ol g \o \ol q$.
\[
\xymatrix{
T\FSigma \ar[r]^g \ar@{->>}[d]_{\ol q} \ar@{->>}[dr]^e & T\FDelta \ar@{->>}[d]^{q}\\
C \ar@{->>}[r]_{\ol g}& B
}
\]
Therefore $e=\ol g \o \ol q$, i.e. $e\leq \ol q$. Since $\T_\Sigma$ is downwards closed, it follows that $e\in \T_\Sigma$.
\end{proof}

\begin{lemma}\label{lem:ftheory_vs_pseudovar_reduced}
The maps 
\[\T\mapsto \V^\T\quad\text{and}\quad \V\mapsto \T^\V\]
give an isomorphism between the lattice of reduced $\F$-theories and the lattice of pseudovarieties of reduced $\F$-algebras (ordered by inclusion).
\end{lemma}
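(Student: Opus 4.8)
The plan is to show the two maps are mutually inverse order-isomorphisms, which follows immediately from the preceding two lemmas. Concretely, Lemma~\ref{lem:theory_pseudovar_equivalence1_reduced} states $\V = \V^{\T^\V}$ for every pseudovariety $\V$ of reduced $\F$-algebras, and Lemma~\ref{lem:theory_pseudovar_equivalence2_reduced} states $\T = \T^{\V^\T}$ for every reduced $\F$-theory $\T$. Together these say that $\T\mapsto \V^\T$ and $\V\mapsto \T^\V$ are inverse bijections between the class of reduced $\F$-theories and the class of pseudovarieties of reduced $\F$-algebras.

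It then remains only to verify that both maps are monotone with respect to the inclusion orders (reduced $\F$-theories are ordered by $\T\le\T'$ iff $\T_\Sigma\seq\T_\Sigma'$ for all $\Sigma\in\A$; pseudovarieties by inclusion as full subcategories). For $\V\mapsto\T^\V$: if $\V\seq\V'$ and $e\colon T\FSigma\epito A$ lies in $\T_\Sigma^\V$, then $e$ is a reduced $\U_\Sigma$-quotient with $A_R\in\V\seq\V'$, hence $e\in\T_\Sigma^{\V'}$; so $\T^\V\le\T^{\V'}$. For $\T\mapsto\V^\T$: if $\T\le\T'$ and $A\in\V^\T$, pick $e\in\T_\Sigma$ with $e_R$ having codomain $A$ (property (i) of Lemma~\ref{lem:theory_to_pseudovar_property_reduced}); since $\T_\Sigma\seq\T_\Sigma'$ we get $e\in\T_\Sigma'$, and $A$ is an $\A$-generated reduced $\F$-algebra, so $A\in\V^{\T'}$; thus $\V^\T\seq\V^{\T'}$. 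A bijection between posets that is monotone in both directions is an order-isomorphism, so the map sends the lattice of reduced $\F$-theories isomorphically onto the lattice of pseudovarieties of reduced $\F$-algebras.

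There is no real obstacle here: the genuine content has already been discharged in Lemmas~\ref{lem:theory_to_pseudovar_property_reduced}--\ref{lem:theory_pseudovar_equivalence2_reduced}, whose proofs carry out the nontrivial arguments about reduced quotients, the reduction operation $(\dash)_R$, and closure of $\V^\T$ under division of finite products. The only mild subtlety worth a sentence is that the orderings are indeed lattices, which is inherited from the fact that intersections of pseudovarieties of reduced $\F$-algebras are again such (and likewise for reduced $\F$-theories, using that an intersection of local pseudovarieties of reduced $\U_\Sigma$-quotients is a local pseudovariety and that condition (ii) in the definition of a reduced $\F$-theory, with $\ol e,\ol g$ chosen as the surjective/order-reflecting factorization of $e\circ g$, is then automatically satisfied). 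So the write-up is essentially a two-line deduction citing the two equivalence lemmas plus the monotonicity check above.

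\begin{proof}
By Lemma~\ref{lem:theory_pseudovar_equivalence1_reduced} and Lemma~\ref{lem:theory_pseudovar_equivalence2_reduced}, the two maps $\T\mapsto\V^\T$ and $\V\mapsto\T^\V$ are mutually inverse. It remains to show that they are order-preserving. Suppose $\V\seq\V'$ are pseudovarieties of reduced $\F$-algebras. If $e\colon T\FSigma\epito A$ lies in $\T_\Sigma^\V$, then $e$ is a reduced $\U_\Sigma$-quotient with $A_R\in\V\seq\V'$, so $e\in\T_\Sigma^{\V'}$ by the definition of $\T^{\V'}$. Hence $\T^\V\le\T^{\V'}$. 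Conversely, suppose $\T\le\T'$ are reduced $\F$-theories. If $A\in\V^\T$, then by Lemma~\ref{lem:theory_to_pseudovar_property_reduced}(i) there is $e\in\T_\Sigma$ for some $\Sigma\in\A$ such that $e_R$ has codomain $A$. Since $\T_\Sigma\seq\T_\Sigma'$ we have $e\in\T_\Sigma'$, and as $A$ is an $\A$-generated reduced $\F$-algebra this shows $A\in\V^{\T'}$. Hence $\V^\T\seq\V^{\T'}$. A mutually inverse pair of monotone maps is an isomorphism of posets, and since both sides are closed under arbitrary intersections (of pseudovarieties, resp. of reduced $\F$-theories), they form complete lattices. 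This proves the claim.
\end{proof}
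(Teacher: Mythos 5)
Your proof is correct and follows exactly the route the paper takes: cite Lemmas~\ref{lem:theory_pseudovar_equivalence1_reduced} and \ref{lem:theory_pseudovar_equivalence2_reduced} for mutual inverseness and then check order-preservation (which the paper dismisses as clear, while you spell it out). The extra remarks on completeness of the lattices are harmless but not needed for the statement as the paper uses it.
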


\begin{proof}
By Lemma \ref{lem:theory_pseudovar_equivalence1_reduced} and \ref{lem:theory_pseudovar_equivalence2_reduced}, the two maps are mutually inverse. Moreover, they are clearly order-preserving, which proves the claim.
\end{proof}
From this and Lemma \ref{lem:profiniteconcretereduced} we get the main result of the present subsection:

\begin{theorem}[Reiterman Theorem for Reduced $\F$-algebras]\label{thm:reiterman_reduced}
The lattice of
pseudovarieties of reduced $\F$-algebras is isomorphic to the 
lattice of reduced pro-$\F$
theories.
\end{theorem}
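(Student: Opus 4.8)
The plan is to chain together the two correspondences established in the preceding lemmas, exactly as the (non-reduced) Reiterman Theorem \ref{thm:reiterman} was obtained by combining Lemma \ref{lem:ftheory_vs_pseudovar} with Lemma \ref{lem:profiniteconcrete}. Concretely, I would first invoke Lemma \ref{lem:ftheory_vs_pseudovar_reduced}, which exhibits $\V\mapsto \T^\V$ and $\T\mapsto \V^\T$ as mutually inverse, order-preserving bijections between the lattice of pseudovarieties of reduced $\F$-algebras and the lattice of reduced $\F$-theories. Then I would invoke Lemma \ref{lem:profiniteconcretereduced}, which does the same for $\T\mapsto \rho^\T$ and $\rho\mapsto \T^\rho$ between reduced $\F$-theories and reduced pro-$\F$ theories.

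Composing these two correspondences, the assignments $\V\mapsto \rho^{\T^\V}$ and $\rho\mapsto \V^{\T^\rho}$ are mutually inverse, order-preserving bijections between the two posets in question. Since the lattice of pseudovarieties of reduced $\F$-algebras carries the inclusion order and reduced pro-$\F$ theories carry the componentwise order, and an order isomorphism transports meets and joins, this composite is in particular a lattice isomorphism. That is all the proof requires.

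I expect no genuine obstacle at this stage: all the substantive content — defining the translations $(\dash)^\V$, $(\dash)^\T$, $(\dash)^\rho$, checking they are well-defined and mutually inverse, and verifying the relevant closure properties — has already been discharged in Lemmas \ref{lem:theory_to_pseudovar_welldef_reduced}, \ref{lem:theory_pseudovar_equivalence1_reduced}, \ref{lem:theory_pseudovar_equivalence2_reduced} and \ref{lem:profiniteconcretereduced}. The mild point worth a sentence is that the intermediate object, the poset of reduced $\F$-theories, is indeed a lattice (being order-isomorphic to the lattice of pseudovarieties of reduced $\F$-algebras by Lemma \ref{lem:ftheory_vs_pseudovar_reduced}), so that ``poset isomorphism'' and ``lattice isomorphism'' agree all along the chain; this needs no separate argument.
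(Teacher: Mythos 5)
Your proposal is correct and matches the paper exactly: the theorem is stated immediately after Lemma \ref{lem:ftheory_vs_pseudovar_reduced} with the one-line justification ``From this and Lemma \ref{lem:profiniteconcretereduced} we get the main result,'' i.e.\ the same composition of the two bijections you describe. No further comment is needed.
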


\subsection{Reduced varieties of languages and the Reduced Variety Theorem}
In thus subsection, we present a variety theorem dealing with languages that are ``empty'' outside of the sorts in $S_0$.

\begin{definition}
A \emph{reduced language} over $\Sigma\in\Set_f^S$ is a family of morphisms in $\D$
\[ L = (\,L_s\colon (T\FSigma)_s\to O_\D\,)_{s\in S_0}. \]
It is called \emph{$\F$-recognizable} if there exists a $\MT$-homomorphism $h\colon \MT\FSigma\epito (A,\alpha)$ with codomain in $\F$ and a family $(p_s\colon A_s\to O_\D)_{s\in S_0}$ in $\D$ with $L_s = p_s\o h_s$ for all $s\in S_0$. We write $\Rec_{\F,R}(\Sigma)$ for the set of all reduced $\F$-recognizable languages over $\Sigma$.
\end{definition}

\begin{example}\label{ex:reducedlang_omegasem}
Let $\MT=\MT_\infty$ in $\Set^S$ (where  $S=\{+,\omega\}$) with $O_\Set=\{0,1\}$ and $\F=$ all finite $\omega$-semigroups. Choose $S_0=\{\omega\}$. Since $T_\infty(\Sigma,\emptyset) = (\Sigma^+,\Sigma^\omega)$, a reduced language corresponds precisely to an \emph{$\omega$-language}, i.e. a language $L\seq \Sigma^\omega$ of infinite words. The language $L$ is recognizable if there exists an $\omega$-semigroup homomorphism $h\colon (\Sigma^+,\Sigma^\omega)\to A$ into a finite $\omega$-semigroup $A$ and a subset $Y\seq A_\omega$ with $L=h^{-1}[Y]$. Recognizable languages are precisely the $\omega$-regular languages, i.e. $\omega$-languages accepted by finite B\"uchi automata \cite{pinperrin04}.
\end{example}

\begin{rem}
In complete analogy to Remark \ref{rem:reg-as-C} and Proposition \ref{prop:reg-as-C}, we have
\[ \Rec_{\F,R}(\Sigma)\cong \prod_{s\in S_0} P(\hatt\FSigma)_s, \]
and $\Rec_{\F,R}(\Sigma)$ forms a subobject of $\prod_{s\in S_0} O_\C^{\under{T\FSigma}_s}$.
\end{rem}

\begin{definition}
Given an $S_0$-indexed family $\mathcal{L} = ( L^s )_{s\in S_0}$ of reduced languages over $\Sigma$, the \emph{diagonal} of $\mathcal{L}$ is the reduced language $\Delta\mathcal{L}$ with $(\Delta\mathcal{L})_s = L_s^s: (T\FSigma)_s\to O_\D$ for $s\in S_0$.
\end{definition}

\begin{rem}
In analogy to Lemma \ref{lem:diagonalclosed}, a subobject $W_\Sigma\seq \Rec_{\F,R}(\Sigma)$ is closed under diagonals iff it has the form $\prod m_s\colon \prod_{s\in S_0} W_\Sigma' \monoto \prod_{s\in S_0} P(\hatt\FSigma)_s$ for monomorphisms $m_s$ in $\C$.
\end{rem}

\begin{definition}\label{def:derivatives_preimages_reduced} Let $L$ be a reduced language over $\Sigma\in\Set_f^S$. Then we define the following languages:
\begin{enumerate}
\item The \emph{derivative} $u^{-1}L$ of $L$ w.r.t. a unary operation $u\colon
(T\FSigma)_s\to(T\FSigma)_{t}$ with $s,t\in S_0$ is the reduced language over~$\Sigma$ defined by
\[
      \begin{cases}
        (u^{-1}L)_s = (T\FSigma)_s \xra{\;u\;} (T\FSigma)_{t} \xra{\;L_{t}\;} O_\D &\\
        (u^{-1}L)_r = (T\FSigma)_r \xra{\;\iota_\FSigma\;} (V\hatt\FSigma)_r \xra{\;V\bot\;}
        O_\D & \text{for $r\in S_0\setminus\{ s\}$.}
      \end{cases}
    \]
\item The \emph{preimage} $g^{-1}L$ of $L$ under a $\MT$-homomorphism $g\colon
  \MT\FDelta\to \MT\FSigma$ is the reduced language over~$\Delta$ defined for all $s\in S_0$ by 
\[ (g^{-1}L)_s = (T\FDelta)_s \xra{g_s} (T\FSigma)_s \xra{L_s} O_\D.\]
\end{enumerate}
\end{definition}

\begin{definition}[Local variety of reduced languages]\label{def:localvariety_reduced}
A \emph{local variety of reduced languages} over
    $\Sigma\in \A$ is a subobject
  \[W_\Sigma\seq \Rec_{\F,R}(\Sigma)\] in $\C$ closed under $\ol \U_\Sigma$-derivatives and diagonals. That is,
\begin{enumerate}[(i)]
\item for each $L\in W_\Sigma$ and $u\in\ol \U_\Sigma(s,t)$ with $s,t\in S_0$, one has $u^{-1}L\in W_\Sigma$;
\item for each $S_0$-indexed family  $\mathcal{L} = (\,L^s\,)_{s\in S_0}$ of reduced languages in $W_\Sigma$, one has $\Delta \mathcal{L}\in W_\Sigma$.
\end{enumerate}
\end{definition}

\begin{rem}\label{rem:diagclosed_reduced}
Again, if $\C$ is one of the four categories of Example \ref{ex:categories}, the closure under diagonals in the above definition can be dropped; see the argument in Remark \ref{rem:diagclosed}.
\end{rem}

\begin{theorem}[Local Variety Theorem for Reduced Languages]\label{thm:localeilenberg_reduced}
For each $\Sigma\in \A$, the lattice of local varieties of reduced languages over $\Sigma$ is isomorphic to the lattice local pseudovarieties of $\Sigma$-generated reduced $\F$-algebras.
\end{theorem}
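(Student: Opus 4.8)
The proof follows exactly the same strategy as the (non-reduced) Local Variety Theorem \ref{thm:localeilenberg}, namely \emph{duality plus the Local Reiterman Theorem}, only now in its reduced incarnation. First I would observe that a subobject $W_\Sigma\seq \Rec_{\F,R}(\Sigma)$ closed under diagonals is, by the reduced analogue of Lemma \ref{lem:diagonalclosed}, precisely a family of monomorphisms $m_s\colon (W'_\Sigma)_s\monoto P(\hatt\FSigma)_s$ in $\C$ indexed by $s\in S_0$. Dualising, this corresponds under the equivalence $P\colon\hatD\xra{\simeq}\C^{op}$ to a family of quotients $\phi_s\colon (\hatt\FSigma)_s\epito P_s$ in $\hatD$ for $s\in S_0$; equivalently, recording only the $S_0$-components and filling in the remaining sorts with the one-element algebra, to a \emph{reduced} quotient $\phi\colon\hatt\FSigma\epito P$ in $\hatD^S$ in the sense of Definition \ref{def:reducedquothatt}, since by Remark \ref{rem:reduction} such a quotient is completely determined by its $S_0$-components.

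Next I would prove the reduced counterpart of Lemma \ref{lem:localvar}: for $u\in\ol\U_\Sigma(s,t)$ with $s,t\in S_0$, the closure property $u^{-1}L\in W_\Sigma$ for all $L\in W_\Sigma$ is equivalent to the existence of a morphism $u'$ in $\C$ making the evident square with $P\hat u$ commute, hence dually to the extension $\hat u$ lifting along $\phi$. The computation is identical to that of Lemma \ref{lem:localvar}, using the description of $P(\hatt\FSigma)_s$ as carried by $\{L_s : L\in\Rec_{\F,R}(\Sigma)\}$ and the fact that $P\hat u$ sends $L_t$ to $L_t\o u$; the closure under diagonals and the presence of the ``empty language'' (via $\bot$) are again what let one pass from the purely pointwise statement $L_t\o u\in(W'_\Sigma)_s$ to $u^{-1}L\in W_\Sigma$. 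Combining this with the definition of a local variety of reduced languages (Definition \ref{def:localvariety_reduced}) shows that $W_\Sigma$ is a local variety of reduced languages \emph{iff} its dual $\phi$ is a reduced $\U_\Sigma$-quotient of $\hatt\FSigma$ in the sense of Definition \ref{def:reducedquothatt}. Thus the lattice of local varieties of reduced languages over $\Sigma$ is isomorphic to the lattice of reduced $\U_\Sigma$-quotients of $\hatt\FSigma$.

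Finally I would invoke the Local Reiterman Theorem for Reduced $\F$-algebras (Theorem \ref{thm:localreiterman_reduced}), which states that the lattice of reduced $\U_\Sigma$-quotients of $\hatt\FSigma$ is isomorphic to the lattice of local pseudovarieties of $\Sigma$-generated reduced $\F$-algebras. Composing the two isomorphisms (duality on one side, Theorem \ref{thm:localreiterman_reduced} on the other) yields the desired isomorphism and completes the proof. One must of course check that all three correspondences are order-preserving, but this is routine: duality reverses containment of subobjects into containment of quotients in the ``right'' direction, and the bijections of Theorem \ref{thm:localreiterman_reduced} are already shown to be order isomorphisms.

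I expect the only genuinely delicate point to be the bookkeeping in the first step: verifying that a subobject of $\Rec_{\F,R}(\Sigma)$ closed under diagonals dualises correctly to a \emph{reduced} quotient of $\hatt\FSigma$ (and not just a family of quotients of the $S_0$-sorted components). This hinges on Remark \ref{rem:reduction}, which guarantees that a reduced quotient is determined by its $S_0$-part, so that the passage between ``$S_0$-indexed family of quotients in $\hatD$'' and ``reduced quotient in $\hatD^S$'' is a bijection; once that identification is in place, everything else is a faithful transcription of the proof of Theorem \ref{thm:localeilenberg}, with $\ol\U_\Sigma(s,t)$ for $s,t\in S_0$ in place of $\U_\Sigma$ and Theorem \ref{thm:localreiterman_reduced} in place of Theorem \ref{thm:localreiterman}.
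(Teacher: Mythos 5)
Your proposal is correct and follows exactly the route the paper takes: the paper's own proof is the one-line ``duality + local Reiterman'' argument, dualizing local varieties of reduced languages to reduced $\U_\Sigma$-quotients of $\hatt\FSigma$ (via the reduced analogues of Lemma \ref{lem:diagonalclosed} and Lemma \ref{lem:localvar}) and then invoking Theorem \ref{thm:localreiterman_reduced}. The details you supply, including the identification of an $S_0$-indexed family of quotients with a reduced quotient of $\hatt\FSigma$ via Remark \ref{rem:reduction}, are precisely the ones the paper leaves implicit.
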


\begin{proof}
Duality + local Reiterman! In analogy to the proof of the Local Variety Theorem (see Theorem \ref{thm:localeilenberg}), local varieties of reduced languages over $\Sigma$ dualize precisely to the concept of a reduced $\U_\Sigma$-quotient of $\hatt\FSigma$ in $\hatD^S$. And by the local Reiterman theorem for reduced $\F$-algebras (see Theorem \ref{thm:localeilenberg_reduced}), the latter correspond to local pseudovarieties of $\Sigma$-generated reduced $\F$-algebras.
\end{proof}

\begin{definition}[Variety of languages]\label{def:variety_reduced}
A \emph{variety of reduced languages} is a family of local varieties of reduced languages
\[(\,W_\Sigma\seq \Rec_{\F,R}(\Sigma)\,)_{\Sigma\in\A}\] closed
  under preimages, i.e. for each $L\in W_\Sigma$ and each $\MT$-homomorphism $g\colon \MT\FDelta\to \MT\FSigma$ with $\Sigma,\Delta\in \A$ one has $g^{-1}L\in
  W_\Delta$.
\end{definition}
Then we get the following version of the Variety Theorem:

\begin{theorem}[Variety Theorem for Reduced Languages]
\label{thm:eilenberg_reduced}
The lattice of varieties of reduced languages (ordered by inclusion) is isomorphic to the lattice of
pseudovarieties of reduced $\F$-algebras.
\end{theorem}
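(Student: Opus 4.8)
The plan is to follow the same two-step recipe ``duality + Reiterman'' that proved the non-reduced Variety Theorem (Theorem \ref{thm:eilenberg}), now using its reduced counterparts. First I would observe that, by Definition \ref{def:variety_reduced}, a variety of reduced languages is a family $(W_\Sigma \seq \Rec_{\F,R}(\Sigma))_{\Sigma \in \A}$ each member of which is a local variety of reduced languages. By the Local Variety Theorem for Reduced Languages (Theorem \ref{thm:localeilenberg_reduced}), each such $W_\Sigma$ corresponds bijectively and order-preservingly to a reduced $\U_\Sigma$-quotient $\rho_\Sigma\colon \hatt\FSigma\epito P_\Sigma$ of $\hatt\FSigma$ in $\hatD^S$. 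It therefore remains to show that the family $(W_\Sigma)_\Sigma$ is closed under preimages of $\MT$-homomorphisms precisely when the dual family $(\rho_\Sigma)_\Sigma$ satisfies the compatibility clause in the definition of a reduced pro-$\F$ theory (Definition \ref{def:profinitetheoryreduced}); once this is done, the Reiterman Theorem for Reduced $\F$-algebras (Theorem \ref{thm:reiterman_reduced}) finishes the argument by identifying reduced pro-$\F$ theories with pseudovarieties of reduced $\F$-algebras.

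The key step is thus a reduced analogue of Lemma \ref{lem:var}. Here I would use the reduced version of Remark \ref{rem:ptsigma}: for $s\in S_0$ the object $P(\hatt\FSigma)_s$ may be taken to be carried by $\{\,L_s : L\in \Rec_\F(\Sigma)\,\}$ and $(W_\Sigma')_s$ by the subset $\{\,L_s : L\in W_\Sigma\,\}$, and by the definition of the continuous extension $\hat g$ (Lemma \ref{lem:gat}) the morphism $P\hat g_s$ sends $L_s$ to $L_s\o g_s$. Hence the existence of a morphism $g'\colon W_\Sigma'\to W_\Delta'$ in $\C^{S_0}$ making the reduced analogue of the square \eqref{eq:preclos} commute for every $s\in S_0$ is equivalent to $L_s\o g_s\in (W_\Delta')_s$ for all $L\in W_\Sigma$ and all $s\in S_0$. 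Since $(g^{-1}L)_s = L_s\o g_s$ for $s\in S_0$ by Definition \ref{def:derivatives_preimages_reduced} and $W_\Delta$ is closed under diagonals, this is in turn equivalent to $g^{-1}L\in W_\Delta$ for all $L\in W_\Sigma$. Dualising back through the equivalence $P\colon \hatD\xra{\simeq}\C^{op}$ together with the representation of a diagonal-closed subobject as a product of subobjects (the reduced analogue of Lemma \ref{lem:diagonalclosed}), closure under preimages becomes exactly the requirement that each $\hat g$ factors through the quotients $\rho_\Delta$ and $\rho_\Sigma$, i.e.\ the compatibility clause of a reduced pro-$\F$ theory.

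The remaining bookkeeping is routine: all three correspondences in play --- local varieties of reduced languages versus reduced $\U_\Sigma$-quotients of $\hatt\FSigma$, families of these versus reduced pro-$\F$ theories, and reduced pro-$\F$ theories versus pseudovarieties of reduced $\F$-algebras --- are lattice isomorphisms for the inclusion orderings, so their composite is the desired lattice isomorphism. The main obstacle is not conceptual but lies in checking that everything restricts correctly to the sorts in $S_0$: one must verify that the carrier identifications of Remark \ref{rem:ptsigma} survive when the product runs over $S_0\seq S$ rather than all of $S$, that the ``empty language'' used to extend a component $L_s\o g_s$ to a genuine reduced language in $W_\Delta$ is available (this is where closure under diagonals, together with the constant in the signature of $\C$, is used), and that the continuous extension $\hat g$ respects the reduced structure so that the dual of a reduced $\U_\Sigma$-quotient square is again a square of the required shape. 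These are all straightforward adaptations of the corresponding arguments in Sections \ref{sec:reclang} and \ref{sec:varietythm} (and of the proof of Theorem \ref{thm:localeilenberg}), which is why I would only sketch them rather than write them out in full.
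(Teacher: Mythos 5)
Your proposal is correct and follows exactly the paper's own route: the paper's proof is the one-line ``Duality + Reiterman!'' argument, deferring to the analogy with Theorem \ref{thm:eilenberg} for the dualization of closure under preimages into the compatibility clause of a reduced pro-$\F$ theory, and then invoking Theorem \ref{thm:reiterman_reduced}. The reduced analogue of Lemma \ref{lem:var} and the $S_0$-restricted version of Remark \ref{rem:ptsigma} that you spell out are precisely the details the paper leaves implicit, and your identification of where closure under diagonals and the constant in the signature of $\C$ are needed is accurate.
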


\begin{proof}
Duality + Reiterman! In analogy to the proof of Theorem \ref{thm:eilenberg}, varieties of reduced languages dualize precisely to pro-$\F$ theories, and the Reiterman theorem for reduced $\F$-algebras (see Theorem \ref{thm:reiterman_reduced}) asserts that the latter correspond to pseudovarieties of reduced $\F$-algebras.
\end{proof}

\section{Applications}\label{sec:applications}
In this section, we demonstrate that a  number of concrete Eilenberg-type correspondences, including \emph{new} results, emerge as special instances from our categorical framework. To derive a variety theorem for a given type of languages, we apply the following three-step procedure, as indicated in the introduction:

\smallskip
\noindent\textbf{Step 1.} Choose a monad $\MT$ and a class $\F$ of finite $\MT$-algebras such that the $\F$-recognizable languages coincide with the given type of languages.

\smallskip
\noindent\textbf{Step 2.} Choose a class $\A\seq\Set_f^S$ of alphabets and, for each $\Sigma\in\A$, a unary presentation $\U_\Sigma$ of the free $\MT$-algebra $\MT\FSigma$.

\smallskip
\noindent\textbf{Step 3.} Spell out what (local) varieties of languages and (local) pseudovarieties of $\F$-algebras are by instantiating our definitions to $\MT$, $\F$, $\A$ and $\U_\Sigma$. Then invoke the (reduced) Variety Theorem.

\subsection{The classical Eilenberg theorem}
As our first example, we show how to recover the original Eilenberg theorem for regular languages. It follows from the Variety Theorem by choosing the following parameters:
\begin{table}[h]
\begin{tabularx}{\columnwidth}{lX}
\hline \vspace{-0.4cm}\\
$\C/\D$         & $\BA$/$\Set$ \\
$\MT$ & the monad $\MT_*$ on $\Set$ with $\Alg{\MT_*} \cong$ monoids \\
$\F$ & all finite monoids\\
$\A$ &  $\Set_f$\\
$\U_\Sigma$ & $\{ x\mapsto yx,\, x\mapsto xy \;:\; y\in \Sigma^* \}$\\
\hline
\end{tabularx}
\end{table}

\smallskip
\noindent\textbf{Step 1.} Choose the free-monoid monad $\MT=\MT_\ast$ on $\Set$ and $\F=$ all finite monoids. Regular languages are precisely the languages recognizable by finite monoids, cf. Example \ref{ex:rec_monoids}.

\smallskip
\noindent\textbf{Step 2.} Choose $\A=\Set_f$. For each $\Sigma\in \A$, choose the unary presentation $\U_\Sigma$ of $\Sigma^*$ containing the operations $x\mapsto yx$ and $x\mapsto xy$, where $y\in\Sigma^*$; see Example \ref{ex:unpres_monoids}. Recall from Example \ref{ex:derivatives} that the $\U_\Sigma$-derivatives of a language $L\seq \Sigma^*$ are the classical derivatives.

\smallskip
\noindent\textbf{Step 3.} Instantiating Definition \ref{def:localvariety} and \ref{def:variety} to $\MT$, $\F$, $\A$ and $\U_\Sigma$ gives the following concepts:
\begin{enumerate}
\item A \emph{local variety of regular languages} is a subset $W_\Sigma\seq \Rec(\Sigma)$ languages closed under boolean operations and derivatives.
\item A \emph{variety of regular languages} is a family of local varieties $(W_\Sigma\seq \Rec(\Sigma))_{\Sigma\in\Set}$ closed under preimages, i.e. $g^{-1}L\in W_\Delta$ for every language $L\in W_\Sigma$ and every monoid morphism $g\colon \Delta^*\to\Sigma^*$. This is precisely Eilenberg's original concept.
\end{enumerate} 
Since $\A=\Set_f$, all finite monoids are $\A$-generated by Example \ref{ex:agenerated_talgs}. Therefore, instantiating Definition \ref{def:localpseudovariety} and \ref{def:pseudovar} gives:
\begin{enumerate}
\item A \emph{local pseudovariety of monoids over $\Sigma$} is class of $\Sigma$-generated finite monoids $e\colon \Sigma^*\epito M$ closed under quotients and subdirect products.
\item A \emph{pseudovariety of monoids} is a class of finite monoids closed under quotients, submonoids and finite products.
\end{enumerate}
Theorem \ref{thm:localeilenberg} and \ref{thm:eilenberg} give Eilenberg's theorem \cite{eilenberg76} and its local version due to Gehrke, Grigorieff and Pin \cite{ggp08}:
\begin{theorem}
The lattice of (local)  varieties of regular languages is isomorphic to the lattice of (local) pseudovarieties of monoids.
\end{theorem}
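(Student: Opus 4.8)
The statement is an immediate corollary of the two general results already established in the excerpt: the Local Variety Theorem (Theorem~\ref{thm:localeilenberg}) and the Variety Theorem (Theorem~\ref{thm:eilenberg}). So the plan is \emph{not} to re-prove anything, but merely to verify that the concrete parameters $\C=\BA$, $\D=\Set$, $\MT=\MT_*$, $\F=$ all finite monoids, $\A=\Set_f$, and $\U_\Sigma=\{x\mapsto yx,\ x\mapsto xy : y\in\Sigma^*\}$ satisfy the hypotheses under which those theorems were proved, and that the general notions they relate instantiate exactly to the classical ones. First I would check the standing assumptions: Assumptions~\ref{ass:dt} hold since $\Set$ is locally finite, its epimorphisms are surjections, $\MT_*$ preserves surjections (concatenation of words), and the class of all finite monoids is closed under quotients, submonoids, and finite products; Assumptions~\ref{ass:catframework} hold because $\BA$ is locally finite, its signature contains the constants $0,1$, and $\BA_f^{op}\simeq\Set_f$ by Stone duality (Example~\ref{ex:categories}(1)). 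Next I would recall from Example~\ref{ex:unpres_monoids} that $\U_\Sigma$ is indeed a unary presentation of the free monoid $\Sigma^*=\MT_*\Sigma$, which is the one remaining ingredient Step~2 of the general recipe requires.

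With the hypotheses in place, the second half of the argument is to unwind the general definitions. By Example~\ref{ex:rec_monoids}, $\Rec_\F(\Sigma)$ is the set of regular languages over $\Sigma$, and by Example~\ref{ex:c_operations} the induced $\BA$-structure on $\Rec_\F(\Sigma)$ consists of finite union, finite intersection, complement, $\emptyset$ and $\Sigma^*$. By Example~\ref{ex:derivatives}(1), the $\U_\Sigma$-derivatives of $L\subseteq\Sigma^*$ are exactly the classical left and right quotients $y^{-1}L$ and $Ly^{-1}$, and preimages under $\MT_*$-homomorphisms $g\colon\Delta^*\to\Sigma^*$ are exactly preimages under monoid morphisms. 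Since $S=1$ here, Remark~\ref{rem:diagclosed}(1) tells us closure under diagonals is automatic and drops out of Definition~\ref{def:localvariety}. Hence Definition~\ref{def:localvariety} instantiates to ``subset of $\Rec(\Sigma)$ closed under the boolean operations and under left/right quotients'' and Definition~\ref{def:variety} adds ``closed under preimages of monoid morphisms''---precisely Eilenberg's variety of regular languages and its local version of~\cite{ggp08}. On the algebraic side, since $\A=\Set_f$ every finite monoid is $\A$-generated (Example~\ref{ex:agenerated_talgs}), and finite products of finite monoids are $\A$-generated, so Definition~\ref{def:pseudovar} simplifies to ``class of finite monoids closed under quotients, submonoids, and finite products'' (using Remark~\ref{rem:subdirectproduct} for the local version).

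Finally I would simply cite Theorem~\ref{thm:localeilenberg} and Theorem~\ref{thm:eilenberg}: under the present parameters they assert exactly that the lattice of (local) varieties of regular languages is isomorphic to the lattice of (local) pseudovarieties of monoids, via $W\mapsto\T^{(\dash)}\mapsto\V^{(\dash)}$ composed with the lattice isomorphisms of Lemma~\ref{lem:ftheory_vs_pseudovar} and Lemma~\ref{lem:profiniteconcrete} (respectively the Local Reiterman Theorem~\ref{thm:localreiterman}). There is essentially no obstacle here---the only mild care needed is the bookkeeping of the translation between ``local pseudovariety'' as an ideal of $\Quo(\MT\FSigma,\F)$ and the more familiar formulation in terms of $\Sigma$-generated monoids closed under quotients and subdirect products, which is handled by Remark~\ref{rem:subdirectproduct}.
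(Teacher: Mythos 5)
Your proposal is correct and follows essentially the same route as the paper: it instantiates the general framework with $\C=\BA$, $\D=\Set$, $\MT=\MT_*$, $\F=$ all finite monoids, $\A=\Set_f$, and the left/right-multiplication unary presentation, unwinds the definitions via the cited examples, and invokes Theorems~\ref{thm:localeilenberg} and~\ref{thm:eilenberg}. The only difference is that you explicitly verify Assumptions~\ref{ass:dt} and~\ref{ass:catframework}, which the paper leaves implicit; this is harmless extra care rather than a different argument.
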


\subsection{The variety theorem for $\D$-monoids}
Eilenberg's theorem \cite{eilenberg76} has several variants in the literature that consider classes of languages with modified closure properties (e.g. dropping closure under complement) and replace monoids by ordered monoids \cite{pin95}, idempotent semirings \cite{polak01} and $\Field$-algebras \cite{reu80}.  In our previous work \cite{ammu14,ammu15,cu15} we unified these results to an Eilenberg theorem for $\D$-monoids in an arbitrary commutative variety $\D$. Our approach in loc. cit.  was automata-theoretic in nature and exploited the fact that finite automata can be dually modeled as algebras and coalgebras for suitable functors. Here we give another, completely algebraic proof: we show how to deduce the variety theorem for $\D$-monoids in our categorical framework.

In this subsection, we assume that $\D$ is a \emph{commutative} variety of algebras or ordered algebras. This means that for any $A,B\in \D$ 
the 
hom-set $\D(A,B)$ carries a subobject of $B^\under{A}$ in $\D$. For example, the varieties $\D=\Set$, $\Pos$, $\JSL$ and $\Vect_{\Field}$ of Example \ref{ex:categories} are commutative. Given $A,B,C\in \D$, by a \emph{$\D$-bimorphism} from $A,B$ to $C$ is meant a function $f\colon \under{A}\times\under{B}\to \under{C}$ such that the maps $f(a,\dash)\colon B\to C$ and $f(\dash,b)\colon A\to C$ are morphisms of $\D$ for every $a\in \under{A}$ and $b\in \under{B}$. The \emph{tensor product} of $A$ and $B$ is an object $A\otimes B$ in $\D$ equipped with a universal bimorphism $t_{A,B}\colon \under{A}\times \under{B}\to \under{A\otimes B}$, i.e. such that for every bimorphism $f\colon \under{A}\times\under{B}\to\under{C}$ there exists a unique morphism $\ol f\colon A\otimes B\to C$ in $\D$ making the following triangle commutative:
\[
\xymatrix{
\under{A}\times\under{B} \ar[r]^{t_{A,B}} \ar[dr]_f & \under{A\otimes B} \ar[d]^{\ol f}\\
& \under{C}
}
\]
Tensor products exist in every commutative variety $\D$, making $(\D,\otimes,\one_\D)$  a symmetric monoidal closed 
category; see e.g. \cite[Theorem 3.10.3]{Bor09}. Since the tensor product $\otimes$ represents bimorphisms, the monoid objects in the category $(\D,\otimes,\one_\D)$ have the following concrete algebraic description: a \emph{$\D$-monoid} $(D,\bullet,1)$ consists of an
object $D\in\D$ equipped with a monoid structure $(\under{D},\bullet,1)$ on the underlying set whose multiplication $\bullet\colon \under{D}\times \under{D}\to \under{D}$ is
a \emph{$\D$-bimorphism}. A \emph{morphism} of $\D$-monoids is a 
morphism in $\D$ that preserves the monoid structure. We denote by $\Mon{\D}$ the category of $\D$-monoids and their morphisms. Note that $\D$-monoids in $\D=\Set$, $\Pos$, $\JSL$ and $\Vect_{\Field}$ are precisely classical monoids, ordered monoids, idempotent semirings, and $\Field$-algebras.

To apply our framework, we need to investigate the construction of free $\D$-monoids. Consider the following diagram of adjunctions, where $\CatMon$ is the category of monoids (in $\Set$), the functors $U$ , $U'$, $\under{\dash}$ are the forgetful functors and $F$, $F'$, $(\dash)^*$, $\Psi$ are their left adjoints. By the uniqueness of adjoints, both the outer and the inner square commute.
\[
\xymatrix{
\D \ar@<-1ex>[rr]_{F'} \ar@<-1ex>[d]_{\under{\mathord{-}}} && \Mon{\D} 
\ar@<1ex>[d]^U_\dashv \ar@<-1.5ex>[ll]_{U'}^\top \\
\Set \ar@<-1.5ex>[u]_\Psi^\vdash \ar@<1ex>[rr]^{(\mathord{-})^*}_\bot && 
\CatMon \ar@<1.5ex>[ll]^{\under{\mathord{-}}} \ar@<1.5ex>[u]^F
}
\]
For notational simplicity, let us assume that $X\seq\under{\Psi X}$ for every set $X$ and that the unit $\eta_X\colon X\to \under{\Psi X}$ of the adjunction $\Psi\dashv \under{\dash}\colon \D\to\Set$ is the inclusion map. The left adjoint $F$ sends a monoid $M=(M,\o,e)$ to the $\D$-monoid $FM = 
(\Psi\under{M},\bullet,e)$, where the multiplication
$\bullet\colon \Psi\under{M}\times \Psi\under{M} \to \Psi\under{M}$ is 
the unique 
bimorphism in $\D$ that extends the multiplication $\o\colon \under{M}\times 
\under{M}\to\under{M}$ of $M$. Moreover, $F$ sends a monoid morphism $h\colon M\to M'$ to the $\D$-monoid morphism
$\Psi \under{h}\colon \Psi \under{M}\to \Psi \under{M'}$. This implies that the 
free $\D$-monoid on a 
free object $\FSigma=\Psi \Sigma$ in $\D$ is given by
\begin{equation}\label{eq:tm}
 F'\FSigma = F'\Psi\Sigma = F\Sigma^* = 
(\Psi\Sigma^*,\bullet,\epsilon), \end{equation}
where $\epsilon$ is the empty word and $\bullet$ extends the concatentation of 
words. In the following, we write $\FSigma^*$ for the object $\Psi\Sigma^*\in \D$.

\noindent After these preparations, let us apply our three-step plan to derive the variety theorem for $\D$-monoids. We choose the parameters as follows:
\begin{table}[h]
\begin{tabularx}{\columnwidth}{lX}
\hline \vspace{-0.4cm}\\
$\C/\D$         & varieties satisfying the Assumptions \ref{ass:dt} and \ref{ass:catframework}(i),(iii) with $\D$ commutative \\
$\MT$ & the monad $\MT_M$ on $\D$ with $\Alg{\MT_M} \cong \Mon{\D}$ \\
$\F$ & all finite $\D$-monoids\\
$\A$ &  $\Set_f$\\
$\U_\Sigma$ & $\{\, \FSigma^* \xra{y\bullet \dash} \FSigma^*,\,\FSigma^* \xra{\dash\bullet y} \FSigma^* : y\in\Sigma^*\,\}$\\
\hline
\end{tabularx}
\end{table}

\smallskip
\noindent\textbf{Step 1.} Let $\MT=\MT_M$ be the monad on $\D$
associated to the adjunction $F'\dashv U'$. Since both $\D$ and $\Mon{\D}$ are varieties of (ordered) algebras, the forgetful functors $\D\to \Set$ and $\Mon{\D}\to \Set$ (resp. $\D\to \Pos$ and $\Mon{\D}\to \Pos$) are monadic. Therefore, $U'$ is monadic by~\cite[Corollary~4.5.7]{Bor09}, and thus $\Alg(\MT_M) \cong \Mon{\D}$. A language $L\colon 
T_M\FSigma = \FSigma^* \to O_\D$ in $\D$ 
corresponds (via 
the adjunction $\Psi\dashv \under{\mathord{-}}\colon \D\to \Set$) to a map 
$L^@\colon \Sigma^*\to \under{O_\D}$, and conversely a map $L\colon \Sigma^*\to \under{O_\D}$ can be uniquely extended to a language $L^\D\colon \FSigma^*\to O_\D$. Let us call a map of the form $L\colon \Sigma^*\to \under{O_\D}$ a \emph{behavior}. It is \emph{regular} if is computed by some finite  automaton with output set $\under{O_\D}$. Equivalently, there exists a monoid morphism $h\colon \Sigma^*\to M$ with finite codomain and a function $p\colon M\to \under{O_\D}$ with $L=p\o h$. This is completely analogous to the characterization of regular languages via finite monoids, cf. Example \ref{ex:rec_monoids}.

\medskip
\noindent\textbf{Claim.}
A language $L\colon \FSigma^*\to O_\D$ is recognizable iff the behavior $L^@$ is regular.

\begin{proof}
If $L$ is recognizable, there exists a $\D$-monoid 
morphism $h\colon \FSigma^*\to D$, where $D$ is finite, and a morphism $p\colon D\to 
O_\D$ in $\D$ with $L=p\o h$. Then $h$ restricts to a monoid morphism 
\[h' = (\Sigma^* \monoto U\FSigma^* \xra{Uh} UD ) \]
that recognizes $L^@$ via $\under{p}$. Therefore $L^@$ is regular.

Conversely, suppose that $L^@$ is a regular behavior. Then $L^@$ is
monoid-recognizable (in $\Set$), so there exists a monoid morphism
$h\colon \Sigma^*\to M$, where $M$ is a finite monoid, and a function
$p\colon M\to \under{O_\D}$ such that $L^@=p\o h$. Let $p'\colon \Psi M\to O_\D$
in $\D$ be the adjoint transpose of $p$ (via the adjunction
$\Psi \dashv \under{\mathord{-}}\colon \D\to \Set$). Then
$\Psi h\colon \FSigma^*\to \Psi M$ is a $\D$-monoid morphism that
recognizes $L$ via $p'$, where $\Psi M$ is finite because $\D$ is
assumed to be a locally finite variety (see
Assumptions~\ref{ass:dt}).
\end{proof}
\noindent\textbf{Step 2.} Let $\A=\Set_f$. For each $\Sigma\in \A$, a unary presentation of the free $\D$-monoid $\FSigma^*$ is given by 
\[\U_\Sigma = 
\{\, \FSigma^* \xra{y\bullet \dash} \FSigma^*,\,\FSigma^* \xra{\dash\bullet y} \FSigma^* : y\in\Sigma^*\,\}.\]
To see this, we need to show that a $\D$-congruence $\equiv$ in $\FSigma^*$ forms a $\D$-monoid congruence (i.e. $x\equiv x'$ and $y\equiv y'$ implies $x\bullet y \equiv x'\bullet y'$) iff $\equiv$ is stable under left and right multiplication by words in $\Sigma^*\seq \FSigma^*$. The ``only'' if direction is clear. For the ``if'' direction, suppose that $\equiv$ is stable under left and right multiplication by words. It suffices to show that $\equiv$ is stable under left and right multiplication: $x\equiv x'$ implies $y\bullet x \equiv y\bullet x'$ and $x\bullet y \equiv x'\bullet y$ for all $y\in\FSigma^*$. By symmetry, we only show the stability under right multiplication. Thus let $x,x'\in\FSigma^*$ with $x\equiv x'$ and $y\in\FSigma^*$. Since $\FSigma^*$ is generated by the set $\Sigma^*$ as a $\D$-algebra, we can express $y$ as a term over the generators; that is, there is a term $t$ over the signature of $\D$ with $y=t^{\FSigma^*}(y_1,\ldots,y_n)$, where $y_1,\ldots, y_n\in\Sigma^*$. It follows that
\begin{align*}
x\bullet y &= x\bullet t^{\FSigma^*}(y_1,\ldots,y_n)\\
&= t^{\FSigma^*}(x\bullet y_1,\ldots,x\bullet y_n)\\
&\equiv t^{\FSigma^*}(x'\bullet y_1,\ldots, x'\bullet y_n)\\
&=  x'\bullet t^{\FSigma^*}(y_1,\ldots,y_n)\\
&= x'\bullet y.
\end{align*}
In the second and fourth step we use that $x\bullet\dash$ and $x'\bullet \dash$ are morphisms on $\FSigma^*$ in $\D$, and that morphisms preserve term operations. In the third step, we use that $x\bullet y_i \equiv x'\bullet y_i$ ($i=1,\ldots,n)$ since $x\equiv x'$ and $\equiv$ is stable under right multiplication by words, and that $\equiv$ is a $\D$-congruence (and thus stable under term operations).

\smallskip
\noindent\textbf{Step 3.} 
Instantiating Definition \ref{def:variety} to $\MT$, $\F$, $\A$ and $\U_\Sigma$ (and identifying recognizable languages with regular behaviors $L\colon \Sigma^*\to \under{O_\D}$ as explained above) gives the following concepts:
\begin{enumerate}
\item A \emph{local variety of regular behaviors in $\C$ over $\Sigma\in\Set_f$} is a set $W_\Sigma$ of regular behaviors $L\colon \Sigma^*\to \under{O_\D}$ closed under $\C$-algebraic operations (cf. Remark \ref{rem:reg-as-C} and Example \ref{ex:ocod}) and derivatives; that is, for every behavior $L\colon \Sigma^*\to \under{O_\D}$ in $W_\Sigma$ and every word $y\in\Sigma^*$, the behaviors $y^{-1}L$ and $Ly^{-1}$ defined by
\[ y^{-1}L(x) = L(yx)\quad\text{and}\quad Ly^{-1}(x)=L(xy) \]
lie in $W_\Sigma$.
\item A \emph{variety of regular behaviors in $\C$} is a family $(W_\Sigma)_{\Sigma\in\Set}$ of local varieties of regular behaviors closed under preimages of $\D$-monoid morphisms; that is, for every $\D$-monoid morphism $h\colon \FDelta^*\to \FSigma^*$ and $L\colon \Sigma^*\to \under{O_\D}$ in $W_\Sigma$, the behavior 
\[ h^{-1}L = (\Delta^*\monoto \FDelta^*\xra{h}\FSigma^*\xra{L^\D} O_\D)\]
lies in $W_\Delta$.
\end{enumerate}
Since $\A=\Set_f$, all finite $\D$-monoids are $\A$-generated by Example \ref{ex:agenerated_talgs}. Therefore, instantiating Definition \ref{def:localpseudovariety} and \ref{def:pseudovar} gives:
\begin{enumerate}
\item A \emph{local pseudovariety $\Sigma$-generated $\D$-monoids} is class of $\Sigma$-generated finite $\D$-monoids $e\colon \FSigma^*\epito M$ closed under quotients and subdirect products.
\item A \emph{pseudovariety of $\D$-monoids} is a class of finite $\D$-monoids closed under quotients, $\D$-submonoids and finite products.
\end{enumerate}

\smallskip
\noindent Theorem \ref{thm:localeilenberg} and \ref{thm:eilenberg} then specialize to 
\begin{theorem}[see \cite{ammu14,ammu15,cu15}]\label{thm:eilenberg_finitewords}
Let $\C$ and $\D$ be varieties satisfying the Assumptions \ref{ass:dt} and \ref{ass:catframework}(i),(iii), and suppose that $\D$ is commutative. Then the lattice of (local)  varieties of regular behaviors in $\C$ is isomorphic to the lattice of (local) pseudovarieties of $\D$-monoids.
\end{theorem}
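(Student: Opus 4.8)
The plan is to recognise Theorem~\ref{thm:eilenberg_finitewords} as a direct instance of the general Local Variety Theorem~\ref{thm:localeilenberg} and Variety Theorem~\ref{thm:eilenberg}, applied to the data $\D$, $S=1$, $\MT=\MT_M$, $\F=$ all finite $\D$-monoids, $\A=\Set_f$ and the unary presentations $\U_\Sigma$ listed above. Almost all of the work has already been carried out in Steps~1--3; what remains is to check that the standing assumptions of the framework are met and to verify that the concrete notions produced by Steps~1--3 are literally the classical notions appearing in the statement.

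First I would check the hypotheses. The quadruple $(\D,1,\MT_M,\F)$ satisfies Assumptions~\ref{ass:dt}: (i) local finiteness and (ii) surjectivity of epimorphisms in $\D$ are part of the hypothesis; (iii) $T_M$ preserves epimorphisms because, by~\eqref{eq:tm} and the monadicity of $U'$, a surjection of $\D$-monoids is a surjective map in $\D$ and $T_M$ is a free-algebra functor for a variety; (iv) finite $\D$-monoids are closed under $\D$-monoid quotients, $\D$-submonoids and finite products, each of which is again finite since $\D$ is locally finite. For Assumptions~\ref{ass:catframework}, items (i) and (iii) are hypotheses, and since $S=1$ item~(ii) (existence of a constant in the signature of $\C$) is not needed. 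Step~2 already establishes that $\U_\Sigma$ is a unary presentation of $\MT_M\FSigma=\FSigma^*$, so all parameters are admissible.

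Next I would spell out the dictionary. By the Claim in Step~1, the $\F$-recognizable languages $L\colon\FSigma^*\to O_\D$ correspond bijectively to the regular behaviors $L^@\colon\Sigma^*\to\under{O_\D}$, and $\Rec_\F(\Sigma)$ carries the $\C$-algebraic structure of Remark~\ref{rem:reg-as-C} (concretely as in Example~\ref{ex:ocod}); the $\U_\Sigma$-derivatives are exactly $y^{-1}L$ and $Ly^{-1}$ for $y\in\Sigma^*$. Since $S=1$, closure under diagonals in Definitions~\ref{def:localvariety} and~\ref{def:variety} is vacuous by Remark~\ref{rem:diagclosed}(1), so a local variety of $\F$-recognizable languages over $\Sigma$ is precisely a local variety of regular behaviors in $\C$, and a variety of such languages is precisely a variety of regular behaviors in $\C$. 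On the algebraic side, $\A=\Set_f$ makes every finite $\D$-monoid $\A$-generated (Example~\ref{ex:agenerated_talgs}), so instantiating Definitions~\ref{def:localpseudovariety} and~\ref{def:pseudovar} yields exactly the local pseudovarieties of $\Sigma$-generated $\D$-monoids and the pseudovarieties of $\D$-monoids. Thus the two lattices in Theorem~\ref{thm:localeilenberg} (resp.~Theorem~\ref{thm:eilenberg}) are identified with the two lattices in the statement, and the theorem follows.

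I expect the main obstacle to be the verification in Step~2 that $\U_\Sigma$ really is a unary presentation: one must show that a $\D$-congruence on $\FSigma^*$ that is stable under left and right multiplication by \emph{words} in $\Sigma^*$ is in fact stable under multiplication by \emph{arbitrary} elements of $\FSigma^*$. This is the argument already sketched above, using that $\FSigma^*$ is generated by $\Sigma^*$ as a $\D$-algebra and that each translation $x\bullet(\dash)$ is a $\D$-morphism, hence commutes with the term operations of $\D$; one also observes that, because the $\D$-monoid multiplication is finitary, the restriction to $\F$-refinable quotients in Definition~\ref{def:unpres} is not needed here. A secondary point to get right is the Claim identifying recognizable languages with regular behaviors, which hinges on $\Psi M$ being finite whenever $M$ is --- a consequence of local finiteness of $\D$.
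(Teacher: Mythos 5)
Your proposal is correct and follows essentially the same route as the paper: Section~\ref{sec:applications} derives Theorem~\ref{thm:eilenberg_finitewords} exactly by the three-step instantiation you describe (monad $\MT_M$ with $\Alg{\MT_M}\cong\Mon{\D}$ and the Claim identifying recognizable languages with regular behaviors; the unary presentation by left/right translations by words, verified via the generation of $\FSigma^*$ by $\Sigma^*$ and the bimorphism property; then specializing Definitions~\ref{def:localvariety}, \ref{def:variety}, \ref{def:localpseudovariety}, \ref{def:pseudovar} and invoking Theorems~\ref{thm:localeilenberg} and~\ref{thm:eilenberg}). Your extra remarks — the explicit check of Assumptions~\ref{ass:dt}(iii),(iv), the observation that $S=1$ renders the constant/diagonal conditions vacuous, and that $\F$-refinability is not needed since the multiplication is finitary — are all consistent with (and slightly more careful than) what the paper records.
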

Four concrete instances are listed below. The fourth column states what
$\D$-monoids are, and the third one
describes the $\C$-algebraic operations under which local varieties are closed. In addition, local varieties are closed under derivatives, and varieties are closed preimages of $\D$-monoid morphisms. All these correspondences are known in
the literature, and are uniformly covered by Theorem
\ref{thm:eilenberg_finitewords}.

\begin{table}[h]
\begin{tabularx}{\columnwidth}{llllll}
\hline \vspace{-0.4cm}\\
$\C$ & $\D$ & (local) var. of behav. closed under  & $\cong$ & (local) pseudovarieties of 
& proved in 
\\
\hline
$\BA$ & $\Set$ & boolean operations && monoids & \cite{eilenberg76,ggp08} \\
$\DL$ & $\Pos$ & finite union and finite intersection && ordered monoids &  
\cite{pin95,ggp08} \\
$\JSL$ & $\JSL$ & finite union && idempotent semirings & 
\cite{polak01} \\
$\Vect_{\Field}$ & $\Vect_{\Field}$ & addition of weighted languages && $\Field$-algebras 
& \cite{reu80} \\
\hline
\end{tabularx}
\end{table}

\subsection{Variety theorems for $\Gamma$-algebras}
In \cite{almeida90,sal04}, a generic Eilenberg correspondence for algebras over a finitary $S$-sorted signature $\Gamma$ is established. It follows from our Variety Theorem via the following parameters:
\begin{table}[h]
\begin{tabularx}{\columnwidth}{lX}
\hline \vspace{-0.4cm}\\
$\C/\D$         & $\BA$/$\Set$ \\
$\MT$ & the monad $\MT_\Gamma$ on $\Set^S$ with $\Alg{\MT_\Gamma} \cong$ $\Gamma$-algebras \\
$\F$ & all finite $\Gamma$-algebras\\
$\A$ &  $\Set_f^S$\\
$\U_\Sigma$ & all elementary translations\\
\hline
\end{tabularx}
\end{table}

\smallskip
\noindent\textbf{Step 1.} 
Let $\Gamma$ be a finitary $S$-sorted signature and let $\MT=\MT_\Gamma$ be the free $\Gamma$-algebra monad on $\Set^S$. Thus $\Alg{\MT_\Gamma}$ is the category of $\Gamma$-algebras. Put $\F$ = all finite $\Gamma$-algebras. Then $T_\Gamma(\Sigma)$ is the $S$-sorted set of $\Gamma$-trees with leaves labeled by constants from $\Gamma$ or elements of $\Sigma$. A language $L\colon T_\Gamma(\Sigma)\to O_\Set$ over $\Sigma$ corresponds to a tree language $L\seq T_\Gamma(\Sigma)$. It is recognizable iff it is accepted by some finite tree automaton, see e.g. \cite{tata2007}.

\smallskip
\noindent\textbf{Step 2.} 
Let $\A=\Set_f^S$. For each $\Sigma\in\A$, choose the unary presentation of $\MT_\Gamma\Sigma$ given by the elementary translations on $T_\Gamma(\Sigma)$, see Example \ref{ex:unpres_monads}.

\smallskip
\noindent\textbf{Step 3.} 
 Instantiating Definition \ref{def:localvariety} and \ref{def:variety} to $\MT$, $\F$, $\A$ and $\U_\Sigma$ gives the following concepts:
\begin{enumerate}
\item A \emph{local variety of recognizable tree languages over $\Sigma\in\Set_f^S$} is a set $W_\Sigma$ of recognizable tree languages $L\seq T_\Gamma(\Sigma)$ closed under boolean operations (finite union, finite intersection, complement) and derivatives w.r.t. elementary translations.
\item A \emph{variety of recognizable tree languages} is a family $(W_\Sigma\seq\Rec(\Sigma))_{\Sigma\in\A}$ of local varieties closed under preimages: for every $\Gamma$-homomorphism $g\colon T_\Gamma(\Delta)\to T_\Gamma(\Sigma)$ and $L\in W_\Sigma$, the language
$g^{-1}L\seq T_\Gamma(\Delta)$ with $(g^{-1}L)_s = \{x\in (T_\Gamma(\Delta))_s : g_s(x)\in L_s \}$ lies in $W_\Delta$.
\end{enumerate}
Since $\A=\Set_f^S$, every finite $\MT$-algebra is $\A$-generated by Example \ref{ex:agenerated_talgs}. Therefore, instantiating Definition \ref{def:localpseudovariety} and \ref{def:pseudovar} gives:
\begin{enumerate}
\item A \emph{local pseudovariety of $\Sigma$-generated $\Gamma$-algebras} is a class of finite quotient $\Gamma$-algebras $e\colon \MT_\Gamma\Sigma\epito A$ closed under quotients and subdirect products.
\item A \emph{pseudovariety of $\Gamma$-algebras} is a class of finite $\Gamma$-algebras closed under quotients, subalgebras, and finite products.
\end{enumerate}

\smallskip
\noindent As a special case of Theorem \ref{thm:localeilenberg} and \ref{thm:eilenberg} we get:

\begin{theorem}\label{thm:gammaalgs}
The lattice  of (local) varieties of recognizable $\Gamma$-tree languages is 
isomorphic to the lattice of (local) pseudovarieties of $\Gamma$-algebras.
\end{theorem}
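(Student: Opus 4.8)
The plan is to obtain Theorem~\ref{thm:gammaalgs} as a direct instance of the Local Variety Theorem (Theorem~\ref{thm:localeilenberg}) and the Variety Theorem (Theorem~\ref{thm:eilenberg}). All of the genuine mathematical content has already been proved in Sections~\ref{sec:pseudovar}--\ref{sec:varietythm}; what remains is to check that the data specified in Steps~1--3 of this subsection satisfy the standing assumptions, that the $\F$-recognizable languages are exactly the recognizable $\Gamma$-tree languages, and that the instantiated definitions unfold to the notions stated in Step~3.

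First I would verify Assumptions~\ref{ass:dt} for $\D=\Set$ and $\MT=\MT_\Gamma$: the variety $\Set$ (over the empty signature) is locally finite, its epimorphisms are exactly the surjections, the free-$\Gamma$-algebra functor $T_\Gamma$ preserves surjections since a surjection $X\epito Y$ induces a surjection on $\Gamma$-trees by relabelling leaves, and the class $\F$ of all finite $\Gamma$-algebras is trivially closed under quotients, subalgebras and finite products. Next I would check Assumptions~\ref{ass:catframework} for $\C=\BA$: $\BA$ is locally finite, its signature contains the constants $0$ and $1$, and by Stone duality the categories $\BA_f$ and $\Set_f$ are dually equivalent (a finite boolean algebra corresponds to its set of atoms), cf.\ Example~\ref{ex:categories}(1). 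Since $S$ is finite, Remark~\ref{rem:hatdlfcp} applies; here $\hatD=\widehat{\Set}=\Stone$ is the category of Stone spaces.

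Then I would invoke Example~\ref{ex:unpres_gammaalgs}: for every $\Sigma\in\A=\Set_f^S$ the set $\U_\Sigma$ of elementary translations on $T_\Gamma(\Sigma)$ is a unary presentation of the free $\MT_\Gamma$-algebra $\MT_\Gamma\Sigma$ --- and, as $\Gamma$ is finitary, the restriction to $\F$-refinable quotients in Definition~\ref{def:unpres} may be dropped, so that $\U_\Sigma$-congruences are precisely $\Gamma$-congruences in the classical sense. With $\U_\Sigma$ fixed, the derivatives of Definition~\ref{def:derivatives_preimages} are exactly the derivatives with respect to elementary translations, and, since $\C=\BA$ and $\U_\Sigma$ contains all identity maps, closure under diagonals is automatic by Remark~\ref{rem:diagclosed}(2), hence need not be listed as an explicit property. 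Unwinding Definitions~\ref{def:localvariety} and \ref{def:variety} then yields the notions of local variety and variety of recognizable $\Gamma$-tree languages; unwinding Definitions~\ref{def:localpseudovariety} and \ref{def:pseudovar} yields local pseudovarieties and pseudovarieties of $\Gamma$-algebras, where --- because $\A=\Set_f^S$, so that every finite $\Gamma$-algebra is $\A$-generated by Example~\ref{ex:agenerated_talgs} and $\A$-generated subalgebras of finite products are just subalgebras --- a pseudovariety is a class of finite $\Gamma$-algebras closed under quotients, subalgebras and finite products. Theorem~\ref{thm:localeilenberg} then gives the local isomorphism and Theorem~\ref{thm:eilenberg} the global one.

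The only step requiring more than bookkeeping is the identification of the abstract $\F$-recognizable languages over $\Sigma$ with the tree languages $L\seq T_\Gamma(\Sigma)$ accepted by finite $\Gamma$-tree automata; this amounts to the standard fact that recognition by a finite $\Gamma$-algebra coincides with acceptance by a finite bottom-up tree automaton, for which I would cite~\cite{tata2007}. I do not expect this, or any other step, to pose a real obstacle: the construction is a routine instantiation, and the substance lies in the general theorems rather than in their present specialization.
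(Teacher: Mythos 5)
Your proposal is correct and follows essentially the same route as the paper: it is the three-step instantiation of Section~\ref{sec:applications} with $\C/\D=\BA/\Set$, $\MT=\MT_\Gamma$, $\F$ all finite $\Gamma$-algebras, $\A=\Set_f^S$ and the elementary-translation presentation, followed by an appeal to Theorems~\ref{thm:localeilenberg} and~\ref{thm:eilenberg}. The additional explicit verification of Assumptions~\ref{ass:dt} and~\ref{ass:catframework} is left implicit in the paper but is a harmless (indeed welcome) elaboration.
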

The non-local version is due to Almeida \cite{almeida90} in the single-sorted case, and due to Salehi and Steinby \cite{sal04} in the many-sorted case. The local version is a new result. Similarly, by taking $\D=\Pos$ and the monad $\MT_{\Gamma,\leq}$ on $\Pos^S$ representing \emph{ordered} $\Gamma$-algebras, one gets the concept of a \emph{positive variety of recognizable tree languages}, which emerges from the above concept by dropping closure under complement (since $\C=\DL$). This leads to the following variety theorem:

\begin{theorem}\label{thm:gammaalgs_positive}
The lattice  of (local) positive varieties of recognizable $\Gamma$-tree languages is 
isomorphic to the lattice of (local) pseudovarieties of ordered $\Gamma$-algebras.
\end{theorem}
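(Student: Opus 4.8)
The plan is to obtain Theorem \ref{thm:gammaalgs_positive} as a direct instance of the (Local) Variety Theorem, following the same three-step recipe used for Theorem \ref{thm:gammaalgs} but with $\C=\DL$ and $\D=\Pos$ in place of $\C=\BA$ and $\D=\Set$. Concretely, I would take $\MT=\MT_{\Gamma,\leq}$ to be the free-algebra monad on $\Pos^S$ whose Eilenberg--Moore category is the category of ordered $\Gamma$-algebras (i.e.\ $\Gamma$-algebras carried by a poset with monotone operations), $\F=$ all finite ordered $\Gamma$-algebras, $\A=\Set_f^S$, and $\U_\Sigma=$ the set of all elementary translations on $T_{\Gamma,\leq}\FSigma$ together with all identity morphisms $\id_{(T_{\Gamma,\leq}\FSigma)_s}$.

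First I would check that Assumptions \ref{ass:dt} and \ref{ass:catframework} are satisfied. For \ref{ass:dt}: $\Pos$ is locally finite, its epimorphisms are exactly the surjective monotone maps, $T_{\Gamma,\leq}$ preserves surjections (lift each argument of a term along a surjection), and $\F$ is plainly closed under ordered quotients, order-reflecting subalgebras and finite products. For \ref{ass:catframework}: $\DL$ is locally finite, its signature contains the constants $0,1$, and Birkhoff duality (Example \ref{ex:categories}(2)) gives the dual equivalence $(\DL)_f^{op}\simeq \Pos_f$. Since $\Sigma\in\Set_f^S$ carries the discrete order and $\Gamma$ has no inequations, the free ordered $\Gamma$-algebra $T_{\Gamma,\leq}\FSigma$ is the usual term algebra $T_\Gamma(\Sigma)$ with the discrete order; hence a language $L\colon T_{\Gamma,\leq}\FSigma\to O_\Pos=\{0<1\}$ in $\Pos^S$ is an arbitrary subset of $T_\Gamma(\Sigma)$, and $L$ is $\F$-recognizable iff $L=h^{-1}[Y]$ for a monotone $\Gamma$-homomorphism $h$ into a finite ordered $\Gamma$-algebra and an up-set $Y$. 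This is precisely the class of recognizable $\Gamma$-tree languages, exactly as in the discussion preceding Theorem \ref{thm:gammaalgs}.

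Next I would invoke Example \ref{ex:unpres_gammaalgs}, which already records that the elementary translations form a unary presentation of $\MT_{\Gamma,\leq}\FSigma$ in the ordered case as well; adjoining the identity translations is harmless, since identities lift along every quotient. It then only remains to instantiate Definitions \ref{def:localvariety}, \ref{def:variety}, \ref{def:localpseudovariety} and \ref{def:pseudovar}. Because $\C=\DL$, the $\C$-algebraic operations on $\Rec_\F(\Sigma)$ are finite union and finite intersection (together with $\emptyset$ and $T_\Gamma(\Sigma)$) but not complement --- this is exactly what turns ``variety of recognizable tree languages'' into ``positive variety''. Closure under diagonals is automatic by Remark \ref{rem:diagclosed}(2), since $\U_\Sigma$ contains all identities and $W_\Sigma$, being a subobject in $\DL$, is closed under union. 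On the algebraic side, since $\A=\Set_f^S$ every finite ordered $\Gamma$-algebra is $\A$-generated (Example \ref{ex:agenerated_talgs}), so a pseudovariety is just a class of finite ordered $\Gamma$-algebras closed under ordered quotients, order-reflecting subalgebras and finite products, and a local pseudovariety over $\Sigma$ is a class of $\Sigma$-generated such algebras closed under quotients and subdirect products. Applying Theorem \ref{thm:localeilenberg} and Theorem \ref{thm:eilenberg} then yields the two claimed lattice isomorphisms.

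There is no deep obstacle here: all substantive work has been carried out in Sections \ref{sec:pseudovar}--\ref{sec:varietythm}. The only points requiring genuine care are (a) confirming that $T_{\Gamma,\leq}$ preserves surjections and that $\Pos$ and $\DL$ meet the standing assumptions, and (b) getting the identification of recognizable languages and the precise closure operations right, so that one correctly recognizes this instance as the \emph{positive} variety theorem (closure under finite union and intersection, no complement) rather than the Boolean one. Both are routine verifications.
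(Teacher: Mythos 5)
Your proposal is correct and follows exactly the route the paper takes: the paper obtains this theorem by the same three-step instantiation with $\C=\DL$, $\D=\Pos$, the monad for ordered $\Gamma$-algebras, $\A=\Set_f^S$, and the elementary translations as unary presentation, merely remarking that dropping complement turns the Boolean varieties into positive ones. Your additional care about adjoining identities so that Remark~\ref{rem:diagclosed}(2) applies is a sound (and slightly more explicit) treatment of a point the paper glosses over.
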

To the best of our knowledge, the positive variant is a new result.

\subsection{The polynomial variety theorem}
In \cite{boj15}, Boja\'nczyk's established a generic variety theorem \cite{boj15} for arbitrary monads on $\Set^S$, extending the work of Almeida \cite{almeida90} and Salehi and Steinby \cite{sal04} discussed in the previous subsection. Boja\'nczyk's result follows from another application of our three-step procedure using the following parameters:
\begin{table}[h]
\begin{tabularx}{\columnwidth}{lX}
\hline \vspace{-0.4cm}\\
$\C/\D$         & $\BA$/$\Set$ \\
$\MT$ & an arbitrary monad on $\Set^S$\\
$\F$ & all finite $\MT$-algebras\\
$\A$ &  $\Set_f^S$\\
$\U_\Sigma$ & $\{\,[p] : \text{ $p$ a polynomial over $T\Sigma$} \,\}$\\
\hline
\end{tabularx}
\end{table}

\smallskip
\noindent\textbf{Step 1.} 
Let $\MT$ be an arbitrary monad on $\Set^S$, and put $\F =$ all finite $\MT$-algebras.

\smallskip
\noindent\textbf{Step 2.} 
Let $\A=\Set_f^S$. For each $\Sigma\in\A$, choose the polynomial presentation of  $\MT\Sigma$ given by
$ \U_\Sigma = \{\,[p] : \text{ $p$ a polynomial over $T\Sigma$} \,\} $, see Example \ref{ex:unpres_monads}.

\smallskip
\noindent\textbf{Step 3.} 
 Instantiating Definition \ref{def:localvariety} and \ref{def:variety} to $\MT$, $\F$, $\A$ and $\U_\Sigma$ gives the following concepts:
\begin{enumerate}
\item A \emph{local variety of $\MT$-recognizable languages over $\Sigma\in\Set_f^S$} is a set $W_\Sigma$ of $\MT$-recognizable languages over $\Sigma$ closed under boolean operations (finite union, finite intersection, complement) and polynomial derivatives (see Example \ref{ex:derivatives}.3).
\item A \emph{variety of $\MT$-recognizable languages} is a family $(W_\Sigma\seq\Rec(\Sigma))_{\Sigma\in\A}$ of local varieties closed under preimages: for every $\MT$-homomorphism $g\colon \MT\Delta\to \MT\Sigma$ and $L\in W_\Sigma$, the language
$g^{-1}L\seq T\Delta$ with $(g^{-1}L)_s = \{x\in (T\Delta)_s : g_s(x)\in L_s \}$ lies in $W_\Delta$.
\end{enumerate}

\smallskip
\noindent Since $\A=\Set_f^S$, every finite $\MT$-algebra is $\A$-generated by Example \ref{ex:agenerated_talgs}. Therefore, instantiating Definition \ref{def:localpseudovariety} and \ref{def:pseudovar} gives:
\begin{enumerate}
\item A \emph{local pseudovariety of $\Sigma$-generated $\MT$-algebras} is a class of finite quotients $e\colon \MT\Sigma\epito A$ in $\Alg{\MT}$ closed under quotients and subdirect products.
\item A \emph{pseudovariety of $\MT$-algebras} is a class of finite $\MT$-algebras closed under quotients, subalgebras, and finite products.
\end{enumerate}
Therefore we get as a special case of Theorem \ref{thm:localeilenberg} and \ref{thm:eilenberg}:

\begin{theorem}\label{thm:polynomialvars}
The lattice  of (local) polynomial varieties of $\MT$-recognizable languages is 
isomorphic to the lattice of (local) pseudovarieties of $\MT$-algebras.
\end{theorem}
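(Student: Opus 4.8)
The plan is to derive Theorem~\ref{thm:polynomialvars} directly from the general (Local) Variety Theorem (Theorems~\ref{thm:localeilenberg} and~\ref{thm:eilenberg}), so that the only work is checking that the chosen parameters meet the standing assumptions and that the polynomial presentation is a genuine unary presentation.

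First I would verify Assumptions~\ref{ass:dt} for $\D=\Set$ and an arbitrary monad $\MT$ on $\Set^S$: $\Set$ is locally finite, its epimorphisms are exactly the surjections, and every endofunctor on $\Set$ (hence on $\Set^S$) preserves surjections because surjections split; moreover $\F=$ all finite $\MT$-algebras is trivially closed under quotients, subalgebras and finite products. Then I would verify Assumptions~\ref{ass:catframework} for $\C=\BA$: boolean algebras form a locally finite variety whose signature contains constants, and $\BA_f^{op}\simeq\Set_f$ by Stone duality (Example~\ref{ex:categories}.1). With $\A=\Set_f^S$, Example~\ref{ex:agenerated_talgs} shows that every finite $\MT$-algebra is $\A$-generated, so no $\A$-generation side conditions arise.

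The substantive step is that for each $\Sigma\in\Set_f^S$ the set $\U_\Sigma=\{\,[p]:p\text{ a polynomial over }T\Sigma\,\}$ forms a unary presentation of the free $\MT$-algebra $\MT\Sigma$; this is exactly Example~\ref{ex:unpres_monads}, whose proof presents $\MT$ by a (possibly large) signature and equations and then uses the $\F$-refinability hypothesis together with a finite-replacement argument. Granting this, the induced $\U_\Sigma$-derivatives are precisely the polynomial derivatives of Example~\ref{ex:derivatives}.3, and Proposition~\ref{prop:recclosed} guarantees that $\Rec(\Sigma)$ is closed under them and under preimages of $\MT$-homomorphisms, so the notions in Definitions~\ref{def:localvariety} and~\ref{def:variety} are well-posed. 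Note also that the polynomial presentation contains all identity morphisms (take $p=\ast$), so by Remark~\ref{rem:diagclosed}.2 closure under diagonals is automatic for $\C=\BA$ and need not be imposed separately.

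It then remains only to match the instantiated definitions against Boja\'nczyk's, which is what Step~3 of this subsection records: Definitions~\ref{def:localvariety} and~\ref{def:variety} unfold to (local) polynomial varieties of $\MT$-recognizable languages, while Definitions~\ref{def:localpseudovariety} and~\ref{def:pseudovar} unfold to (local) pseudovarieties of $\MT$-algebras in the sense of Example~\ref{ex:pseudovar_monads}. Applying Theorems~\ref{thm:localeilenberg} and~\ref{thm:eilenberg} to these parameters yields both lattice isomorphisms. I do not expect a real obstacle: the single nontrivial ingredient, the polynomial unary presentation, is already in hand from Example~\ref{ex:unpres_monads}, and everything else is routine bookkeeping matching our general definitions against the ones in the literature.
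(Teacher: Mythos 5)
Your proposal is correct and follows exactly the paper's own route: instantiate the general (Local) Variety Theorem with $\C/\D=\BA/\Set$, $\F=$ all finite $\MT$-algebras, $\A=\Set_f^S$, and the polynomial unary presentation of Example~\ref{ex:unpres_monads}, then match the unfolded definitions against Boja\'nczyk's. Your additional explicit checks (the standing Assumptions~\ref{ass:dt} and~\ref{ass:catframework}, and the observation that identities occur as polynomial evaluations so diagonal-closure is automatic via Remark~\ref{rem:diagclosed}) are correct and merely make explicit what the paper leaves implicit.
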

The non-local part is due to 
Boja\'nczyk \cite{boj15}, while the local part is a new result.

\smallskip
In the following subsections, we present variety theorems for $\infty$-languages, $\omega$-languages, cost functions and binary tree languages that do follow from the previous categorical Eilenberg theorems in \cite{ammu14,ammu15,boj15}, but are direct instances of our present Variety Theorem. 

\subsection{Variety theorems for $\infty$-languages and $\omega$-languages}
We derive two variety theorems of Wilke \cite{wilke91} and Pin \cite{pin98} for $\infty$-languages and present a new variety theorem for $\omega$-languages, based on our Variety Theorem for reduced languages.

\subsubsection{Wilke's theorem for $\infty$-languages}
We use the following parameters:
\begin{table}[H]
\begin{tabularx}{\columnwidth}{lX}
\hline \vspace{-0.4cm}\\
$\C/\D$         & $\BA$/$\Set$ \\
$\MT$ & the monad $\MT_\infty$ on $\Set^2$ with $\Alg{\MT_\infty}\cong$ $\omega$-semigroups\\
$\F$ & all finite $\omega$-semigroups\\
$\A$ &  $\{\,(\Sigma,\emptyset):
\Sigma\in\Set_f\,\}$\\
$\U_{(\Sigma,\emptyset)}$ & see Example \ref{ex:unpres_omegasem}\\
\hline
\end{tabularx}
\end{table}

\smallskip
\noindent\textbf{Step 1.} 
 Choose the $\omega$-semigroup monad $\MT=\MT_\infty$ on $\Set^2$, and let $\F=$ all finite $\omega$-semigroups. $\MT_\infty$-recognizable languages are precisely regular $\infty$-languages, see Example \ref{ex:rec_omegasem}.

\smallskip
\noindent\textbf{Step 2.} 
 Let $\A=\{\,(\Sigma,\emptyset):
\Sigma\in\Set_f\,\}$. For each $\Sigma\in\Set_f$, choose the unary presentation of $\MT_\infty(\Sigma,\emptyset)=(\Sigma^+,\Sigma^\omega)$ as in Example \ref{ex:unpres_omegasem}.

\smallskip
\noindent\textbf{Step 3.} 
 Instantiating Definition \ref{def:localvariety} and \ref{def:variety} to $\MT$, $\F$, $\A$ and $\U_\Sigma$ gives the following concepts:
\begin{enumerate}
\item A \emph{local variety of $\infty$-languages over $\Sigma\in\Set_f$} is a set of regular $\infty$-languages $L\seq \Sigma^+\cup \Sigma^\omega$ closed under boolean operations and the derivatives of Example \ref{ex:derivatives}.2.
\item A \emph{variety of $\infty$-languages} is a family $(W_\Sigma)_{\Sigma\in\Set_f}$ such that (i) $W_\Sigma$ is a local variety of $\infty$-languages over $\Sigma$ and (ii) $(W_\Sigma)_\Sigma$ is closed under preimages of $\omega$-semigroup morphisms. That is, for every $L\in W_\Sigma$ and every $\omega$-semigroup morphism $g\colon (\Delta^+,\Delta^\omega)\to (\Sigma^+,\Sigma^\omega)$, the preimage $g^{-1}L = \{ x\in \Delta^+\cup\Delta^\omega: g(x)\in L \}$ lies in $W_\Delta$.
\end{enumerate}
Recall from Example \ref{ex:agenerated_talgs} that $\A$-generated $\omega$-semigroups are precisely the complete $\omega$-semigroups. Therefore Definition \ref{def:localpseudovariety} and \ref{def:pseudovar} give:
\begin{enumerate}
\item For any $\Sigma\in\Set_f$, a \emph{local pseudovariety of $\Sigma$-generated $\omega$-semigroups} is a class of finite $\omega$-semigroup quotients $e\colon (\Sigma^+,\Sigma^\omega)\epito A$ closed under quotients and subdirect products.
\item A \emph{pseudovariety of $\omega$-semigroups} is a class of finite complete $\omega$-semigroups closed under quotients, complete $\omega$-subsemigroups and finite products.
\end{enumerate}
From Theorem \ref{thm:localeilenberg} and \ref{thm:eilenberg} we deduce
\begin{theorem}\label{thm:eilenberg_infty}
The lattice of (local) varieties of $\infty$-languages is isomorphic to the 
lattice of (local) pseudovarieties of $\omega$-semigroups.
\end{theorem}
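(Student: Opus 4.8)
The plan is to obtain Theorem \ref{thm:eilenberg_infty} as a direct instance of the general (Local) Variety Theorem (Theorems \ref{thm:localeilenberg} and \ref{thm:eilenberg}), following the three-step procedure of Section \ref{sec:applications}. The substance of the proof thus reduces to (a) checking that the chosen parameters $\C=\BA$, $\D=\Set$, $\MT=\MT_\infty$ on $\Set^2$, $\F=$ all finite $\omega$-semigroups, $\A=\{(\Sigma,\emptyset):\Sigma\in\Set_f\}$, and $\U_{(\Sigma,\emptyset)}$ as in Example \ref{ex:unpres_omegasem} satisfy the standing hypotheses, and (b) reading off that the instantiated notions of (local) variety of languages and (local) pseudovariety of $\F$-algebras coincide with Wilke's notions of (local) variety of $\infty$-languages and (local) pseudovariety of $\omega$-semigroups.

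First I would discharge Assumptions \ref{ass:dt}: $\Set$ is locally finite and its epimorphisms are surjective; $\F$ is the class of \emph{all} finite $\omega$-semigroups, hence trivially closed under quotients, subalgebras and finite products; and $T_\infty$ preserves surjections, since in $\Set^2$ every surjection is a split epimorphism and functors preserve split epimorphisms. Assumptions \ref{ass:catframework} hold for the pair $\BA/\Set$: both varieties are locally finite, the signature of $\BA$ contains constants, and $\BA_f^{op}\simeq\Set_f$ by Stone duality (Example \ref{ex:categories}.1). Step 2 is already complete, since Example \ref{ex:unpres_omegasem} establishes that $\U_{(\Sigma,\emptyset)}$ is a unary presentation of $\MT_\infty(\Sigma,\emptyset)=(\Sigma^+,\Sigma^\omega)$.

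For Step 3 I would invoke the identifications recorded earlier in the paper: by Example \ref{ex:rec_omegasem} the $\MT_\infty$-recognizable languages over $(\Sigma,\emptyset)$ are exactly the regular $\infty$-languages $L\seq\Sigma^+\cup\Sigma^\omega$; by Example \ref{ex:derivatives}.2 the $\U_{(\Sigma,\emptyset)}$-derivatives are precisely Wilke's derivatives; and by Examples \ref{ex:agenerated_omegasem} and \ref{ex:pseudovar_omegssem} the $\A$-generated $\omega$-semigroups are exactly the complete ones, with finite products of complete $\omega$-semigroups again complete, so that the instantiated pseudovariety notion is Wilke's. Closure under diagonals is vacuous here: since $\U_{(\Sigma,\emptyset)}$ contains the identity maps on $\Sigma^+$ and on $\Sigma^\omega$ (obtained by setting $y=1$ in the operations of Example \ref{ex:unpres_omegasem}) and $\C=\BA$ is one of the categories of Example \ref{ex:categories}, Remark \ref{rem:diagclosed}.2 applies. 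With all parameters and definitions matched, Theorems \ref{thm:localeilenberg} and \ref{thm:eilenberg} yield the claimed lattice isomorphisms.

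The only mildly non-routine ingredient is the verification that $\U_{(\Sigma,\emptyset)}$ really is a unary presentation, which rests on the Ramsey-type argument already carried out in Example \ref{ex:unpres_omegasem}; nothing beyond that is needed. Everything else in the proof is a matter of unwinding the general definitions of Sections \ref{sec:pseudovar}, \ref{sec:unpres} and \ref{sec:varietythm} in this particular instance, so the ``main obstacle'' has effectively been dealt with in advance by the groundwork of those sections.
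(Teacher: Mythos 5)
Your proposal is correct and follows essentially the same route as the paper: Section 8.5.1 derives the theorem by exactly this three-step instantiation of the (Local) Variety Theorem with $\MT_\infty$ on $\Set^2$, $\F$ all finite $\omega$-semigroups, $\A=\{(\Sigma,\emptyset)\}$, and the unary presentation of Example \ref{ex:unpres_omegasem}, invoking Examples \ref{ex:rec_omegasem}, \ref{ex:derivatives}.2, \ref{ex:agenerated_omegasem} and \ref{ex:pseudovar_omegssem} for the identifications and Remark \ref{rem:diagclosed} to dispense with closure under diagonals. Your additional explicit check of Assumptions \ref{ass:dt} and \ref{ass:catframework} is sound (and implicit in the paper).
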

The non-local part is due to Wilke \cite{wilke91} (in the version of \cite{pinperrin04}), while the local part is a new result, extending the 
local variety theorem of Gehrke, Grigorieff, and Pin~\cite{ggp08} from finite to infinite 
words. 

\subsubsection{Pin's variety theorem for $\infty$-languages}
\noindent In \cite{pin98}, Pin considered a ``positive'' version of Wilke's variety theorem where in the definition of a variety the closure under complements is dropped, and $\omega$-semigroups are replaced by \emph{ordered $\omega$-semigroups}, i.e. $\omega$-semigroups $A$ with a poset structure in each sort such that the products $A_+\times A_+ \xra{\o} A_+$, $A_+\times A_\omega \xra{\o} A_\omega$ and $A_+^\omega \xra{\pi} A_\omega$ are monotone. Pin's result also follows from our three steps via the following parameters:
\begin{table}[h]
\begin{tabularx}{\columnwidth}{lX}
\hline \vspace{-0.4cm}\\
$\C/\D$         & $\DL$/$\Pos$ \\
$\MT$ & the monad $\MT_{\infty,\leq}$ on $\Pos^2$ with $\Alg{\MT_{\infty,\leq}}\cong$ ordered $\omega$-semigroups\\
$\F$ & all finite ordered $\omega$-semigroups\\
$\A$ &  $\{\,(\Sigma,\emptyset):
\Sigma\in\Set_f\,\}$\\
$\U_{(\Sigma,\emptyset)}$ & see Example \ref{ex:unpres_omegasem}\\
\hline
\end{tabularx}
\end{table}

\smallskip
\noindent\textbf{Step 1.}  Let $\D=\Pos$ and choose the monad $\MT_{\infty,\leq}$ on $\Pos^2$ representing \emph{ordered $\omega$-semigroups}. Recall that for any $\Sigma\in\Set_f$, the associated free object $\FSigma\in\Pos$ is just $\Sigma$ with the discrete order. The free ordered $\omega$-semigroup generated by the two-sorted discrete poset $(\FSigma,\FGamma)$ is $(\Sigma^+, \Sigma^\omega+\Sigma^*\times\Gamma)$ (discretly ordered), as in the unordered case; thus $T_{\infty,\leq}(\FSigma,\FGamma) = (\Sigma^+, \Sigma^\omega+\Sigma^*\times\Gamma)$

\smallskip
\noindent\textbf{Step 2.} Let $\A=\{\,(\Sigma,\emptyset):
\Sigma\in\Set_f\,\}$. As in the unordered case, every ordered $\omega$-semigroup $A$ has a unary presentation given by the operations (i) $x\mapsto yx$ and $x\mapsto xy$ on $A_+$, (ii) $x\mapsto xz$ and $x\mapsto x^\omega=\pi(x,x,x,\ldots)$ from $A_+$ to $A_\omega$, and (iii) $z\mapsto yz$ on $A_\omega$, where $y\in A_+\cup \{1\}$ and $z\in A_\omega$. The proof is completely analogous to the unordered case, see Example \ref{ex:unpres_omegasem}: one just replaces the equivalence $\equiv$ by a stable preorder $\preceq$ and equations by inequations. For each $\Sigma\in\Set_f$, choose this unary presentation of $\MT_{\infty,\leq}(\FSigma,\emptyset)$.

\smallskip
\noindent\textbf{Step 3.} 
 As in the unordered case, $\A$-generated ordered $\omega$-semigroups are precisely the complete ones; the argument is identical to the one in Example \ref{ex:agenerated_omegasem}.
Since $\D=\Pos$ and $\C=\DL$, the $\C$-algebraic operations on $\Rec(\Sigma)$ are finite union and finite intersection, see Example \ref{ex:c_operations}. Definition \ref{def:variety} therefore instantiates as follows:
\begin{enumerate}
\item A \emph{local positive variety of $\infty$-languages over $\Sigma\in\Set_f$} is a set of regular $\infty$-languages $L\seq \Sigma^+\cup\Sigma^\omega$ closed under finite union, finite intersection, and the derivatives of Example \ref{ex:derivatives}.2.
\item A \emph{positive variety of $\infty$-languages} is a family $(W_\Sigma)_{\Sigma\in\Set_f}$ such that (i) $W_\Sigma$ is a local positive variety of $\infty$-languages over $\Sigma$ and (ii) $(W_\Sigma)_\Sigma$ is closed under preimages of $\omega$-semigroup morphisms.
\end{enumerate}

\smallskip
\noindent Definition \ref{def:localpseudovariety} and \ref{def:pseudovar} give:
\begin{enumerate}
\item For any $\Sigma\in\Set_f$, a \emph{local pseudovariety of $\Sigma$-generated ordered $\omega$-semigroups} is a class of finite ordered $\omega$-semigroup quotients $e: (\Sigma^+,\Sigma^\omega)\epito A$ closed under quotients and subdirect products.
\item A \emph{pseudovariety of ordered $\omega$-semigroups} is a class of finite complete ordered $\omega$-semigroups closed under quotients, complete ordered $\omega$-subsemigroups and finite products.
\end{enumerate}
From Theorem \ref{thm:eilenberg} we get

\begin{theorem}\label{thm:lastex}
The lattice of (local) positive varieties of $\infty$-languages is
isomorphic to the lattice of (local) pseudovarieties of ordered
$\omega$-semigroups.
\end{theorem}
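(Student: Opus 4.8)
The plan is to verify that the parameters tabulated above satisfy Assumptions~\ref{ass:dt} and \ref{ass:catframework} and that $\U_{(\Sigma,\emptyset)}$ really is a unary presentation of the free ordered $\omega$-semigroup $\MT_{\infty,\leq}(\FSigma,\emptyset)$; once this is in place, the theorem follows immediately by specialising the Local Variety Theorem (Theorem~\ref{thm:localeilenberg}) and the Variety Theorem (Theorem~\ref{thm:eilenberg}) to these parameters, the instantiations of the relevant notions having already been worked out in Step~3.

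First I would dispatch the assumptions on $\D=\Pos$ and $\C=\DL$. Local finiteness of both varieties, surjectivity of epimorphisms in $\Pos$, and the Birkhoff dual equivalence $(\DL)_f^{\op}\simeq\Pos_f$ are recorded in Example~\ref{ex:categories}, and the signature of $\DL$ contains the constants $0,1$, so Assumption~\ref{ass:catframework}(ii) holds. The two remaining points are routine: the endofunctor $T_{\infty,\leq}$ preserves surjections, because a surjective monotone map of posets extends along the free-algebra adjunction to a surjective homomorphism of free ordered $\omega$-semigroups; and $\F=$ all finite ordered $\omega$-semigroups is closed under quotients, ordered subalgebras and finite products, since finiteness is inherited by each of these constructions.

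The substantive step is the claim from Step~2 that the operations (i) $x\mapsto yx$, $x\mapsto xy$ on $A_+$, (ii) $x\mapsto xz$, $x\mapsto x^\omega$ from $A_+$ to $A_\omega$, and (iii) $z\mapsto yz$ on $A_\omega$ form a unary presentation. I would mirror Example~\ref{ex:unpres_omegasem} line for line, replacing the $\F$-refinable equivalence $\equiv$ by an $\F$-refinable stable preorder $\preceq$ on $(A_+,A_\omega)$ — so $\preceq$ contains an ordered $\omega$-semigroup congruence $\mathord{\sim}\,\seq\,\mathord{\preceq}$ of finite index — and replacing every equation by an inequation. Stability under (i), the first operation of (ii), and (iii) yields stability of $\preceq$ under the (mixed) binary products; stability under infinite products is then reduced, via the Ramsey-type Lemma applied to the semigroup morphism $h\colon(A_+\times A_+)^+\to(A_+/\mathord{\sim})\times(A_+/\mathord{\sim})$, to stability under left/right multiplication and $\omega$-powers, using that $\sim$ is an ordered $\omega$-semigroup congruence and $\mathord{\sim}\,\seq\,\mathord{\preceq}$. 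Together with the observation — identical to Example~\ref{ex:agenerated_omegasem} — that the $\A$-generated ordered $\omega$-semigroups are precisely the complete ones, this completes the verification.

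I expect the only obstacle to be bookkeeping: confirming that the ordered analogue of the Ramsey argument goes through verbatim with inequations in place of equations, and in particular that the finite-index congruence $\mathord{\sim}\,=\,\mathord{\preceq}\cap\mathord{\succeq}$ supplies enough representatives for the reduction to left/right multiplications. No genuinely new idea is required. Once the unary-presentation claim is secured, Theorems~\ref{thm:localeilenberg} and \ref{thm:eilenberg} apply directly to the instantiated notions of (local) positive variety of $\infty$-languages and (local) pseudovariety of ordered $\omega$-semigroups spelled out in Step~3, giving the two asserted lattice isomorphisms.
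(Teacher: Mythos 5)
Your proposal is correct and follows the paper's own route exactly: the paper likewise obtains the theorem by instantiating the (Local) Variety Theorem with $\C/\D=\DL/\Pos$, the monad $\MT_{\infty,\leq}$ on $\Pos^2$, $\A=\{(\Sigma,\emptyset)\}$ and the listed unary presentation, justifying the unary-presentation claim with the same observation that Example~\ref{ex:unpres_omegasem} carries over verbatim once equivalences become stable preorders and equations become inequations, and that $\A$-generated ordered $\omega$-semigroups are the complete ones. One small correction to your closing remark: the finite-index congruence $\sim$ fed into the Ramsey step should be the one induced by the finite-index stable preorder refining $\preceq$, not $\mathord{\preceq}\cap\mathord{\succeq}$ itself (which need not have finite index); since any finite-index $\omega$-semigroup congruence contained in $\preceq$ suffices for the chain of inequalities, this does not affect the argument.
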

The non-local part is due to Pin \cite{pin98}, and the local part is again a new result.

\subsubsection{A variety theorem for $\omega$-languages}
Next, we present an Eilenberg-type correspondence for languages of infinite words, an application of our Reduced Variety Theorem. We choose the following parameters:
\begin{table}[H]
\begin{tabularx}{\columnwidth}{lX}
\hline \vspace{-0.4cm}\\
$\C/\D$         & $\BA$/$\Set$ \\
$\MT$ & the monad $\MT_\infty$ on $\Set^{\{+,\omega\}}$ with $\Alg{\MT_\infty}\cong$ $\omega$-semigroups\\
$\F$ & all finite $\omega$-semigroups\\
$S_0$ & $\{\omega\}$\\
$\A$ &  $\{\,(\Sigma,\emptyset):
\Sigma\in\Set_f\,\}$\\
$\U_{(\Sigma,\emptyset)}$ & see Example \ref{ex:unpres_omegasem}\\
\hline
\end{tabularx}
\end{table}

\smallskip
\noindent\textbf{Step 1.} 
 Choose the $\omega$-semigroup monad $\MT=\MT_\infty$ on $\Set^S$, where $S=\{+,\omega\}$. Choose $S_0=\{\omega\}$ and let $\F=$ all finite $\omega$-semigroups. Then reduced $\F$-recognizable languages are precisely $\omega$-regular languages, see Example \ref{ex:reducedlang_omegasem}.

\smallskip
\noindent\textbf{Step 2.} 
 Let $\A=\{\,(\Sigma,\emptyset):
\Sigma\in\Set_f\,\}$. For each $\Sigma\in\Set_f$, choose the unary presentation $\U_\Sigma := \U_{(\Sigma,\emptyset)}$ of $\MT_\infty(\Sigma,\emptyset)=(\Sigma^+,\Sigma^\omega)$ as in Example \ref{ex:unpres_omegasem}.
Note that the only unary operations in $\ol \U_\Sigma(\omega,\omega)$ (i.e. of the form $\Sigma^\omega\to \Sigma^\omega$) are the maps $z\mapsto yz$ for $y\in\Sigma^*$. Therefore the $\ol\U_\Sigma$-derivatives of an $\omega$-language $L\seq \Sigma^\omega$ are the languages
\[ y^{-1}L = \{z\in \Sigma^\omega: yz\in L\} \qquad (y\in \Sigma^*),\]
i.e. left derivatives w.r.t. finite words.

\smallskip
\noindent\textbf{Step 3.} 
 Instantiating Definition \ref{def:localvariety_reduced} and \ref{def:variety_reduced} to $\MT$, $\F$, $\A$, $S_0$ and $\U_\Sigma$ gives the following:
\begin{enumerate}
\item A \emph{local variety of $\omega$-regular languages over $\Sigma\in\Set_f$} is a set of $\omega$-regular languages $L\seq \Sigma^\omega$ closed under boolean operations and left derivatives w.r.t. finite words.
\item A \emph{variety of $\omega$-regular languages} is a family $(W_\Sigma)_{\Sigma\in\Set_f}$ such that (i) $W_\Sigma$ is a local variety of $\omega$-languages over $\Sigma$ and (ii) $(W_\Sigma)_\Sigma$ is closed under preimages of $\omega$-semigroup morphisms. That is, for every $L\in W_\Sigma$ and every $\omega$-semigroup morphism $g\colon (\Delta^+,\Delta^\omega)\to (\Sigma^+,\Sigma^\omega)$, the preimage $g^{-1}L = \{ x\in\Delta^\omega: g(x)\in L \}$ lies in $W_\Delta$.
\end{enumerate}

\smallskip\noindent
Recall from Example \ref{ex:agenerated_talgs} that $\A$-generated $\omega$-semigroups are precisely the complete $\omega$-semigroups. Therefore Definition \ref{def:localpseudovar_reduced} and \ref{def:pseudovar_reduced} give:
\begin{enumerate}
\item For any $\Sigma\in\Set_f$, a \emph{local pseudovariety of $\Sigma$-generated reduced $\omega$-semigroups} is an ideal in the poset of $\Sigma$-generated reduced $\omega$-semigroups $e\colon (\Sigma^+,\Sigma^\omega)\epito A$; cf. Example \ref{ex:reducedalg_omegasem}.
\item A \emph{pseudovariety of reduced $\omega$-semigroups} is a class of finite complete reduced $\omega$-semigroups closed under division of finite products.
\end{enumerate}

\smallskip\noindent
From Theorem \ref{thm:localeilenberg_reduced} and \ref{thm:eilenberg_reduced} we deduce
\begin{theorem}\label{thm:eilenberg_omega}
The lattice of (local) varieties of $\omega$-regalar languages is isomorphic to the 
lattice of (local) pseudovarieties of reduced $\omega$-semigroups.
\end{theorem}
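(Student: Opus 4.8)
The plan is to follow the two-ingredient recipe that the paper has already established for all its variety theorems: \emph{duality} plus the \emph{Reiterman theorem}. Concretely, Theorem~\ref{thm:eilenberg_omega} is the instantiation of Theorem~\ref{thm:localeilenberg_reduced} and Theorem~\ref{thm:eilenberg_reduced} to the parameters listed in the table (the monad $\MT_\infty$ on $\Set^{\{+,\omega\}}$, the class $\F$ of all finite $\omega$-semigroups, the subset $S_0=\{\omega\}$ of sorts, the alphabet class $\A=\{(\Sigma,\emptyset):\Sigma\in\Set_f\}$, and the unary presentation $\U_\Sigma$ of Example~\ref{ex:unpres_omegasem}). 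So the proof is essentially an application: first one checks that all the running assumptions hold, then one reads off what the abstract notions (reduced $\F$-recognizable language, variety of reduced languages, pseudovariety of reduced $\F$-algebras) unfold to in this concrete setting, and finally one invokes the Reduced Variety Theorem.

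First I would verify the hypotheses. Assumptions~\ref{ass:dt} hold because $\D=\Set$ is locally finite with surjective epimorphisms, $T_\infty$ preserves surjections, and $\F=$ all finite $\omega$-semigroups is closed under quotients, subalgebras and finite products; Assumptions~\ref{ass:catframework} hold with $\C=\BA$ by classical Stone duality (Example~\ref{ex:categories}(1)). I would also recall that $\U_\Sigma$ is a genuine unary presentation of the free $\omega$-semigroup $\MT_\infty(\Sigma,\emptyset)=(\Sigma^+,\Sigma^\omega)$ by Example~\ref{ex:unpres_omegasem}. Next, I would spell out the concrete meaning of the three ingredients, exactly as the text already does in ``Step~3'': reduced $\F$-recognizable languages over $(\Sigma,\emptyset)$ are $\omega$-regular languages $L\seq\Sigma^\omega$ (Example~\ref{ex:reducedlang_omegasem}); since the only operations in $\ol\U_\Sigma(\omega,\omega)$ are the left translations $z\mapsto yz$ with $y\in\Sigma^*$, the $\ol\U_\Sigma$-derivatives are precisely the left derivatives $y^{-1}L=\{z\in\Sigma^\omega: yz\in L\}$; a variety of reduced languages is then a family of sets of $\omega$-regular languages closed under boolean operations, left derivatives by finite words, and preimages of $\omega$-semigroup morphisms; and, using that $\A$-generated $\omega$-semigroups are the complete ones (Example~\ref{ex:agenerated_talgs}), a pseudovariety of reduced $\F$-algebras is a class of finite complete reduced $\omega$-semigroups closed under division of finite products, while in the local case it is an ideal in the poset of $\Sigma$-generated reduced $\omega$-semigroups (cf.\ Example~\ref{ex:reducedalg_omegasem}).

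With these identifications in place, the theorem is immediate: Theorem~\ref{thm:localeilenberg_reduced} gives the lattice isomorphism in the local case and Theorem~\ref{thm:eilenberg_reduced} gives it in the global case, for the chosen parameters. I would also remark (as the surrounding text does) that for $\C=\BA$ the closure-under-diagonals condition in Definitions~\ref{def:localvariety_reduced} and \ref{def:variety_reduced} is automatic, by the argument of Remark~\ref{rem:diagclosed_reduced}, so that it need not appear explicitly in the statement of a variety of $\omega$-regular languages. The main obstacle here is not mathematical depth but bookkeeping: one must be careful that the sort-restriction $S_0=\{\omega\}$ interacts correctly with the unary presentation, i.e.\ that $\ol\U_\Sigma(\omega,\omega)$ really consists only of the finite-word left translations and hence that no ``$\omega$-power'' or mixed derivative survives in the reduced setting — this is exactly the point where the $\omega$-language variety theorem genuinely differs from Wilke's $\infty$-language version, and it is the one place where I would double-check the computation of $\ol\U_\Sigma(\omega,\omega)$ against the definition of $\U_\Sigma$ in Example~\ref{ex:unpres_omegasem}.
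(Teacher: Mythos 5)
Your proposal is correct and follows exactly the route the paper takes: verify the standing assumptions for the chosen parameters, unfold the abstract notions (reduced recognizable languages over $(\Sigma,\emptyset)$ are the $\omega$-regular languages, $\ol\U_\Sigma(\omega,\omega)$ consists only of the left translations $z\mapsto yz$ so the derivatives are left derivatives by finite words, closure under diagonals is automatic for $\C=\BA$), and then invoke Theorems~\ref{thm:localeilenberg_reduced} and \ref{thm:eilenberg_reduced}. You have also correctly isolated the one genuinely delicate point, namely the computation of $\ol\U_\Sigma(\omega,\omega)$, which is precisely where the paper's argument places the emphasis as well.
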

Similarly, by taking the monad $\MT_{\leq,\infty}$ of ordered $\omega$-semigroups, we obtain a version for \emph{positive varieties of $\omega$-regular languages} where the closure under complement is dropped, and pseudovarieties of reduced ordered $\omega$-semigroups. This is completely analogous to the case of $\infty$-languages discussed above.

\begin{theorem}
The lattice of (local) positive varieties of $\omega$-regular languages is isomorphic to the 
lattice of (local) pseudovarieties of reduced ordered $\omega$-semigroups.
\end{theorem}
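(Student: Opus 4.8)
The plan is to apply the same three-step procedure that was used for Theorem~\ref{thm:eilenberg_omega}, now with the parameters $\C/\D = \DL/\Pos$, the monad $\MT = \MT_{\leq,\infty}$ on $\Pos^{\{+,\omega\}}$ whose Eilenberg--Moore category is that of ordered $\omega$-semigroups, $\F$ the class of all finite ordered $\omega$-semigroups, $S_0 = \{\omega\}$, $\A = \{(\Sigma,\emptyset) : \Sigma\in\Set_f\}$, and, for each $\Sigma\in\Set_f$, the unary presentation $\U_\Sigma := \U_{(\Sigma,\emptyset)}$ of $\MT_{\leq,\infty}(\FSigma,\emptyset) = (\Sigma^+,\Sigma^\omega)$ consisting of the operations $x\mapsto yx$, $x\mapsto xy$ on $\Sigma^+$, the operations $x\mapsto xz$, $x\mapsto x^\omega$ from $\Sigma^+$ to $\Sigma^\omega$, and $z\mapsto yz$ on $\Sigma^\omega$, exactly as in Example~\ref{ex:unpres_omegasem}. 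The two invocations at the end will be the Local Variety Theorem for Reduced Languages (Theorem~\ref{thm:localeilenberg_reduced}) and the Variety Theorem for Reduced Languages (Theorem~\ref{thm:eilenberg_reduced}), so the whole argument reduces to checking that these parameters satisfy our standing assumptions and that, under them, the general definitions specialize to the concepts named in the statement.

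First I would verify Assumptions~\ref{ass:dt} and~\ref{ass:catframework}: $\Pos$ is a locally finite variety of ordered algebras in which epimorphisms are surjective, $T_{\leq,\infty}$ preserves surjections sortwise (using the explicit construction of the free ordered $\omega$-semigroup on a discretely ordered poset, as in Example~\ref{ex:monads_omegasem}), and $\F$ is closed under quotients, subalgebras and finite products; on the dual side, $\DL$ is locally finite, has a constant in its signature, and $\DL_f^{\op}\simeq\Pos_f$ by Birkhoff duality, with $\hatD = \Priest$ (see Example~\ref{ex:categories}). Next I would establish that $\U_\Sigma$ is indeed a unary presentation of $\MT_{\leq,\infty}(\FSigma,\emptyset)$. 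This is the one step that carries genuine content: the argument is the ordered analogue of the one in Example~\ref{ex:unpres_omegasem}, obtained by replacing the congruence $\equiv$ by a stable preorder $\preceq$ and every equation by an inequation, while the Ramsey-theoretic lemma of \cite{pinperrin04} is applied verbatim to separate the finite product from the infinite one. I would also record, as in Example~\ref{ex:agenerated_omegasem}, that the $\A$-generated ordered $\omega$-semigroups are precisely the complete ones.

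Finally I would spell out the instantiated definitions. On the language side, by the ordered reading of Example~\ref{ex:reducedlang_omegasem} a reduced $\F$-recognizable language over $(\Sigma,\emptyset)$ is an $\omega$-regular language $L\seq\Sigma^\omega$; the only operations in $\ol\U_\Sigma(\omega,\omega)$ are the left translations $z\mapsto yz$ with $y\in\Sigma^*$; and since $\C = \DL$ the $\C$-algebraic operations on $\Rec_{\F,R}(\Sigma)$ are finite union and finite intersection but \emph{not} complement, cf.~Example~\ref{ex:c_operations}. Hence Definitions~\ref{def:localvariety_reduced} and~\ref{def:variety_reduced} become exactly (local) positive varieties of $\omega$-regular languages, while Definitions~\ref{def:localpseudovar_reduced} and~\ref{def:pseudovar_reduced} become exactly (local) pseudovarieties of reduced ordered $\omega$-semigroups. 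Applying Theorems~\ref{thm:localeilenberg_reduced} and~\ref{thm:eilenberg_reduced} then yields the desired isomorphism of lattices. I expect the main (indeed the only non-routine) obstacle to be the ordered Ramsey argument behind the unary presentation; everything else is bookkeeping identical to the unordered case.
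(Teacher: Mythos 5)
Your proposal is correct and takes essentially the same route as the paper, which likewise obtains this theorem by rerunning the three-step procedure for the reduced setting with $\C/\D=\DL/\Pos$ and the monad of ordered $\omega$-semigroups, pointing to the ordered analogue of the unary presentation of Example~\ref{ex:unpres_omegasem} (stable preorders and inequations in place of congruences and equations, with the Ramsey lemma unchanged) and then invoking Theorems~\ref{thm:localeilenberg_reduced} and~\ref{thm:eilenberg_reduced}. You also correctly isolate the ordered unary-presentation check as the only step with real content; everything else is the bookkeeping the paper leaves implicit by analogy with its treatment of Pin's theorem for $\infty$-languages.
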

To the best of our knowledge, the above two Eilenberg correspondences are not known in the literature.

\subsection{The variety theorem for regular cost functions}
In this subsection, we derive the variety theorem for cost functions proved by Daviaud, Kuperberg, and Pin \cite{pin16} in our framework. Regular cost functions were introduced by Colcombet \cite{col09} as a quantitative extension of regular languages, inheriting many of their algebraic and logical properties and decidability results.
A \emph{cost function} over a finite alphabet $\Sigma\in\Set_f$ is a function $f\colon \Sigma^*\to \Nat\cup\{\infty\}$. Two cost functions $f$ and $g$ are identified if, for every subset $A\seq\Nat$, the function $f$ is bounded on $A$ iff $g$ is bounded on $A$. 

Algebraically, cost functions are represented by \emph{stabilization monoids} \cite{col09}.
By the latter is meant an ordered monoid $(M,\leq,1,\o)$ together with a monotone operation $(\dash)^\#\colon E(M)\to E(M)$, where $E(M)$ is the set of idempotent elements of $M$, subject to the following axioms:
\begin{enumerate}[(S1)]
\item For all $s,t\in M$ with $st, ts\in E(M)$, one has $(st)^\#s = s(ts)^\#$.
\item For all $e\in E(M)$, one has $(e^\#)^\# = e^\#e = ee^\# = e^\# \leq e$.
\item $1^\#=1$.
\end{enumerate}
Since the operation $(\dash)^\#$ is only partially defined on $M$, stabilization monoids cannot be directly treated with the standard tools of algebraic language theory. For this reason, in \cite{pin16} an equivalent algebraic theory was introduced that is easier to work with:

\begin{definition}
A \emph{weak stabilization algebra} is an ordered algebra $M$ over the signature $\Gamma=\{1,\o, (\dash)^\#, (\dash)^\omega\}$ satisfying all inequations $t\leq \ol t$ between $\Gamma$-terms that hold in every finite stabilization monoid, where $x^\omega$ is interpreted as the unique idempotent power of $x$. Morphisms of weak stabilization algebras are monotone $\Gamma$-homomorphisms.
\end{definition}
For example, a weak stabilization algebra satisfies the inequation $(x^\omega)^\#\leq x^\omega$, because this inequation holds in every finite stabilization monoid by (S2). Moreover, every weak stabilization algebra is a monoid. Weak stabilization algebras form a variety of ordered algebras.

\begin{definition}
A weak stabilization algebra is called a \emph{stabilization algebra} if satisfies the implication
\begin{itemize}[(S4)]
\item $x\o x = x\quad\To\quad x^\omega = x$.
\end{itemize}
\end{definition}
An important observation is given by the following

\begin{lemma}\label{lem:stabquotients}
Every quotient of a finite stabilization algebra is a stabilization algebra.
\end{lemma}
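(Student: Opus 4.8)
The plan is to exploit the asymmetry between $M$ and its quotient $N$: an idempotent of $N$ need not lift to an idempotent of $M$, but \emph{every} element of $N$ lifts to \emph{some} element of $M$, and in a finite monoid every element has an idempotent power. Let $q\colon M\epito N$ be the given surjective homomorphism of weak stabilization algebras, with $M$ a finite stabilization algebra. First I would observe that $N$ is automatically a weak stabilization algebra: weak stabilization algebras form a variety of ordered algebras, presented by all $\Gamma$-inequations valid in finite stabilization monoids, and a quotient of an algebra in a variety again lies in the variety — equivalently, any $\Gamma$-inequation satisfied by $M$ is satisfied by $N$ because $q$ is a surjective monotone homomorphism. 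Hence the only remaining task is to verify that $N$ satisfies the implication (S4).

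So let $x\in N$ with $x\o x=x$. Using surjectivity of $q$, pick $a\in M$ with $q(a)=x$. Since $M$ is a finite monoid, $a$ has an idempotent power: there is $k\ge 1$ with $e:=a^{k}$ satisfying $e\o e=e$ in $M$. Now $e$ is a genuine idempotent of $M$, so because $M$ is a \emph{stabilization} algebra, axiom (S4) applied to $e$ gives $e^{\omega}=e$. Applying the $\Gamma$-homomorphism $q$, and using that $q$ commutes with $\o$ and with $(\dash)^{\omega}$, we obtain $q(e)^{\omega}=q(e^{\omega})=q(e)$, while on the other hand $q(e)=q(a^{k})=q(a)^{k}=x^{k}=x$ since $x$ is idempotent in $N$. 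Combining these, $x^{\omega}=q(e)^{\omega}=q(e)=x$, which is precisely (S4) for $N$. Therefore $N$ is a stabilization algebra.

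The argument is short and I do not expect a real obstacle; the one point that needs care is exactly the choice of lift — one must \emph{not} attempt to lift the idempotent $x$ to an idempotent of $M$ (which may be impossible), but instead lift it to an arbitrary preimage and then pass to its idempotent power inside the finite monoid $M$, where (S4) is available. Note that finiteness of $M$ is used only to produce this idempotent power, and the order structure is irrelevant beyond guaranteeing that the quotient still lies in the variety of weak stabilization algebras.
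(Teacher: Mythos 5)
Your proof is correct and follows essentially the same route as the paper's: lift the idempotent $x$ to an arbitrary preimage $a$, pass to the idempotent power $a^k$ in the finite monoid $M$, apply (S4) there, and push the resulting equation forward along the $\Gamma$-homomorphism. The only (harmless) difference is that the paper first proves the auxiliary claim that $s^{\omega}$ equals the unique idempotent power of every $s\in M$ via the finite-monoid identity $(x^{n})^{\omega}=x^{\omega}$, whereas you bypass this by applying (S4) directly to $a^{k}$ and using $q(a^{k})=x^{k}=x$.
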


\begin{proof} Let $e\colon M\epito N$ be a surjective morphism of weak stabilization algebras and suppose that $M$ is a finite stabilization algebra. We need to show that $N$ is a stabilization algebra.
\begin{enumerate}[(a)]
\item We first show that, for every $s\in M$, the element $s^\omega$ is the unique idempotent power of $s$. Choose $n\in\Nat$ such that $s^n$ is the unique idempotent power of $s$. Since the equation $(x^n)^\omega = x^\omega$ holds in every finite monoid (and hence in every finite stabilization monoid), it follows that $(s^n)^\omega = s^\omega$. Moreover, since $s^n\o s^n = s^n$ and (S4) holds in $M$, we have $(s^n)^\omega = s^n$. Therefore $s^\omega = s^n$, as claimed.
\item We show that $N$ satisfies (S4). Let $t\in N$ with $t\o t =t$ and choose $s\in M$ with $t=e(s)$. Since $e$ preserves idempotent powers (being a monoid morphism) and $s^\omega$ is the idempotent power of $s$ by part (a), we have that $t^\omega = e(s)^\omega$ is the idempotent power of $t$. But since $t\o t = t$, the element $t$ is idempotent and thus $t^\omega=t$, proving (S4).\qedhere
\end{enumerate}
\end{proof}
To derive the variety theorem for cost functions, we apply our framework to the following parameters: 
\begin{table}[H]
\begin{tabularx}{\columnwidth}{lX}
\hline \vspace{-0.4cm}\\
$\C/\D$         & $\DL$/$\Pos$ \\
$\MT$ & the monad $\MT_S$ on $\Pos$ with $\Alg{\MT_S}\cong$ weak stabilization algebras\\
$\F$ & all finite stabilization algebras\\
$\A$ &  $\Set_f$\\
$\U_\Sigma$ & all elementary translations\\
\hline
\end{tabularx}
\end{table}

\smallskip
\noindent\textbf{Step 1.} Let $\MT=\MT_S$ be the monad on $\Pos$ representing weak stabilization algebras, i.e. such that $\Alg{\MT_S}$ is the variety of weak stabilization algebras. As shown in \cite[Proposition 4.3]{pin16}, regular cost functions over $\Sigma\in\Set_f$ correspond to languages $L\colon T_S\FSigma\to \{0<1\}=O_\Pos$ (equivalently, to down-sets of the poset $T_S\FSigma$) that are recognizable by finite stabilization algebras. Accordingly, we choose $\F= $ all finite stabilization algebras. Note that, in contrast to all previous examples, $\F$ is a proper subclass of $\FAlg{\MT_S}$. The class $\F$ is closed under quotients by Lemma \ref{lem:stabquotients}, and under subalgebras and finite products since this holds for any implicational class of ordered algebras. Thus $\F$ satisfies Assumption \ref{ass:dt}(iii).

\smallskip
\noindent\textbf{Step 2.} Let $\A=\Set_f$. For every $\Sigma\in\Set_f$, let $\U_\Sigma$ be the unary presentation of $\MT_S\FSigma$ given by the elementary translations, i.e. the operations $x\mapsto yx$, $x\mapsto xy$, $x\mapsto x^\omega$ and $x\mapsto x^\#$, where $y\in \MT_S\FSigma$; see Example \ref{ex:unpres_gammaalgs}.

\smallskip
\noindent\textbf{Step 3.} 
 Instantiating Definition \ref{def:localvariety} and \ref{def:variety} to $\MT=\MT_S$, $\F$, $\A$ and $\U_\Sigma$ gives the following:
\begin{enumerate}
\item A \emph{local variety of recognizable down-sets over $\Sigma\in\Set_f^S$} is a set $W_\Sigma$ of $\F$-recognizable languages over $\Sigma$ (that is, of down-sets $I\seq T_S\FSigma$ recognizable by finite stabilization algebras) closed under finite union, finite intersection, and $\U_\Sigma$-derivatives. The latter means that, for every $I\in W_\Sigma$, the following down-sets lie in $W_\Sigma$:
\begin{align*}
 y^{-1}I &= \{ x\in T_S\FSigma : yx\in I \}\quad (y\in T_S\FSigma)\\
 Iy^{-1} &= \{ x\in T_S\FSigma : xy\in I \}\quad (y\in T_S\FSigma)\\
 \omega^{-1}I &= \{x\in T_S\FSigma : x^\omega\in I \}\\
 \#^{-1}I &= \{x\in T_S\FSigma : x^\#\in I \}
\end{align*}
\item A \emph{variety of $\MT$-recognizable languages} is a family $(W_\Sigma\seq\Rec(\Sigma))_{\Sigma\in\A}$ of local varieties of recognizable down-sets closed under preimages: for every morphism $g\colon \MT_S\FDelta\to\MT_S\FSigma$ of weak stabilization algebras and $I\in W_\Sigma$, the down-set
$g^{-1}I = \{x\in T_S\FDelta : g(x)\in I\}$ lies in $W_\Delta$.
\end{enumerate}
Since $\A=\Set_f$, every finite stabilization algebra is $\A$-generated by Example \ref{ex:agenerated_talgs}. Therefore, instantiating Definition \ref{def:localpseudovariety} and \ref{def:pseudovar} gives:
\begin{enumerate}
\item A \emph{local pseudovariety of $\Sigma$-generated stabilization algebras} is a class of finite quotient stabilization algebras $e\colon \MT_S\FSigma\epito A$ of $\MT_S\FSigma$ closed under quotients and subdirect products.
\item A \emph{variety of $\MT$-algebras} is a class of finite stabilization algebras closed under quotients, subalgebras, and finite products.
\end{enumerate}
Therefore we get as a special case of Theorem \ref{thm:localeilenberg} and \ref{thm:eilenberg}:

\begin{theorem}[Daviaud, Kuperberg, Pin \cite{pin16}]
The lattice of (local) varieties of recognizable down-sets is
isomorphic to the lattice of (local) pseudovarieties of ordered
stabilization algebras.
\end{theorem}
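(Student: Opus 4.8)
The plan is to obtain the statement as a direct instance of the (Local) Variety Theorem (Theorems~\ref{thm:localeilenberg} and~\ref{thm:eilenberg}), applied to the parameters fixed in Steps~1--3 above: $\C/\D=\DL/\Pos$, the monad $\MT_S$ on $\Pos$ with $\Alg{\MT_S}\cong$ the variety of weak stabilization algebras, the class $\F$ of all finite stabilization algebras, $\A=\Set_f$, and, for each $\Sigma\in\Set_f$, the unary presentation $\U_\Sigma$ of $\MT_S\FSigma$ consisting of the elementary translations. Accordingly, essentially all the work lies in verifying that these choices satisfy the standing hypotheses of our framework, namely Assumptions~\ref{ass:dt} and~\ref{ass:catframework}, and that each $\U_\Sigma$ is genuinely a unary presentation; once this is done, the bijections between (local) varieties of recognizable down-sets and (local) pseudovarieties of stabilization algebras are delivered by the general theorems.

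First I would check the assumptions on $\D$ and $\MT$. The variety $\Pos$ is locally finite and its epimorphisms are exactly the surjective monotone maps, so Assumption~\ref{ass:dt}(i),(ii) hold; and since $\MT_S$ is the free-algebra monad for a finitary signature on $\Pos$, the functor $T_S$ preserves surjections, giving~(iii). For Assumption~\ref{ass:dt}(iv) I would argue that $\F$ is closed under $\MT_S$-subalgebras and finite products because the stabilization algebras form an implicational class of ordered $\Gamma$-algebras (the weak stabilization algebras additionally satisfying the quasi-inequation~(S4)), and every such class is closed under these operations; closure of $\F$ under quotients is precisely Lemma~\ref{lem:stabquotients}. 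On the dual side, $\DL$ is locally finite, its signature contains the constants $0$ and $1$, and $\DL_f$ is dually equivalent to $\Pos_f$ by Birkhoff duality (Example~\ref{ex:categories}.2), so Assumption~\ref{ass:catframework}(i)--(iii) are satisfied. (Since the setting is single-sorted, the closure under diagonals and Assumption~\ref{ass:catframework}(ii) play no role, cf.\ Remark~\ref{rem:diagclosed}.)

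Next I would note that, because $\Gamma=\{1,\o,(\dash)^\#,(\dash)^\omega\}$ is finitary and the weak stabilization algebras form a variety of ordered $\Gamma$-algebras with $\Alg{\MT_S}\cong$ this variety, Example~\ref{ex:unpres_gammaalgs} shows that the elementary translations on $\MT_S\FSigma$ form a unary presentation; and that, by \cite[Prop.~4.3]{pin16}, the $\F$-recognizable languages over $\Sigma$ are exactly the regular cost functions over $\Sigma$ (presented as down-sets of $T_S\FSigma$ recognizable by finite stabilization algebras). With all parameters verified, instantiating Definitions~\ref{def:localvariety}/\ref{def:variety} and~\ref{def:localpseudovariety}/\ref{def:pseudovar} yields precisely the notions of (local) variety of recognizable down-sets and (local) pseudovariety of stabilization algebras spelled out in Step~3, and Theorems~\ref{thm:localeilenberg} and~\ref{thm:eilenberg} then give the claimed lattice isomorphisms.

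The one genuinely non-formal point --- and the reason the correct algebraic counterpart here is the class of \emph{stabilization} algebras rather than the class of all finite \emph{weak} stabilization algebras --- is the closure of $\F$ under quotients demanded by Assumption~\ref{ass:dt}(iv). This is exactly Lemma~\ref{lem:stabquotients}, whose proof rests on the fact that in a finite stabilization algebra the operation $x\mapsto x^\omega$ computes the unique idempotent power of $x$ and that this property is preserved under surjective monoid morphisms. Since that lemma is already in hand, the remaining verifications are routine, and the theorem follows. The only place one must be careful, then, is not to overlook that $\F$ is here a \emph{proper} subclass of $\FAlg{\MT_S}$, so that $\F$-recognizability --- rather than $\MT_S$-recognizability --- is what corresponds to regular cost functions.
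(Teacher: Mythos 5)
Your proposal is correct and follows essentially the same route as the paper: verify Assumptions~\ref{ass:dt} and~\ref{ass:catframework} for $\C/\D=\DL/\Pos$ and the monad $\MT_S$, use Lemma~\ref{lem:stabquotients} for the quotient-closure of the class $\F$ of finite stabilization algebras (the one non-routine point, which you correctly isolate), take the elementary translations as the unary presentation via Example~\ref{ex:unpres_gammaalgs}, identify $\F$-recognizable languages with recognizable down-sets via \cite[Prop.~4.3]{pin16}, and instantiate Theorems~\ref{thm:localeilenberg} and~\ref{thm:eilenberg}. No gaps.
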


\subsection{The variety theorem for binary tree languages}

As our final application, we deduce the variety theorem for binary trees of Salehi and Steinby \cite{salehi07} and a new local version of it, another instance of our Reduced Variety Theorem.  The algebraic theory corresponding to binary trees is given by Wilke's tree algebras \cite{wilke96}. By a \emph{tree algebra} is meant a three-sorted algebra $A=(A_l, A_t, A_c)$ with finitary operations
\begin{align*}
\iota\colon& A_l\to A_t\\
\kappa\colon& A_l\times A_t\times A_t\to A_t\\
 \lambda\colon& A_l\times A_t\to A_c,\\
  \rho\colon& A_l\times A_t\to A_c\\
   \eta\colon& A_c\times A_t\to A_t\\
  \sigma\colon& A_c\times A_c\to A_c\\
\end{align*}
subject to the following equations:
\begin{align*}
\sigma(\sigma(p,q),r) &= \sigma(p,\rho(q,r))\\
\eta(\sigma(p,q),t) &= \eta(p,\eta(q,t))\\
\eta(\lambda(a,s),t) &= \kappa(a,t,s)\\
\eta(\rho(a,s),t) &= \kappa(a,s,t)
\end{align*}
The free tree algebra generated by the three-sorted set $(\Sigma,\emptyset,\emptyset)$ is carried by  $(\Sigma,T_\Sigma,C_\Sigma^+)$ where $T_\Sigma$ is the set of (nonempty, full, finite and ordered) binary trees with nodes labeled by elements of $\Sigma$, and $C_\Sigma^+$ is the set of nonempty \emph{contexts}, i.e. $\Sigma+\{\ast\}$-labeled binary trees with more than one node where the label $\ast$ appears only at a single leaf. The tree algebra operations on $(\Sigma,T_\Sigma,C_\Sigma^+)$ are defined as follows:
\begin{itemize}
\item $\iota(a)$ is the single-node tree with label $a$.
\item $\kappa(a,t_0,t_1)$ is the tree with root labeled by $a$ and left and right subtrees $t_0$ and $t_1$.
\item $\lambda(a,t)$ is the context with root labeled by $a$, the left child of the root is $\ast$-labeled, and the right child is the tree $t$.
\item $\rho(a,t)$ is the context with root labeled by $a$, the right child of the root is $\ast$-labeled, and the left child is the tree $t$.
\item $\eta(c,t)$ is the tree emerging from $c$ by substituting $t$ for $\ast$.
\item $\rho(c_0,c_1)$ is the context emerging from $c_0$ by substituting $c_1$ for $\ast$.
\end{itemize}
By extension, for the empty context $c_\ast$ (i.e. the context consisting only of a single node with label $\ast$), we define $\eta(c_\ast,t) := t$ and $\rho(c_\ast,c) := c$ and $\rho(c,c_\ast):= c$. A binary tree language is a set $L\seq T_\Sigma$ of trees over $\Sigma$. To obtain the variety theorem for binary tree languages, we apply our Reduced Variety Theorem via the following parameters:
\begin{table}[H]
\begin{tabularx}{\columnwidth}{lX}
\hline \vspace{-0.4cm}\\
$\C/\D$         & $\BA$/$\Set$ \\
$\MT$ & the monad $\MT_B$ on $\Set^{\{l,t,c\}}$ with $\Alg{\MT_B}\cong$ tree algebras\\
$\F$ & all finite tree algebras\\
$S_0$ & $\{t\}$\\
$\A$ &  $\{(\Sigma,\emptyset,\emptyset):\Sigma\in\Set_f\}$\\
$\U_{(\Sigma,\emptyset,\emptyset)}$ & all elementary translations\\
\hline
\end{tabularx}
\end{table}

\smallskip
\noindent\textbf{Step 1.} 
Let $\MT=\MT_B$ be the monad on $\Set^S$ (where $S=\{l,t,c\}$) associated to the algebraic theory of tree algebras. Choose $S_0= \{t\}$, $\A= \{(\Sigma,\emptyset,\emptyset):\Sigma\in\Set_f\}$ and $\F=$ finite tree algebras. By the above, we have $T_B(\Sigma,\emptyset,\emptyset) = (\Sigma, T_\Sigma, C_\Sigma^+)$, and thus a reduced language $L\colon T_B(\Sigma,\emptyset,\emptyset)\to O_\Set$ corresponds precisely to a binary tree language $L\seq T_\Sigma$. Recognizable binary tree languages are exactly regular tree languages, i.e. those accepted by finite binary tree automata; see \cite{wilke96}.

\smallskip
\noindent\textbf{Step 2.} 
 For each $\Sigma\in\Set_f$, the tree algebra $T_B(\Sigma,\emptyset,\emptyset) = (\Sigma, T_\Sigma, C_\Sigma^+)$ has the unary presentation $\U_\Sigma :=\U_{(\Sigma,\emptyset,\emptyset)}$ given by the elementary translations, i.e. all maps of the form 
 $a \mapsto \gamma(a_1,\ldots, a_i, a, a_{i+1},\ldots, a_n)$ where $\gamma$ is one of the six tree algebra operations and $a_1,\ldots, a_n$ are elements of appropriate sort; see Example \ref{ex:unpres_gammaalgs}. Observe that $\ol \U_\Sigma(t,t)$  consists precisely of those unary operations $u\colon T_\Sigma\to T_\Sigma$ of the form $u(t)=\eta(c,t)$, where $c$ is a (possibly empty) context. Thus the $\ol \U_\Sigma$-derivatives of a language $L\seq T_\Sigma$ are the languages
 \[ c^{-1}L = \{ t\in T_\Sigma : \eta(c,t)\in L \}, \]
 where $c$ ranges over all contexts.

\smallskip
\noindent\textbf{Step 3.} 
 Instantiating Definition \ref{def:localvariety_reduced} and \ref{def:variety_reduced} to $\MT$, $\F$, $\A$, $S_0$ and $\U_\Sigma$ gives the following:
\begin{enumerate}
\item A \emph{local variety of binary tree languages over $\Sigma\in\Set_f$} is a set of regular binary tree languages $L\seq T_\Sigma$ closed under boolean operations and derivatives w.r.t. contexts (see Step 2).
\item A \emph{variety of $\omega$-languages} is a family $(W_\Sigma)_{\Sigma\in\Set_f}$ such that (i) $W_\Sigma$ is a local variety of binary tree languages over $\Sigma$ and (ii) $(W_\Sigma)_\Sigma$ is closed under preimages of tree algebra morphisms. That is, for every $L\in W_\Sigma$ and every tree algebra morphism $g\colon (\Delta,T_\Delta,C_\Delta^+)\to (\Sigma,T_\Sigma,C_\Sigma^+)$, the preimage $g^{-1}L = \{ x\in T_\Delta: g(x)\in L \}$ lies in $W_\Delta$.
\end{enumerate}

\smallskip
\noindent By Lemma \ref{lem:reducedconcrete}, an $\A$-generated tree algebra $A$ is reduced iff for all elements $a,a'\in A_l$,
\[ \text{if $\iota(a)=\iota(a')$ and $\kappa(a,s,t)=\kappa(a',s,t)$ for all $s,t\in A_t$}\quad\text{then}\quad a=a',   \]
and for all elements $c,c'\in A_c$,
\[  \text{if $\eta(c,t)=\eta(c',t)$ for all $t\in A_t$}\quad\text{then}\quad c=c'.\]
 Therefore Definition \ref{def:localpseudovar_reduced} and \ref{def:pseudovar_reduced} give:
\begin{enumerate}
\item For any $\Sigma\in\Set_f$, a \emph{local pseudovariety of $\Sigma$-generated reduced tree algebras} is an ideal in the poset of $\Sigma$-generated reduced tree algebras $e\colon (\Sigma,T_\Sigma,C_\Sigma^+)\epito A$.
\item A \emph{pseudovariety of reduced tree algebras} is a class of finite reduced $\A$-generated tree algebras closed under division of finite products.
\end{enumerate}
From Theorem \ref{thm:localeilenberg_reduced} and \ref{thm:eilenberg_reduced} we deduce the variety theorem of Salehi and Steinby \cite{salehi07} and a new local version:
\begin{theorem}\label{thm:eilenberg_trees}
The lattice of (local) varieties of binary tree languages is isomorphic to the 
lattice of (local) pseudovarieties of reduced tree algebras.
\end{theorem}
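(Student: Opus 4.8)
The plan is to invoke the Reduced Variety Theorem (Theorem~\ref{thm:eilenberg_reduced}) and its local counterpart (Theorem~\ref{thm:localeilenberg_reduced}) for the data $\C=\BA$, $\D=\Set$, the monad $\MT=\MT_B$ on $\Set^{\{l,t,c\}}$ with $\Alg{\MT_B}\cong$ tree algebras, $S_0=\{t\}$, $\A=\{(\Sigma,\emptyset,\emptyset):\Sigma\in\Set_f\}$, and the unary presentations $\U_\Sigma$ given by elementary translations. Thus the work consists entirely of checking that this data satisfies our standing assumptions and of matching the instantiated notions with the classical ones; the isomorphisms then follow immediately.

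First I would verify the hypotheses. Assumptions~\ref{ass:dt}(i),(ii) and~\ref{ass:catframework} hold because $\Set$ is locally finite with surjective epimorphisms and $\BA_f^{op}\simeq\Set_f$ via Stone duality, with $\BA$ possessing a constant in its signature. Since $\MT_B$ is the free-algebra monad for a finitary three-sorted signature modulo equations, $T_B$ preserves surjections, and $\F=$ all finite tree algebras is closed under quotients, subalgebras and finite products; hence Assumption~\ref{ass:dt}(iii),(iv) holds. That the elementary translations form a unary presentation of every $\MT_B\FSigma$ is Example~\ref{ex:unpres_gammaalgs}, so Step~2 requires nothing further.

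Next I would carry out the identifications of Step~3. The free tree algebra on $(\Sigma,\emptyset,\emptyset)$ is $(\Sigma,T_\Sigma,C_\Sigma^+)$ by the explicit description of the tree-algebra operations \cite{wilke96}, so a reduced $\F$-recognizable language over $(\Sigma,\emptyset,\emptyset)$ is exactly a regular binary tree language $L\seq T_\Sigma$. The one genuinely computational point is the description of $\ol\U_\Sigma(t,t)$: these are precisely the maps $t\mapsto\eta(c,t)$ with $c$ a (possibly empty) context. Indeed, using the equations $\eta(\sigma(p,q),t)=\eta(p,\eta(q,t))$, $\eta(\lambda(a,s),t)=\kappa(a,t,s)$, $\eta(\rho(a,s),t)=\kappa(a,s,t)$ together with the conventions $\eta(c_\ast,t)=t$ and $\sigma(c_\ast,c)=\sigma(c,c_\ast)=c$, any composite of elementary translations that starts and ends in sort $t$ — even one that passes through sort $c$ in between — collapses to a single substitution into a context; conversely every such substitution arises this way. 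This yields the stated context-derivatives and hence the instantiated notion of a (local) variety of binary tree languages. On the algebraic side, $\A$-generated tree algebras are those generated by their sort-$l$ component, and feeding the elementary translations with codomain in sort $t$ into Lemma~\ref{lem:reducedconcrete} produces exactly the concrete characterization of reduced tree algebras recorded above; thus the instantiated notions of (local) pseudovariety of reduced tree algebras coincide with those in the statement.

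With all parameters checked and all concepts unfolded, Theorem~\ref{thm:localeilenberg_reduced} gives the local isomorphism and Theorem~\ref{thm:eilenberg_reduced} the global one. The main obstacle is precisely the bookkeeping for $\ol\U_\Sigma(t,t)$: one must be careful that composing translations that temporarily leave sort $t$ still produces a context substitution — this is exactly what the four tree-algebra equations and the empty-context conventions are for — while everything else is a routine instantiation of the general machinery.
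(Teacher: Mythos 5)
Your proposal is correct and follows essentially the same route as the paper: the same choice of parameters ($\MT_B$ on $\Set^{\{l,t,c\}}$, $S_0=\{t\}$, $\A=\{(\Sigma,\emptyset,\emptyset)\}$, elementary translations as the unary presentation), the same identification of $\ol\U_\Sigma(t,t)$ with context substitutions via the tree-algebra equations, the same use of Lemma~\ref{lem:reducedconcrete} to characterize reduced tree algebras, and the same final appeal to Theorems~\ref{thm:localeilenberg_reduced} and~\ref{thm:eilenberg_reduced}. Your explicit verification of the standing assumptions and the slightly more detailed argument for why composites passing through sort $c$ collapse to a single context substitution are welcome elaborations of points the paper leaves implicit.
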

Again, all these arguments can be easily adapted to ordered algebras. By taking the monad $\MT_{B,\leq}$ on $\Pos^3$ of representing ordered tree algebras (i.e. tree algebras on a three-sorted poset with monotone operations), we then obtain the notion of a \emph{positive variety of  tree languages} where the closure under complement is dropped, and of a \emph{pseudovariety of reduced ordered tree algebras}. Therefore we get the following result that, to the best of our knowledge, has not appeared in the literature:

\begin{theorem}
The lattice of (local) positive varieties of binary tree languages is isomorphic to the 
lattice of (local) pseudovarieties of reduced ordered tree algebras.
\end{theorem}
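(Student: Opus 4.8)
The plan is to apply our three-step procedure once more, this time with ordered parameters, and then invoke the Reduced Variety Theorem (Theorems \ref{thm:localeilenberg_reduced} and \ref{thm:eilenberg_reduced}). Concretely, we take $\C/\D = \DL/\Pos$, the monad $\MT = \MT_{B,\leq}$ on $\Pos^{\{l,t,c\}}$ with $\Alg{\MT_{B,\leq}}\cong$ ordered tree algebras, $\F = $ all finite ordered tree algebras, $S_0 = \{t\}$, $\A = \{(\Sigma,\emptyset,\emptyset):\Sigma\in\Set_f\}$, and $\U_\Sigma$ the set of elementary translations on $T_{B,\leq}(\Sigma,\emptyset,\emptyset)$. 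First I would check that $\C$ and $\D$ satisfy Assumptions \ref{ass:dt} and \ref{ass:catframework}: local finiteness of $\Pos$ and $\DL$ is clear, epimorphisms in $\Pos$ are surjective, the free-ordered-tree-algebra functor preserves surjections so $T$ preserves epimorphisms, $\F$ is closed under quotients, subalgebras and finite products since it is an equational class of ordered algebras, the signature of $\DL$ contains the constant $0$, and Birkhoff duality (Example \ref{ex:categories}.2) gives the required dual equivalence $(\DL)_f^{op}\simeq \Pos_f$.

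Next I would record that, exactly as in the unordered case, the free ordered tree algebra on $(\Sigma,\emptyset,\emptyset)$ is carried by $(\Sigma,T_\Sigma,C_\Sigma^+)$ with the discrete order, so a reduced language $L\colon T_{B,\leq}(\Sigma,\emptyset,\emptyset)\to O_\Pos = \{0<1\}$ corresponds to a \emph{positive} binary tree language, i.e. a down-set $L\seq T_\Sigma$; it is $\F$-recognizable iff it is recognized by a finite ordered tree algebra, which by the standard argument (transferring an accepting tree automaton to its ordered transition algebra and back, cf. the reasoning in Example \ref{ex:rec_omegasem} and the unordered binary-tree case) coincides with being a regular tree language. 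Then I would verify that the elementary translations form a unary presentation of $\MT_{B,\leq}(\Sigma,\emptyset,\emptyset)$: this is precisely Example \ref{ex:unpres_gammaalgs} in its ordered formulation, using that an $S$-sorted stable preorder on an ordered $\Gamma$-algebra is a $\Gamma$-stable preorder iff it is stable under all elementary translations. As in Step 2 of the unordered case, $\ol\U_\Sigma(t,t)$ consists of the maps $t\mapsto \eta(c,t)$ for contexts $c$, so the induced $\ol\U_\Sigma$-derivatives of $L\seq T_\Sigma$ are the $c^{-1}L = \{\,t\in T_\Sigma : \eta(c,t)\in L\,\}$.

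Finally I would instantiate Definitions \ref{def:localvariety_reduced}, \ref{def:variety_reduced}, \ref{def:localpseudovar_reduced} and \ref{def:pseudovar_reduced}: closure of a subobject in $\DL$ amounts to closure under finite union and finite intersection (Example \ref{ex:c_operations}), so a local positive variety is a set of regular tree languages over $\Sigma$ closed under finite union, finite intersection and derivatives w.r.t.\ contexts, and a positive variety additionally closes under preimages of ordered-tree-algebra morphisms; the closure under diagonals is automatic by Remark \ref{rem:diagclosed_reduced}. On the algebraic side, reduced $\A$-generated ordered tree algebras are described concretely by Lemma \ref{lem:reducedconcrete} (two $A_l$-elements agreeing under $\iota$ and all $\kappa(\dash,s,t)$ are equal, two $A_c$-elements agreeing under all $\eta(\dash,t)$ are equal — in the ordered case with $=$ replaced by $\leq$), and a pseudovariety of reduced ordered tree algebras is a class of such algebras closed under division of finite products. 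Theorems \ref{thm:localeilenberg_reduced} and \ref{thm:eilenberg_reduced} then yield the asserted lattice isomorphism. The only genuinely non-formal point — and the place I would spend the most care — is the identification in the second paragraph of reduced $\F$-recognizable languages with regular binary tree languages: one must check both that every regular tree language is recognized by a finite \emph{ordered} tree algebra and that the ordered transition algebra of a finite tree automaton again satisfies the defining inequations of $\MT_{B,\leq}$-algebras (i.e. is a genuine ordered tree algebra), which is where the ordered structure interacts with the equational theory of Wilke's tree algebras; everything else is a routine transcription of the unordered argument.
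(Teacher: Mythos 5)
Your proposal is correct and follows exactly the route the paper intends: the paper itself only remarks that ``all these arguments can be easily adapted to ordered algebras'' by passing to the monad $\MT_{B,\leq}$ on $\Pos^3$ with $\C/\D=\DL/\Pos$ and invoking the Reduced Variety Theorem, which is precisely your three-step instantiation. The one point you flag as delicate --- matching $\F$-recognizability by finite ordered tree algebras with ordinary regular tree languages --- is in fact immediate, since the theory of ordered tree algebras is given by the same equations with monotone operations, so every finite tree algebra equipped with the discrete order is an ordered tree algebra recognizing the same languages, and conversely forgetting the order loses nothing.
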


\section{Conclusions and Future Work}
\enlargethispage{11pt} 
We presented a categorical framework for
algebraic language theory that captures, as special instances, the bulk of Eilenberg theorems in the literature for pseudovarieties of finite algebras and varieties of recognizable languages. Let us mention some directions for future work.

First, we aim to investigate if it is possible to obtain a variety theorem for data languages
based on nominal Stone duality \cite{gabbay11}. On a similar note, it would also
be interesting to see whether dualities modeling probabilistic phenomena (e.g.\ Gelfand or Kadison duality) lead to a meaningful algebraic language theory for probabilistic automata and languages.

Secondly, although \emph{finite} structures are of most relevance from the
automata-theoretic perspective, there has been some work on variety theorems with relaxed finiteness restrictions. One example is Reutenauer's theorem \cite{reu80} for weighted languages over arbitrary fields $\Field$. As already anticipated in \cite{camu16} and the first version \cite{uacm16old} of the present paper, to cover this in our setting the results of Section~\ref{sec:pseudovar} should be presented for $(\E,\M)$-structured categories $\D$ in lieu of varieties. This has been worked out in\cite{u17} and, independently, in the recent preprint \cite{s16}. In the latter, also a formal ``Eilenberg correspondence'' is stated for dual $(\E,\M)$-categories. An important conceptual difference to our present work is that in loc.\ cit.\ one uses
discrete dualities (e.g.\ complete atomic boolean algebras/sets instead
of boolean algebras/Stone spaces) and that unary presentations do not
appear. This makes the concept of a variety of languages (called a
\emph{coequational theory}) and the Eilenberg correspondence in \cite{s16} easy to state, but much harder to apply in practice. The results of \cite{s16} and of our paper do not entail each other, and we leave it for future work to find a common roof.

\newpage

\bibliographystyle{plain}
\bibliography{coalgebra,ourpapers}

\begin{thebibliography}{10}

\bibitem{ahs09}
J.~Ad{\'{a}}mek, H.~Herrlich, and G.~E. Strecker.
\newblock {\em Abstract and Concrete Categories: The Joy of Cats}.
\newblock Dover Publications, 2nd edition, 2009.

\bibitem{ammu14}
J.~Ad\'amek, S.~Milius, R.~Myers, and H.~Urbat.
\newblock Generalized {E}ilenberg {T}heorem {I}: {L}ocal {V}arieties of
  {L}anguages.
\newblock In A.~Muscholl, editor, {\em Proc.~FoSSaCS'14}, volume 8412 of {\em
  LNCS}, pages 366--380. Springer, 2014.
\newblock Full version: \url{http://arxiv.org/pdf/1501.02834v1.pdf}.

\bibitem{amu15}
J.~Ad\'amek, S.~Milius, and H.~Urbat.
\newblock Syntactic monoids in a category.
\newblock In {\em Proc.~CALCO'15}, LIPIcs. Schloss Dagstuhl--Leibniz-Zentrum
  f\"ur Informatik, 2015.
\newblock Full version: \url{http://arxiv.org/abs/1504.02694}.

\bibitem{ammu15}
J.~Ad\'{a}mek, R.~Myers, S.~Milius, and H.~Urbat.
\newblock {Varieties of languages in a category}.
\newblock In {\em 30th Annual ACM/IEEE Symposium on Logic in Computer Science}.
  IEEE, 2015.

\bibitem{ar}
J.~Ad\'{a}mek and J.~Rosick\'y.
\newblock {\em Locally Presentable and Accessible Categories}.
\newblock Cambridge University Press, 1994.

\bibitem{almeida90}
J.~Almeida.
\newblock On pseudovarieties, varieties of languages, filters of congruences,
  pseudoidentities and related topics.
\newblock {\em Algebra Universalis}, 27(3):333--350, 1990.

\bibitem{ban72}
B.~Banaschewski.
\newblock On profinite universal algebras.
\newblock In {\em General Topology and its Relations to Modern Analysis and
  Algebra}, pages 51--62. Academia Publishing House of the Czechoslovak Academy
  of Sciences, 1972.

\bibitem{Bedon1998}
N.~Bedon and O.~Carton.
\newblock An {E}ilenberg theorem for words on countable ordinals.
\newblock In {\em Proc. LATIN'98}, volume 1380 of {\em LNCS}, pages 53--64.
  Springer, 1998.

\bibitem{Bedon2005}
N.~Bedon and C.~Rispal.
\newblock {S}ch{\"u}tzenberger and {E}ilenberg theorems for words on linear
  orderings.
\newblock In {\em Proc. DLT'05}, volume 3572 of {\em LNCS}, pages 134--145.
  Springer, 2005.

\bibitem{birkhoff37}
G.~Birkhoff.
\newblock Rings of sets.
\newblock {\em Duke Mathematical Journal}, 3(3):443--454, 1937.

\bibitem{boj15}
M.~Boja\'nczyk.
\newblock Recognisable languages over monads.
\newblock In I.~Potapov, editor, {\em Proc.\ DLT'15}, volume 9168 of {\em
  LNCS}, pages 1--13. Springer, 2015.
\newblock \url{http://arxiv.org/abs/1502.04898}.

\bibitem{Bor09}
F.~Bor\c{c}eux.
\newblock {\em Handbook of Categorical Algebra II: Categories and Structures},
  volume~51 of {\em Encyclopedia of Mathematics and its Applications}.
\newblock Cambridge University Press, 1994.

\bibitem{camu16}
L.-T. Chen, J.~Ad\'amek, S.~Milius, and H.~Urbat.
\newblock Profinite monads, profinite equations and {R}eiterman's theorem.
\newblock In B.~Jacobs and C.~L\"oding, editors, {\em Proc.~FoSSaCS'16}, volume
  9634 of {\em LNCS}. Springer, 2016.
\newblock \url{http://arxiv.org/abs/1511.02147}.

\bibitem{cu15}
L.-T. Chen and H.~Urbat.
\newblock A fibrational approach to automata theory.
\newblock In L.~S. Moss and P.~Sobocinski, editors, {\em Proc.~CALCO'15},
  LIPIcs. Schloss Dagstuhl--Leibniz-Zentrum f\"ur Informatik, 2015.

\bibitem{col09}
T.~Colcombet.
\newblock The theory of stabilisation monoids and regular cost functions.
\newblock In {\em Proc. ICALP'08}, pages 139--150. Springer, 2009.

\bibitem{tata2007}
H.~Comon, M.~Dauchet, R.~Gilleron, C.~L\"oding, F.~Jacquemard, D.~Lugiez,
  S.~Tison, and M.~Tommasi.
\newblock Tree automata techniques and applications.
\newblock Available on: \url{http://www.grappa.univ-lille3.fr/tata}, 2007.
\newblock October 2007.

\bibitem{pin16}
L.~Daviaud, D.~Kuperberg, and J.-\'E. Pin.
\newblock Varieties of cost functions.
\newblock In N.~Ollinger and H.~Vollmer, editors, {\em Proc. STACS 2016},
  volume~47 of {\em LIPIcs}, pages 30:1--30:14. Schloss
  Dagstuhl--Leibniz-Zentrum f\"ur Informatik, 2016.

\bibitem{eilenberg76}
S.~Eilenberg.
\newblock {\em {Automata, Languages, and Machines Vol.~B}}.
\newblock Academic Press, 1976.

\bibitem{gabbay11}
M.~J. Gabbay, T.~Litak, and D.~Petri{\c{s}}an.
\newblock Stone duality for nominal boolean algebras with new.
\newblock In A.~Corradini, B.~Klin, and C.~C{\^i}rstea, editors, {\em Proc.
  CALCO'11}, volume 6859 of {\em LNCS}, pages 192--207. Springer, 2011.

\bibitem{ggp08}
M.~Gehrke, S.~Grigorieff, and J.-\'E. Pin.
\newblock Duality and equational theory of regular languages.
\newblock In L.~Aceto and al., editors, {\em Proc.~ICALP'08, Part II}, volume
  5126 of {\em LNCS}, pages 246--257. Springer, 2008.

\bibitem{Johnstone1982}
P.~T. Johnstone.
\newblock {\em {Stone spaces}}.
\newblock Cambridge University Press, 1982.

\bibitem{Linton1969}
F.~E.~J. Linton.
\newblock {An outline of functorial semantics}.
\newblock In B.~Eckmann, editor, {\em Semin. Triples Categ. Homol. Theory},
  volume~80 of {\em LNM}, pages 7--52. Springer Berlin Heidelberg, 1969.

\bibitem{maclane}
S.~{Mac Lane}.
\newblock {\em Categories for the Working Mathematician}.
\newblock Springer, 2nd edition, 1998.

\bibitem{manes76}
E.~G. Manes.
\newblock {\em Algebraic Theories}, volume~26 of {\em Graduate Texts in
  Mathematics}.
\newblock Springer, 1976.

\bibitem{mat76}
G.~Matthiessen.
\newblock {Theorie der heterogenen Algebren}.
\newblock Technical report, Universität Bremen, 1976.

\bibitem{mat79}
G.~Matthiessen.
\newblock A heterogeneous algebraic approach to some problems in automata
  theory, many-valued logic and other topics.
\newblock In {\em Proc. Klagenfurt Conf.}, 1979.

\bibitem{pinperrin04}
D.~Perrin and J.-\'E. Pin.
\newblock {\em {Infinite Words}}.
\newblock Elsevier, 2004.

\bibitem{pin95}
J.-\'E. Pin.
\newblock A variety theorem without complementation.
\newblock {\em Russ. Math.}, 39:80--90, 1995.

\bibitem{pin98}
J.-\'E. Pin.
\newblock Positive varieties and infinite words.
\newblock In {\em LATIN 98}, volume 1380 of {\em LNCS}, pages 76--87. Springer,
  1998.

\bibitem{pin15}
J.-{\'E}. Pin.
\newblock Mathematical foundations of automata theory.
\newblock Available at
  \url{http://www.liafa.jussieu.fr/~jep/PDF/MPRI/MPRI.pdf}, October 2015.

\bibitem{Pippenger1997}
N.~Pippenger.
\newblock {Regular languages and {S}tone duality}.
\newblock {\em Th. Comp. Sys.}, 30(2):121--134, 1997.

\bibitem{polak01}
L.~Pol\'ak.
\newblock Syntactic semiring of a language.
\newblock In J.~Sgall, A.~Pultr, and P.~Kolman, editors, {\em Proc.\ MFCS'01},
  volume 2136 of {\em LNCS}, pages 611--620. Springer, 2001.

\bibitem{priestley72}
H.~A. Priestley.
\newblock {Ordered topological spaces and the representation of distributive
  lattices}.
\newblock {\em Proc. London Math. Soc.}, 3(3):507, 1972.

\bibitem{Reiterman1982}
J.~Reiterman.
\newblock {The Birkhoff theorem for finite algebras}.
\newblock {\em Algebra Universalis}, 14(1):1--10, 1982.

\bibitem{reu80}
C.~Reutenauer.
\newblock S\'eries formelles et alg\`ebres syntactiques.
\newblock {\em J.~Algebra}, 66:448--483, 1980.

\bibitem{Ribes2010}
L.~Ribes and P.~Zalesskii.
\newblock {\em {Profinite Groups}}.
\newblock Springer Berlin Heidelberg, 2010.

\bibitem{s16}
J.~Salam\'anca.
\newblock {Unveiling Eilenberg-type Correspondences: Birkhoff's Theorem for
  (finite) Algebras + Duality}.
\newblock \url{https://arxiv.org/abs/1702.02822}, February 2017.

\bibitem{sal04}
S.~Salehi and M.~Steinby.
\newblock Varieties of many-sorted recognizable sets.
\newblock TUCS Technical Report 626, Turku Center for Computer Science, 2004.

\bibitem{salehi07}
S.~Salehi and M.~Steinby.
\newblock Tree algebras and varieties of tree languages.
\newblock {\em Theor. Comput. Sci.}, 377(1-3):1--24, 2007.

\bibitem{sch65}
M.~P. Sch\"utzenberger.
\newblock On finite monoids having only trivial subgroups.
\newblock {\em Inform. and Control}, 8:190--194, 1965.

\bibitem{schuetzenberger}
Marcel~Paul Sch\"utzenberger.
\newblock On the definition of a family of automata.
\newblock {\em Inform.~and Control}, 4(2--3):275--270, 1961.

\bibitem{straubing02}
H.~Straubing.
\newblock On logical descriptions of regular languages.
\newblock In S.~Rajsbaum, editor, {\em LATIN 2002 Theor. Informatics}, volume
  2286 of {\em LNCS}, pages 528--538. Springer, 2002.

\bibitem{therien80}
D.~Th\'erien.
\newblock {\em Classification of Regular Languages by Congruences}.
\newblock PhD thesis, University of Waterloo, 1980.

\bibitem{u17}
H.~Urbat.
\newblock {A note on HSP theorems}.
\newblock \url{https://www.tu-braunschweig.de/Medien-DB/iti/hspnote.pdf}.

\bibitem{uacm16old}
H.~Urbat, J.~Ad\'amek, L.-T. Chen, and S.~Milius.
\newblock {One Eilenberg Theorem to Rule Them All}.
\newblock \url{https://arxiv.org/abs/1602.05831v1}. February 2016.

\bibitem{wilke91}
T.~Wilke.
\newblock An {E}ilenberg theorem for $\infty$-languages.
\newblock In {\em Proc.~ICALP'91}, volume 510 of {\em LNCS}, pages 588--599.
  Springer, 1991.

\bibitem{wilke96}
T.~Wilke.
\newblock An algebraic characterization of frontier testable tree languages.
\newblock {\em Theor. Comput. Sci.}, 154(1):85--106, 1996.

\end{thebibliography}

\clearpage
\appendix
\counterwithin{theorem}{section}

\section{Categorical toolkit}\label{sec:categories}
We review some standard concepts from category theory that we will use throughout
this paper. For details we refer to textbooks such as
\cite{maclane,ahs09}, and also to \cite{ar} for an introduction to locally presentable categories.

\newcommand{\mysubsec}[1]{%
  \refstepcounter{subsection}
  \par\medskip
  \noindent\textbf{\thesubsection\hspace{1.5ex}#1.\quad}%
}

\mysubsec{Monads}\label{app:monads} A \emph{monad} on a category $\ACat$ 
is 
a triple $\MT = (T,\eta,\mu)$ consisting of an endofunctor $T\colon \ACat\to\ACat$
and two natural transformations $\eta\colon \Id\to T$ and $\mu\colon TT\to T$ (called the
\emph{unit} and \emph{multiplication} of $\MT$) such that the following 
diagrams commute:
\[
\xymatrix{
T \ar[r]^{\eta T} \ar[dr]_{\id} & TT \ar[d]^\mu \\
& T
}
\qquad
\xymatrix{
T \ar[r]^{T\eta} \ar[dr]_{\id} & TT \ar[d]^\mu \\
 & T
}
\qquad
\xymatrix{
TTT \ar[r]^{T\mu} \ar[d]_{\mu T} & TT \ar[d]^\mu \\
TT \ar[r]_\mu & T
}
\]

\mysubsec{Algebras for a monad}\label{app:algmonad} Let $\MT = 
(T,\eta,\mu)$ be a monad on a 
category $\ACat$. By a \emph{$\MT$-algebra} is meant a pair $(A,\alpha)$ of an 
object $A\in \ACat$ and a morphism $\alpha\colon TA\to A$ satisfying the 
\emph{unit} and \emph{associative laws}, i.e. making the following diagrams commute:
\[
\xymatrix{
A\ar[dr]_{\id} \ar[r]^{\eta_A} & TA  \ar[d]^\alpha \\
 & A
}
\qquad
\xymatrix{
TTA \ar[d]_{T\alpha} \ar[r]^{\mu_A} & TA \ar[d]^\alpha \\
TA \ar[r]_\alpha & A
}
\] 
Given two $\MT$-algebras $(A,\alpha)$ and $(B,\beta)$, a
\emph{$\MT$-homomorphism} $h\colon (A,\alpha)\to (B,\beta)$ is a morphism
$h\colon A\to B$ in $\ACat$ such that $h \o \alpha = \beta \o Th$. Denote by 
$\Alg{\MT}$ the category of $\MT$-algebras and 
$\MT$-homomorphisms.
There is a forgetful functor $U\colon\Alg{\MT} \to \ACat$ given by
$(A,\alpha)\mapsto A$ on objects and $h\mapsto h$ on morphisms. It has a left
adjoint assigning to each object $A$ of $\ACat$ the $\MT$-algebra $\MT A = 
(TA,\mu_A)$, called the \emph{free $\MT$-algebra on~$A$}, and to each morphism 
$h\colon A\to B$ the
$\MT$-homomorphism $Th\colon \MT A \to \MT B$. Note that for any $\MT$-algebra
$(A,\alpha)$ the associative law states precisely that $\alpha\colon \MT A \to
(A,\alpha)$ is a $\MT$-homomorphism. Moreover, the unit law implies that $\alpha$ is a (split) epimorphism in $\ACat$.

\mysubsec{Quotients of $\MT$-algebras}\label{app:quotients_of_talgs}
Let $\MT=(T,\eta,\mu)$ be a monad, and suppose that $T$ preserves epimorphisms. For any commutative square as shown below, if $(A,\alpha)$ is a $\MT$-algebra and $e$ is an epimorphism, then $(B,\beta)$ is a $\MT$-algebra.
\begin{equation}\label{eq:square}
\vcenter{
\xymatrix{
TA \ar[r]^\alpha \ar@{->>}[d]_{Te} & A \ar@{->>}[d]^e\\
TB \ar[r]_\beta & B
}}
\end{equation}
Indeed, for the associative law for $(B,\beta)$ consider the diagram below:
 \[
\xymatrix{
TTA \ar[rrr]^{\mu_{A}} \ar@{->>}[dr]^{TTe} \ar[ddd]_{T\alpha} &&& TA \ar[ddd]^{\alpha} \ar[dl]^{Te}\\
& TTB \ar[r]^{\mu_B} \ar[d]_{T\beta} & TB \ar[d]^\beta &\\
& TB \ar[r]_\beta & B & \\
TA \ar[ur]^{Te} \ar[rrr]_{\alpha} &&& A \ar[ul]_{e}
}
\]
The outside square commutes by the associative law for $(A,\alpha)$, the upper part because $\mu$ is natural and the left-hand, right-hand and lower parts by \eqref{eq:square}. Therefore the central square also commutes (since it does when precomposed with the epimorphism $TTe$), giving the associative law for $(B,\beta)$. Analogously, to establish the unit law $\beta\o \eta_B = \id$ one proves that this holds when precomposed with the epimorphism $e$:
\[
\beta \o \eta_B \o e = \beta \o Te \o \eta_A = e \o \alpha \o \eta_A = e.
\]

\mysubsec{Limits of $\MT$-algebras}\label{app:limits_of_talgs} For any monad $\MT$ on a category $\ACat$, the
forgetful functor $U\colon \Alg{\MT} \to \ACat$ preserves limits, being a 
right adjoint (see \ref{app:algmonad}). More importantly, it also 
\emph{creates 
limits}. That is, given a
diagram $D\colon \S\to \Alg{\MT}$ and a limit cone $(p_s\colon A\to UD_s)_{s\in
  \S}$ over $UD$ in $\ACat$, there exists a unique $\MT$-algebra structure
$(A,\alpha)$ on $A$ such that all $p_s$ are $\MT$-homomorphisms, and moreover
$(p_s\colon (A,\alpha)\to D_s)_{s\in \S}$ forms a limit cone over $D$ in $\Alg{\MT}$.
In the case where the category $\ACat$ is complete, it follows that also
$\Alg{\MT}$ is complete and that $U$ \emph{reflects limits}. That is, a cone 
$(p_s\colon (A,\alpha)\to D_s)_{s\in S}$ over $D$ is a limit cone whenever $(p_s\colon A\to 
UD_s)_{s\in S}$ is a limit cone over $UD$. 

\mysubsec{Comma categories}\label{app:commacat}
 Let $F\colon \B\to \ACat$ be a functor and $A$ an object in $\ACat$. The
 \emph{comma category} $(A\downarrow F)$ has as objects all morphisms $(A\xra{f}
 FB)$ in $\ACat$ with $B\in \B$, and its morphisms from $(A\xra{f_1} FB_1)$ to
 $(A\xra{f_2} FB_2)$ are morphisms $h\colon B_1\to B_2$ in $\B$ with $f_2 = Fh\o
 f_1$. If $F\colon \B \hookrightarrow \ACat$ is the inclusion  of a 
 subcategory $\B$,
 we write $(A\downarrow \B)$ for $(A\downarrow F)$.

\mysubsec{Kan extensions}\label{app:kanext} The \emph{right Kan 
extension} of a functor 
$F\colon \ACat \to \C$ 
along~$K\colon \ACat \to
\B$ is a functor~$R\colon \B \to \C$ together with a universal natural
transformation $\epsilon\colon RK \to F$, i.e.\ for every functor
$G\colon \B \to \C$ and every natural transformation~$\gamma\colon GK \to F$ 
there
exists a unique natural transformation $\gamma^\dagger\colon G \to R$ with
$\gamma = \epsilon \o \gamma^\dagger K$. The universal property determines $R$ and $\epsilon$ uniquely up to isomorphism. If $\ACat$ is small and $\C$ is 
complete, the right Kan extension $R$ exists, and the object $RB$ for $B\in \B$ can be computed as the limit of the diagram
\[
  (B\downarrow K) \xra{\;Q^B\;} \ACat \xra{\;F\;} \C,
\]
where $Q^B$ is the projection functor that maps $(B\xra{f} KA)$ to $A$ and~$h\colon (B\xra{f_1} KA_1) \to
(B\xra{f_2} KA_2)$ to~$h\colon A_1\to A_2$.

\mysubsec{Codensity monads}\label{app:codensity_monad}
Let $\epsilon\colon RK\to K$ be the right Kan extension of a functor $K\colon
\ACat \to\B$ along itself.  Then $R$ can be equipped with a monad structure $\MR
= (R,\eta^\MR, \mu^\MR)$ where the unit $\eta^\MR$ is given by $(\id_K)^\dagger\colon
\Id\to R$ and the multiplication $\mu^\MR$ by $(\epsilon\o
R\epsilon)^\dagger\colon RR\to R$.  The monad $\MR$ is called the
\emph{codensity monad} of $K$, see e.g.~\cite{Linton1969}.

\mysubsec{Final functors}\label{app:final_func} Let $\K$ be a cofiltered category (see \ref{app:cofiltered_limits} below). A functor $F\colon \K\to
\B$ is called \emph{final} if 
\begin{enumerate}[(i)]
  \item for any object $B$ of $\B$, there exists a 
  morphism $f\colon FK\to B$ for some $K\in\K$, and

  \item given two parallel morphisms $f,g\colon FK\to B$ with $K\in \K$, there 
  exists a morphism $k\colon K'\to K$ in $\K$ with $f\o Fk = g\o Fk$.
\end{enumerate}
The importance of final functors is that they facilitate the 
construction of limits. If $F\colon \K\to \B$ is final, then a 
diagram 
$D\colon \B\to \ACat$ has a limit iff the diagram $DF\colon
\K\to \ACat$ has a limit, and in this case the two limit objects agree. Specifically,
any limit 
cone $(p_B\colon A\to D_B)_{B\in \B}$ over $D$ restricts to a limit cone 
$(p_{FK}\colon A\to D_{FK})_{K\in \K}$ over $DF$. 

\mysubsec{Cofiltered limits and inverse limits}\label{app:cofiltered_limits} A category 
$\K$ is \emph{cofiltered} if for every finite subcategory $D\colon 
\K'\monoto\K$ 
there exists a cone over $D$. This is equivalent to the following three 
conditions:
\begin{enumerate}[(i)]
\item $\K$ is nonempty.
\item For any two objects $Y$ and $Z$ of $\K$, there exist two
morphisms $f\colon X\to Y$ and  $g\colon X\to Z$ with a common domain $X$.
\item For any two parallel morphisms $f,g\colon Y\to Z$ in $\K$, there exists a 
morphism $e\colon X\to Y$ with $f\o e = g\o e$.
\end{enumerate}
A \emph{cofiltered limit} in a category $\ACat$ is a limit of a diagram $\K \to
\ACat$ with cofiltered scheme $\K$. It is also called an \emph{inverse limit} 
if $\K$ is a (co-directed) poset. For any small cofiltered category $\K$, there exists a final functor $F\colon \K_0\to \K$ where $\K_0$ is a small co-directed poset. Consequently, a category has cofiltered limits iff it has inverse limits, and a functor preserves cofiltered limits iff it preserves inverse limits.

The dual concept of a cofiltered limit is a \emph{filtered colimit}, see~\cite{ar}.

\mysubsec{Finitely copresentable 
objects}\label{app:fin_copres} An 
object $A$ of a category 
$\ACat$ is called \emph{finitely copresentable} if the hom-functor 
$\ACat(\dash,A)\colon \ACat\to \Set^{op}$ preserves cofiltered limits. 
Equivalently, for any cofiltered limit cone $(p_i\colon B\to B_i)_{i\in I}$ in
$\ACat$  the following two statements hold:
 \begin{enumerate}[(i)]
  \item Every morphism $f\colon B\to A$ factors through some $p_i$.

  \item For any $i\in I$ and any two morphisms 
  $s,s'\colon B_i\to A$ with 
  $s\o p_i = s'\o p_i$, there exists a connecting morphism $b_{ji}\colon B_j\to B_i$ in the given
  diagram with $s\o b_{ji}  = s' \o b_{ji}$.
\end{enumerate}

\mysubsec{Locally finitely copresentable categories}\label{app:lfcp_cat} 
A category $\ACat$ is called
\emph{locally finitely copresentable} if it satisfies the following three 
properties:
\begin{enumerate}[(i)]
\item $\ACat$ is complete;
\item the full subcategory $\ACat_f$ of finitely copresentable objects is 
essentially small, i.e.\ the objects of $\ACat_f$ (taken up to isomorphism) 
form a set;
\item any object $A$ of $\ACat$ is a cofiltered limit of finitely 
copresentable objects; that is, there exists a cofiltered limit cone $(A\to
A_i)_{i\in I}$ in $\ACat$ with $A_i\in 
\ACat_f$ for all $i\in I$.
\end{enumerate}
This implies that $\ACat$ is also cocomplete.
If $\ACat$ is locally finitely copresentable, so is any functor category $\ACat^\S$, where $\S$ is an arbitrary small category. In particular, this holds for any product category $\ACat^S$ (where $S$ is set) and for the arrow category 
$\ACat^\to$. The latter has as objects all morphisms of $\ACat$, and as morphisms from $(A\xra{f}B)$ to $(C\xra{g}D)$ all pairs of morphisms $(a: A\to C, b: B\to D)$ in $\ACat$ with $b\o f = g\o a$. The finitely copresentable objects of~$\ACat^\to$ are precisely the
arrows with finitely copresentable domain and codomain.

\mysubsec{Cofiltered limits in locally finitely copresentable 
categories}\label{app:cof_lim_in_lfcp_cat} Let $\ACat$ be a locally finitely copresentable category.  A  cone $(p_i\colon B\to
B_i)_{i\in I}$ over a cofiltered diagram in $\ACat$ is a limit
cone iff
\begin{enumerate}[(i)]
  \item every morphism $f\colon B\to A$ with $A\in \ACat_f$ factors through some
    $p_i$, and
  \item this factorization is essentially unique: given $i\in I$ and $s,s'\colon
    B_i\to A$ with $s\o p_i = s'\o p_i$, there exists a morphism $b_{ji}\colon
    B_j\to B_i$ in the diagram with $s\o b_{ji}  = s' \o b_{ji}$.
\end{enumerate}
Note that if all $p_i$'s are epimorphisms, condition (ii) is trivial.

\mysubsec{Canonical diagrams}\label{app:can_diagram} Let $\ACat$ be a 
locally finitely 
copresentable 
category.  Then for each object $A\in\ACat$ the comma category $(A\downarrow
\ACat_f)$ is essentially small and cofiltered.  The \emph{canonical diagram of
  $A$} is the cofiltered diagram $K_A\colon (A\downarrow \ACat_f)\to \ACat$
that maps an object $(A\xra{f} A_1)$ to $A_1$ and a morphism $h\colon (A\xra{f_1}
A_1) \to (A\xra{f_2} A_2)$ to $h\colon A_1\to A_2$. Every object $A$ of $\ACat$ is
the cofiltered limit of its canonical diagram, that is, $K_A$ has the limit cone
\[
  (f\colon A\to K_A f)_{f\in (A\downarrow \ACat_f)}.
\]

\mysubsec{Pro-completions}\label{app:procomp} Let $\B$ be a small 
category. By 
a \emph{pro-completion} (or a \emph{free completion under cofiltered limits}) 
of $\B$ is meant a category $\Pro\B$ together with a 
full embedding $I\colon \B\monoto \Pro\B$ such that 
\begin{enumerate}[(i)]
\item $\Pro\B$ has cofiltered limits.
\item For any functor $F\colon \B\to \C$ into a category $\C$ with cofiltered 
limits, there exists a functor $\overline{F}\colon \Pro\B\to \C$, unique up to natural 
isomorphism, such that $\ol F$ preserves cofiltered limits and $\overline{F}\o I$ is
naturally isomorphic to~$F$.
\end{enumerate}
The universal property determines $\Pro\B$ uniquely up to equivalence of 
categories. If the category $\B$ has finite limits, then $\Pro\B$ is locally 
finitely copresentable, and its finitely copresentable objects are up to 
isomorphism the objects $IB$ ($B\in\B$). Conversely, every locally finitely 
copresentable 
category $\ACat$ arises in this way: we have $\ACat = \Pro\ACat_f$.

The dual concept of a pro-completion is an \emph{ind-completion}, i.e. the 
free 
completion under filtered colimits.  For more on pro-completions see~\cite{Johnstone1982}, and for ind-completions see~\cite{ar}.

\mysubsec{Factorization systems}\label{app:factsystem}  A 
\emph{factorization system} in a 
category 
$\ACat$ is 
a pair $(\E,\M)$ where $\E$ and $\M$ are classes of morphisms of $\ACat$ with 
the following properties:
\begin{enumerate}[(i)]
\item Both $\E$ and $\M$ are closed under composition and contain all 
isomorphisms.
\item Every morphism $f$ of $\ACat$ has a factorization $f=m\o e$ with $e\in 
\E$ 
and $m\in \M$.
\item The \emph{diagonal fill-in} property holds: given a commutative square 
as shown below with $e\in\E$ and $m\in\M$, there exists a unique morphism 
$d$ making both triangles commute.
\[
\xymatrix{
A \ar@{->>}[r]^e \ar[d]_f & B \ar[d]^g \ar@{-->}[dl]_d \\
C \ar@{ >->}[r]_m & D
}
\]
\end{enumerate}
Factorization systems are discussed at length in~\cite{ahs09}. We will use three standard facts about factorization systems:
\begin{enumerate}[(a)]
\item Suppose that $\M$ is a class of monomorphisms. If $(p_i\colon A\to A_i)_{i\in I}$ is a limit cone in $\ACat$, then the 
factorization 
$\xymatrix{p_i = (A\ar@{->>}[r]^<<<<<{e_i} & A_i' \ar@{ >->}[r]^{m_i} & A_i})$ with 
$e_i\in 
\E$ 
and 
$m_i\in\M$ yields another limit cone $(e_i\colon A\epito A_i')_{i\in I}$ over the 
same scheme. 
\item Suppose that $\E$ is a class of epimorphisms. If $\MT$ is a monad on 
$\ACat$ that preserves $\E$, i.e.\ $e\in \E$ 
implies
$Te\in \E$, then $\Alg{\MT}$ has the factorization system of $\E$-carried and
$\M$-carried $\MT$-homomorphisms.
\item Every locally finitely copresentable category $\ACat$ has the (epi, 
strong mono)
factorization system. Its arrow category $\ACat^\to$, see \ref{app:lfcp_cat}, has the 
factorization system of componentwise epimorphic and strongly 
monomorphic morphisms.
\end{enumerate}

\mysubsec{Quotients and subdirect products}\label{app:subdirectproducts}
Suppose that $\ACat$ is a category with a factorization system $(\E,\M)$, where $\E$ is a class of epimorphisms. For any two quotients (= $\E$-morphisms) $e_i\colon A\epito A_i$ ($i=0,1$) put $e_0\leq e_1$ if $e_0$ factorizes through $e_1$, i.e. $e_0 = h\o e_1$ for some morphisms $h$. This defines a preorder on the class of quotients of $A$, which can be turned into a partial order by identifying any two quotients $e_0$ and $e_1$ with $e_0\leq e_1\leq e_0$.

If $\ACat$ has binary products, then the \emph{subdirect product} of $e_0$ and $e_1$ is
 the image $e\colon \MT\FSigma\epito A$ of the morphism $\langle e_0, e_1\rangle\colon \MT\FSigma\to A_0\times A_1$.
\[
\xymatrix{
\MT\FSigma \ar@{->>}[r]^e \ar@/^5ex/[rr]^{\langle e_0,e_1\rangle} \ar@/^10ex/[rrr]^{e_i} & A \ar@{>->}[r]^m & A_0\times A_1 \ar[r]^{\pi_i} & A_i
}
\]
It can be characterized as the smallest quotient of $A$ with $e_0,e_1\leq e$.

\section{Topological toolkit}
In this section, we recall some important properties of cofiltered limits in the 
category $\mathbf{CHaus}$ of
compact 
Hausdorff spaces and continuous maps. The proofs of the first three lemmas below can be found in Chapter~1 
of~\cite{Ribes2010}.
\begin{lemma}
  \label{lem:surjections-between-cofiltered-diagrams}
  Let $\tau\colon D_1 \to D_2$ be a natural transformation between cofiltered
  diagrams (over the same scheme) in $\mathbf{CHaus}$. If each $\tau_i\colon D_1i \twoheadrightarrow D_2i$ is surjective, 
  so is the mediating map $\Lim \tau \colon \Lim D_1 \to \Lim D_2$.
  In particular, if $(\tau_i\colon X \twoheadrightarrow Di)$ is a cone of
  surjections over a cofiltered diagram $D$, then the mediating map $X\to \Lim D$ is surjective.
\end{lemma}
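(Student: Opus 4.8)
The plan is to prove the first statement (surjectivity of $\Lim\tau$) and note that the ``in particular'' clause follows immediately by taking $D_1$ to be the constant diagram on $X$ with $\tau_i$ the cone legs. So fix a natural transformation $\tau\colon D_1\to D_2$ between cofiltered diagrams over a scheme $\K$, with each component $\tau_i\colon D_1i\twoheadrightarrow D_2i$ surjective. Let $y\in\Lim D_2$; I must produce $x\in\Lim D_1$ with $(\Lim\tau)(x)=y$. Write $y=(y_i)_{i\in\K}$ with $y_i\in D_2i$ compatible along the connecting maps. For each $i$ consider the fibre $F_i=\tau_i^{-1}(y_i)\subseteq D_1i$: this is a \emph{nonempty} (by surjectivity) closed subset of the compact Hausdorff space $D_1i$, hence itself compact Hausdorff. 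The key observation is that the connecting maps of $D_1$ restrict to maps $F_j\to F_i$ for $j\to i$ in $\K$: indeed if $g\colon D_1j\to D_1i$ is the connecting map and $g'\colon D_2j\to D_2i$ its counterpart, then naturality of $\tau$ gives $\tau_i\circ g=g'\circ\tau_j$, so $g$ sends $\tau_j^{-1}(y_j)$ into $\tau_i^{-1}(g'(y_j))=\tau_i^{-1}(y_i)$, using compatibility of $y$. Thus $(F_i)_{i\in\K}$ forms a cofiltered subdiagram $F$ of $D_1$, consisting of nonempty compact Hausdorff spaces.

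First I would reduce, using the standard fact recalled in \ref{app:cofiltered_limits}, to the case where $\K$ is a codirected poset (a final functor from a codirected poset does not change limits, and pulls the fibre subdiagram back to a fibre subdiagram). Then the main point is to show $\Lim F\neq\emptyset$. For this I would invoke the classical result that a cofiltered (codirected) limit of nonempty compact Hausdorff spaces is nonempty: the limit $\Lim F$ is cut out inside the compact space $\prod_i F_i$ (nonempty by Tychonoff, since each $F_i\neq\emptyset$) as the intersection $\bigcap_{j\le i}\{(x_k)_k : g_{ji}(x_j)=x_i\}$ of closed sets, and this family of closed sets has the finite intersection property precisely because $\K$ is codirected and each $F_i$ is nonempty — any finite subfamily involves finitely many indices, which have a common lower bound $m$, and any point of $F_m$ propagates to a point satisfying those finitely many constraints. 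By compactness of $\prod_i F_i$, the total intersection is nonempty, so pick $x=(x_i)_i\in\Lim F$. Then $x\in\Lim D_1$ (the $F_i\subseteq D_1i$ are compatible with the $D_1$-connecting maps) and $(\Lim\tau)(x)=(\tau_i(x_i))_i=(y_i)_i=y$, as desired.

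The only genuine obstacle is the nonemptiness of the cofiltered limit of the fibre diagram; everything else (compactness/closedness of fibres, naturality giving the restricted connecting maps, the reduction to a poset) is routine. I would either cite this nonemptiness directly from \cite{Ribes2010} (it is exactly the kind of statement collected in Chapter~1 there) or spell out the finite-intersection-property argument above in one or two lines. Finally, for the ``in particular'' clause: given a cone of surjections $(\tau_i\colon X\twoheadrightarrow Di)$ over a cofiltered $D$, view $X$ as the constant diagram $\underline X$ over the same scheme (connecting maps all $\id_X$), so that $\tau$ becomes a natural transformation $\underline X\to D$; its limit-level map is exactly the mediating map $X\to\Lim D$, which is therefore surjective by the first part.
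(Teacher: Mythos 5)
Your proof is correct and is the standard argument: the paper itself gives no proof of this lemma, deferring to Chapter~1 of \cite{Ribes2010}, where exactly this fibre argument appears. Note that the one nontrivial ingredient you isolate --- nonemptiness of the cofiltered limit of the nonempty compact fibres $F_i=\tau_i^{-1}(y_i)$ --- is precisely Lemma~\ref{lem:limit-nonempty} stated two lines below in the same appendix, so you may cite it directly instead of re-running the finite-intersection-property argument.
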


\begin{lemma}
  \label{lem:projection-is-surjective}
  Let $D$ be a cofiltered diagram in $\mathbf{CHaus}$. 
  If all connecting maps $D(i \xrightarrow{f} j)$
  are surjective, so is each limit projection $\rho_i\colon \Lim D \to Di$.
\end{lemma}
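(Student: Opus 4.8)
The final statement to be proved is Lemma~\ref{lem:projection-is-surjective}:

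\medskip

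\textbf{Lemma.} \emph{Let $D$ be a cofiltered diagram in $\mathbf{CHaus}$. If all connecting maps $D(i\xrightarrow{f}j)$ are surjective, so is each limit projection $\rho_i\colon \Lim D\to Di$.}

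\medskip

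The plan is to reduce this statement to the previous lemma, Lemma~\ref{lem:surjections-between-cofiltered-diagrams}, by a cleverly chosen comparison of cofiltered diagrams. Fix an index $i_0$ of the scheme $\I$ of $D$. First I would replace $\I$ by the coslice (comma) category $\I_{i_0} := (i_0\downarrow \I)$ of objects under $i_0$, that is, morphisms $i_0\to i$ in $\I$; the forgetful functor $\I_{i_0}\to\I$ is final because $\I$ is cofiltered (given any $i$ there is a span $i_0\leftarrow k\rightarrow i$, hence a morphism $i_0\to i$ exists after passing to a common refinement, and the two-morphisms condition is handled similarly). By \ref{app:final_func}, restricting $D$ along this final functor does not change the limit, so without loss of generality every object $i$ of the scheme comes equipped with a chosen connecting morphism $d_{i_0,i}\colon Di_0\to Di$, and $\rho_{i_0}$ is unchanged.

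Next I would build two cofiltered diagrams over this new scheme $\I_{i_0}$ and a natural transformation between them. Let $D_2 := D$ (restricted), and let $D_1$ be the constant diagram with value $Di_0$ and all connecting maps the identity; its limit is $Di_0$ itself, with all limit projections equal to $\id_{Di_0}$. Define $\tau\colon D_1\to D_2$ by $\tau_i := d_{i_0,i}\colon Di_0\to Di$. Naturality of $\tau$ is exactly the statement that the chosen connecting morphisms are compatible with composition in $\I_{i_0}$, which holds by functoriality of $D$. By hypothesis each $d_{i_0,i}$ is a composite of the surjective connecting maps of $D$ — more precisely, for a morphism $f\colon i_0\to i$ in $\I$ we have $\tau_i = Df$, and $Df$ is surjective since all connecting maps are assumed surjective (including those out of $i_0$). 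Hence every component $\tau_i$ is surjective.

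Now Lemma~\ref{lem:surjections-between-cofiltered-diagrams} applies: the mediating map $\Lim\tau\colon \Lim D_1\to \Lim D_2$ is surjective. But $\Lim D_1 = Di_0$ with the identity cone, and $\Lim D_2 = \Lim D$, and by the universal property the mediating map $\Lim\tau$ is precisely the unique map $Di_0\to \Lim D$ commuting with the projections, i.e. it satisfies $\rho_i\circ\Lim\tau = \tau_i = d_{i_0,i}$. In particular $\rho_{i_0}\circ\Lim\tau = d_{i_0,i_0} = \id_{Di_0}$, so $\rho_{i_0}$ is a split epimorphism in $\mathbf{Set}$, hence surjective. Since $i_0$ was arbitrary, every limit projection is surjective.

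The main obstacle is the reduction to the coslice scheme and the verification that it is harmless: one must check (i) that $(i_0\downarrow\I)\to\I$ is genuinely final in the sense of \ref{app:final_func} for cofiltered categories — this uses both the "span" and the "equalizing morphism" conditions of cofilteredness — and (ii) that the limit and its projection $\rho_{i_0}$ are literally preserved under this restriction. Both are routine but need care; everything after that is a direct invocation of Lemma~\ref{lem:surjections-between-cofiltered-diagrams} together with the observation that the comparison map of the constant diagram into $D$ is a section of $\rho_{i_0}$. An alternative, more hands-on route avoiding the coslice trick is to argue directly with the explicit description $\Lim D\subseteq\prod_i Di$: given $x\in Di_0$, the sets $C_i := \{\,y\in Di : \exists\text{ path of connecting maps sending }y\text{ compatibly}\,\}$... — but this quickly becomes the compactness argument already used for Lemma~\ref{lem:surjections-between-cofiltered-diagrams}, so the reduction above is cleaner and I would present that.
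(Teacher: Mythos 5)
There is a genuine gap, and it sits exactly at the step you flagged as ``routine but needs care'': the forgetful functor from the coslice $(i_0\downarrow \I)$ to $\I$ is \emph{not} final in the sense of \ref{app:final_func} when $\I$ is cofiltered. Cofilteredness gives, for any $i$, a span $i_0\leftarrow k\rightarrow i$; it does \emph{not} give a morphism $i_0\to i$, and in general none exists (take $\I$ a codirected poset and $i$ strictly below $i_0$). The correct final subcategory is the \emph{slice} $(\I\downarrow i_0)$, whose objects are morphisms $i\to i_0$ --- but then the induced maps go $Di\to Di_0$, the wrong direction for your natural transformation $\tau$. With the coslice, $\id_{i_0}$ is an initial object, so the limit of the restricted diagram collapses to $Di_0$ itself and your computation only shows that $\id_{Di_0}$ is surjective; it says nothing about $\Lim D$. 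That your argument cannot be repaired as stated is visible from its conclusion: you derive $\rho_{i_0}\circ\Lim\tau=\id$, i.e.\ that every limit projection is a \emph{split} epimorphism. This is false: take the two-object cofiltered scheme $a\to b$ with $Da=[0,1]$, $Db=S^1$ and $D(a\to b)$ the surjection identifying the endpoints; then $\Lim D=[0,1]$ and $\rho_b$ is surjective but admits no continuous section.

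The route you dismiss at the end is in fact the right one, and it does not reduce to reproving Lemma~\ref{lem:surjections-between-cofiltered-diagrams}; it reduces to Lemma~\ref{lem:limit-nonempty}. Fix $x\in Di_0$ and restrict along the (genuinely final) slice $(\I\downarrow i_0)$. For each object $f\colon k\to i_0$ set $E_f:=(Df)^{-1}(x)\subseteq Dk$: this is closed, hence compact Hausdorff, and nonempty because $Df$ is surjective. The connecting maps of $D$ restrict to the $E_f$'s (if $g\circ h=f$ then $Dg(Dh(y))=Df(y)=x$), so they form a cofiltered diagram of nonempty spaces in $\mathbf{CHaus}$, whose limit is nonempty by Lemma~\ref{lem:limit-nonempty}. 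Any point of that limit is a compatible family, hence a point of $\Lim D$, and its component at $\id_{i_0}$ lies in $E_{\id_{i_0}}=\{x\}$, so it is a preimage of $x$ under $\rho_{i_0}$. (The paper itself defers the proof to Chapter~1 of Ribes--Zalesskii, where essentially this compactness argument is given.)
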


\begin{lemma}\label{lem:limit-nonempty}
  Let $D$ be a cofiltered diagram of non-empty spaces in $\mathbf{CHaus}$. 
Then $\Lim D$ is non-empty.
\end{lemma}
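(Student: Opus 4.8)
The statement to prove is \Cref{lem:limit-nonempty}: a cofiltered diagram $D$ of nonempty compact Hausdorff spaces has nonempty limit. The plan is to reduce to the familiar finite-intersection-property argument for compact spaces. First I would recall that, by \ref{app:cofiltered_limits}, every small cofiltered category admits a final functor from a small co-directed poset; by the properties of final functors in \ref{app:final_func}, replacing the scheme by this co-directed poset does not change the limit. So I may assume without loss of generality that the scheme is a co-directed poset $(I,\leq)$ with connecting maps $D(i\leq j)\colon D_i\to D_j$.

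Next I would realize the limit concretely. The limit of $D$ in $\mathbf{CHaus}$ is computed as in $\Set$ (more precisely, as a subspace of the product), namely
\[
\Lim D = \Bigl\{\,(x_i)_{i\in I}\in \prod_{i\in I} D_i \;:\; D(i\leq j)(x_i) = x_j \text{ whenever } i\leq j\,\Bigr\},
\]
with the subspace topology inherited from the Tychonoff product $\prod_i D_i$, which is compact Hausdorff by Tychonoff's theorem. For each pair $i\leq j$ let
\[
F_{i,j} = \{\,(x_k)_k\in \textstyle\prod_k D_k \;:\; D(i\leq j)(x_i)=x_j\,\};
\]
this is closed in the product because it is the equalizer of two continuous maps $\prod_k D_k \to D_j$ (the projection $\pi_j$ and $D(i\leq j)\circ\pi_i$), and $D_j$ is Hausdorff. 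Then $\Lim D = \bigcap_{i\leq j} F_{i,j}$. By compactness of $\prod_k D_k$, it suffices to show that every finite subfamily of the $F_{i,j}$ has nonempty intersection.

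The core step is then the finite case. Given finitely many constraints $i_1\leq j_1,\dots,i_n\leq j_n$, use co-directedness of $I$ to pick a single index $m\in I$ with $m\leq i_\ell$ and $m\leq j_\ell$ for all $\ell$ (apply the directedness condition finitely many times). Since $D_m\neq\emptyset$, choose $x_m\in D_m$ and define, for each $k$ that is comparable-above $m$ (in particular all $i_\ell,j_\ell$), the coordinate $x_k := D(m\leq k)(x_m)$; for the remaining coordinates pick arbitrary points, using that all $D_k$ are nonempty. Functoriality of $D$ (composition of connecting maps) guarantees $D(i_\ell\leq j_\ell)(x_{i_\ell}) = x_{j_\ell}$ for every $\ell$, so the resulting tuple lies in $\bigcap_\ell F_{i_\ell,j_\ell}$. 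Hence the family $\{F_{i,j}\}$ has the finite intersection property, and by compactness $\Lim D = \bigcap_{i\leq j} F_{i,j}\neq\emptyset$.

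I do not expect a genuine obstacle here; the only mild subtlety is the bookkeeping in the reduction to a co-directed poset and in checking that the arbitrary-coordinate choices do not interfere with the finitely many equalizer constraints, which is exactly why picking a lower bound $m$ below \emph{all} relevant indices is the right move — every constrained coordinate is then determined by $x_m$ via a connecting map, and functoriality makes the constraints automatically satisfied. An alternative, should one prefer to avoid Tychonoff, is to invoke \Cref{lem:projection-is-surjective} together with \Cref{lem:surjections-between-cofiltered-diagrams} after first replacing $D$ by the diagram of images $\overline{D_i} := \bigcap_{j\leq i}\mathrm{im}\,D(j\leq i)$, whose connecting maps are surjective and whose objects are nonempty closed (hence compact) subspaces; but the direct finite-intersection-property argument is shorter and self-contained, so that is the route I would write up.
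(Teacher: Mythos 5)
Your proof is correct. The paper does not actually supply its own argument for this lemma --- it delegates to Chapter~1 of the cited reference [Ribes2010] --- and the finite-intersection-property argument you give (reduce to a co-directed poset, realize $\Lim D$ as the intersection of the closed equalizer sets $F_{i,j}$ inside the compact product, and use a common lower bound $m$ to satisfy any finite set of constraints) is precisely the standard proof found there, so there is nothing to flag.
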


\begin{lemma}\label{lem:connecting-map-image}
Let $(p_i: X\to X_i)_{i\in I}$ be a cofiltered limit in $\mathbf{CHaus}$, where all $X_i$'s are finite (and thus discrete). 
For 
each $i\in I$, there is some $j\in I$ and a connecting map $g_{ji}: X_j\to 
X_i$ with $p_i[X] = g_{ji}[X_j]$.
\end{lemma}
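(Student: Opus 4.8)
The statement to prove is Lemma~\ref{lem:connecting-map-image}: given a cofiltered limit $(p_i\colon X\to X_i)_{i\in I}$ in $\mathbf{CHaus}$ with all $X_i$ finite discrete, for each $i$ there is some $j$ and a connecting map $g_{ji}\colon X_j\to X_i$ with $p_i[X] = g_{ji}[X_j]$.

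\medskip

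\textbf{Approach.}
Fix $i \in I$. The plan is to compare the (decreasing, as the index advances) family of sets $g_{ji}[X_j] \subseteq X_i$, for $j$ ranging over indices with a connecting map to $i$, against the target set $p_i[X]$. Since $X_i$ is finite, this family of subsets of $X_i$, indexed by the cofiltered poset of objects over $i$, must stabilize: there is some $j_0$ such that $g_{j j_0 i}$-images no longer shrink, i.e.\ $g_{j i}[X_j] = g_{j_0 i}[X_{j_0}]$ for all $j \leq j_0$ (using that any two $j, j' \leq j_0$ admit a common refinement $j''$ mapping into both, and images along $X_{j''}\to X_{j}\to X_i$ are contained in images along $X_{j''}\to X_{j'}\to X_i$, forcing equality once we are below $j_0$). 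Call this stable set $S := g_{j_0 i}[X_{j_0}] \subseteq X_i$. First I would show $p_i[X] \subseteq S$: since $p_i = g_{j i}\circ p_j$ for every $j$ with a connecting map to $i$ (by the limit cone compatibility), $p_i[X] \subseteq g_{ji}[X_j]$, and intersecting over all such $j$ gives $p_i[X]\subseteq S$.

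\medskip

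\textbf{The reverse inclusion $S \subseteq p_i[X]$.}
This is the part requiring the compactness/cofilteredness input, and I expect it to be the main obstacle. For each $s \in S$, I want to produce a point of $X$ mapping to $s$ under $p_i$. Restrict the diagram to the subposet $I_{\downarrow}$ of indices $j$ equipped with a connecting map to $i$ (cofinal in $I$, hence its limit is still $X$, by \ref{app:final_func} applied to the inclusion, or directly since $I$ is cofiltered). For $j \in I_{\downarrow}$ and $s \in S$, set $X_j^s := \{x \in X_j : g_{ji}(x) = s\}$; for $j \le j_0$ this is nonempty by the stabilization ($g_{ji}[X_j] = S \ni s$), and the connecting maps $g_{j'j}$ for $j' \le j \le j_0$ restrict to maps $X_{j'}^s \to X_j^s$ (since $g_{j'i} = g_{ji}\circ g_{j'j}$). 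Thus $(X_j^s)_{j \le j_0}$ is a cofiltered diagram of nonempty finite (hence compact Hausdorff) spaces, and by Lemma~\ref{lem:limit-nonempty} its limit is nonempty. A point of this limit is a compatible family $(x_j)_{j \le j_0}$ with $g_{ji}(x_j) = s$; extend it along the cofinal inclusion $\{j : j \le j_0\} \hookrightarrow I$ (again invoking cofinality: any $j \in I$ has some $j' \le j$ with $j' \le j_0$, and we define the $j$-component as the image of $x_{j'}$, checking well-definedness via a further common refinement) to obtain a point $x \in \lim_{j} X_j = X$. Then $p_i(x) = g_{ji}(p_j(x)) = g_{ji}(x_j) = s$ for any $j \le j_0$, so $s \in p_i[X]$.

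\medskip

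\textbf{Conclusion.}
Combining the two inclusions gives $p_i[X] = S = g_{j_0 i}[X_{j_0}]$, so $j := j_0$ and $g_{ji} := g_{j_0 i}$ witness the claim. The routine bookkeeping — that images stabilize on a finite set along a cofiltered diagram, and that restricting to a cofinal subposet does not change the limit — I would dispatch briefly; the only genuinely topological step is the appeal to Lemma~\ref{lem:limit-nonempty} (a compactness argument) to guarantee that the fibre-over-$s$ diagram has a point, which is exactly what prevents a purely combinatorial proof.
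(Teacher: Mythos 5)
Your proof is correct and follows essentially the same route as the paper's: both arguments combine the finiteness of $X_i$ (to obtain a single index $j$ past which the images $g_{ji}[X_j]$ have settled) with an application of Lemma~\ref{lem:limit-nonempty} to the cofiltered subdiagram of fibers over a point of $X_i$, producing a point of $X$ in its preimage. The only difference is organizational — the paper argues contrapositively that each point of $X_i\setminus p_i[X]$ eventually leaves the image, whereas you argue directly that each point of the stabilized image is hit — but the underlying mechanism is identical.
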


\begin{proof}
By \ref{app:cofiltered_limits} we may assume that $I$ is a codirected poset.

\begin{enumerate}[(i)]
\item For each $x\in X_i\setminus p_i[X]$,  there exists some $j\leq i$ 
such that $x\not\in g_{ji}[X_j]$. To see this, suppose the contrary. 
Then, for each $j\leq i$, the set $X_j' := \{\,y\in X_j : g_{ji}(y)=x\,\}$ is 
non-empty. Moreover, for $k\leq j\leq i$, the connecting map $g_{kj}: X_k\to X_j$ 
restricts to $X_k'$ and $X_j'$. Thus $(X_j')_{j\leq i}$ forms a subdiagram of 
$(X_j)_{j\leq i}$, and by Lemma \ref{lem:limit-nonempty} its limit $(p_j':  
X'\to X_j')_{j\leq i}$ is non-empty. Consider the unique map $g: X'\to X$ with 
$p_j\o g = s_j\o p_j'$ for all $j\leq i$, where $s_j: X_j'\monoto X_j$ is the 
inclusion. Then, 
choosing any $x\in X'$, we have $x=p_i(g(x'))$, contradicting the assumption 
that
$x\not\in p_i[X]$.
\item Since $X_i\setminus p_i[X]$ is a finite, by (i) and codirectedness of 
$I$ there is some $j\leq i$ such that $g_{ji}[X_j]\seq p_i[X]$. Moreover, we 
have
$g_{ji}[X_j]\supseteq p_i[X]$ because $(p_i)$ 
is a cone.  This proves the claim.
\end{enumerate}
\end{proof}
Two important classes of compact Hausdorff spaces are {Stone spaces} and {Priestley spaces}. A \emph{Stone space} is a compact topological space $X$ such that for any two elements $x\neq y$ of $X$ there is a clopen (= simultaneously open and closed) set containing $x$ but not $y$. A \emph{Priestley space} is an ordered compact space $X$ (i.e. a compact topological space endowed with a partial order $\leq$ on $X$) such that for any two elements $x\not\leq y$ of $X$ there is a clopen $\leq$-upper set containing $x$ but not $y$. We denote by $\Stone$ the category of Stone spaces and continuous maps, and by $\Priest$ the category of Priestley spaces and continuous monotone maps.

\section{Algebraic toolkit}

In this section, we review some concepts of many-sorted universal algebra.

\mysubsec{Signatures}
Given a set $S$ of sorts, by an \emph{$S$-sorted signature} is meant a set $\Gamma$ of \emph{operation symbols} with specified input and output sorts from $S$. Formally, to each $\gamma\in\Gamma$ one associates a triple $(\kappa,\mathbf{s}, s)$ where $\kappa$ is a cardinal number, $\mathbf{s}\in S^\kappa$ and $s\in S$. We call  $(\kappa,\mathbf{s}, s)$ the \emph{type} and $\kappa$ the \emph{arity} of $\gamma$. If $\kappa=n$ is finite and $\mathbf{s} = (s_0,\ldots, s_{n-1})$, we write $\gamma\colon s_0\times \ldots \times s_{n-1} \to s$ to specify the type of $\gamma$. If all operation symbols in $\Gamma$ are of finite arity, $\Gamma$ is called a \emph{finitary signature}.

\mysubsec{Algebras and ordered algebras}\label{app:algordalg}
Let $\Gamma$ be an $S$-sorted signature.
A \emph{$\Gamma$-algebra} is an $S$-sorted set $A\in \Set^S$ with associated $\Gamma$-operations. That is, to each operation symbol $\gamma\in \Gamma$ of type  $(\kappa,\mathbf{s}, s)$, where $\mathbf{s} = (s_i)_{i<\kappa}$, one associates an operation $\gamma^A\colon \prod_{i<\kappa}{A_{s_i}} \to A_s$. A \emph{morphism} (also called a \emph{$\Gamma$-homomorphism}) between two $\Gamma$-algebras $A$ and $B$ is an $S$-sorted function $f\colon A\to B$ such that the following diagram commutes for all $\gamma\in \Gamma$ of type  $(\kappa,\mathbf{s}, s)$:
\[
\xymatrix{
 \prod_{i<\kappa}{A_{s_i}} \ar[r]^<<<<<{\gamma^A} \ar[d]_{\prod_{i} h_{s_i}} &  A_s \ar[d]^{h_s} \\
 \prod_{i<\kappa}{B_{s_i}} \ar[r]_<<<<<{\gamma^B} &  B_s\\
}
\]
\emph{Quotients} and \emph{subalgebras} of $\Gamma$-algebras are represented by sortwise surjective and injective morphisms, respectively.

 An \emph{ordered $\Gamma$-algebra} is a $\Gamma$-algebra $A$ with an additional poset structure $\leq$ on each $A_s$ such that all $\Gamma$-operations $\gamma^A\colon \prod_{i<\kappa}{A_{s_i}} \to A_s$ are monotone maps. A \emph{morphism} between ordered $\Gamma$-algebras is a sortwise monotone $\Gamma$-homomorphism. \emph{Quotients} and \emph{subalgebras} of ordered $\Gamma$-algebras are represented by sortwise surjective and order-reflecting morphisms, respectively. Here a morphism $h$ is called \emph{order-reflecting} if $hx\leq hx'$ implies $x\leq x'$ (and thus $hx\leq hx'$ iff $x\leq x'$ by monotonicity). Note that every order-reflecting morphism is injective.

\mysubsec{Congruences}\label{app:congruences}
Let $\Gamma$ be an $S$-sorted signature and $A$ a $\Gamma$-algebra. An \emph{$S$-sorted equivalence relation on $A$} is family $\mathord{\equiv} = (\equiv_s)_{s\in S}$, where $\equiv_s$ is an equivalence relation on $A_s$. If the sort $s$ is clear from the context, we drop the subscript and write $\equiv$ for $\equiv_s$. We call an $S$-sorted equivalence relation $\equiv$ a \emph{congruence} on $A$ if it is stable under all $\Gamma$-operations: for any $\gamma\in \Gamma$ of type  $(\kappa,\mathbf{s}, s)$ and any two elements $a,b\in \prod_{i<\kappa} A_{s_i}$, 
\[ a_i \equiv_{s_i} b_i \text{ for all $i<\kappa$} \qquad\text{implies}\qquad \gamma^A(a) \equiv_s \gamma^A(b). \]
For any congruence $\equiv$, the $S$-sorted set $A/\mathord{\equiv} = (A_s/\mathord{\equiv_s})_{s\in S}$ of $\equiv$-congruence classes can be equipped with a $\Gamma$-algebra structure: for any $\gamma\in \Gamma$ of type $(\kappa,\mathbf{s}, s)$, put
\[\gamma^{A/\mathord{\equiv}}\colon \prod_{i<\kappa}{(A_{s_i}/\mathord{\equiv_{s_i}})} \to (A_s/\mathord{\equiv_s}),\qquad ([a_i]_{\equiv_{s_i}})_{i<\kappa} \mapsto [\,\gamma^A((a_i)_{i<\kappa})\,]_{\equiv_s}.\]
This is the unique $\Gamma$-algebra structure on $A/\mathord{\equiv}$ making the projection
 $e_\equiv\colon A\epito A/\mathord{\equiv}$ a $\Gamma$-homomorphism. 

Conversely, every quotient $e\colon A\epito B$ of $A$ yields a $\Gamma$-congruence on $A$, viz. the \emph{kernel congruence} $\equiv_e$ given on sort $s$ by
\[ a \equiv_{e,s} b \quad\text{iff}\quad e_s(a) = e_s(b).\]
The maps $\equiv\, \mapsto\, e_\equiv$ and $e\,\mapsto\, \equiv_e$ give a bijective correspondence between quotients of $A$ and congruences on $A$.

\mysubsec{Stable preorders}\label{app:stablepreorders}
Let $\Gamma$ be an $S$-sorted signature and $(A,\leq)$ an ordered $\Gamma$-algebra.  An \emph{$S$-sorted preorder on $A$} is a family $\mathord{\preceq} = (\preceq_s)_{s\in S}$ where $\preceq_s$ is a preorder on $A_s$. If the sort $s$ is clear from the context, we drop the subscript and write $\preceq$ for $\preceq_s$. We call an $S$-sorted preorder $\preceq$  a \emph{stable preorder} on $A$ if (i) $\preceq$ is coarser than $\leq$, i.e. $\mathord{\leq} \seq \mathord{\preceq}$, and (ii) $\preceq$ is stable under all $\Gamma$-operations: for any $\gamma\in \Gamma$ of type  $(\kappa,\mathbf{s}, s)$ and any two elements $a,b\in \prod_{i<\kappa} A_{s_i}$,
\[ 
  a_i \preceq_{s_i} b_i \text{ for all $i<\kappa$} 
  \qquad\text{implies}\qquad 
  \gamma^A(a) \preceq_s \gamma^A(b). 
\]
If $\preceq$ is a stable preorder then the equivalence relation
$\equiv \,=\, \preceq \cap \succeq$ is a congruence, see
\ref{app:congruences}. The $\Gamma$-algebra $A/\mathord{\equiv}$ can
be turned into an ordered algebra by putting
$[a]_{\equiv_s} \leq_s [b]_{\equiv_s}$ iff $a\preceq_s b$. We denote
this ordered algebra by $A/\mathord{\preceq}$. The projection
$e_\preceq\colon A\epito A/\mathord{\preceq}$ is a monotone surjective
$\Gamma$-homomorphism.

Conversely, every quotient $e\colon A\epito B$ of $A$ yields a stable preorder on $A$, viz. the \emph{ordered kernel} $\preceq_e$ given on sort $s$ by
\[ a \preceq_{e,s} b \quad\text{iff}\quad e_s(a) \leq e_s(b).\]
The maps $\mathord{\preceq} \mapsto e_\preceq$ and $e \mapsto \mathord{\preceq}_e$ give a bijective correspondence between quotients of $(A,\leq)$ and stable preorders on $(A,\leq)$.

\mysubsec{Elementary translations}\label{app:elementarytranslations}
Let $\Gamma$ be a finitary signature and $A$ an $\Gamma$-algebra. By an \emph{elementary translation} is meant a map $u\colon A_s\to A_t$ ($s,t\in S$) of the form
\begin{equation}\label{eq:translation}u(a) = \gamma^A(a_1,\ldots, a_{i}$, $a,a_{i+1},\ldots, a_n) \quad\text{for all $a\in A_s$},
\end{equation}
where $n\geq 0$, $i\in\{0,\ldots,n\}$, $\gamma\colon s_1\times \ldots \times s_{i}\times s\times s_{i+1}\times\ldots \times s_n\to s$ is an $(n+1)$-ary operation symbol in $\Gamma$, and $a_i\in A_{s_i}$ for $i=1,\ldots, n$. A $S$-sorted equivalence relation $\equiv$ on $A$ forms a congruence iff it is stable under all elementary translations, that is, for every elementary translation \eqref{eq:translation} and any two elements $a,a\in A_s$ with $a\equiv_s a'$, one has $u(a)\equiv_s u(a')$; see e.g. \cite{mat76,mat79}. Analogously, an $S$-sorted preorder on an ordered algebra $(A,\leq)$ is a stable preorder iff it is stable under all elementary translations.

\mysubsec{Varieties of algebras and ordered algebras}\label{app:varieties}
Let $\Gamma$ be a finitary $S$-sorted signature. A \emph{variety of $\Gamma$-algebras} is a class of $\Gamma$-algebras specified by equations $s=t$ between $\Gamma$-terms. Similarly, a \emph{variety of ordered $\Gamma$-algebras} is a class of ordered $\Gamma$-algebras specified by inequations $s\leq t$ between $\Gamma$-terms. By an \emph{$S$-sorted variety of (ordered) algebras} is meant a variety $\ACat$ of (ordered) $\Gamma$-algebras for some finitary $S$-sorted signature $\Gamma$. For every single-sorted variety $\ACat$ and set of sorts $S$, one can view the product category $\ACat^S$ as an $S$-sorted variety. 

Categorically, varieties can be modeled by monads: for every $S$-sorted variety $\ACat$ of (ordered) algebras, there exists a monad $\MT$ on $\Set^S$ (resp. $\Pos^S$) with $\ACat\simeq \Alg{\MT}$. 

\mysubsec{Topological algebras}\label{app:topalg}
Let $\Gamma$ be an $S$-sorted signature. A \emph{topological $\Gamma$-algebra} is a $\Gamma$-algebra $A$ endowed with a topology on each component $A_s$ such that for any  $\gamma\in \Gamma$ of type  $(\kappa,\mathbf{s}, s)$ the operation $\gamma^A\colon \prod_{i<\kappa}{A_{s_i}} \to A_s$ is continuous. An \emph{ordered topological $\Gamma$-algebra} $A$ is additionally endowed with a partial order on each sort such that every $\gamma^A$ is continuous and monotone. Morphisms of (ordered) topological $\Gamma$-algebras are (monotone) continuous $\Gamma$-homomorphisms. Given a variety $\ACat$ of $\Gamma$-algebras, we denote by $\Stone(\ACat)$ the category of topological $\Gamma$-algebras carrying a Stone topology such that the underlying $\Gamma$-algebra (forgetting the topology) lies in $\ACat$. Similarly, if $\ACat$ is a variety of ordered $\Gamma$-algebras, we denote by $\Priest(\ACat)$ the category of ordered topological $\Gamma$-algebras carrying a Priestley topology such that the underlying ordered $\Gamma$-algebra lies in $\ACat$.

\mysubsec{Homomorphism theorem}\label{app:homtheorem}
Let $\Gamma$ be an $S$-sorted signature, and let $e\colon A \epito B$ and 
$f\colon A \to 
        C$ be two morphisms of $\Gamma$-algebras with $e$ sortwise surjective. Then the following are equivalent:
\begin{enumerate}[(1)]
\item There exists a morphism $g\colon B\to C$ 
        with 
        $g\o e = f$.
\item For all sorts $s$ and
            $a,a'\in \under{A}_s$, if
    $e(a)= e(a')$ 
    then $f(a)=f(a')$.
 \end{enumerate}
 Similarly, if $e$ and $f$ are morphisms between ordered $\Gamma$-algebras, then (i) is equivalent to
 \begin{enumerate}[(1)]
 \item[(2')] For all sorts $s$ and $a,a'\in \under{A}_s$, if $e(a)\leq e(a')$ then $f(a)\leq f(a')$.
 \end{enumerate}
If moreover $e$ and $f$ are morphisms between (ordered) topological $\Gamma$-algebras whose topologies are compact Hausdorff, then the map $g$ in (i) is continuous. This follows from the fact that for every continuous surjection $e\colon A\epito B$ between compact Hausdorff spaces, the space $B$ carries the final topology w.r.t. $e$.

\end{document}